\documentclass[11pt,letterpaper]{article}
\usepackage[T1]{fontenc}
\usepackage{microtype}
\usepackage[margin=1in]{geometry}
\usepackage{amsmath,amssymb,amsthm,mathtools,mathrsfs}
\usepackage{libertinus}
\ifPDFTeX
  \usepackage{libertinust1math}
\fi
\usepackage{enumitem,booktabs,longtable,array}
\usepackage[dvipsnames]{xcolor}
\usepackage[unicode,colorlinks=true,linkcolor=Rhodamine,citecolor=Rhodamine,filecolor=Rhodamine,urlcolor=blue,bookmarksnumbered=true,hypertexnames=false]{hyperref}
\newtheorem{theorem}{Theorem}[section]
\newtheorem{lemma}[theorem]{Lemma}
\newtheorem{proposition}[theorem]{Proposition}
\newtheorem{corollary}[theorem]{Corollary}

\theoremstyle{remark}
\newtheorem{remark}[theorem]{Remark}
\newcommand{\ket}[1]{\lvert#1\rangle}
\newcommand{\bra}[1]{\langle#1\rvert}
\newcommand{\norm}[1]{\left\lVert#1\right\rVert}
\newcommand{\E}{\mathbb E}
\newcommand{\poly}{\operatorname{poly}}
\newcommand{\negl}{\operatorname{negl}}
\newcommand{\QMA}{\mathsf{QMA}}
\newcommand{\PH}{\mathsf{PH}}
\newcommand{\Haar}{\operatorname{Haar}}
\DeclareMathOperator{\Tr}{Tr}
\allowdisplaybreaks[2]
\setlist[itemize]{leftmargin=*,itemsep=3pt,topsep=5pt}
\setlist[enumerate]{leftmargin=*,itemsep=3pt,topsep=5pt}
\theoremstyle{definition}
\newtheorem{definition}[theorem]{Definition}
\newtheorem{construction}[theorem]{Construction}
\numberwithin{equation}{section}
\newcommand{\kabir}[1]{\textcolor{red}{K: #1}}

\title{How to Build Pseudorandom Unitaries in Microcrypt}
\author{Aditya Gulati\\{\small UCSB}\\{\small\texttt{adityagulati@ucsb.edu}}\and Dakshita Khurana\\{\small UIUC and NTT Research}\\{\small\texttt{dakshita@illinois.edu}}\and Kabir Tomer\\{\small UIUC}\\{\small\texttt{ktomer2@illinois.edu}}}
\date{September 27, 2026}
\hypersetup{pdftitle={How to Build Pseudorandom Unitaries in Microcrypt},pdfauthor={Aditya Gulati, Dakshita Khurana, Kabir Tomer}}
\begin{document}
\maketitle
\begin{abstract}
We provide the first evidence relative to a classical oracle that pseudorandom unitaries (PRUs) can exist without quantum-computable one-way functions.

To obtain this result, we first prove that five independent random diagonal phase layers interleaved with Hadamard transforms
\[
U=F_5HF_4HF_3HF_2HF_1
\]
form a strong PRU with security under adaptive, controlled access to the unitary, its inverse, transpose, and complex conjugate.

Our main result is that, when the phase layers are implemented using a classical random oracle \(O\), the construction remains secure against uniform $\mathsf{BQP}$ adversaries with coherent access to suitable Boolean completions of classical-input \(\mathsf{QMA}^{\mathsf{PH}^{O}}\) decision problems.
Such an adversary can invert (even quantum-computable) one-way functions and invert efficiently verifiable one-way puzzles defined relative to \(O\). Our result therefore provides classical-oracle evidence that PRUs do not imply quantum-computable one-way functions, in the form of a black-box separation.

We substantially strengthen the classical-oracle result of Kretschmer, Qian, and Tal (STOC 2025) by obtaining PRUs secure against these \(\mathsf{QMA}\)-aided adversaries. Our construction also consists of a simple, efficient circuit that queries a random function, suggesting a route to concrete implementation via the random-oracle heuristic.

\end{abstract}
\clearpage
\begingroup\footnotesize\setlength{\parskip}{0pt}\tableofcontents\endgroup
\clearpage
\section{Introduction}\label{sec:introduction}

Pseudorandom unitaries (PRUs) are keyed families of efficient quantum circuits that no polynomial-time quantum algorithm can distinguish from Haar-random unitaries, even under adaptive queries~\cite{JLS}. They are a powerful primitive in quantum cryptography: they imply pseudorandom states, one-way state generators, and one-way puzzles, which underpin constructions of commitments, secure computation, and private-key quantum money~\cite{JLS,AQY,MY,KT24}. They also form a useful working model for how Haar-random behavior emerges in physical dynamics~\cite{MH, SHH24, SMLBH}.

Remarkably, they may exist even if one-way functions (OWFs) do not! Relative to a quantum oracle, PRUs have been shown to exist even when $\mathsf{BQP} = \mathsf{QMA}$~\cite{Kretschmer21}, and relative to a classical oracle, even when $\mathsf{P} = \mathsf{NP}$~\cite{KQT25}. 
These separations establish that PRUs, and hence other types of quantum cryptography, can be based on assumptions {\em weaker than} one-way functions. A central challenge in the area is to turn this possibility into {\em concrete} constructions.

Can these oracle separations guide us toward concrete PRUs in the plain model whose security does not require even quantum-computable one-way functions?
Prior works fall short of providing such guidance.
In Kretschmer's \cite{Kretschmer21} construction, the PRU is supplied directly by a quantum Haar random oracle, leaving  it unclear how to realize it by an explicit circuit.
Replacing this oracle with efficiently sampled random circuits suggests a natural candidate, but the oracle separation does not establish whether such a replacement preserves security in the absence of one-way functions. 
The PRU of Kretschmer, Qian, and Tal~\cite{KQT25} is obtained from a quantum-computable OWF that exists relative to their oracle, so it also essentially relies on one-wayness. Moreover, as the authors themselves note, it is unclear how to heuristically instantiate their construction in the plain model.

Moreover, all existing provably secure constructions in the plain model~\cite{MPSY24, MH, SHH24, SMLBH} are built from a pseudorandom function or permutation, and their security proofs therefore rely on one-way functions. One may wonder whether these same plain model constructions plausibly remain PRUs even if $\mathsf{P} = \mathsf{NP}$, but
recent no-go results~\cite{GKT26} rule this out for several prominent examples, including CPFC~\cite{MPSY24,MH} and base strong LRFC~\cite{SMLBH}. While these attacks do not cover every known architecture, the absence of an attack alone 
provides limited evidence of security without one-way functions. To make meaningful progress, we must move beyond the absence of known attacks, and address both construction and security simultaneously:


\begin{center}
    {\em Can we identify a concrete PRU candidate that plausibly does not require OWFs?\\ Can we provide evidence for its security even without quantum-computable OWFs?}
\end{center}

\subsection{Our Results}

We give an affirmative answer by analyzing a simple circuit:
five diagonal phase layers interleaved with Hadamard transforms
($\mathsf{F}_5\mathsf{H}\mathsf{F}_4\mathsf{H}\mathsf{F}_3\mathsf{H}\mathsf{F}_2\mathsf{H}\mathsf{F}_1$), where each $\mathsf{F}_i$ denotes a keyed random function family. In their influential work introducing pseudorandom states and pseudorandom unitaries,~\cite{JLS} conjectured that a constant number of independently keyed diagonal phase layers interleaved with Hadamard transforms should yield a PRU. 
However, despite major progress in analyzing new PRU constructions over the past couple of years, security of this particular candidate remained open\footnote{An earlier version of \emph{Efficient Quantum Pseudorandomness from Hamiltonian Phase States}~\cite{BHHP25} included a security analysis of a related six-phase construction. The authors subsequently removed that analysis after identifying a bug.}. Unlike other recent candidates, this particular construction uses only (constant rounds of) diagonal randomness and fixed Hadamard transforms.

We resolve their conjecture, and prove the following stronger theorem. 

\begin{theorem} (Informal)
$\mathsf{F}_5\mathsf{H}\mathsf{F}_4\mathsf{H}\mathsf{F}_3\mathsf{H}\mathsf{F}_2\mathsf{H}\mathsf{F}_1$ is a strong PRU with security under adaptive,
controlled queries to the unitary, its inverse, transpose, and complex
conjugate.
\end{theorem}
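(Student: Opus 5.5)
We follow the path-recording framework of Ma and Huang, adapted to the strong setting (forward, inverse, transpose, conjugate, all controlled). First we reduce to the information-theoretic case: each $F_i$ is replaced by a uniformly random phase function $f_i:\{0,1\}^n\to\{\pm 1\}$, or a random $2^n$-th root of unity, with a $t$-wise independent or PRF-keyed family standing in at the end. It then suffices to show that for every $t$-query adaptive oracle algorithm $\mathcal A$ with controlled access to $U,U^\dagger,U^T,\overline U$, the final states under $U=F_5HF_4HF_3HF_2HF_1$ and under Haar $U$ are $O(\poly(t)/2^{n/2})$-close in trace distance. Controlled access is handled in the standard way: the ancilla-controlled query is a $U\oplus I$ block, and an extra ``control-off'' branch is carried through the purification. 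Transpose and conjugate are handled by noting that $H^T=H$ and diagonal phases are symmetric, so $U^T=F_1HF_2HF_3HF_4HF_5$ and $\overline U=\overline F_5H\cdots H\overline F_1$. All four oracles are therefore words in $H$ and the same five diagonal phases, each used with sign $\pm$.

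Next we purify each random phase $F_i$ into a phase register storing a superposition over functions, or equivalently a relation/database register in the Fourier basis. We record each query to $F_i$ as an $x$-dependent update of a ``phase-path'' database, in the spirit of Zhandry's compressed oracle: querying $F_i$ or $\overline F_i$ on $\ket{x}$ toggles a $\pm$ record of $x$ in database $D_i$. The central step is to prove an \emph{approximate path-recording identity}. Restricted to the subspace where the recorded inputs in the five databases are all distinct (the analogue of Ma--Huang's ``distinct'' subspace), the joint action of $F_5HF_4\cdots HF_1$ on $\ket{x}$ together with the databases is, up to $O(t^2/2^{n/2})$ error, an isometry $V$ mapping $\ket{x}\ket{R}\mapsto \sum_y 2^{-n/2}\ket{y}\ket{R\cup\{(x,y)\}}$. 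The inverse/conjugate directions correspond to the partner isometry on a second relation, which matches the strong path-recording oracle of Ma--Huang for $U,U^\dagger$ and its transpose extension. The three middle layers are where the real work happens. $F_1$ and $F_5$ act as ``twirls'' that make inputs and outputs distinct and uniformly random. $H F_2 H F_3 H F_4 H$ then has to make the map from recorded $x$ to recorded $y$ look like a fresh uniformly random pairing even when an inverse query arrives with a $y$ that was previously output. Three hidden Fourier-mixing phase layers are needed so that the ``collision'' terms, where an inverse query's path in the middle databases intersects an earlier forward path, cancel: each such intersection forces a linear constraint over $\mathbb Z_2^n$ of codimension $n$, so it costs $2^{-n}$ per pair. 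We then use the Ma--Huang result that the strong path-recording oracle is $O(t^2/2^n)$-indistinguishable from Haar (with the known extension to transpose/conjugate), and combine by the triangle inequality over $t$ queries.

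We expect the main obstacle to be the middle-layer analysis: bounding the norm of the ``bad'' component of the purified state after each query. This is the component where the recorded paths through $D_2,D_3,D_4$ are not in bijection with the recorded $(x,y)$ pairs, and it is the step where the earlier six-layer analysis reportedly failed. Our first approach is a combinatorial expansion. We write the state after $t$ queries as a sum over path-labelled graphs (vertices are recorded points in each layer, edges are Hadamard amplitudes $\pm 2^{-n/2}$), then show that every graph containing a cycle through the middle layers receives a character sum that cancels to within $2^{-n/2}$ relative weight. With $\poly(t)$ possible cycles, this gives a total error $\poly(t)/2^{n/2}$. If the direct expansion becomes unwieldy, the fallback is to prove an inductive invariant: the state is $\epsilon_k$-close to $V$-image states, with $\epsilon_{k+1}\le\epsilon_k+O(k/2^{n/2})$. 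We would verify this invariant separately for each of the four query types, using the $H$-symmetry to reduce the transpose and conjugate cases to forward and inverse.
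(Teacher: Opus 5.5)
Your high-level framework (purify the tables, compare with the Ma--Huang/Schuster et al.\ recorder, let the outer layers twirl and the middle layers supply the pairing) matches the paper's. The gap is the decisive step, your approximate path-recording identity for the middle layers, which is only asserted; moreover, distinctness is not the condition under which the table state determines the record. In the Fourier basis of the purified tables, the core $C=F_4HF_3HF_2$ holds only three things: the signed input set in table~2, the signed output set in table~4, and the signed multiset of XOR labels in table~3, because $HF_3H$ records $x\oplus y$ in the Hadamard-rotated basis (Lemma~\ref{pubcore:xor-recorder}). Distinct inputs and outputs do not make the map from relations to table states injective. With three edges of one sign, two different matchings can share all three sets. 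For example, $x=(0,b,c)$ and $y=(a,a\oplus b\oplus c,a\oplus b)$, matched identically or cyclically, both give the XOR multiset $\{a,a\oplus c,a\oplus b\oplus c\}$, so they encode to the same state. The missing idea is the paper's good-record condition: bijectivity, plus Cartesian XOR-injectivity on $X_\sigma\times Y_\sigma$ for each sign, plus $D_+\cap D_-=\varnothing$. Under these conditions $(X_\pm,D_\pm,Y_\pm)$ recovers $(L,R)$. Together with a collision-free projection on middle locations costing at most $2t(t-1)/N$ (Lemma~\ref{pubcore:good-gram}), this gives $\|T_{\mathrm{good}}^\dagger\mathsf C_aT_{\mathrm{good}}-W_{\mathrm{good},a}\|\le8t^2/\sqrt N$ for all five branches with one common encoding. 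Your cycle-cancellation expansion would have to rediscover these events; as written, it neither identifies nor bounds them.

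Second, you do not say how the twirling by $F_1,F_5$ is used when those tables are reused across adaptive queries. In the paper, $G=HF_1$ and $D=F_5H$ act as a sample-controlled rotation $Q_b$ of the records. Two losses are bounded by exact two-copy phase--Hadamard twirls: leaving the recorder domain ($\gamma_i=8i\sqrt{i/N}$, Lemma~\ref{lem:pub-boundary-defects}) and violating the XOR conditions ($b_i=9i^4/N$, Lemma~\ref{lem:pub-boundary-good-mass}). These bounds hold only in expectation over the boundary tables, for vectors independent of them. They therefore cannot be applied to the actual table-dependent state, and the restricted recorder matches the physical oracle only up to its own norm loss, not in operator norm. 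The paper handles this in three steps. It runs the adversary against the table-independent recorder $\widehat V$ and rotates only that reference. It tracks the overlap of the restricted twirled recording with the rotated reference (Lemma~\ref{lem:controlled-overlap}). It then telescopes the per-query losses as a norm loss (Lemma~\ref{lem:approximate-compression-transfer}). Your inductive invariant can work only if the comparison states are encodings of this table-independent reference. The per-query errors are then of order $c_j+\sqrt{\gamma_j}+\sqrt{b_{j+1}}$, not $O(k/2^{n/2})$. The XOR bound additionally needs that no edge occurs with both signs, because the mixed twirl retains the non-decaying term $\ket{\Phi_a}\bra{\Phi_a}$; the paper gets this from forest support of path records (Lemma~\ref{lem:reference-forest}).

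Two smaller corrections. First, drop the $\pm1$-phase option. Real phases make $U$ real orthogonal, so $\overline U=U$ and $(U\otimes U)\ket{\Phi}=\ket{\Phi}$ for the maximally entangled $\ket{\Phi}$, which two queries detect; the paper uses phase order $N>8t$ so that counters do not alias. Second, the strong recorder-versus-Haar bound you import is $O(t^2N^{-1/8})$ (Theorem~\ref{thm:controlled-haar}), not $O(t^2/2^n)$.
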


\paragraph{Security beyond quantum-computable one-wayness.} Our main result shows that this circuit remains a strong PRU even with access to powerful $\mathsf{QMA}$ and $\mathsf{PH}$ decision oracles that depend on \(F = F_1 F_2 F_3 F_4 F_5 \). In particular, we define an oracle $O_F$ such that $\mathsf{P}^{O_F} = \mathsf{NP}^{O_F}$. Then, we prove that the PRU remains secure even against $\mathsf{QMA}^{O_F}$, and going further, also against $\mathsf{QMA}(2)^{O_F}$! We state our main theorem, informally, below.

\begin{theorem} (Informal)
$\mathsf{F}_5\mathsf{H}\mathsf{F}_4\mathsf{H}\mathsf{F}_3\mathsf{H}\mathsf{F}_2\mathsf{H}\mathsf{F}_1$  remains a strong PRU against uniform polynomial-time quantum adversaries with coherent access to \(O_F\), (some Boolean completions of) any fixed finite family of \(\mathsf{QMA}^{O_F}\) and \(\mathsf{QMA}(2)^{O_F}\) promise problems, and any fixed finite family of \(\mathsf{PH}^{F}\) languages.
\end{theorem}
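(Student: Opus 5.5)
The plan has two stages. First, establish the oracle-free strong PRU property of $U=F_5HF_4HF_3HF_2HF_1$ (the first informal theorem). Then lift it to adversaries holding coherent access to $O_F$ and the fixed $\mathsf{QMA}^{O_F}$, $\mathsf{QMA}(2)^{O_F}$, $\mathsf{PH}^F$ oracles. The lifting step is a \emph{compression/indistinguishability} argument: these decision oracles carry only polynomially many bits of information about $F$ that matter to a polynomial-query adversary, so they can be simulated with a small advice-like perturbation.

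\textbf{Stage 1: oracle-free strong PRU.} I would follow the path-recording framework. Replace each $F_i$ by a truly random phase function, and view $U$ through its action on $t$-fold tensor powers. The key steps are:
\begin{itemize}
\item Use the random phases $F_1,F_5$ together with the Hadamards to show that, on distinct-input subspaces, $F_2HF_1$ behaves like a random-phase ``purifier.'' Concretely, prove that $HF_1$ followed by $F_2$ is close in diamond norm (error $O(t^2/2^n)$) to a path-recording isometry on the distinct subspace.
\item Invoke the symmetric statement for the inverse side via $F_5HF_4$.
\item Show that the middle layer $F_3$ glues the two halves into the known path-recording oracle $V$ for Haar-random unitaries that supports forward and inverse queries.
\item Handle transpose and conjugate queries using the symmetry of $H$ and of diagonal phases: $U^T=F_1HF_2HF_3HF_4HF_5$ and $\bar U$ has conjugated phases. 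These reduce to the same palindromic structure, so the same recording argument applies with a purification that records the conjugate phases.
\item Handle controlled access by the standard trick that the global-phase ambiguity is fixed because the phases act on registers that are themselves recorded.
\end{itemize}
Five layers are needed because each Hadamard step costs one layer of phase randomness to return to the distinct-subspace regime, in both directions.

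\textbf{Stage 2: oracle-aided security.} Fix the finite family of oracles. Each Boolean completion of a $\mathsf{QMA}^{O_F}$ or $\mathsf{PH}^F$ problem on input length $m$ is a function of $F$, and the adversary uses only $\poly(n)$ of them at lengths $\le\poly(n)$. The argument proceeds as follows:
\begin{itemize}
\item Use a presampling/bit-fixing lemma for random oracles with auxiliary input (Unruh/Coretti–Dodis–Guo–Steinberger style, quantum version by Chung et al.). Conditioned on the adversary's view of the decision oracles, $F$ is close to a $p$-bit-fixing source with $p=\poly(n)$.
\item For $\mathsf{PH}^F$, I would instead use the Håstad/Rossman–Servedio–Tan style random-restriction argument as in KQT. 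An $\mathsf{AC}^0$ circuit over $F$ is, with high probability under a random restriction, determined by few bits. I would apply this to the distinguishing task $F\mapsto$ the answer, and show that the answers are insensitive to the phase-layer values away from few positions. This is where ``suitable Boolean completions'' matter: choose completions that are themselves low-degree/approximately determined by few queries.
\item For $\mathsf{QMA}$, use that a $\mathsf{QMA}^{O_F}$ verifier makes $\poly$ queries. Coherent adversary access plus witness maximization can be simulated, via the $\mathsf{PH}$-collapse style argument, by a $\mathsf{PH}^{F}$ computation on approximate acceptance probabilities (Aaronson-type: acceptance probability of a query algorithm is a low-degree polynomial in $F$). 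This gives a $\mathsf{PH}^F$ oracle, and $\mathsf{QMA}(2)$ is handled the same way.
\item With $F$ replaced by a bit-fixing source fixing $\poly$ positions, Stage 1 still holds, since fixing polynomially many phase entries perturbs the path-recording analysis by $\poly/2^n$. Those positions are hit with negligible probability.
\end{itemize}

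The main obstacle is the $\mathsf{QMA}$ part: justifying that a completion of a promise problem can be chosen so it depends on $F$ in an $\mathsf{AC}^0$-like, restriction-friendly way. I would first try defining the completion via a threshold on a gap-amplified acceptance probability computed approximately inside $\mathsf{PH}^F$ (Stockmeyer counting). I would then argue that this is a constant-depth quasi-poly circuit in $F$, so that the KQT restriction machinery applies uniformly.
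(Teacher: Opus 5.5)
The main gap is that your plan has no hidden key, and without one the claim is false. If the added oracles are built from the challenge tables $F$ themselves, as your plan assumes, then $O_F$ already lets the adversary implement $U(F)$ with $O(n)$ table queries. So does the single $\mathsf{P}^F$ language $\{(j,x,i):\text{bit } i\text{ of } h_j(x)\text{ is }1\}$. The adversary then prepares $U(F)\ket{0}$ itself and swap-tests it against the challenge's output. Presampling/bit-fixing does not fix this. Those lemmas replace bounded auxiliary input computed from the oracle \emph{before} the game, but they leave the adversary's own oracle queries in place. Here the decision-oracle queries are adaptive, coherent, and issued after challenge queries, so Stage~1 does not ``still hold'' for the resulting source.

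The statement has to be read, as in the paper, for a keyed construction. $U_k$ reads row $k$ of a public random oracle $R$ with $2^\kappa$ rows, and $O_R=(R,B_R)$ and the completion $Q\sim\nu_R$ are built from all of $R$. Security then reduces to showing that an adversary holding the same challenge cannot tell $Q$ built from $R$ from $Q$ built from $R^{k\leftarrow S}$, where the challenge row is replaced by an independent row. After that, Stage~1 applies, because the challenge tables are independent of every public oracle. This reduction needs two ingredients your plan lacks.
\begin{itemize}
\item \emph{Scalar stability}: $\E_{k,S}|r_z(O_R)-r_z(O_{R^{k\leftarrow S}})|\le2^{-\Omega(\kappa)}$ simultaneously for all short $z$. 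It comes from random restrictions on per-key row switches. Your restriction instinct is right, but it must bound the effect of replacing a \emph{uniformly random} row, which only makes sense once the challenge row is hidden among $2^\kappa$.
\item \emph{A weighted version} bounding $\E_S[|r(S)-c|\,W_v(S)]$. It is needed because earlier challenge queries make the amplitudes of later $Q$-queries depend on the same row $S$. This is the covariance argument of Section~\ref{sec:weights}, which rests on single-core-bit bounds for the path-recorded encoding (Appendices~\ref{app:core-tools}--\ref{app:derivatives}).
\end{itemize}
A persistent private-table hybrid, a two-run second-moment bound, and Borel--Cantelli then fix the oracle.

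Your $\mathsf{QMA}$ step also fails as proposed. You cannot approximate acceptance probabilities of oracle quantum circuits in $\mathsf{PH}^F$ by Stockmeyer counting. They are signed (GapP-type) sums, Stockmeyer only approximates nonnegative counts, and there are oracles relative to which $\mathsf{BQP}\not\subseteq\mathsf{PH}$ (Raz--Tal). Low degree of the acceptance polynomial says nothing about computing its value in $\mathsf{PH}$, and the maximum over witnesses makes this harder.

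The paper never places $\mathsf{QMA}$ in $\mathsf{PH}$. It uses only that $g_z$ is a pointwise maximum over witnesses of $[0,1]$-valued polynomials of degree at most $2\ell_{\rm in}(\kappa)$ in the oracle bits. After a random restriction of the row switches, every bit of $O_R$ is a shallow decision tree (\cite[Lemma~38]{KQST} and \cite[Theorem~42]{AIK}). So $g_z$ becomes a maximum of bounded low-degree polynomials, whose sensitivity is controlled through one-sided certificates (Lemma~\ref{lem:polynomial-max-sensitivity}). Only the maximizing witness at one oracle is reused at another, so the same argument covers product witnesses, hence $\mathsf{QMA}(2)$, without convexity. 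The ``suitable completion'' is just the Bernoulli$(r_z(O_R))$ table, not an $\mathsf{AC}^0$-computable one.

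Finally, your Stage~1 decomposition starts from a false claim. For $U'=F_2HF_1$ one has $U'\ket{e_1}\propto Z_1U'\ket{0}$, where $e_1$ is the first unit vector of $\mathbb F_2^n$ and $Z_1$ is Pauli $Z$ on that qubit. So two queries and a swap test distinguish $U'$ from Haar. The paper instead uses the core $F_4HF_3HF_2$. Its middle table records $x\oplus y$, which restores the input--output matching that $F_2,F_4$ alone lose. The outer layers $HF_1$ and $F_5H$ serve only as flatteners that make the good-record restriction cheap.
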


\paragraph{Separating Quantum Computable One-way Functions from Pseudorandom Unitaries.}
The decision oracles we introduce can invert quantum-computable OWFs and solve efficiently verifiable one-way puzzles defined relative to \(O_F\). Yet, they do not suffice to distinguish our construction from Haar-random. This yields the first fully-black-box separation (using a classical oracle) between PRU and quantum-computable one-way functions.

Fully black-box separations are a standard tool for establishing
limitations of generic cryptographic constructions
\cite{IR89,GMR01,RTV04,HR04,BBF13}.
They rule out constructions that access the underlying primitive
only through its oracle interface and whose security reductions
also treat the attacking algorithm as a black box.
In our setting, the construction and reduction access the PRU
through its specified oracle interface, and the reduction accesses
the inverter only as an oracle.
This is weaker than a full oracle separation, which would rule
out every relativizing reduction.
The distinction between these notions is discussed in
\cite{GMR01,RTV04,HR04,BBF13}.

\paragraph{Concrete Instantiation.} Our construction suggests a direct plain-model candidate through the random-oracle heuristic: replace every \(F_i\) with a public hash function, with the secret key contributing the first few bits of the input to the hash. This yields an explicit circuit with oracle-model evidence for security beyond quantum-computable one-way functions. We also hope that this construction will admit instantiations from concretely chosen explicit functions $F$ that are {\em not} one-way, and nevertheless suffice for the pseudorandomness guarantee.

\subsection{AI methodology}
\label{sec:ai-methodology}

This project grew out of the authors' discussions that we recently formalized into a paper, \emph{How Not to Build Microcrypt}~\cite{GKT26}. That work gave \(\mathsf{NP}\) attacks on several prominent PRU candidates. We used those attacks to investigate which constructions could remain secure in Microcrypt and what could account for their security. Our attempts to extend the attacks repeatedly failed for circuits that alternate many diagonal operators with Hadamard or Clifford operators. This led us to revisit the alternating diagonal-Hadamard construction proposed by Ji, Liu, and Song~\cite{JLS}, as a simple candidate that had resisted our attacks and whose security remained open.

The proof that the five phase-Hadamard construction is a strong PRU was human generated; the authors brainstormed with AI about proof strategies but did not find them useful. We subsequently ``taught'' AI the proof of strong PRU security, and then used AI systems to investigate security against \(\mathsf{PH}\)- and \(\mathsf{QMA}\)-aided adversaries. At this stage, AI was instrumental in suggesting proof approaches building on Kretschmer, Qian, and Tal~\cite{KQT25} and in developing those approaches into detailed proofs. The authors spent considerable time simplifying and cleaning up the suggested approaches.

We also used AI to draft and revise substantial portions of the manuscript, with input and direction from the authors. The authors take full responsibility for the paper, including all mathematical claims, proofs, and references.

\section{Technical Overview}

\subsection{The Construction}

Let \(N=2^n\), and let \(H\) be the \(n\)-qubit Hadamard transform. Sample five independent functions \(h_j:\mathbb F_2^n\to\mathbb Z_N\), each with independent uniform values, and define
\[
 \xi_j(x)=e^{2\pi i h_j(x)/N},\qquad
 F_j\ket{x}=\xi_j(x)\ket{x}.
\]
The construction is
\[
 U=F_5HF_4HF_3HF_2HF_1.
\]

For the analysis, we split the circuit into a recording \emph{core} $C$ and two boundary \emph{scramblers} $G$ and $D$:
\[
 C=F_4HF_3HF_2,\qquad
 G=HF_1,\qquad D=F_5H,\qquad U=DCG.
\]
The core enables path-recording, while the scramblers provide a flattening property: for any polynomial-size collection of input states fixed independently of the boundary phases, with overwhelming probability the outputs of \(G\) and \(D^\dagger\) all have computational-basis measurement probabilities bounded by \(O(n/2^n)\).

\subsection{Strong PRU Security}\label{sec:overview-strong-pru}

Let random variable $S$ denote all five function tables \(S=(h_1,\ldots,h_5)\), and let $\Pr[S]$ be the probability of sampling any particular table set. 

In the proof, we add a register that the adversary cannot access. This register has one basis state $|S\rangle$ for each choice of table. The initial state of the hidden register is
\[
\sum_S\sqrt{\Pr[S]}\,|S\rangle \propto \sum_S \ket{S}
\]

Let $\mathsf{A}$ be the adversary's query register, and let $U_S$ be the unitary defined by the five tables $S$. A forward query acts as
\[
 |\psi\rangle_{\mathsf{A}}|S\rangle
 \longmapsto
 \bigl(U_S|\psi\rangle_{\mathsf{A}}\bigr)|S\rangle
\]
for every state $|\psi\rangle_{\mathsf{A}}$ and table choice $S$.
If we ignore the table register, the experiment is equivalent to sampling the tables once and reusing them for all queries.

\paragraph{The Idealized Experiment.} To compare our circuit with Haar, we consider an idealized experiment based on Ma and Huang's path-recording framework~\cite{MH}. We will run the same adversary with the same initial state and the same operations between queries. 

But as a helpful approximation to interacting with a Haar oracle, each oracle call will now apply a fixed linear operator, called the {\em recording operator} to the query register and two private record registers. 
The operator depends on the query type and current records, and is independent of the tables $S$.

Two private registers $\mathsf{L},\mathsf{R}$ hold collections of input--output pairs, denoted $L$ and $R$. Together, these collections form a \emph{record}. They distinguish phase factors from their complex conjugates, as explained below. Let $N=2^n$, and let $x,y$ denote $n$-bit strings. The record starts empty. On its first forward query, the recording operator acts as follows:
\[
 |x\rangle_{\mathsf{A}}|\emptyset,\emptyset\rangle_{\mathsf{LR}}
 \longmapsto
 \frac{1}{\sqrt N}\sum_y
 |y\rangle_{\mathsf{A}}|\{(x,y)\},\emptyset\rangle_{\mathsf{LR}}.
\]
Each term pairs an output $|y\rangle$ with the record $L=\{(x,y)\}$, $R=\emptyset$. 
A superposition of inputs gives the corresponding superposition of these output--record states.

For later queries, the recording operator includes both creation and
deletion terms. Forward creation adds $(x,y)$ to $L$, in a uniform
superposition over unused outputs $y$. Inverse creation instead adds
$(x,y)$ to $R$: here the query input is $y$ and the output is $x$.

Deletion allows the recorder to represent the cancellation $UU^\dagger=I$.
Deletion reverses creation in the opposite direction. In particular,
a forward query on $x$ can delete a pair $(x,y)$ from $R$ and output
$y$, undoing a recorded inverse transition from $y$ to $x$. This
deletion term vanishes if $R$ contains no pair with first coordinate
$x$. Similarly, an inverse query on $y$ can delete a pair $(x,y)$
from $L$ and output $x$.

As an example, consider an inverse query on $y$, starting with empty records. It
produces a superposition over $x$. In each term, the query register
holds $x$, and $R$ stores $(x,y)$, recording where that term came from.
Now apply a forward query to this entire output state. Its deletion
term returns the query register from $x$ to $y$ and removes the
accompanying pair $(x,y)$ from $R$. The contributions from all $x$
add coherently, restoring the original state with empty records:
\[
 |y\rangle|\emptyset,\emptyset\rangle
 \;\xmapsto{\mathrm{inverse}}\;
 \frac{1}{\sqrt N}\sum_x
 |x\rangle|\emptyset,\{(x,y)\}\rangle
 \;\xmapsto{\mathrm{forward}}\;
 |y\rangle|\emptyset,\emptyset\rangle.
\]
In the second step, deleting each recorded pair contributes amplitude
$1/\sqrt N$; the contributions from all $x$ add coherently to recover
the original state. 

This sequence of recording operations approximates the adversary's state when interacting with a Haar oracle~\cite{MH}.

\paragraph{The Core of the Construction.}
To prove that our construction is a strong PRU, we show that the view of the adversary interacting with our circuit is close to the same recording computation. 

We begin by examining the three central layers, which form the \emph{core} $C$:
\[
 U=(F_5H)\,C\,(HF_1),\qquad C=F_4HF_3HF_2.
\]
Its matrix entries satisfy
\begin{equation}
 C_{y,x}=\langle y | C| x\rangle = \frac{1}{\sqrt N}\xi_2(x)\xi_4(y)\widehat\xi_3(x\oplus y),
 \qquad
 \widehat\xi_3(d)=\frac{1}{\sqrt N}\sum_z(-1)^{d\cdot z}\xi_3(z).
 \label{eq:core}
\end{equation}
Here, $\widehat\xi_3$ denotes the normalized Walsh--Hadamard transform, $\oplus$ is bitwise XOR, and $d\cdot z$ is the inner product mod 2.

We first examine one phase gate $F_j$. We will imagine that the entire truth table $h_j$ is stored in a hidden quantum register. This register has one cell for each address $u\in\mathbb F_2^n$, holding the value $h_j(u)\in\mathbb Z_N$. As in the purification above, each cell starts in a uniform superposition of all possible values. Write $\omega=e^{2\pi i/N}$. The gate $F_j$ multiplies $|x\rangle$ by $\omega^{h_j(x)}$: if the cell at address $x$ holds $a$, this contributes the phase $\omega^a$ to the joint state. Thus, at a fixed query address $x$, the state of this cell changes as follows:
\[
\begin{aligned}
 \text{Initially:}\quad &\frac{1}{\sqrt N}\sum_{a\in\mathbb Z_N}|a\rangle,\\
 \text{After one application:}\quad &\frac{1}{\sqrt N}\sum_{a\in\mathbb Z_N}\omega^a|a\rangle,\\
 \text{After two applications:}\quad &\frac{1}{\sqrt N}\sum_{a\in\mathbb Z_N}\omega^{2a}|a\rangle.
\end{aligned}
\]
Each application adds one to the coefficient of $a$ in the exponent. We introduce $k\in\mathbb Z_N$ to denote this coefficient and write
\[
 |\widetilde{k}\rangle:=\frac{1}{\sqrt N}\sum_{a\in\mathbb Z_N}\omega^{ka}|a\rangle.
\]
Here $a$ labels the stored function value, whereas $k$ labels the accumulated phase pattern. A further application of $F_j$ at $x$ sends $|\widetilde{k}\rangle$ to $|\widetilde{k+1}\rangle$ in that cell. An application of $F_j^\dagger$ multiplies by $\omega^{-a}$ and sends it to $|\widetilde{k-1}\rangle$. We can therefore interpret $k$ as a signed counter, modulo $N$, of these applications. The states $\{|\widetilde{k}\rangle:k\in\mathbb Z_N\}$ form the Fourier basis of a cell. For a superposition of query addresses, the corresponding cells are updated coherently. 

For each $j$, let $\mathbf k_j$ be the vector of these Fourier labels, one for each cell of the register holding $h_j$. We write $|\mathbf k_j\rangle_j$ for the tensor product of the corresponding cell states. In this notation, $|\mathbf 0\rangle_j$ is the initial state, and $|\mathbf e_u\rangle_j$ has label $1$ at address $u$ and label $0$ everywhere else. Distinct label vectors give orthogonal states.

Now consider the first forward query to the core $C=F_4HF_3HF_2$, on input $|x\rangle_{\mathsf A}$, with all three hidden registers initially in $|\mathbf 0\rangle$. The first phase gate $F_2$ changes the second register to $|\mathbf e_x\rangle_2$. The first Hadamard creates a superposition of addresses $z$, with amplitude $(-1)^{x\cdot z}/\sqrt N$; on each such branch, $F_3$ changes the third register to $|\mathbf e_z\rangle_3$. The second Hadamard gives output $y$ with amplitude $(-1)^{z\cdot y}/\sqrt N$, and $F_4$ changes the fourth register to $|\mathbf e_y\rangle_4$. Since $(-1)^{x\cdot z}(-1)^{z\cdot y}=(-1)^{(x\oplus y)\cdot z}$, the joint output is
\[
 \frac{1}{N}\sum_{y,z}(-1)^{(x\oplus y)\cdot z}
 |y\rangle_{\mathsf A}|\mathbf e_x\rangle_2|\mathbf e_z\rangle_3|\mathbf e_y\rangle_4.
\]
To identify what the middle register contains, define
\[
 |\chi_d\rangle_3:=\frac{1}{\sqrt N}\sum_z(-1)^{d\cdot z}|\mathbf e_z\rangle_3.
\]
These states are orthonormal, because
\[
 \langle\chi_d|\chi_{d'}\rangle
 =\frac{1}{N}\sum_z(-1)^{(d\oplus d')\cdot z}
 =\delta_{d,d'}.
\]
Thus the output can be written as
\[
 \frac{1}{\sqrt N}\sum_y |y\rangle_{\mathsf A}
 |\mathbf e_x\rangle_2|\chi_{x\oplus y}\rangle_3|\mathbf e_y\rangle_4.
\]
The second hidden register identifies $x$, the fourth identifies $y$, and the third identifies $x\oplus y$ through the orthogonal states $|\chi_d\rangle$.

Thus we will say that the three registers \emph{record} the input, output, and XOR. In the middle register, the XOR is encoded by the Hadamard-transformed basis of the $\ket{\mathbb{e}}$ states.

Compare this expression with the first query of the idealized experiment. They agree under the encoding
\[
 |\{(x,y)\},\emptyset\rangle_{\mathsf{LR}}
 \longmapsto
 |\mathbf e_x\rangle_2|\chi_{x\oplus y}\rangle_3|\mathbf e_y\rangle_4,
\]
which sends distinct pairs to orthogonal hidden states.

While this mapping works well for a single query, when there are multiple queries, these same registers accumulate further updates, including cancellations from inverse queries, and we need more care. 

For example if we naively consider two forward queries, the recording registers $2$ and $4$ contain
\[
|\mathbf e_{x_1}+\mathbf e_{x_2}\rangle_2
\otimes
|\mathbf e_{y_1}+\mathbf e_{y_2}\rangle_4.
\]
This tells us which inputs and outputs occurred, but loses information about which input was paired with which output.

On the other hand, the ideal records {\em do} carry information about which input gets mapped to which output: the records
\[ L = \{(x_1, y_1), (x_2, y_2)\},~~~~L' = \{(x_1, y_2), (x_2, y_1)\}\]
are different.

This means we need to identify, in our recording registers for $h_2, h_3, h_4$, information about {\em which} $x$ value got mapped to {\em which} $y$ value after multiple queries.
This missing information is supplied by the middle register, which records the XORs of pairs of $(x,y)$ queries. 
If all candidate pairs between the recorded inputs and outputs have distinct XORs, then each recorded XOR identifies a unique pair. Under this condition, \((x,y)\) belongs to the record exactly when its XOR label \(x\oplus y\) belongs to the recorded XOR collection.

This explains how the middle register can preserve the matching information lost by the second and fourth registers. We must now specify conditions that make this reconstruction unambiguous.
We must also account for inverse queries, which decrement the counters. These decrements can cancel increments at individual addresses even when the corresponding input–output pairs differ, potentially obscuring which pairs remain in the record.
Furthermore, recovering the pairs is only one requirement for comparison with the idealized experiment.

We must also encode superpositions of records faithfully, not just identify individual records. Different ideal records are orthogonal, and mapping them to overlapping hidden states can change interference between their amplitudes. We therefore need to show that the encoding approximately preserves norms and inner products on the relevant space of record superpositions.

We must also show that the encoding remains valid after each query. Starting from an encoded ideal state, applying the core should give approximately the same result as applying the ideal recording operator first and then encoding its output. This comparison includes both the creation and deletion terms, with their amplitudes.
We next state the required conditions on records and describe the encoding used to establish these properties.

\paragraph{The input--output pairs.}
We first specify which records our encoding will represent. Recall that $L$ contains pairs corresponding to forward queries (positive pairs) and $R$ contains pairs corresponding to inverse queries (negative pairs). These signs describe their contributions to the phase counters. A positive pair $(x,y)$ contributes $+1$ at input address $x$ in the second register and output address $y$ in the fourth register. A negative pair contributes $-1$ at those addresses.

Opposite contributions can conceal information. For example, a positive pair $(x,y)$ and a negative pair $(x,y')$ cancel in the input counter at $x$, even when $y\ne y'$. The input counters alone would then fail to show that either pair was present. To avoid this and the matching ambiguity described above, we restrict attention to records satisfying three conditions:
\begin{enumerate}
\item \textbf{All input labels are distinct across $L$ and $R$, and all output labels are distinct across $L$ and $R$.} Each occupied position in a record on registers 2 and 4 therefore identifies both an address and the sign of its contribution.
\item \textbf{No XOR collisions within either sign.} Among all candidate pairs between the positive input and output sets, distinct pairs must have distinct XORs. Impose the same condition separately on the negative input and output sets.
\item \textbf{No XOR collisions across signs among recorded pairs.} For every $(x,y)\in L$ and $(x',y')\in R$, require $x\oplus y\ne x'\oplus y'$.
\end{enumerate}
These conditions explain how to reconstruct the record from its signed input, output, and XOR collections. The endpoint collections identify the available inputs and outputs for each sign. The XOR collection then determines which of those inputs and outputs belong together. We will call records satisfying these conditions \emph{good records}. Restricting to good records is part of the proof; we must later show that this restriction discards little of the computation.

We now extend the single-query encoding to several recorded pairs. Recall that each positive XOR entry with label $d$ is represented by a superposition of counter locations through the state $|\chi_d\rangle_3$; negative entries use the corresponding counter decrements. All entries contribute to the same hidden counter register, so their updates add. Consequently, a term may assign two entries to the same location, even when their XOR labels are distinct. We simplify the encoding by retaining only terms with distinct locations, including locations arising from opposite signs, and must bound the weight this discards.

This is a projection in the definition of the encoding. It introduces neither a measurement nor an additional operation into the actual circuit. We leave the projected vector unnormalized so that its lost squared norm measures how much the restriction discarded.

The proof bounds this loss by examining collisions between pairs of entries. Crucially, the estimate applies to superpositions of good records, rather than only to each record individually. For a unit vector supported on good records with at most $t$ pairs, the discarded squared norm is at most
\[
 \frac{2t(t-1)}{N}.
\]
The bound above is small when $t^2/N\ll1$. In that regime, restricting the middle locations discards little squared norm. This bound concerns the encoding of good records. We still have to show why restricting the computation to good records causes little error.

\paragraph{The role of the outer layers.}
Our encoding applies only to good records. We must also bound the error caused by discarding terms that violate the endpoint and XOR conditions. The outer transformations $G=HF_1$ and $D=F_5H$ provide the randomness used for these bounds.

The difficulty is that the outer tables are sampled once and reused. Earlier queries can produce states that depend on these tables, and the adversary uses those states to prepare later queries. Thus, we cannot simply apply a bound that assumes the input is independent of the random transformation acting on it.

We handle this by first running the adversary against the ideal recorder. This computation never uses the hidden phase tables, so the joint state of the adversary and the records is independent of them. We call this the \emph{reference computation}. The adversary can still choose queries using earlier answers; those answers now come from the ideal recorder.

We then incorporate the outer tables into the representation of these records. Using $G$ and $D$, we transform the stored input and output registers, replacing basis labels by superpositions of labels. The transformed records depend on the tables, as they should. But the reference state before this transformation is independent of them. This lets us apply the averaged collision bounds to a fixed reference state. The same tables are used at every step of the comparison.

The reference records also have a useful restriction. Regard each pair $(x,y)$ as an edge between an input vertex and an output vertex. Each creation adds an edge with at least one new endpoint, so it cannot close a cycle or duplicate an existing edge. Each deletion removes an edge. Starting from empty records, these graphs therefore have no cycles: they are \emph{forests}. In particular, the same pair cannot occur with both signs. The proof uses this property of the original records to bound unwanted collisions after transforming their labels. The bounds apply to the full superposition of records and give a small total loss for polynomially many queries.

Some matching labels must nevertheless be retained. For example, a forward query on $x$ can delete a stored negative pair $(x,y)$ and return $y$. Matching the query's input with the stored input is necessary for this operation. The outer transformations preserve the matching operation needed for deletion, including its amplitudes. We therefore retain this part while bounding the unwanted matches.

\paragraph{The complete computation.}
We must combine these comparisons for every query the adversary is allowed to make. The adversary may coherently choose among $I,U,U^\dagger,U^T,U^*$, where $U^*$ denotes entrywise complex conjugation. For example, its control register may be in a superposition of requesting a forward query and an inverse query.

It is therefore insufficient to compare the query types separately while allowing different phase conventions or incompatible encodings. A phase difference between two control branches changes their interference. We use one common encoding and compare all query types coherently, keeping the same hidden tables throughout.

We also consider a recording computation that discards disallowed components after each query. We bound its total loss of squared norm, showing that these restrictions remove little weight. Combining this bound with the core comparison controls the difference between the full states, including the hidden registers.

Combining our comparison with the ideal recorder with the recorder's approximation to Haar gives distinguishing advantage $O(t^2N^{-1/8})$ for $t$ queries when $3(2t)^{16}\le N$. The proof builds on the recorder analysis of Schuster et al.~\cite{SMLBH}, with restricted records as in Ananth et al.~\cite{ABGL}. We can replace the random tables with quantum-query-secure pseudorandom functions to get a plain-model construction.


\subsection {\texorpdfstring{Security against (suitable Boolean completions of) \(\mathsf{QMA}^{\mathsf{PH}}\) oracles}{Security against (suitable Boolean completions of) QMA with PH oracles}}

So far the random bits used to define $U$ remained completely hidden. Next, we will define a keyed version of the construction using a public random oracle $R$ that is used to define the five tables, for each key in an exponential keyspace. In other words each key
\(k\in\{0,1\}^{\kappa}\) selects a row of $R$ containing the five
phase tables that define \(U\). We call the unitary thus obtained $U_k$. The construction samples a secret random key $k$ and outputs $U_k$. Now even though the phase tables are no longer completely hidden, they can only be accessed by queries to $R$ containing the hidden key $k$, and therefore the resulting PRU can be shown to be secure by an elementary argument.

This keyed construction however allows us to greatly increase the power of the adversary without compromising security. In particular we allow the adversary to query an additional oracle $Q$ whose answers depend on $R$ and which allows the adversary to decide languages in $\mathsf{QMA}^{\mathsf{PH}^{R}}$\footnote{In fact with some tweaks even $\mathsf{QMA}(2)^{\mathsf{PH}^{R}}$}. We construct this oracle $Q$ in steps. First we define 
\(O_R=(R,B_R)\), where \(B_R\) is an auxiliary truth table
determined by \(R\). Following~\cite{KQST, KQT25}, this table essentially contains answers to all $\mathsf{NP}^{O_R}$ problems and it has been shown that 
\(\mathsf{PH}^R\subseteq\mathsf P^{O_R}\).


Next we define the oracle $Q$. Fix a universal \(\mathsf{QMA}\) verifier \(\mathcal V^{(\cdot)}\) equipped with oracle gates.
Given input a description of an oracle-aided quantum circuit $z$ and a quantum state $\rho$ the verifier executes the circuit on $\rho$ and returns its measured bit, answering the circuit's oracle queries using its own oracle. For an oracle $O$ define the \textit{optimal acceptance probability} $g_z$ as
\[
 \begin{aligned}
 g_z(O)&=\max_\rho
   \Pr[\mathcal V^O(z,\rho)\text{ accepts}]
 \end{aligned}
\] The oracle $Q$ on input $z$ outputs $1$ if $g_z(O_R) \geq 2/3$ and outputs $0$ if $g_z(O_R) \leq 1/3$. Otherwise $Q$ independently samples a random bit from a Bernoulli distribution where the accept probability depends on $g_z(O_R)$, linearly interpolating between $1$ when $g_z(O_R) = 2/3$ and $0$ when $g_z(O_R) = 1/3$. We refer to this interpolated value as the \textit{clipped score} for $z$ and define it as
\[
r_z(O)=\min\{1,\max\{0,3g_z(O)-1\}\}.
\]
The oracles $O_R$ and $Q$ are sampled at the beginning of the experiment and fixed, and the adversary is given access to $Q$. Note that $Q$ access can be used to simulate access to $O_R$.

Our goal is to show that our PRU remains secure against such adversaries. For the randomly chosen $k$, we call the row of $k$ in $R$ the \textit{challenge row}, since it is used to define the challenge oracle $U_k$ in the real experiment for PRU security. It suffices to show that the adversary cannot distinguish the following two experiments with nonnegligible advantage. In both experiments, the adversary receives the same challenge \(U_k\). In the first experiment, we define \(Q\) from \(R\). In the second experiment, \(U_k\) still uses the original challenge row, however when defining $Q$ we replace the challenge row of \(R\) with an independent uniform row \(S\)  (calling the resultant oracle $R^{k \leftarrow S}$), and then define $Q$ using the new oracle $R^{k \leftarrow S}$.
Indistinguishability suffices because the replacement makes the phase tables used by $U_k$ independent of the public oracles, and the
independent-phase theorem from Part~\ref{part:construction} then shows that $U_k$ is indistinguishable from Haar. 


Section~\ref{sec:classical-stability} first shows that replacing the challenge row cannot change the answers given by $Q$ by much. Specifically it shows that for polynomially bounded queries \(z\) and all
sufficiently large \(\kappa\),  except with probability
at most \(2^{-\kappa^2}\) over \(R\),
\[
 \mathbb E_{k,S}
 \left|r_z(O_R)-r_z(O_{R^{k\leftarrow S}})\right|
 \le 2^{-\Omega(\kappa)}
 \quad\text{simultaneously for every }z.
\]
At one address \(z\), the
absolute score difference is exactly the total variation distance
between the two Bernoulli answer distributions. 

A natural way to understand how replacement affects the answers of \(Q\)
is to use one switch for each row. Fix the bits of \(R\) outside the
rows at key length \(\kappa\). For each key \(k\), fix two candidate
rows \(S_k^{(0)},S_k^{(1)}\). Define a Boolean vector
\(\mathbf s\in\{0,1\}^K\), where \(K=2^\kappa\), that selects row
\(S_k^{(s_k)}\) at each key, resulting in the oracle $R_{\mathbf s}$. We can think of the success probability \(g_z\) as a function
of the switch bits, with \(z\), the candidate rows, and all outside
oracle bits fixed:
\[
 f_z(\mathbf s):=g_z(O_{R_{\mathbf s}}).
\]
Flipping one switch replaces one row. Our goal is to show that, for
most switch settings, flipping a uniformly random switch changes
\(f_z\) by very little.

For bounded low-degree polynomials, there exist strong bounds on their sensitivity, i.e. the number of single-bit flips that change their value by at least a
given amount can be bounded in terms of the degree, independently of
the number of variables. However, our $f_z$'s dependence on the switch bits is not straightforward and additionally involves maximizing over the set of witnesses. $f_z$ is unlikely to be low degree.

Nevertheless, we are able to use low degree polynomials to analyze $f_z$.
First, we use the recursive definition of the auxiliary oracle
\(B_R\). Its recursive queries have at most the square root of the
current input length. Thus, for polynomial-length verifier queries,
only constantly many recursive levels can depend on the rows at key
length \(\kappa\); shorter queries depend only on the bits we fixed.
Expanding these computations into ORs of ANDs shows that the bits of $B_R$ can be computed by constant-depth
circuits of size \(\exp(\operatorname{poly}(\kappa))\) on the bits of $R$, as shown in~\cite[Lemma~38]{KQST}. For our setup this gives us that
\(O_{R_{\mathbf s}}\) has circuits with the same bounds in the
switches.

Next we apply the
random-restriction theorem of Aaronson, Ingram, and
Kretschmer~\cite{AIK} (with some fairly straightforward adaptation from decision trees to polynomials) to show that, except with very small probability, randomly fixing a large fraction of bits of the switch vector causes the
truth-table bits of \(O_{R_{\mathbf s}}\) to be expressible as low degree polynomials in the remaining bits of $s$. Since a query-bounded quantum algorithm's success probability is a low degree polynomial in the oracle bits, this lets us show that the success probability is also low degree in the switch bit (up to random restriction).

If the answers of $Q$ depended on $O_R$ simply via oracle queries this would be sufficient for our bound. However, they also involve a maximum over witnesses. We now show how to deal with this issue. For each fixed witness \(\omega\), a verifier that makes at
most \(T\) quantum queries has acceptance probability of degree at
most \(2T\) in the oracle-answer bits, by the polynomial
method~\cite{BBCMW98}. Substituting the restricted answer polynomials
therefore gives a polynomial \(P_\omega\) of degree at most \(2Td\)
in the free switches. It remains an acceptance probability, so its
values lie in \([0,1]\) on every assignment of the free switches.
The same successful restriction works for every witness.

The optimal acceptance probability need not itself be a polynomial:
it maximizes over witnesses, and the best witness can change with the
switch setting. Fix a successful restriction, and write \(f(y)\) for
\(f_z\) as a function of the remaining free switches \(y\). Then
\[
 f(y)=\max_\omega P_\omega(y).
\]
At each setting \(y\), choose an optimal witness and denote its
polynomial by \(P_y\). This polynomial agrees with \(f\) at \(y\)
and lies below \(f\) everywhere. Thus, for every other setting
\(y'\),
\[
 f(y)-f(y')\le P_y(y)-P_y(y').
\]
Whenever \(f\) decreases from \(y\) to \(y'\), the fixed-witness
polynomial decreases by at least as much. In particular, from the
higher endpoint of any pair, this polynomial controls the decrease
in optimal acceptance even if a different witness becomes optimal
at the lower endpoint. Appendix~\ref{app:scalar-stability-tools}
shows that having such a bound at every input allows us to still show powerful bounds on the sensitivity of $g_z$. The main claim of Section~\ref{sec:classical-stability} follows from these bounds.

\subsection{Accounting for queries to the construction.}
Consider the state just before a query to $Q$, after at most $t$
queries to $U(S)$. Replacing the queries to $U(S)$ by path-recording gives a reference state $|v\rangle$, whose records contain at most $t$ pairs.
The encoding from Section~\ref{sec:overview-strong-pru} converts these records into a
purification of the five tables, giving a state of the form
\[
  \sum_S \sqrt{\Pr[S]}\,
  |q_v(S)\rangle\otimes|S\rangle.
\]
This approximates the actual query state, including its hidden
tables. The row in this register is the same $S$ that we insert
into $R^{k\leftarrow S}$ when comparing queries to $Q$.

In the experiment without $U(S)$ each $\ket{q_v(S)}$ is the same, hence we get equal weights on all $S$. With access to $U(S)$,
different rows can have different weights
$W_v(S)=\|q_v(S)\|^2$. These weights describe the contribution of each
row to the query and they need not equal one.

\paragraph{A weighted score change.}
Fix the baseline oracle $R$, the key $k$, and the queried verifier input $z$.
Write $r(S) = r_z(O_{R^{k\leftarrow S}})$ for its clipped score after inserting $S$, and $c = r_z(O_{R})$
for its score before replacement. Instead of just the average
score change $|r(S)-c|$, we must now bound
\[
  \mathbb{E}_S\bigl[|r(S)-c|W_v(S)\bigr].
\]
The earlier unweighted estimate does not immediately suffice:
the query could place more weight on precisely those rows where
the score changes most.

We separate the average weight $W_v(S)$ from the average score change $|r(S)-c|$ by breaking their correlation. To do this, we first fix the boundary tables $F_1,F_5$.
This fixes the boundary transformations,
while leaving the three core tables $F_2,F_3,F_4$ independent and
uniform. We average all three core tables together because they
jointly implement the core encoding.

Write $\mathbb{E}_{\mathrm{core}}$ for this conditional average.
Although the core tables are independent of one another, the
score change and recording weight both depend on the same core
sample. We break their dependence as follows
\[
\begin{aligned}
  \mathbb{E}_{\mathrm{core}}[|r-c|W_v]
  ={}&
  \mathbb{E}_{\mathrm{core}}[|r-c|]\,
  \mathbb{E}_{\mathrm{core}}[W_v]
  \\
  &+
  \operatorname{Cov}_{\mathrm{core}}(|r-c|,W_v).
\end{aligned}
\]
The first term is what we would obtain if the score change and
weight were evaluated on independent core samples. The covariance measures the correlation
caused by using the same sample for both. In particular, it captures
the possibility that larger score changes occur on rows receiving
larger query weights. We bound these two terms separately, then
average over the boundary tables to recover the original weighted
expectation.

The first term is easy. Averaging the weight over all three core
tables gives the squared norm of a subnormalised state.
The boundary rotation preserves norms, and the good-record
projection and core encoding cannot increase them. Thus
$\mathbb{E}_{\mathrm{core}}W_v\le\|v\|^2$ for every fixed boundary
sample. Averaging the first term over the boundaries therefore
recovers the unweighted score change $\mathbb{E}_S\bigl[|r(S)-c|\bigr]$, multiplied by $\|v\|^2$.

The covariance is harder because the score and query weight use
the same tables. Section~\ref{sec:weights} bounds it using the expected largest
change in the weight under a single core-bit flip, together with
a bound on the score's total decrease over such flips $|r(\mathrm{core})-r(\mathrm{core}^{\oplus k})|$. The latter
can be shown from techniques in Section~\ref{sec:classical-stability}. The new ingredient is
therefore to show that changing any one core bit has little effect
on the encoded state.

\paragraph{Changing a middle-table bit.}
Recall that a recorded pair $(x,y)$ contributes a middle factor
of the form
\[
  \sum_z H_{x\oplus y,z}\,\xi_3(z)^\sigma,
  \qquad \sigma\in\{+1,-1\}.
\]
Flipping a bit of the table word at $z_0$ changes only the
$z_0$ sum. Its change has magnitude at most
$2|H_{x\oplus y,z_0}|=2/\sqrt N$. Thus, although the phase itself
may change substantially, it appears with a small Hadamard
coefficient.

For several recorded pairs, the collision-free encoding retains
only terms assigning distinct middle locations to different pairs.
Hence, each affected term has exactly one pair responsible
for the change at $z_0$. We can group terms by that pair and remove
its middle factor. What remains is an encoding of the other pairs,
with the additional requirement that they avoid $z_0$. Each record
has at most $t$ choices of the affected pair, and every such
contribution exposes the same $O(1/\sqrt N)$ factor.

The remaining issue is interference between different records.
Here the good-record conditions are useful
again: deleting a marked pair leaves a good record, and the
remaining record together with that pair identifies the original.
Appendices~\ref{app:core-tools} and~\ref{app:derivatives} show that this costs only polynomial factors
in $t$ and $n$. 

\paragraph{Changing an endpoint-table bit.}
For $F_2$, changing a bit at $x$ directly changes an endpoint phase,
so the $F_3$ argument does not apply on its own.
Instead, the Hadamard between $F_1$ and $F_2$ supplies the small
factor. The input labels in our good records came from applying
the boundary scrambler to a reference state.
Tracing an affected input $x$ back to a source label $p$ exposes
the coefficient $H_{x,p}\xi_1(p)^\sigma$, again of magnitude
$1/\sqrt N$.

Hence, we can do a similar analysis as $F_3$ here. Similarly, we can bound the bit flips for $F_4$ by analysing the Hadamard between $F_4$ and $F_5$.

\paragraph{Completing the weighted estimate.}
Together, these arguments bound the largest change in the encoded
vector over all core-bit flips, in expected squared norm, by
$\operatorname{poly}(t,n)\|v\|^2/N$. The expectation includes all
five tables. Passing from vectors to their squared norms gives an
expected largest weight change of at most
$\operatorname{poly}(t,n)\|v\|^2/\sqrt N$.

Section~\ref{sec:weights} combines this with the score-variation bound, controlling
the rare exceptional rows separately, to make the covariance small.
Hence we get 
\[
  \mathbb{E}_S\bigl[|r(S)-c|W_v(S)\bigr]
  \le
  \left(
    \mathbb{E}_S|r(S)-c|+\lambda_t
  \right)\|v\|^2,
  \qquad
  \lambda_t\le
  \frac{\operatorname{poly}(\kappa)}{\sqrt N}.
\]
Averaging over the key $k$, this is reduced to the earlier unweighted stability estimate.
Section~\ref{sec:erasure} converts this bound into a per query error between querying the oracle $O_{R^{k\leftarrow S}}$ and $O_{R}$.

\subsection{Completing the security proof}

We now use the preceding estimates in a hybrid argument.
Sample a random oracle $R$, a uniform key $k$, and an independent
uniform row $S$, and write $R'=R^{k\leftarrow S}$.
In both experiments, the challenge is $U(S)$.
In the first, the public completion is sampled according to
$\nu_{R'}$; in the second, it is sampled according to $\nu_R$.
The first experiment has the real distribution, since the
challenge uses row $k$ of the public random oracle.
In the second, the public oracle is independent of the
challenge tables. Once we establish indistinguishability of
these experiments, we can use the proof developed in the first part to prove indistinguishability from Haar.

\paragraph{Changing one public query at a time.}
We replace the public queries one at a time, always using the
independent oracle before the changed query and the replaced
oracle afterwards. The preceding estimates apply to precisely
such an independent-oracle prefix, including the dependence
of its query amplitudes on $S$ through earlier challenge queries.

One additional step is needed: our estimates concern completion
scores, whereas the adversary queries a sampled Boolean table.
For the analysis, we represent this table in a private register
retained throughout the computation. This representation
preserves the correlations between repeated coherent queries
to the same table and lets us bound a query's change using the
weighted score estimate. Each replacement causes negligible
expected error. Summing over the polynomially many public
queries bounds the difference between the two experiments,
averaged over the public oracles.

\paragraph{Fixing one public oracle.}
The hybrid initially bounds the signed average distinguishing
gap. This alone is insufficient, since positive and negative
gaps for different public oracles could cancel.
We therefore apply the hybrid bound to a two-run test whose
runs share the same public oracle. This gives an exponentially
small bound on the expected squared gap.
Markov's inequality and Borel--Cantelli then imply eventual
negligible advantage for almost every fixed oracle.
Taking a countable intersection gives this guarantee
simultaneously for all uniform polynomial-time adversaries.
Section~\ref{sec:erasure} gives the formal argument.

\subsection{Extensions and the black-box consequence}

The argument described above also extends to \(\mathsf{QMA}(2)\), where the verifier receives two unentangled witnesses. The same product witness is valid before and after a change in the base oracle. Thus, the proof applies without further assumptions on the witness. The universal completion also simulates base queries, and $\mathsf{PH}^R\subseteq\mathsf{P}^{O_R}$ gives coherent \(\mathsf{PH}\) queries. These extensions apply to each fixed finite family of verifiers. Here the restriction concerns the verifiers defining the added \(\mathsf{QMA}\) and \(\mathsf{QMA}(2)\) oracles: they have access to \(O_R\), but not to the added completion oracles or the challenge unitary. The adversary itself may query all of these oracles.

To prove that PRUs do not black-box imply quantum-computable OWFs, we use our oracle to find classical inverses for any quantum computable one-way function. Since this corresponds to finding \(\mathsf{QCMA}\) witnesses, and \(\mathsf{QCMA}\) admits a search-to-decision reduction, we can use our (completion of the) \(\mathsf{QMA}\) oracle to obtain such witnesses. We need some care with the thresholds we pick for our promise gaps; we provide details in Appendix~\ref{sec:oracle-search}. 


This search inverts quantum-computable functions with classical outputs and solves efficiently verifiable classical puzzles over $O_R$. Suppose a fully black-box construction could build such a one-way function from a strong PRU. With our keyed PRU, its evaluator uses the raw oracle $R$. The completion supplies an inverter. The proposed reduction would turn this inverter into a distinguisher against the PRU, contrary to our security theorem.

Thus, in the model of Appendix~\ref{sec:oracle-search}, no uniform fully black-box construction of quantum-computable classical-output one-way functions from strong PRUs exists. This consequence does not cover functions or puzzles whose evaluation or verification uses the added completion, or challenges that are quantum states.

\part{Security of the PRU}\label{part:construction}
\section{Our construction}\label{sec:construction}

We describe the construction below.

Let \(N=2^n\).
Sample five independent functions
\(h_j:\mathbb F_2^n\to\mathbb Z_N\), each with independent uniform
values, and set
\[
 \xi_j(x)=e^{2\pi i h_j(x)/N},\]
 \[F_j\ket{x}=\xi_j(x)\ket{x},\]
 \[
 U=F_5HF_4HF_3HF_2HF_1.
\]

We replace the random functions $F_1, \ldots F_5$ by quantum-query-secure pseudorandom functions (PRFs) to obtain a plain-model construction, described in Section~\ref{sec:plain-model}.

We denote by \(S=(h_1,\ldots,h_5)\) the set of five phase tables and by \(U(S)\), the resulting unitary. 


\subsection{Interfaces and query parameters}
For a Boolean oracle \(T\), quantum access to the oracle is equivalent to accessing the gate
\[U_T: \ket{z,y}\mapsto\ket{z,y\oplus T(z)}\text{, where} y\in\{0,1\}\]

Transpose and conjugation are taken in this fixed computational basis.
Let \(P_0,\ldots,P_4\) be orthogonal projections on a selector register,
with \(\sum_{a=0}^4P_a=I\). Throughout, access to \(U\) means coherent
access to
\[
 \mathsf{Select}(U)=P_0\otimes I+P_1\otimes U+
 P_2\otimes U^\dagger+P_3\otimes U^T+P_4\otimes\overline U.
\]
We write \(U\) for this oracle interface in oracle superscripts and query
recurrences. The selector is not measured, and all five branches use
the same sampled matrix with one fixed global phase.

We use \(t\ge1\) as the query/record-size budget:
\begin{equation}
 \#\{\text{selector queries}\}\le t,\qquad
 \text{record size}\le t.
 \label{eq:selector-cutoff}
\end{equation}

Each query changes the record size by at most one, and inter-query
operations leave the record registers untouched. Thus, after \(j\) queries,
the record size is at most \(j\). For computational applications, \(n=n(\kappa)\le\poly(\kappa)\).

\subsection{Normalized records and comparison recorders}
\label{pubcore:recorders}

\paragraph{Set and relation registers.}
For every finite multiset \(M=\{u_1,\ldots,u_s\}\), write
\(m_M(u)\) for the multiplicity of \(u\) and define
\begin{equation}
 \ket{M}
 =\frac{1}{\sqrt{s!\prod_u m_M(u)!}}
   \sum_{\pi\in\mathfrak S_s}
       \ket{u_{\pi(1)},\ldots,u_{\pi(s)}}.
 \label{pubcore:record-relation-state}
\end{equation}
We use this convention for every set or multiset ket throughout.
For a set, the factor is \(1/\sqrt{s!}\). Distinct multisets give
orthogonal unit vectors, including across lengths; the empty set gives
the vacuum. Write \(\ket{M_1,M_2}=\ket{M_1}\otimes\ket{M_2}\).

A relation is a finite multiset of edges
\((x,y)\in\mathbb F_2^n\times\mathbb F_2^n\), in forward input--output
order. Applying~\eqref{pubcore:record-relation-state} with \(u_i=(x_i,y_i)\)
gives the relation registers of~\cite{MH}.
These states form an
orthonormal basis of
\(\mathcal R_s=\operatorname{Sym}^s(\mathbb C^N\otimes\mathbb C^N)\).
We use two variable-length registers, denoted \(L\) and \(R\), with joint space
\[
 \mathcal H_{LR}=\bigoplus_{\ell+r\le N}\mathcal R_\ell\otimes\mathcal R_r.
\]
Write \(\Pi_{\ell,r}\) for a fixed-length
block and \(\Pi_{\le j}\) for total size at most \(j\). 

The pair \((L,R)\) is \emph{bijective} when its union has distinct domains
and distinct images. Let \(\Pi_{\mathrm{bij}}\) be the
projector on bijective $(L,R)$ and \(\Pi_{\mathrm{bij},\le j}=\Pi_{\mathrm{bij}}\Pi_{\le j}\).

\paragraph{The unrestricted recorder.}
All the isometries below are zero on states outside their
stated domain.
For every \((L,R)\) of total size at most \(N-1\), define
\begin{align}
 V^L\ket{x}\ket{L,R}
 &=\frac{1}{\sqrt{N-\lvert\operatorname{Im}(L\cup R)\rvert}}
   \sum_{y\notin\operatorname{Im}(L\cup R)}
   \ket{y}\ket{L\cup\{(x,y)\},R},
 \label{pubcore:record-VL}\\
 V^R\ket{y}\ket{L,R}
 &=\frac{1}{\sqrt{N-\lvert\operatorname{Dom}(L\cup R)\rvert}}
   \sum_{x\notin\operatorname{Dom}(L\cup R)}
   \ket{x}\ket{L,R\cup\{(x,y)\}}.
 \label{pubcore:record-VR}
\end{align}
We interpret these as adding edges and their
adjoints as removing edges in the relations $(L,R)$. Following
\cite[Definitions~25--26]{MH}, define
\begin{equation}
 V=V^L(I-V^RV^{R\dagger})
    +(I-V^LV^{L\dagger})V^{R\dagger}.
 \label{pubcore:record-V}
\end{equation}
The two terms describe creation and coherent deletion on orthogonal
subspaces. The projections in \eqref{pubcore:record-V} ensure that \(V\)
is a partial isometry.

\paragraph{The bijective partial recorder.}
On a bijective \((L,R)\) of size \(s<N\), define
\begin{align}
 W^L\ket{x}\ket{L,R}
 &=\frac{1}{\sqrt{N-s}}
   \sum_{y\notin\operatorname{Im}(L\cup R)}
   \ket{y}\ket{L\cup\{(x,y)\},R},
 &&x\notin\operatorname{Dom}(L\cup R),
 \label{pubcore:record-WL}\\
 W^R\ket{y}\ket{L,R}
 &=\frac{1}{\sqrt{N-s}}
   \sum_{x\notin\operatorname{Dom}(L\cup R)}
   \ket{x}\ket{L,R\cup\{(x,y)\}},
 &&y\notin\operatorname{Im}(L\cup R),
 \label{pubcore:record-WR}
\end{align}
Set
\begin{equation}
 W=W^L+W^{R\dagger}.
 \label{pubcore:record-W}
\end{equation}
This is the bijective recorder of \cite[Definitions~22 and~24]{MH}.
The domain of \(W^L\) has a fresh query input, i.e. $x$ for $x\notin\operatorname{Dom}(L\cup R)$; the image of \(W^R\) has
a negative edge's domain as query input, i.e. $x$ for $x\in\operatorname{Dom}(R)$. Hence, we have that the domain of \(W^L\) and \(W^{R\dagger}\) are orthogonal. Similarly, the image of \(W^L\)
is orthogonal to the image of \(W^{R\dagger}\). Hence
\[
 (W^L)^\dagger(W^R)^\dagger=0,\qquad W^LW^R=0,
 \qquad
 W^\dagger W=(W^L)^\dagger W^L+W^R(W^R)^\dagger.
\]
The last expression is a sum of orthogonal projectors, so \(W\) is a
partial isometry. Its domain and image lie in the bijective sector.

\paragraph{The barred recorders.}
For conjugate queries, forward creation records a negative edge and
backward creation records a positive edge. On records of total size
at most \(N-1\), define
\begin{align}
 \overline V^L\ket{x}\ket{L,R}
 &=\frac{1}{\sqrt{N-\lvert\operatorname{Im}(L\cup R)\rvert}}
   \sum_{y\notin\operatorname{Im}(L\cup R)}
   \ket{y}\ket{L,R\cup\{(x,y)\}},
 \label{pubcore:record-bar-VL}\\
 \overline V^R\ket{y}\ket{L,R}
 &=\frac{1}{\sqrt{N-\lvert\operatorname{Dom}(L\cup R)\rvert}}
   \sum_{x\notin\operatorname{Dom}(L\cup R)}
   \ket{x}\ket{L\cup\{(x,y)\},R}.
 \label{pubcore:record-bar-VR}
\end{align}
On bijective records of size \(s<N\), the corresponding restricted maps are
\begin{align}
 \overline W^L\ket{x}\ket{L,R}
 &=\frac{1}{\sqrt{N-s}}
   \sum_{y\notin\operatorname{Im}(L\cup R)}
   \ket{y}\ket{L,R\cup\{(x,y)\}},
 &&x\notin\operatorname{Dom}(L\cup R),
 \label{pubcore:record-bar-WL}\\
 \overline W^R\ket{y}\ket{L,R}
 &=\frac{1}{\sqrt{N-s}}
   \sum_{x\notin\operatorname{Dom}(L\cup R)}
   \ket{x}\ket{L\cup\{(x,y)\},R},
 &&y\notin\operatorname{Im}(L\cup R).
 \label{pubcore:record-bar-WR}
\end{align}
Define
\begin{align}
 \overline V
 &=\overline V^L(I-\overline V^R\overline V^{R\dagger})
   +(I-\overline V^L\overline V^{L\dagger})\overline V^{R\dagger},
 \label{pubcore:record-bar-V}\\
 \overline W&=\overline W^L+\overline W^{R\dagger}.
 \label{pubcore:record-bar-W}
\end{align}
The bar names these directly defined recorders; it does not mean
entrywise matrix conjugation.

For a partial isometry \(Z\), write
\(\Pi_{\mathcal D(Z)}=Z^\dagger Z\) and
\(\Pi_{\mathcal I(Z)}=ZZ^\dagger\). The following properties of the ordinary
recorders are proved in~\cite[Claims~11--12, 14--15 and
Corollaries~7.1 and~8.3]{MH}. For the directly defined barred recorders,
see~\cite[Lemmas~15, 19, 21 and~23]{SMLBH}.

\begin{lemma}[Recording restrictions]
\label{pubcore:recording-restrictions}
The operators \(V^L,V^R,V,W\) and their barred counterparts are partial
isometries. They and their adjoints increase total record size by at
most one, and
\begin{equation}
 W^\dagger V=\Pi_{\mathcal D(W)},
 \qquad
 WV^\dagger=\Pi_{\mathcal I(W)}.
 \label{pubcore:record-restriction-identities}
\end{equation}
The barred operators satisfy
\begin{equation}
 \overline W^\dagger\overline V=\Pi_{\mathcal D(\overline W)},
 \qquad
 \overline W\overline V^\dagger=\Pi_{\mathcal I(\overline W)}.
 \label{pubcore:record-bar-restriction-identities}
\end{equation}
The projectors onto the domain and image of \(W\) and \(\overline W\)
commute with total record size. In particular, all four recorders and
their adjoints are contractions.
\end{lemma}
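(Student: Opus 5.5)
The plan is to verify each claim directly from the defining formulas \eqref{pubcore:record-VL}--\eqref{pubcore:record-bar-W}. Where the verification is identical to the cited arguments of~\cite{MH,SMLBH}, I will invoke those arguments.

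\textbf{Partial isometries.} For \(V^L\), I compute \(\langle y',L'',R''|V^L|x,L,R\rangle\) and check that \(V^{L\dagger}V^L\) is the identity on the stated domain. The key observation is that distinct inputs \((x,L,R)\) map to superpositions over records \(L\cup\{(x,y)\}\) with \(y\) fresh. From such a record together with the output register \(y\), one recovers \(x\) and \(L\): the edge with image \(y\) is unique, because \(y\notin\operatorname{Im}(L\cup R)\). Hence images of distinct basis inputs are orthogonal, and each has unit norm because of the \(1/\sqrt{N-|\operatorname{Im}|}\) normalisation. The multiset normalisation \eqref{pubcore:record-relation-state} needs care here. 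Adding a fresh edge to a multiset in which it has multiplicity zero produces exactly the normalised state of the enlarged multiset, with no extra combinatorial factor. So on the symmetric subspace, \(V^L\) sends basis vectors to orthonormal families.

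The same argument handles \(V^R\), \(W^L\), \(W^R\) and all barred versions; the barred ones only swap which of \(L,R\) receives the edge. For \(V\), I use that \(I-V^RV^{R\dagger}\) and \(I-V^LV^{L\dagger}\) are projections. The two summands then have orthogonal domains, since the second is supported on the image of \(V^R\) and the first on its complement. They also have orthogonal images, since the first lies in the image of \(V^L\) and the second in its complement. An operator with these two orthogonality properties is a partial isometry. For \(W\), this was already argued in the text; the same applies to \(\overline W\).

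\textbf{Size change and the identities.} Each creation map adds one edge and each adjoint removes one edge. The projections \(V^LV^{L\dagger}\) and their relatives preserve size. So every term changes the total size by at most one. For \(W^\dagger V=\Pi_{\mathcal D(W)}\), I expand
\[W^\dagger V=W^{L\dagger}V^L(I-V^RV^{R\dagger})+W^RV^{R\dagger}(\cdots)+\text{cross terms}.\]
On the bijective domain with \(x\) fresh, \(V^L\) agrees with \(W^L\): the normalisations coincide because \(|\operatorname{Im}(L\cup R)|=s\) in the bijective case. This gives \(W^{L\dagger}V^L=W^{L\dagger}W^L\) there. The projection \(I-V^RV^{R\dagger}\) acts as the identity on that sector, since a fresh \(x\) cannot be the domain of an \(R\)-edge. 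Analogously, the \(W^R\) part reproduces \(W^RW^{R\dagger}\). The cross terms vanish by the orthogonality of the domains, and the identity for \(WV^\dagger\) is symmetric.

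I expect these cross-term cancellations, together with the restriction of \(V\) to the bijective sector, to be the main obstacle. This is exactly the content of~\cite[Claims 11--15]{MH}, and I would follow that case analysis: fresh versus non-fresh query input, and bijective versus non-bijective record. The barred identities follow verbatim after the \(L\leftrightarrow R\) relabelling, as in~\cite{SMLBH}.

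\textbf{Commutation and contractivity.} The projectors \(\Pi_{\mathcal D(W)}\) and \(\Pi_{\mathcal I(W)}\) are sums of \(W^{L\dagger}W^L\) and \(W^RW^{R\dagger}\) and their analogues. Each of these is a composition of a size-increasing map with its adjoint, so it preserves total size and is block diagonal in the \(\Pi_{\ell,r}\) decomposition. Therefore they commute with \(\Pi_{\le j}\). Finally, every partial isometry has norm at most one, so all four recorders and their adjoints are contractions.
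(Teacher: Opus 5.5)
Your route matches the paper's, which gives no argument of its own and imports the lemma from \cite[Claims~11--12, 14--15 and Corollaries~7.1 and~8.3]{MH} and \cite[Lemmas~15, 19, 21 and~23]{SMLBH}. Your checks of \(V^L,V^R,W^L,W^R\), of the size changes, of the two identities, and of commutation with \(\Pi_{\le j}\) are sound in outline. One step, however, fails as written. For \(V\) you infer partial isometry from the two summands having orthogonal initial spaces and orthogonal ranges. That principle is false unless each summand is already a partial isometry (consider \(\tfrac12\ket{0}\bra{0}\) plus zero). A partial isometry multiplied by a projection need not be one.

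The missing observation is that \(V^{L\dagger}V^L=V^{R\dagger}V^R=I\otimes\Pi_{\le N-1}\), while \(V^LV^{L\dagger}\) and \(V^RV^{R\dagger}\) preserve total record size and so commute with \(\Pi_{\le N-1}\). Consequently the initial projector of \(V^L(I-V^RV^{R\dagger})\) is \((I-V^RV^{R\dagger})\Pi_{\le N-1}\), and the final projector of \((I-V^LV^{L\dagger})V^{R\dagger}\) is \((I-V^LV^{L\dagger})\Pi_{\le N-1}\). Both are projections, and only then does your orthogonality argument give \(V^\dagger V\) as a sum of orthogonal projections. Do not try to obtain this from commutation of \(V^LV^{L\dagger}\) with \(V^RV^{R\dagger}\), because they do not commute. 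For \(z\notin\{a,b\}\), the unit vectors \(V^L\ket{a}\ket{\varnothing,\{(z,b)\}}\) and \(V^R\ket{b}\ket{\{(a,z)\},\varnothing}\) share exactly the basis term \(\ket{z}\ket{\{(a,z)\},\{(z,b)\}}\), so their overlap is \(1/(N-1)\). Each is also orthogonal to every other vector of the opposite family.

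Two smaller points concern your expansion of \(W^\dagger V\). First, the second term is \(W^R(I-V^LV^{L\dagger})V^{R\dagger}\), with the projection in the middle rather than after \(V^{R\dagger}\). Dropping that projection, and killing the cross term \(W^RV^L(I-V^RV^{R\dagger})\), both use \(W^RV^L=0\): a query in the domain of \(W^R\) is a fresh image label, while \(V^L\) outputs an \(L\)-image. Second, the other cross term \(W^{L\dagger}(I-V^LV^{L\dagger})V^{R\dagger}\) vanishes because the range of \(W^L\) lies in that of \(V^L\), not by orthogonality of domains. With these repairs your verification is complete.
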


\paragraph{The common coherent recorder.}
In the branch order \(I,U,U^\dagger,U^T,\overline U\), define
\begin{equation}
\begin{aligned}
 (V_0,V_1,V_2,V_3,V_4)
   &=(I,V,V^\dagger,\overline V^\dagger,\overline V),\\
 (W_0,W_1,W_2,W_3,W_4)
   &=(I,W,W^\dagger,\overline W^\dagger,\overline W),\\
 \widehat V&=\sum_{a=0}^4P_a\otimes V_a,
 \qquad \widehat W=\sum_{a=0}^4P_a\otimes W_a.
\end{aligned}
 \label{eq:controlled-recorders}
\end{equation}

Let \(\ket{\psi_0}\) be the
initial state, including an arbitrary reference, and let \(A_1,\ldots,A_t\) be the inter-query isometries. Define
the physical and recorder trajectories by
\begin{equation}
\begin{aligned}
 \ket{\psi_0(U)}&=\ket{\psi_0},&
 \ket{v_0}&=\ket{\psi_0}\otimes\ket{\varnothing,\varnothing},\\
 \ket{\psi_j(U)}&=UA_j\ket{\psi_{j-1}(U)},&
 \ket{v_j}&=\widehat V(A_j\otimes I_{LR})\ket{v_{j-1}},\\
 \rho_j^U&=\ket{\psi_j(U)}\bra{\psi_j(U)}.
\end{aligned}
 \label{eq:selector-recorder-trajectories}
\end{equation}
For $j$ with \(1\le j\le t\). The vector \(\ket{v_t}\) may be subnormalized.
Theorem~\ref{thm:controlled-haar} proves that, for \(t<N\),
\[
 \frac12\left\|
   \E_{U\leftarrow\Haar(U(N))}
        \Bigl[\rho_t^U\Bigr]
       -\operatorname{Tr}_{LR}\bigl(\ket{v_t}\bra{v_t}\bigr)
      \right\|_1
 \le9t(t+1)N^{-1/8}.
\]

\section{Recording and scrambling: independent hidden phases}
\label{sec:independent-phase}

For our analysis it will be convenient to split the PRU into three parts:
\[
\text{The {\em core} } C=F_4HF_3HF_2
\]
\[
\text{Two {\em scramblers} } D=F_5H, \text{ and } G=HF_1
\]
Then, the unitary
\begin{equation}
 U=DCG.
 \label{eq:recording-scrambling-factorization}
\end{equation}

We call \(C\) the recording {\em core}, because this is the part of the construction that enables path-recording. We refer to \(G,D\) as scramblers because they act as ``state-flatteners'', in the sense that for any polynomial-sized set of input states, it holds that with overwhelming probability over a truly random function
$\xi$, the outputs of $G$ and $D^\dagger$ are all almost flat in the computational basis, i.e., have computational-basis measurement probabilities bounded by $O(n/2^n)$.

Define fourier coefficients
\begin{equation}
 \widehat\xi_3(d)=\frac1{\sqrt N}\sum_{z\in\mathbb F_2^n}
                       \xi_3(z)(-1)^{d\cdot z},
 \qquad Z_d=\widehat\xi_3(d),
 \label{pubcore:fourier}
\end{equation}
Then, using $C = F_4 H F_3 H F_2$,
\begin{equation}
 C_{y,x}=
    \langle y | C | x \rangle =
 \frac1{\sqrt N}\xi_2(x)\xi_4(y)  \widehat\xi_3(x\oplus y).
 \label{pubcore:entry}
\end{equation}

We compare oracle access to the construction \(U=DCG\) with a Haar-random unitary. To do this, we first show that for some good space, the central core acts as a recorder. We then bound the norm outside the good space lost under
the boundary transformations \(G=HF_1\) and \(D=F_5H\). 

\begin{theorem}[Independent-hidden-phase security]\label{thm:independent-phase}
Let \(N=2^n\), let the five hidden phase tables be independent uniform
functions to \(\mathbb Z_N\), and let \(A\) make \(t\ge1\) unitary-oracle
queries in the model of Section~\ref{sec:construction}. Assume
\begin{equation}
 3(2t)^{16}\le N.
 \label{eq:construction-smallness}
\end{equation}
Then
\begin{equation}
\begin{split}
 &\left|\E_S\Bigl[\Pr[A^{U(S)}=1]\Bigr]
       -\E_{V\leftarrow\Haar(U(N))}
                    \Bigl[\Pr[A^V=1]\Bigr]\right|\\
 &\hspace{25mm}\le\Delta_t
      =O(t^2N^{-1/8}).
\end{split}
 \label{eq:independent-phase-bound}
\end{equation}
\end{theorem}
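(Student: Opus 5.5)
The proof goes by the triangle inequality through the recorder trajectory $\ket{v_t}$ of \eqref{eq:selector-recorder-trajectories}. Theorem~\ref{thm:controlled-haar} (quoted at the end of Section~\ref{sec:construction}) already bounds the Haar side against $\operatorname{Tr}_{LR}\ket{v_t}\bra{v_t}$ by $9t(t+1)N^{-1/8}$. It therefore suffices to show that the purified real experiment is close to an isometric image of $\ket{v_t}$. Purify the five tables into a hidden register, written in the Fourier (counter) basis $\ket{\mathbf k_1,\dots,\mathbf k_5}$, so that a query to $\mathsf{Select}(U(S))$ is a fixed unitary $\widehat U$ on $\mathsf A$, the selector and the hidden register. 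The plan is to construct a linear encoding map $E$ from the record space $\mathcal H_{LR}$ into the hidden register and to prove
\[
 \bigl\|\widehat U A_j\ket{\Psi_{j-1}}-(I\otimes E)\ket{v_j}\bigr\|\le \|\widehat U A_j\ket{\Psi_{j-1}}-\widehat U A_j(I\otimes E)\ket{v_{j-1}}\|+\epsilon_j,
\]
where $\ket{\Psi_j}$ is the purified real state. Summing gives $\|\ket{\Psi_t}-(I\otimes E)\ket{v_t}\|\le\sum_j\epsilon_j$. Since $E$ is approximately isometric on the relevant vectors, the reduced states on the adversary register are close in trace distance. The target is $\sum_j\epsilon_j=O(t^2N^{-1/8})$, or smaller.

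\textbf{Construction of $E$.} For a good record, meaning one that is bijective with no XOR collisions within a sign and no XOR collisions across signs among recorded pairs, set
\[
 E_{\mathrm{core}}\ket{L,R}=\ket{\textstyle\sum_{L}\mathbf e_x-\sum_R\mathbf e_x}_2\otimes\Pi_{\mathrm{dist}}\Bigl(\bigotimes_{(x,y)\in L}\ket{\chi_{x\oplus y}}\otimes\bigotimes_{(x,y)\in R}\ket{\chi^-_{x\oplus y}}\Bigr)_3\otimes\ket{\textstyle\sum_L\mathbf e_y-\sum_R\mathbf e_y}_4 ,
\]
where the middle factor is read as a superposition of counter vectors and $\Pi_{\mathrm{dist}}$ keeps only terms with distinct middle locations. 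Set $E_{\mathrm{core}}=0$ on bad records. To include the boundary layers, conjugate records by the scramblers: first transform the input labels by $G$ and the output labels by $D^\dagger$ using the tables $h_1,h_5$, then apply $E_{\mathrm{core}}$.

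\textbf{Step 1 (near-isometry).} Show that $E_{\mathrm{core}}$ is injective on good records, using the reconstruction of the record from its endpoint and XOR collections. Show also that $\|E_{\mathrm{core}}w\|^2\ge(1-2t(t-1)/N)\|w\|^2$ for every superposition $w$ of good records of size at most $t$, by a pairwise-collision count applied to the whole superposition.

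\textbf{Step 2 (core intertwining).} On good records, verify $C\,E_{\mathrm{core}}=E_{\mathrm{core}}W$ up to small error for each of the five branches, with one common phase convention. This is an exact calculation with \eqref{pubcore:entry}. Creation increments the counters at $x$, $y$ and in $\chi_{x\oplus y}$, while deletion via $W^{R\dagger}$ cancels an existing negative triple with matching amplitude $1/\sqrt N$. The leftover terms are those producing bad records or colliding middle locations, and they are charged to $\epsilon_j$. Lemma~\ref{pubcore:recording-restrictions} ($W^\dagger V=\Pi_{\mathcal D(W)}$) then lets us replace $W$ by $V$ on the good subspace.

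\textbf{Step 3 (discarded weight).} This is the main obstacle: bounding the squared norm of $\ket{v_j}$ that, after scrambling by $G$ and $D^\dagger$, lands outside the good space. The key point is that $\ket{v_j}$ is independent of all tables, so one can average over $h_1,h_5$ on a fixed reference state. The flattening property gives computational-basis weights $O(n/N)$ per label, so endpoint or XOR collisions cost about $t^2\,\mathrm{poly}(n)/N$ per query, using that the reference records are forests, hence no pair appears with both signs. Deletion matches must be kept exactly; I would isolate them first and show that $G$ preserves the inner product required for deletion. The stated rate $N^{-1/8}$ and the hypothesis $3(2t)^{16}\le N$ suggest that a moment, Markov and Cauchy--Schwarz step is needed to turn averaged flatness into a bound valid coherently across the superposition. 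I would try a fourth-moment bound on the collision indicators first. Combining Steps 1–3 with Theorem~\ref{thm:controlled-haar} gives \eqref{eq:independent-phase-bound}.
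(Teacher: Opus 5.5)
Your global architecture matches the paper's. You use the triangle inequality through $\ket{v_t}$ with Theorem~\ref{thm:controlled-haar}, the encoding $E_t=T_{\mathrm{good}}P_{\mathrm{good}}Q_b$, the collision-free loss $2t(t-1)/N$, and forest support to exclude cross-sign XOR collisions of actual edges. Telescoping query by query against the table-independent reference is also a legitimate alternative to the paper's two-stage route.

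\textbf{Missing covariance step.} The induction, as you set it up, does not close. Step~2 gives $\mathsf C T_{\mathrm{good}}\approx T_{\mathrm{good}}W$ in the frame where the query has already passed through $G_a$ and the records have been rotated by $Q_b$. Replacing $W$ by $V$ there produces $D_aV_aG_aQ_b\ket{\psi}$. Your reference, however, evolves by $\widehat V$ in unrotated coordinates and is rotated afterwards, so $\epsilon_j$ involves $Q_b\widehat V\ket{\psi}$. Equating these two requires the approximate covariance of the path-recording oracle under the boundary rotation, $\|(\widehat V^{b}Q_b-Q_b\widehat V)\Pi_{\le j}\|\le c_j=16\sqrt{2j(j+1)/N}$ (Lemma~\ref{lem:controlled-covariance}). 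This identity is exact for unrestricted EPR insertion. It fails at order $j/\sqrt N$ because the freshness constraints $y\notin\operatorname{Im}(L\cup R)$ and $x\notin\operatorname{Dom}(L\cup R)$ are not invariant under rotating labels. It is not a discarded-weight estimate, and nothing in Steps~1--3 supplies it.

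\textbf{Step~3 mechanism.} Step~3 rests on the wrong tool. Coincidence events involve two labels rotated by the same table, so one-copy flatness does not control them; the paper uses exact two-copy twirls of $HF$ (Lemma~\ref{lem:pub-boundary-twirls}). Recall also that $Q_b$ is sign-dependent: it applies $G\otimes D^T$ on positive edges and $\overline G\otimes D^\dagger$ on negative ones. Same-sign pairs then cost at most $2/N$. For a $G$-rotated label against a $\overline G$-rotated one, however, the twirl retains a rank-one term $\ket{\Phi_a}\bra{\Phi_a}$ that does not decay with $N$. Such mass is small only through structure:
\begin{itemize}
\item For the live query, the structure is the EPR subtraction in the Ma--Huang defect inequality~\eqref{eq:pub-domain-input}. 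Its $2r\sqrt{s/N}$ term gives $\gamma_j=8j\sqrt{j/N}$, and after two square roots this, not a fourth-moment argument, is the source of $N^{-1/8}$.
\item For XOR labels, the structure is $P_{\mathrm{nc}}$.
\end{itemize}

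\textbf{Extra burden of your route.} Your $E$ contains $\Pi_{\mathrm{bij}}$ and is applied to $Q_b\ket{v_j}$ itself. You must therefore also bound the non-bijective mass of the rotated reference across opposite signs, where $G\otimes\overline G$ fixes $\ket{\Phi_0}$. That means showing the reference carries negligible EPR weight between opposite-sign shared endpoints. This is plausible, since $(I-V^RV^{R\dagger})$ removed that component, but it must be proved.

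The paper never needs this bound. It evolves the bijective restricted trajectory $\ket{K_{j+1}}=P_{\mathrm{good}}\widehat W^bP_{\mathrm{good}}A_{j+1}\ket{K_j}$ and tracks only its overlap with $\mathsf{cQ}(\ket{v_j}\otimes\ket{\Omega})$. This way $\gamma_j$ and $b_{j+1}$ are applied only on the boundary-independent side (Lemma~\ref{lem:controlled-overlap}). It then passes to the physical core using the compression bound $\|T_{\mathrm{good}}^\dagger\mathsf C_aT_{\mathrm{good}}-W_{\mathrm{good},a}\|\le\eta_t$ and the total norm loss $2\mathcal B_t$ (Lemma~\ref{lem:approximate-compression-transfer}).

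Your Step~2 should likewise be phrased as such a compression plus norm loss. An operator-norm intertwining fails on good inputs outside the domain of $W$, for example a query $x\in\operatorname{Dom}(L)$.
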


\subsection{Good records and the common core encoding}\label{pubcore:central}

Recall the recording core \(C=F_4HF_3HF_2\). On a query basis state,
\[
 C\ket{x}
 =\frac1{\sqrt N}\sum_y
       \xi_2(x)\widehat\xi_3(x\oplus y)\xi_4(y)\ket{y},
\]
where \(\widehat\xi_3\) is normalized as in~\eqref{pubcore:fourier}.
We want to represent these phase records by a path-recording pair
\((L,R)\). The factors \(\xi_2(x)\) and \(\xi_4(y)\) record the input
\(x\) and output \(y\) in the respective phase memories. The middle
factor encodes \(d=x\oplus y\) in a Hadamard superposition of table
locations, since
\[
 \widehat\xi_3(d)=\sum_zH_{d,z}\xi_3(z).
\]
The factor \(\xi_j(u)\) has positive phase exponent, while its
conjugate \(\xi_j(u)^{-1}\) has negative phase exponent.
Thus \(C,C^T\) contribute positive exponents and
\(C^\dagger,\overline C\) contribute negative exponents. We associate
positive edges with \(L\) and negative edges with \(R\), always storing
\((x,y)\) in forward input--output. Hence we start seeing a correspondence between the database of $C$ and path-recording. The corresponding signed
endpoint and XOR sets are
\begin{align*}
 X_+&=\operatorname{Dom}(L), & Y_+&=\operatorname{Im}(L), & D_+&=\{x\oplus y:(x,y)\in L\},\\
 X_-&=\operatorname{Dom}(R), & Y_-&=\operatorname{Im}(R), & D_-&=\{x\oplus y:(x,y)\in R\}.
\end{align*}

The size of \((L,R)\) is \(s=|L|+|R|\). We next give conditions under
which these six sets determine the edges of \((L,R)\).

\paragraph{Good signed relations.}
\label{pubcore:good}
We call \((L,R)\) \emph{good} when:
\begin{enumerate}
 \item \(L\cup R\) is bijective: all domain labels are distinct and all
       image labels are distinct.
 \item The map \((x,y)\mapsto x\oplus y\) is injective on
       \(X_\sigma\times Y_\sigma\) for each sign \(\sigma\in\{+,-\}\).
 \item \(D_+\cap D_-=\varnothing\).
\end{enumerate}
To define the XOR restriction on the full relation space, index edge
\emph{occurrences} of sign \(\sigma\) by \((x_i^\sigma,y_i^\sigma)\).
Let \(P_{\mathrm{ss}}\) impose
\begin{equation}
 x_i^\sigma\oplus x_j^\sigma\ne y_k^\sigma\oplus y_\ell^\sigma
 \qquad\text{for every }i\ne j,\ k\ne\ell,
 \label{pubcore:same-sign-condition}
\end{equation}
with the two pairs of indices allowed to overlap. On the bijective
sector this is condition 2, since
\[
 x_i^\sigma\oplus y_k^\sigma=x_j^\sigma\oplus y_\ell^\sigma
 \quad\Longleftrightarrow\quad
 x_i^\sigma\oplus x_j^\sigma=y_k^\sigma\oplus y_\ell^\sigma.
\]
Let \(P_{\mathrm{dx}}\) impose \(D_+\cap D_-=\varnothing\) on actual
positive and negative edges, also on the full relation space. Set
\begin{equation}
 P=P_{\mathrm{ss}}P_{\mathrm{dx}},\qquad
 P_{\mathrm{good}}=\Pi_{\mathrm{bij}}P,\qquad P_{\mathrm{good},\le t}=P_{\mathrm{good}}\Pi_{\le t}.
 \label{pubcore:good-projector}
\end{equation}
Let \(\mathcal G_{\mathrm{good},s}\) have orthonormal basis \(\ket{L,R}\) indexed by
good signed relations of size \(s\), and let
\(\mathcal G_{\mathrm{good}}=\bigoplus_{s=0}^t\mathcal G_{\mathrm{good},s}\).

\paragraph{Recovering the relations.}
For a good record, Cartesian XOR injectivity gives
\begin{equation}
 \begin{aligned}
 L&=\{(x,y)\in X_+\times Y_+:x\oplus y\in D_+\},\\
 R&=\{(x,y)\in X_-\times Y_-:x\oplus y\in D_-\}.
 \end{aligned}
 \label{eq:good-record-recovery}
\end{equation}
Indeed, each label in \(D_\sigma\) identifies exactly one pair in
\(X_\sigma\times Y_\sigma\). Hence
\((L,R)\mapsto(X_+,X_-,Y_+,Y_-,D_+,D_-)\) is a bijection onto its image.
This is the reconstruction used by \(J_D\) below: it lets us pass
between good path records and their signed endpoint and XOR sets.

\paragraph{Restricted recording.}
On a good source \((L,R)\) of size \(s<t\), define the allowed extensions by
\begin{equation}
 \begin{aligned}
 f_+(x;L,R)&=\{y:(L\cup\{(x,y)\},R)\text{ is good}\},\\
 f_-(x;L,R)&=\{y:(L,R\cup\{(x,y)\})\text{ is good}\},\\
 g_\sigma(y;L,R)&=\{x:y\in f_\sigma(x;L,R)\},
       \qquad \sigma\in\{+,-\}.
 \end{aligned}
 \label{eq:allowed-outputs}
\end{equation}
The creation maps retain the amplitudes of the bijective recorder:
\begin{align}
 \mathsf C_{+,\to}\ket{x}\ket{L,R}
 &=\frac1{\sqrt{N-s}}\sum_{y\in f_+(x;L,R)}
       \ket{y}\ket{L\cup\{(x,y)\},R},\nonumber\\
 \mathsf C_{-,\to}\ket{x}\ket{L,R}
 &=\frac1{\sqrt{N-s}}\sum_{y\in f_-(x;L,R)}
       \ket{y}\ket{L,R\cup\{(x,y)\}},\nonumber\\
 \mathsf C_{+,\leftarrow}\ket{y}\ket{L,R}
 &=\frac1{\sqrt{N-s}}\sum_{x\in g_+(y;L,R)}
       \ket{x}\ket{L\cup\{(x,y)\},R},\nonumber\\
 \mathsf C_{-,\leftarrow}\ket{y}\ket{L,R}
 &=\frac1{\sqrt{N-s}}\sum_{x\in g_-(y;L,R)}
       \ket{x}\ket{L,R\cup\{(x,y)\}}.
 \label{eq:allowed-creations}
\end{align}
Each map is zero unless its source is good and has size less than \(t\). The five restricted branches are
\begin{equation}
\begin{array}{c|c}
 I&I_{\mathcal G_{\mathrm{good}}}\\
 U&\mathsf C_{+,\to}+\mathsf C_{-,\leftarrow}^{\dagger}\\
 U^\dagger&\mathsf C_{-,\leftarrow}+\mathsf C_{+,\to}^{\dagger}\\
 U^T&\mathsf C_{+,\leftarrow}+\mathsf C_{-,\to}^{\dagger}\\
 \overline U&\mathsf C_{-,\to}+\mathsf C_{+,\leftarrow}^{\dagger}
\end{array}
\label{eq:allowed-branches}
\end{equation}
Notice that this is equal to \(P_{\mathrm{good},\le t}W_aP_{\mathrm{good},\le t}\) on \(\mathcal G_{\mathrm{good}}\).
The returned label identifies the inserted edge and hence the source.
Thus distinct columns have disjoint support. For \(s<t\),
\[
 \|\mathsf C_{\sigma,\to}\ket{x}\ket{L,R}\|^2
   =\frac{|f_\sigma(x;L,R)|}{N-s}\le1,
 \qquad
 \|\mathsf C_{\sigma,\leftarrow}\ket{y}\ket{L,R}\|^2
   =\frac{|g_\sigma(y;L,R)|}{N-s}\le1.
\]
Each full branch is a compression of a bijective partial recorder,
so it is also a contraction. As in~\cite[Section~4.2]{ABGL}, we
restrict the allowed extensions. 

\paragraph{Purified phase tables.}
Let \(\mathcal T_N\) be the set of functions
\(\mathbb F_2^n\to\mathbb Z_N\), and put \(\omega=e^{2\pi i/N}\).
Write \(\vec f=(f_2,f_3,f_4)\in\mathcal T_N^3\) and
\(\ket{\vec f}=\ket{f_2,f_3,f_4}\). We purify the three tables as
\begin{equation}
 \ket{\Omega_{234}}
 =\frac1{N^{3N/2}}\sum_{\vec f\in\mathcal T_N^3}
       \ket{\vec f}.
 \label{eq:core-table-purification}
\end{equation}
Each \(f_j\) is a possible value of the random function \(h_j\) from
Section~\ref{sec:construction}. On this basis state, write
\(\xi_j(x)=\omega^{f_j(x)}\). A phase query acts by
\[
 \mathsf F_j\ket{x}\ket{\vec f}
 =\xi_j(x)\ket{x}\ket{\vec f},\qquad j\in\{2,3,4\}.
\]
An inverse phase query multiplies by \(\xi_j(x)^{-1}\).
We also write \(\ket{\mathbf1}=\ket{\Omega_{234}}\).

\paragraph{Collision-free encoding.}
For a good record \((L,R)\), define
\begin{align}
 \chi_{L,R}(\vec f)
 &=\prod_{(x,y)\in L}\xi_2(x)\xi_4(y)
       \prod_{(x,y)\in R}\overline{\xi_2(x)\xi_4(y)},\nonumber\\
 \mu_{L,R}^{\mathrm{cf}}(\vec f)
 &=\sum_{\substack{(z_{x,y})_{(x,y)\in L\cup R}\\
                   \text{all }z_{x,y}\text{ distinct}}}
       \prod_{(x,y)\in L}H_{x\oplus y,z_{x,y}}\xi_3(z_{x,y})
       \prod_{(x,y)\in R}H_{x\oplus y,z_{x,y}}
                                      \overline{\xi_3(z_{x,y})}.
 \label{pubcore:feature}
\end{align}
Here \(z_{x,y}\in\mathbb F_2^n\) is the middle-table location assigned
to \((x,y)\), and
\[
 H_{x\oplus y,z_{x,y}}
   =\frac{(-1)^{(x\oplus y)\cdot z_{x,y}}}{\sqrt N}.
\]
The sum assigns distinct middle locations
across both signs and does not depend on the order of the edges.
Set
\begin{equation}
 \begin{aligned}
 a_{L,R}^{\mathrm{cf}}(\vec f)&=\chi_{L,R}(\vec f)\mu_{L,R}^{\mathrm{cf}}(\vec f),\\
 T_{\mathrm{good}}\ket{L,R}=\ket{a_{L,R}^{\mathrm{cf}}}
   &:=\frac1{N^{3N/2}}\sum_{\vec f\in\mathcal T_N^3}
       a_{L,R}^{\mathrm{cf}}(\vec f)\ket{\vec f}.
 \end{aligned}
 \label{pubcore:record-map}
\end{equation}
The collision-free restriction belongs to this comparison map.
The physical core \(\mathsf C=\mathsf F_4H\mathsf F_3H\mathsf F_2\)
applies, within each table block, the matrix
\[
 C(\vec f)=\operatorname{diag}(\xi_4)H\operatorname{diag}(\xi_3)
                      H\operatorname{diag}(\xi_2).
\]
For a joint state of the query, reference, and table registers, write
\begin{equation}
 \ket{\psi}
 =\frac1{N^{3N/2}}\sum_{\vec f\in\mathcal T_N^3}
       \ket{\psi(\vec f)}\otimes\ket{\vec f}.
 \label{eq:core-sample-coefficients}
\end{equation}
Then
\[
 \begin{aligned}
 \mathsf C\ket{\psi}
 &=\frac1{N^{3N/2}}\sum_{\vec f\in\mathcal T_N^3}
     (C(\vec f)\otimes I_{\rm ref})\ket{\psi(\vec f)}
        \otimes\ket{\vec f},\\
 \|\ket{\psi}\|^2
 &=\E_{\vec f}\Bigl[\|\ket{\psi(\vec f)}\|^2\Bigr].
 \end{aligned}
\]
We retain the original norms of the coefficient vectors
\(\ket{\psi(\vec f)}\). The inverse, transpose, conjugate, and idle branches
apply \(C(\vec f)^\dagger,C(\vec f)^T,\overline{C(\vec f)},I\), respectively,
within these table blocks. Untouched registers are implicit below.

\paragraph{A common contractive encoding.}
For a good record \((L,R)\), store its endpoint sets directly:
\[
 J_D\ket{L,R}
 =\ket{X_+,X_-}\otimes\ket{D_+,D_-}
                   \otimes\ket{Y_+,Y_-}.
\]
The set-state convention~\eqref{pubcore:record-relation-state} and
reconstruction~\eqref{eq:good-record-recovery} give
\(J_D^\dagger J_D=I\).
The operator \(\mathsf H_{\mathrm{mem}}\) applies one Hadamard to each
middle-memory entry and leaves the endpoint sets unchanged. Thus
\[
 \mathsf H_{\mathrm{mem}}
 =\bigoplus_{\ell+r\le t}
   I_X\otimes H^{\otimes\ell}\otimes H^{\otimes r}\otimes I_Y.
\]
Let \(\Pi_{\mathrm{cf}}\) retain distinct middle locations across both
signs. On this subspace, define the map into all three table registers by
\begin{equation}
\begin{aligned}
 &J_{\mathrm{pack}}\bigl(\ket{X_+,X_-}\otimes
       \ket{Z_+,Z_-}\otimes\ket{Y_+,Y_-}\bigr)\\
 &\quad=\frac1{N^{3N/2}}\sum_{\vec f\in\mathcal T_N^3}
       \left(\prod_{x\in X_+}\xi_2(x)
             \prod_{x\in X_-}\overline{\xi_2(x)}\right)\\
 &\qquad\qquad\cdot
       \left(\prod_{z\in Z_+}\xi_3(z)
             \prod_{z\in Z_-}\overline{\xi_3(z)}\right)
       \left(\prod_{y\in Y_+}\xi_4(y)
             \prod_{y\in Y_-}\overline{\xi_4(y)}\right)\ket{\vec f}.
\end{aligned}
 \label{eq:all-table-packing}
\end{equation}
Set \(J_{\mathrm{pack}}=0\) on the orthogonal complement. Distinct signed
set pairs give orthogonal phase characters, so
\(J_{\mathrm{pack}}^\dagger J_{\mathrm{pack}}=\Pi_{\mathrm{cf}}\).
Thus
\[
 T_{\mathrm{good}}
 =J_{\mathrm{pack}}\Pi_{\mathrm{cf}}\mathsf H_{\mathrm{mem}}J_D.
\]
Appendix~\ref{app:core} proves this factorization and bounds its loss
by counting pairs of colliding middle locations:
\begin{equation}
 \varepsilon_{\mathrm{cf},t}:=\frac{2t(t-1)}N,\qquad
 0\preceq I-T_{\mathrm{good}}^\dagger T_{\mathrm{good}}\preceq\varepsilon_{\mathrm{cf},t}I,
 \qquad T_{\mathrm{good}}\ket{\varnothing,\varnothing}=\ket{\mathbf1}.
 \label{pubcore:contractive-encoding}
\end{equation}

\begin{proposition}[Common core encoding]
\label{pubcore:compression}
Assume \(3(2t)^{16}\le N\). Let
\((\mathsf C_0,\ldots,\mathsf C_4)
=(I,\mathsf C,\mathsf C^\dagger,\mathsf C^T,\overline{\mathsf C})\)
and \(W_{\mathrm{good},a}=P_{\mathrm{good},\le t}W_aP_{\mathrm{good},\le t}\) on the good-record space.
The single encoding \eqref{pubcore:contractive-encoding} satisfies
\begin{equation}
 \max_{0\le a\le4}\|T_{\mathrm{good}}^\dagger\mathsf C_aT_{\mathrm{good}}-W_{\mathrm{good},a}\|
 \le\eta_t,\qquad
 \eta_t:=\frac{8t^2}{\sqrt N}.
 \label{pubcore:final-compression}
\end{equation}
The estimate holds on the full query and good-record space, tensored
with any untouched reference, and for the controlled oracle with the
five branches.
\end{proposition}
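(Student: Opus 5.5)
Since $T_{\mathrm{good}}=J_{\mathrm{pack}}\Pi_{\mathrm{cf}}\mathsf H_{\mathrm{mem}}J_D$ and all the operators involved are block-diagonal in the table basis, I plan to compute the matrix elements $\bra{L',R'}T_{\mathrm{good}}^\dagger\mathsf C_aT_{\mathrm{good}}\ket{L,R}$ directly, with good records on both sides and query labels attached. The idle branch $a=0$ is immediate: $T_{\mathrm{good}}^\dagger T_{\mathrm{good}}$ differs from $I_{\mathcal G_{\mathrm{good}}}$ by at most $\varepsilon_{\mathrm{cf},t}=2t(t-1)/N\le\eta_t$, by \eqref{pubcore:contractive-encoding}. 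For the forward branch, I will apply $C(\vec f)$ to $\ket{x}\otimes\ket{a^{\mathrm{cf}}_{L,R}}$ using \eqref{pubcore:entry}. This multiplies the coefficient by $N^{-1/2}\xi_2(x)\xi_4(y)\sum_z H_{x\oplus y,z}\xi_3(z)$. The result is a phase polynomial in $\vec f$ whose monomials are signed characters. Orthogonality of characters under $\E_{\vec f}$ then computes the inner product with $\ket{y'}\otimes\ket{a^{\mathrm{cf}}_{L',R'}}$ exactly: it is a sum over middle assignments in which the signed multisets of $\xi_2,\xi_3,\xi_4$ exponents agree on both sides.

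I will sort the terms by which character cancellation occurs. In the \emph{creation} terms, the new factors $\xi_2(x),\xi_4(y),\xi_3(z)$ land on fresh addresses. Matching requires $L'=L\cup\{(x,y)\}$, by the recovery \eqref{eq:good-record-recovery}. The amplitude is $N^{-1/2}$ times the collision-free overlap, which is $1$ up to $O(t^2/N)$ relative error, compared with $(N-s)^{-1/2}$ in $\mathsf C_{+,\to}$. In the \emph{deletion} terms, $\xi_2(x)$ cancels a stored $\overline{\xi_2(x)}$ for $(x,y)\in R$. Then $\xi_4(y')$ must cancel the matching $\overline{\xi_4(y')}$, and $\xi_3(z)$ must cancel the stored middle factor. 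Summing over the stored $z_{x,y}$ with $H_{x\oplus y,z}^2$-type weights yields coefficient $1$ exactly when $y'=y$, with phase $H_{d,z}H_{d',z}$ summed giving $\delta_{d,d'}$. This reproduces $\mathsf C_{-,\leftarrow}^\dagger$. All remaining terms are \emph{spurious} cancellations: partial matches, such as $x$ fresh but $\xi_3(z)$ hitting an occupied middle location, or a created pair violating goodness and hence projected out by $P_{\mathrm{good}}$. These I will bound as an error operator $E_a$. The other branches follow from the same computation by symmetry. $C^\dagger$, $C^T$ and $\overline C$ have entries obtained by conjugation or transposition, which swap $X\leftrightarrow Y$ and/or flip signs. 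Because all five use one encoding and block-diagonal structure, the controlled statement follows branchwise, since $\|\sum_aP_a\otimes M_a\|=\max_a\|M_a\|$.

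The main obstacle is bounding $\|E_a\|$ in operator norm rather than entrywise, since $E_a$ acts on superpositions of records. I will use a Schur test, bounding row and column $\ell_1$ sums of $|E_a|$. Each spurious term carries an extra factor $N^{-1/2}$ from a Hadamard entry $H_{d,z}$ at a location that is not summed coherently. There are also at most $O(t)$ choices of the colliding stored entry. Goodness (injectivity of XOR on $X_\sigma\times Y_\sigma$ and $D_+\cap D_-=\varnothing$) ensures that the target record is determined by the source record plus $O(1)$ labels. Together these give row and column sums of order $t\cdot t/\sqrt N$. Adding the normalization discrepancies $|(N-s)^{-1/2}-N^{-1/2}|\sqrt N\le t/N$ and the $\varepsilon_{\mathrm{cf},t}$ losses then yields the claimed total $8t^2/\sqrt N$; the hypothesis $3(2t)^{16}\le N$ is used only to absorb lower-order terms.
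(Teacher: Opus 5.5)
Your character-orthogonality setup, the idle branch, and the branchwise reduction for the controlled oracle are fine. The gap is in the step you yourself flag as the main obstacle: the Schur test for $E_a$ does not close with the bookkeeping you give.

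Fix a source $\ket{x}\ket{L,R}$ with $x$ fresh. For every fresh $y$ giving a good extension (about $N$ of them), the creation component is exactly $N^{-1/2}\ket{y}\otimes T_{\mathrm{good}}^\dagger T_{\mathrm{good}}\ket{L\cup\{(x,y)\},R}$. The operator $I-T_{\mathrm{good}}^\dagger T_{\mathrm{good}}$ has nonzero diagonal entries. It also has nonzero entries between good records that share endpoint sets but have swapped matchings. So \eqref{eq:good-record-recovery} identifies the main term, but it does not determine the support of the matrix element.

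In particular, the target is \emph{not} determined by the source up to $O(1)$ choices: the output label $y$ is free, and every Schur row sum contains an $N$-fold sum over it. For deletion-type corrections, where a query deletes labels from two different stored edges and their partners are re-paired, the query input $x$ is similarly free in the column sums. With entries of size $N^{-1/2}$ (the core amplitude) times your extra factor $N^{-1/2}$, the row sums would be of order $t^2$, not $t^2/\sqrt N$. Your count reaches $t^2/\sqrt N$ only because it omits both the core amplitude and the $y$-sum, which together contribute a net factor $\sqrt N$.

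A Schur test can close only if every correction entry is $O(\mathrm{poly}(t)N^{-3/2})$, i.e. carries an extra factor $1/N$. For single collisions this comes from the identities $\sum_z\overline{H_{d_i,z}H_{d_j,z}}H_{e_i,z}H_{e_j,z}=N^{-1}\mathbf 1[d_i\oplus d_j=e_i\oplus e_j]$ and $H_{a,z}H_{b,z}H_{c,z}=N^{-1}H_{a\oplus b\oplus c,z}$. Establishing it in general would also require entrywise control of $\mathsf H_{\mathrm{mem}}^\dagger(I-\Pi_{\mathrm{cf}})\mathsf H_{\mathrm{mem}}$, including configurations with several coincident locations. The PSD bound $I-\Pi_{\mathrm{cf}}\preceq\sum_{i<j}P_{ij}$ does not give such entrywise control, and your sketch does not address it.

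The missing idea is the paper's exact factorization, which avoids entrywise bookkeeping altogether. Using the XOR-recording lemma and the endpoint set updates, it proves $T_{\mathrm{good}}^\dagger\mathsf C T_{\mathrm{good}}=J_D^\dagger\Pi_{\mathrm{rot}}\widehat C\,\Pi_{\mathrm{rot}}J_D$. Here $\widehat C=\mathsf E_4\mathsf X_t\mathsf E_2$ is a formal insertion/deletion operator on signed endpoint and XOR-label sets, with $\|\widehat C\|\le2\sqrt{t+1}$. Moreover $J_D^\dagger\widehat C J_D=\widetilde W$ holds exactly; this is where good-record recovery is legitimately used, in the unprojected XOR basis.

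All of your \emph{spurious} terms are then exactly $-J_D^\dagger\bigl[(I-\Pi_{\mathrm{rot}})\widehat C\Pi_{\mathrm{rot}}+\widehat C(I-\Pi_{\mathrm{rot}})\bigr]J_D$. Their norm is at most $2\|\widehat C\|\,\|(I-\Pi_{\mathrm{rot}})J_D\|\le4\sqrt{t+1}\sqrt{\varepsilon_{\mathrm{cf},t}}$. The only Schur test needed is inside Lemma~\ref{pubcore:good-gram}, on the compressed PSD pair operators $B_{ij}$ within a fixed endpoint block. There the entries are computed exactly and no label is free. Adding $\|\widetilde W-W_{\mathrm{good}}\|\le2t/N$ gives $\eta_t$. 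The hypothesis $3(2t)^{16}\le N$ enters only through $N>8t$ (no aliasing of phase counters), not to absorb lower-order terms.
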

Appendix~\ref{pubcore:xor-recording} proves the single-table XOR-recording
identity. Appendix~\ref{pubcore:all-branch-proof} applies it to the core
and proves the proposition.

\subsection{Boundary loss and reference support}
\label{sec:pub-boundary-details}
For \(U=DCG\), put
\((C_0,C_1,C_2,C_3,C_4)=(I,C,C^\dagger,C^T,\overline C)\).
For interface $a$, the branch is \(D_aC_aG_a\), with factors given in
Table~\ref{tab:controlled-boundaries}. The boundary tables in
\(G=HF_1\) and \(D=F_5H\) are independent and have order \(q=N>8t\).

\begin{table}[ht]
\centering
\begin{tabular}{c|cc}
Branch & Input \(G_a\) & Output \(D_a\)\\\hline
Idle & \(I\) & \(I\)\\
Forward & \(G\) & \(D\)\\
Inverse & \(D^\dagger\) & \(G^\dagger\)\\
Transpose & \(D^T\) & \(G^T\)\\
Conjugate & \(\overline G\) & \(\overline D\)
\end{tabular}
\caption{Boundary factors for the five branches of a unitary-oracle query.}
\label{tab:controlled-boundaries}
\end{table}
For boundary sample \(b\), the relation registers are rotated by
\begin{equation}
 Q_b=Q[G,D]=(G\otimes D^T)_L^{\otimes *}
          \otimes(\overline G\otimes D^\dagger)_R^{\otimes *}.
 \label{eq:common-boundary-rotation}
\end{equation}

\paragraph{Recorder-domain loss.}
For \(a=1,\ldots,4\), let \(\Pi_{\mathcal D(W_a)}=W_a^\dagger W_a\)
be the domain projector of the untruncated bijective partial recorder.
Its restriction to input lengths at most \(i\) is
\(\Pi_{\mathcal D(W_a)}\Pi_{\le i}\). The defect is the part of the
bijective sector excluded from that domain.

\begin{lemma}[All-branch boundary defects]
\label{lem:pub-boundary-defects}
For \(0\le i\le N/2\) and each nonidle branch \(a\),
\begin{equation}
 \E_b\Bigl[(G_a\otimes Q_b)^\dagger
   (\Pi_{\mathrm{bij},\le i}-\Pi_{\mathcal D(W_a)}\Pi_{\le i})
   (G_a\otimes Q_b)\Bigr]
 \preceq\gamma_i I,\qquad \gamma_i=8i\sqrt{i/N}.
 \label{eq:pub-boundary-defects}
\end{equation}
This is a whole-space operator inequality, with identities on any
untouched reference registers.
\end{lemma}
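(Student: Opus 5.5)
Fix a nonidle branch \(a\). Because both sides are positive operators and the defect \(\Pi_{\mathrm{bij},\le i}-\Pi_{\mathcal D(W_a)}\Pi_{\le i}\) commutes with record size and is diagonal in the record basis, we first describe it combinatorially. For the forward branch \(a=1\), the domain of \(W=W^L+W^{R\dagger}\) on the bijective sector consists of the basis states \(\ket{x}\ket{L,R}\) with \(x\notin\operatorname{Dom}(L\cup R)\) (the \(W^L\) part) together with those with \(x\in\operatorname{Dom}(R)\) (the \(W^{R\dagger}\) part). The defect is therefore exactly the projector onto query inputs that collide with the domain of a positive edge, \(x\in\operatorname{Dom}(L)\). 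By the symmetric analysis, the inverse branch defect is \(y\in\operatorname{Im}(R)\). The transpose and conjugate branches give the analogous collisions with \(\operatorname{Im}(L)\), respectively \(\operatorname{Dom}(R)\), because the barred recorders swap the roles of \(L\) and \(R\). In every case the defect is a sum of at most \(i\) projectors. Each of these projects the query register onto a label equal to the relevant endpoint label of one stored edge.

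Next we conjugate by \(G_a\otimes Q_b\). Take the forward branch as the model case. Here the query is rotated by \(G=HF_1\), and each positive edge's input slot is also rotated by \(G\), since \(Q_b\) applies \(G\otimes D^T\) to \(L\)-edges. The defect then becomes \(\sum_{\text{edges }e}(G\otimes G_e)^\dagger\Pi^{\mathrm{eq}}_{q,e}(G\otimes G_e)\), where \(\Pi^{\mathrm{eq}}\) projects onto equal labels. This is an equality test between two registers, both rotated by the same \(G\). That rotation is harmless: \((G\otimes G)^\dagger\Pi^{\mathrm{eq}}(G\otimes G)\) would be the projector onto \(\sum_x\ket{x}\ket{x}\)-type correlations only if one factor were conjugated. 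Since the factors are not conjugated, we expand:
\[
\E_{\xi_1}\,\langle p,p'|(G\otimes G)^\dagger\Pi^{\mathrm{eq}}(G\otimes G)|r,r'\rangle
=\sum_x \E\big[\overline{\xi_1(p)\xi_1(p')}\xi_1(r)\xi_1(r')\big]H_{x,p}H_{x,p'}H_{x,r}H_{x,r'}.
\]
Averaging over the uniform \(\mathbb Z_N\) phases kills every term except \(\{p,p'\}=\{r,r'\}\) as multisets. What remains has norm \(O(1/N)\) on the diagonal, plus the exceptional \(p=p'\) terms (the phase \(\xi_1(p)^2\) pairs only with \(r=r'=p\)). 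Hence the averaged single-edge defect is an operator of norm \(O(1/N)\) on the product space. This is the flattening mechanism of \(G\).

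The main obstacle is that we need a whole-space operator inequality over superpositions of records in which the edges are symmetrized and the bijective projector is not preserved by \(Q_b\). Summing \(i\) edge terms crudely gives only \(O(i/N)\) per edge slot. Getting the stated \(8i\sqrt{i/N}\) instead of something linear in \(i\) requires controlling cross terms between edges through the shared \(\xi_1\) average. These cross terms arise because different edges use the same table, so the phase expectation couples them. We plan to handle them with an operator Cauchy--Schwarz: write the defect as \(B^\dagger B\) with \(B=\sum_e B_e\), and bound \(\|\E B^\dagger B\|\le\big(\sum_e\|\E B_e^\dagger B_e\|^{1/2}\big)^2\)-type estimates. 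Together with a Schur-test bound on the phase-averaged kernel, whose nonzero entries are governed by the pairings above, this should give \(O(i\cdot\sqrt{i/N})\). The square root appears from balancing the Schur weights between the diagonal pairing and the \(i\) possible cross pairings.

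The remaining branches follow by symmetry. Inverse and transpose tests use \(D^\dagger\) or \(D^T\) on both the query and the relevant image slot, with \(\xi_5\) averaged in place of \(\xi_1\). The conjugate branch uses \(\overline G\) on both the query and the \(R\)-domain slots, which is again a same-rotation test. The constant \(8\) absorbs the factor from the two symmetrization normalizations. The hypothesis \(i\le N/2\) is used to bound \(N/(N-s)\le2\) wherever size-dependent normalizations appear.
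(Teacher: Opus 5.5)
There is a genuine gap. Your description of the defect is wrong, and the piece you omit is the main difficulty of the lemma.

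\textbf{The defect is not diagonal.} The domain of \(W^{R\dagger}\) is the range of \(W^R\), not the whole sector \(\{x\in\operatorname{Dom}(R)\}\). Each range vector \(W^R\ket{y}\ket{L,R}=(N-s)^{-1/2}\sum_{x}\ket{x}\ket{L,R\cup\{(x,y)\}}\) is a uniform superposition in which the query label is correlated with the domain of the newly recorded negative edge. Each such vector spreads over \(N-s\) basis states of that sector. Hence \(\Pi_{\mathrm{bij}}-\Pi_{\mathcal D(W)}\) contains, besides the positive-domain collisions, the part of the negative-domain collision sector orthogonal to these near-EPR states. This is the coherent-deletion defect, and the other three branches have the analogous piece.

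\textbf{Your same-rotation twirl cannot control this piece.} Here the query is rotated by \(G\) but the negative domain slot by \(\overline G\). By \eqref{eq:pub-mixed-twirl}, the average of \((G\otimes\overline G)^\dagger Q_0^\oplus(G\otimes\overline G)\) contains the rank-one projector \(\ket{\Phi_0}\bra{\Phi_0}\), which has norm one. So you cannot simply include the full collision projector either.

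\textbf{What the paper does instead.} It imports the Ma--Huang ordered-register inequality \eqref{eq:pub-domain-input}. Up to the factor \(N/(N-s+1)\), this dominates the defect by \(\sum_jE^{=}_{A,X_{L_j}}+\sum_k(E^{=}_{A,X_{R_k}}-E^\Phi_{A,X_{R_k}})+2r\sqrt{s/N}\,I\). The EPR subtraction is then removed exactly by the deletion twirl \eqref{eq:pub-deletion-twirl}, which relies on \((B\otimes\overline B)\ket{\Phi_0}=\ket{\Phi_0}\) and gives \(I/N\) per negative slot. The image version \eqref{eq:pub-image-input} handles the inverse and transpose branches in the same way.

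\textbf{The source of \(\sqrt{i/N}\) is misidentified.} This factor is the remainder \(2r\sqrt{s/N}\) in that inequality. It reflects the fresh-label exclusions and the \(1/\sqrt{N-s}\) normalization of the true deletion range, compared with the exact EPR projector. It does not come from cross terms between edges.

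\textbf{Your cross-term obstacle does not exist.} Once the defect is dominated by a sum of positive operators, each acting on the query and a single slot, \(\E_b\) is linear. Each term can be twirled separately, because the rotations on all other slots cancel. The positive-collision part therefore contributes at most \(2\ell/N\le 2i/N\), which is already below \(\gamma_i=8i\sqrt{i/N}\). No operator Cauchy--Schwarz or Schur test is needed.

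What your proposal must supply is the operator inequality that bounds the coherent-deletion defect by \(E^{=}-E^\Phi\) plus a \(\sqrt{s/N}\) remainder. With that in hand, the exact phase--Hadamard deletion twirl finishes the argument.
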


Appendix~\ref{app:prefix} proves the lemma using the exact
phase--Hadamard identities of Lemma~\ref{lem:pub-boundary-twirls}
and the ordered-register collision inequalities
\eqref{eq:pub-domain-input}--\eqref{eq:pub-image-input}.

\paragraph{Unrotated path-recording support.}
View a signed relation as a bipartite multigraph: positive edges are
oriented from domain to image, and negative edges in the reverse
direction. Let \(P_{\mathcal F}\) project onto records whose underlying
undirected multigraph is a forest: it has no cycles, including two-cycles
formed by parallel edge occurrences. The empty record is included.

\begin{lemma}[Unrotated forest support]
\label{lem:reference-forest}
Every path-recorded state from empty records lies in \(P_{\mathcal F}\),
including after accessible address projections. If the initial data and
inter-query operations are independent of the hidden row and an unused
independent index, the path-recorded states retain those independences.
\end{lemma}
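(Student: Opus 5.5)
The plan is induction on the number of queries, with the invariant that every recorded state is supported on $P_{\mathcal F}$. The base case is immediate: $\ket{v_0}=\ket{\psi_0}\otimes\ket{\varnothing,\varnothing}$, and the empty record is a forest. For the inductive step, the inter-query isometries $A_j$ act as $A_j\otimes I_{LR}$, and accessible address projections act only on the query and reference registers. Both commute with $P_{\mathcal F}$, so they preserve support. It therefore remains to show that each branch $V_a$ of $\widehat V$ maps $\operatorname{range}(P_{\mathcal F})$ into itself. Since $\widehat V=\sum_aP_a\otimes V_a$ is block-diagonal in the selector, it suffices to treat each branch separately. This reduces to the four maps $V^L,V^R,\overline V^L,\overline V^R$ and their adjoints.

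For the creation maps I would work basis state by basis state. $V^L\ket{x}\ket{L,R}$ adds an edge $(x,y)$ with $y\notin\operatorname{Im}(L\cup R)$. In the undirected bipartite multigraph, the image vertex $y$ is therefore a new leaf. Adding a pendant edge at a fresh vertex creates no cycle, and in particular no parallel two-cycle. Symmetrically, $V^R$ and $\overline V^R$ add an edge at a fresh domain vertex $x\notin\operatorname{Dom}(L\cup R)$, and $\overline V^L$ again uses a fresh image vertex. The sign, i.e.\ whether the edge goes into $L$ or $R$, is irrelevant to the underlying undirected graph.

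The deletion maps are the adjoints. Taking matrix elements, $V^{L\dagger}$ sends $\ket{y}\ket{L',R}$ to a combination of $\ket{x}\ket{L,R}$ with $L'=L\cup\{(x,y)\}$. Any such output record is obtained by removing one edge occurrence, and removing an edge from a forest leaves a forest. The projector factors $I-V^RV^{R\dagger}$ and $I-V^LV^{L\dagger}$ in~\eqref{pubcore:record-V} are built from these same maps. They are therefore diagonal in, or at least preserve, the forest sector, so the composite $V$ and its barred version preserve $P_{\mathcal F}$. The one point to check here is that these projectors act only on record configurations reachable by adding and deleting edges, so they introduce nothing outside the sector. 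This closes the induction for the first claim.

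For the independence claim, note that the recorder operators $V_a$ are fixed linear maps that depend on neither the hidden row $S$ nor the unused index. The trajectory $\ket{v_j}$ is thus a fixed composition of table-independent operators applied to data that are independent by hypothesis. Any joint law with the hidden row or index therefore factorizes through the initial data and the $A_j$. Address projections are likewise table-independent. The step I expect to need the most care is the multigraph bookkeeping with multiset edge occurrences: I must make sure that no branch of a deletion term can produce a configuration that was not already a sub-forest. This follows because adjoint matrix elements are nonzero only between records differing by a single edge.
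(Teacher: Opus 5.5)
Your proposal is correct and follows the paper's proof. The shared steps are induction from the empty record, with accessible operations and address projections leaving the record registers untouched; creation factors in $V,V^\dagger,\overline V,\overline V^\dagger$ attaching an edge at a globally fresh endpoint and adjoint factors deleting an edge, so the forest sector is closed under the resulting products, sums, and selector direct sums; and independence holding because path recording uses neither the tables nor the index. One small correction: the factors $I-V^RV^{R\dagger}$ and $I-V^LV^{L\dagger}$ are not diagonal in the record basis, but they do preserve the forest sector because each is a deletion followed by a fresh-endpoint creation, which is what your argument actually uses.
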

\begin{proof}
Each creation factor in
\(V,V^\dagger,\overline V,\overline V^\dagger\) attaches a globally
fresh endpoint, and each adjoint factor deletes an edge. These
operations preserve forests, as do their products, sums, and coherent
direct sums. Accessible inter-query operations and address projections
leave the record registers untouched. Induction from empty records proves
the support claim. Path recording uses neither the hidden
phase tables nor the unused index, which proves independence.
\end{proof}
We use this invariant in unrotated coordinates and impose goodness
after the boundary rotation.

\paragraph{Good-relation restriction.}
Set \(P_{\mathrm{add}}=P_{\mathrm{ss}}P_{\mathrm{dx}}\) as in
\eqref{pubcore:good-projector}, so that
\(P_{\mathrm{good}}=\Pi_{\mathrm{bij}}P_{\mathrm{add}}\). Let \(P_{\mathrm{nc}}\)
project onto records with no actual edge occurring in both signs.
The next bound controls the squared norm removed by the two XOR
conditions after the boundary rotation.

\begin{lemma}[Additional-goodness mass]
\label{lem:pub-boundary-good-mass}
For every fixed vector \(\ket{v}=P_{\mathrm{nc}}\Pi_{\le i}\ket{v}\),
\[
 \E_b\Bigl[\|(I-P_{\mathrm{add}})Q_b\ket{v}\|^2\Bigr]
 \le b_i\|\ket{v}\|^2,\qquad
 b_i=\frac{9i^4}{N}\quad(i\ge1),
\]
with \(b_0=0\).
\end{lemma}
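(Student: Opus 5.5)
We bound the removed mass by a union bound over the individual forbidden XOR coincidences, and we control each coincidence by a second-moment computation over the boundary phases. Since \(I-P_{\mathrm{add}}\) projects onto records that violate at least one condition in \eqref{pubcore:same-sign-condition} or the cross-sign condition \(D_+\cap D_-=\varnothing\), we have \(I-P_{\mathrm{add}}\preceq\sum_{c}\Pi_c\). Here \(c\) ranges over index patterns: for same-sign violations a sign \(\sigma\) and indices \(i\ne j\), \(k\ne\ell\); for cross-sign violations a pair consisting of a positive occurrence and a negative occurrence. The projector \(\Pi_c\) is diagonal in the rotated ordered-register basis and imposes the single linear equation \(x_i\oplus x_j=y_k\oplus y_\ell\), respectively \(x\oplus y=x'\oplus y'\). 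We work in ordered registers within each block \(\Pi_{\ell,r}\), where these projectors are well defined and the symmetrization in \eqref{pubcore:record-relation-state} commutes with everything. On a block with \(s\le i\) edges there are at most \(2\cdot i^2\cdot i^2\) same-sign patterns and at most \(i^2\) cross-sign patterns. The claim therefore follows once we show \(\E_b\|\Pi_cQ_b\ket v\|^2\le O(1/N)\|\ket v\|^2\) for each \(c\), with constants small enough that the total is \(9i^4/N\). For \(i=0\) there is nothing to violate, which gives \(b_0=0\).

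For a fixed pattern \(c\), we write \(\Pi_c=\frac1N\sum_{w\in\mathbb F_2^n}(-1)^{w\cdot(\text{linear form})}\), which expresses it as an average of \(N\) tensor products of Pauli-\(Z\) strings \(Z^w\) acting on the involved input/output slots. This gives
\(\E_b\|\Pi_cQ_b\ket v\|^2=\frac1N\sum_w\E_b\bra v Q_b^\dagger Z_c^{w}Q_b\ket v\).
The \(w=0\) term contributes exactly \(\|v\|^2/N\). For \(w\ne0\) we conjugate by the boundary factors. On an input slot the factor is \(G=HF_1\) (or \(\overline G\)), and \(G^\dagger Z^wG=F_1^\dagger X^wF_1\). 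Averaging over the uniform table \(h_1\) turns \(F_1^\dagger X^wF_1\) into a twirl of an off-diagonal shift: it multiplies \(\ket p\) by \(\overline{\xi_1(p)}\xi_1(p\oplus w)\) and moves it to \(\ket{p\oplus w}\). This is the phase--Hadamard identity of Lemma~\ref{lem:pub-boundary-twirls}. On output slots, \(D^T\) and \(D^\dagger\) play the same role with \(h_5\). After taking the expectation, a product of such shifted phases survives only if every phase \(\xi_1(p)\), \(\xi_5(q)\) is cancelled by a conjugate at the same address. These cancellations force equalities between the unrotated labels of distinct slots, for example \(p_i\oplus w=p_j\) and \(p_j\oplus w=p_i\).

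The main obstacle is to show that the surviving \(w\ne0\) terms cost only \(O(\|v\|^2/N)\) in total and do not sum to \(\Theta(\|v\|^2)\). The surviving terms pair slots whose unrotated labels differ by \(w\). They are dangerous exactly when the unrotated record contains a repeated endpoint or a repeated edge. This is where the hypothesis \(\ket v=P_{\mathrm{nc}}\ket v\) enters: no edge appears with both signs, so in the cross-sign case the coincidence \(x\oplus y=x'\oplus y'\) cannot be produced identically by a shared edge, and at least one genuinely independent phase remains uncancelled unless a nontrivial label relation holds. I would first try to handle this as follows. For each \(w\ne0\), the surviving operator is a partial permutation on the unrotated basis with unit-modulus coefficients that maps a basis record to one obtained by shifting two labels by \(w\). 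Bounding \(|\langle v|\cdot|v\rangle|\) by Cauchy--Schwarz then shows that the sum over \(w\) of these operators is a contraction on each block, at most \(1\) or \(2\) in operator norm. This is because for each source record at most one or two \(w\) satisfy the forced label equalities. Multiplied by the prefactor \(1/N\), this gives the per-pattern bound of at most \(3\|v\|^2/N\) in the worst case. Summing over the at most \(2i^4+i^2\le3i^4\) patterns yields \(b_i=9i^4/N\). The bookkeeping of which slots can coincide, distinguishing within-sign pairs \(i\ne j\) with possible overlap \(k=i\) from cross-sign pairs, is what I expect to be the delicate case analysis.
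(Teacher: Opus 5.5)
Your proposal is correct and follows essentially the paper's proof: a union bound over the same-sign rectangle events and the opposite-sign actual-edge XOR events, an $O(1/N)$ boundary-twirl bound for each event in which $P_{\mathrm{nc}}$ kills the resonant cross-sign term, and an event count that gives $9i^4/N$. The difference is only in how each event is twirled: you expand the linear-equation projector into $Z^w$ characters, whereas the paper writes it as $\sum_a(Q_a^\oplus)_X\otimes(Q_a^\oplus)_Y$ and applies the exact twirl identities. The case analysis you expect to be delicate is not, because the $X$ and $Y$ slots are always distinct registers; this gives at most $2/N$ per same-sign event and $1/N$ per cross-sign event on $P_{\mathrm{nc}}$.
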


Appendix~\ref{app:prefix} proves the bound by summing the same-sign
rectangle events and the opposite-sign actual-edge XOR collisions.
Every forest-supported vector lies in \(P_{\mathrm{nc}}\), since
opposite-sign copies form a two-cycle. 

\subsection{Error propagation and Haar comparison}
\label{sec:pub-norm-loss-details}

We combine the previous error bounds to prove that the phase--Hadamard
construction is indistinguishable from Haar. In this subsection,
\(P_{\mathrm{good}}\) abbreviates \(P_{\mathrm{good},\le t}\).

\begin{proposition}[Controlled comparison for the five-phase construction]
\label{prop:controlled-recorder-transfer}
Assume \(3(2t)^{16}\le N\). Fix a unit initial vector
\(\ket{\psi_0}\), including arbitrary references, and inter-query
isometries \(A_1,\ldots,A_t\). For a phase-table sample \(S\), define
the construction and path-recorded states by
\[
 \begin{aligned}
 \ket{\Phi_0(S)}&=\ket{\psi_0},&
 \ket{v_0}&=\ket{\psi_0}\otimes\ket{\varnothing,\varnothing},\\
 \ket{\Phi_j(S)}&=U(S)A_j\ket{\Phi_{j-1}(S)},&
 \ket{v_j}&=\widehat V(A_j\otimes I_{LR})\ket{v_{j-1}}.
 \end{aligned}
\]
Let \(\ket{\Phi_m}=\sum_S\sqrt{\Pr[S]}\ket{\Phi_m(S)}\otimes\ket{S}\)
retain all five tables, let \(\ket{\Omega}\) purify the boundary tables,
and set
\[
 E_t\ket{v}=T_{\mathrm{good}}P_{\mathrm{good}}Q_b(\ket{v}\otimes\ket{\Omega}).
\]
For \(0\le m\le t\),
\begin{equation}
 \|\ket{\Phi_m}-E_t\ket{v_m}\|\le e_t=O(t^2N^{-1/8}),
 \label{eq:construction-path-state}
\end{equation}
The accessible reduced states satisfy
\begin{equation}
 \frac12\left\|\rho_m^{\mathrm{phase}}
       -\operatorname{Tr}_{LR}\bigl(\ket{v_m}\bra{v_m}\bigr)\right\|_1
 \le e_t.
 \label{eq:construction-recorder-marginal}
\end{equation}
Moreover, \(P_{\mathcal F}\ket{v_m}=\ket{v_m}\) and
\(\|\ket{v_m}\|\le1\).
\end{proposition}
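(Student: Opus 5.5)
We argue by induction on \(m\), via a hybrid that tracks the error introduced by one query at a time. First the easy claims. The support statement \(P_{\mathcal F}\ket{v_m}=\ket{v_m}\) is exactly Lemma~\ref{lem:reference-forest}. The bound \(\|\ket{v_m}\|\le1\) holds because the \(A_j\) are isometries and \(\widehat V\) is a coherent direct sum of the contractions \(V_a\) (Lemma~\ref{pubcore:recording-restrictions}). The marginal bound \eqref{eq:construction-recorder-marginal} follows from \eqref{eq:construction-path-state}. Indeed, \(E_t=T_{\mathrm{good}}P_{\mathrm{good}}Q_b(\cdot\otimes\ket\Omega)\) acts only on the record and table registers, so
\[
\operatorname{Tr}_{\mathrm{hid}}(E_t\ket{v_m}\bra{v_m}E_t^\dagger)\quad\text{and}\quad \operatorname{Tr}_{LR}(\ket{v_m}\bra{v_m})
\]
differ only through the contraction defect \(\|\ket{v_m}\|^2-\|E_t\ket{v_m}\|^2\). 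That defect will be controlled by the same mass-loss terms. The trace distance between reduced states is then at most the vector distance, up to a constant absorbed into \(e_t\).

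For \eqref{eq:construction-path-state}, we write the one-step difference as
\[
\ket{\Phi_m}-E_t\ket{v_m}=\widehat{U}\,A_m\bigl(\ket{\Phi_{m-1}}-E_t\ket{v_{m-1}}\bigr)+\bigl(\widehat U A_m E_t-E_t\widehat V A_m\bigr)\ket{v_{m-1}},
\]
where \(\widehat U\) is the controlled five-branch oracle acting blockwise in the table register. The first term is preserved in norm. For the second term, we write \(U_a=D_aC_aG_a\) and note that the boundary rotation \(Q_b\) is built exactly so that \(G_a\) and \(D_a\) acting on the query register are intertwined with the record labels: conjugating the recorder \(V_a\) by \(G_a\otimes Q_b\) turns \(D_a C_a G_a\) into a comparison between \(\mathsf C_a\) and \(V_a\) in rotated coordinates. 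The main decomposition then compares three pieces.
\begin{enumerate}
\item We replace \(V_a\) by \(W_a\) on the rotated state. The error is the recorder-domain defect, bounded in expected squared norm by \(\gamma_{m-1}\) (Lemma~\ref{lem:pub-boundary-defects}, applied with \(i=m-1\)), using \(W^\dagger V=\Pi_{\mathcal D(W)}\) from \eqref{pubcore:record-restriction-identities}.
\item We insert \(P_{\mathrm{good}}\) before and after the query. The mass removed is bounded by \(b_m\) via Lemma~\ref{lem:pub-boundary-good-mass}, which applies because forest support implies \(P_{\mathrm{nc}}\).
\item We compare \(\mathsf C_a T_{\mathrm{good}}\) with \(T_{\mathrm{good}}W_{\mathrm{good},a}\) using Proposition~\ref{pubcore:compression}. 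Combined with \eqref{pubcore:contractive-encoding}, the identity
\[
\|(I-T T^\dagger)\mathsf C_aT\|^2\le \|T^\dagger\|^{2}\cdot\text{(defect)}+\ldots
\]
lets us pass from the compressed estimate \(\|T^\dagger\mathsf C_aT-W_{\mathrm{good},a}\|\le\eta_t\) to a vector estimate of order \(\eta_t+\sqrt{\varepsilon_{\mathrm{cf},t}}\). Concretely, we bound \(\|\mathsf C_aTx\|^2-\|T^\dagger\mathsf C_aTx\|^2\le\|x\|^2-\|W x\|^2+O(\eta_t+\varepsilon)\) and use that \(W\) is a partial isometry on its domain.
\end{enumerate}
Each term is evaluated on the fixed reference vector \(\ket{v_{m-1}}\), which is independent of the boundary tables. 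This independence is what allows the averaged bounds over \(b\) to be used, and \(\E_b\) of a squared norm is exactly the squared norm in the purified register \(\ket\Omega\).

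Summing over \(m\le t\) gives
\[
e_t\lesssim \sum_m\Bigl(\sqrt{\gamma_m}+\sqrt{b_m}+\eta_t+\sqrt{\varepsilon_{\mathrm{cf},t}}\Bigr).
\]
Here \(\sqrt{\gamma_m}\approx m^{3/4}N^{-1/4}\) and \(\sqrt{b_m}\approx m^2N^{-1/2}\). Under \(3(2t)^{16}\le N\), each summand is \(O(tN^{-1/8})\), so \(e_t=O(t^2N^{-1/8})\).

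The main obstacle is the third step, especially the deletion terms. There we must check that the rotated deletion amplitude \(\mathsf C^\dagger_{\cdot}\) produced by the physical core matches that of the recorder coherently across all five selector branches with one common phase convention. We must also show that leakage out of the image of \(T_{\mathrm{good}}\) after a physical query is small. The plan for leakage is to control it by the norm identity \(\|\mathsf C_aT x\|=\|Tx\|\) together with \eqref{pubcore:contractive-encoding}.
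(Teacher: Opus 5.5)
There is a genuine gap in your per-query step. Evaluating the one-step error on $E_t\ket{v_{m-1}}$ does make the boundary-averaged lemmas applicable. However, you compare $\widehat UA_mE_t\ket{v_{m-1}}$ with $E_t\widehat VA_m\ket{v_{m-1}}$ as vectors, so everything $E_t$ discards from the rotated references $Q_b\ket{v_{m-1}}$ and $Q_b\ket{v_m}$ enters your error. And $E_t$ contains $P_{\mathrm{good}}=\Pi_{\mathrm{bij}}P_{\mathrm{add}}$, not just $P_{\mathrm{add}}$. Replacing $V_a$ by $W_a$ on $y=G_aQ_b\ket{x_a}$ via $W_a^\dagger V_a=\Pi_{\mathcal D(W_a)}$ gives only $\|V_ay-W_ay\|\le\|(I-\Pi_{\mathcal D(W_a)})y\|$, and $I-\Pi_{\mathcal D(W_a)}\succeq I-\Pi_{\mathrm{bij}}$. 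Likewise, the norm loss $\|z\|^2-\|W_{\mathrm{good},a}z\|^2$ left by your compression step contains the non-bijective part of the rotated post-query reference.

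None of the bounds you cite controls $\|(I-\Pi_{\mathrm{bij}})Q_b\ket{v_j}\|$. Lemma~\ref{lem:pub-boundary-defects} bounds only $\Pi_{\mathrm{bij},\le i}-\Pi_{\mathcal D(W_a)}\Pi_{\le i}$, which lives inside the bijective sector. Lemma~\ref{lem:pub-boundary-good-mass} bounds only $(I-P_{\mathrm{add}})Q_b\ket{v}$. So your claim that the mass removed by $P_{\mathrm{good}}$ is at most $b_m$ conflates $P_{\mathrm{good}}$ with $P_{\mathrm{add}}$. Forest support and boundary independence do not fill this in. Take $q\ne q'$ and the vector $\ket{y_0}\otimes N^{-1/2}\sum_p\ket{\{(p,q)\},\{(p,q')\}}$. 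It is a path, lies in $P_{\mathrm{nc}}$, and is independent of $b$. Yet its two domain slots carry $\ket{\Phi_0}$, which $G\otimes\overline G$ fixes, so $\Pi_{\mathrm{bij}}Q_b$ annihilates it for every boundary sample. Actual path-recorded references do have small rotated non-bijective mass, but that is a property of the recorded dynamics, and your plan gives no argument for it.

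The paper never needs this estimate directly. It introduces the restricted computation $\ket{K_{j+1}}=P_{\mathrm{good}}\widehat W^bP_{\mathrm{good}}A_{j+1}\ket{K_j}$, which is good-supported by construction. It then proves the overlap recursion $s_{j+1}\ge s_j-c_j-\sqrt{\gamma_j}-\sqrt{b_{j+1}}$ of Lemma~\ref{lem:controlled-overlap}. There $\Pi_{\mathrm{bij}}$ is inserted using only the support of the bra $\bra{k_a}$, and only $I-P_{\mathrm{add}}$ is charged to the rotated reference. Its non-bijective part is therefore never estimated; smallness follows afterwards from $\|\ket{K_m}-\mathsf{cQ}(\ket{v_m}\otimes\ket{\Omega})\|\le\sqrt{2\mathcal B_m}$. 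The physical core is compared with $\ket{K_j}$, not with the rotated reference, via Lemma~\ref{lem:approximate-compression-transfer}. Its per-step cost $\sqrt{d_i+2\eta_t}$ is summed using the single total norm-loss bound $1-\|\ket{K_t}\|^2\le2\mathcal B_t$. Importing that a posteriori bound into your direct hybrid should close the gap at the same rate, but it is exactly the key step you are missing.

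There are two smaller corrections.
\begin{itemize}
\item The intertwining of $G_a,D_a$ with $Q_b$ is not exact. The fresh-label restrictions in $V$ break covariance, so you must add the error $c_j$ of Lemma~\ref{lem:controlled-covariance}.
\item The bound $\|T_{\mathrm{good}}^\dagger\mathsf C_aT_{\mathrm{good}}-W_{\mathrm{good},a}\|\le\eta_t$ yields only $\|\mathsf C_aT_{\mathrm{good}}x-T_{\mathrm{good}}W_{\mathrm{good},a}x\|^2\le\|x\|^2-\|W_{\mathrm{good},a}x\|^2+2\eta_t\|x\|^2$, not an error of order $\eta_t+\sqrt{\varepsilon_{\mathrm{cf},t}}$. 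Moreover, $W_{\mathrm{good},a}$ is not isometric on good inputs in the domain of $W_a$, because its extensions are restricted to $f_\pm,g_\pm$. That norm-loss term is precisely where the bijectivity and goodness losses reappear.
\end{itemize}
Your reduction of the marginal bound to the vector bound is fine: the marginal defect of the contraction $E_t$ is positive with trace $\|\ket{v_m}\|^2-\|E_t\ket{v_m}\|^2$.
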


\begin{proof}
Appendix~\ref{app:five-phase-transfer} applies
Lemmas~\ref{lem:controlled-overlap} and~\ref{lem:approximate-compression-transfer}
to the boundary bounds and common encoding. Equation~\eqref{eq:controlled-full-environment-transfer}
gives~\eqref{eq:construction-path-state}, and the private marginal correction
\eqref{eq:controlled-marginal-correction} gives~\eqref{eq:construction-recorder-marginal}.
Lemma~\ref{lem:reference-forest} gives forest support;
contractivity gives \(\|\ket{v_m}\|\le1\).
\end{proof}

Combining Theorem~\ref{thm:controlled-haar} with
Proposition~\ref{prop:controlled-recorder-transfer} proves
Theorem~\ref{thm:independent-phase}.

\subsection{Strong pseudorandom unitaries in the plain model}
\label{sec:plain-model}

To instantiate the construction in the plain model, let
\(\{f_k:\{0,1\}^n\to\{0,1\}^n\}\) be a quantum-query-secure
pseudorandom function family~\cite{ZhandryQRF}. Assume that
\(n=n(\kappa)\) is poly in $\kappa$. Sample five independent keys,
write \(k=(k_1,\ldots,k_5)\), and set
\begin{equation}
 F_{k_j}\ket{x}=e^{2\pi i f_{k_j}(x)/N}\ket{x},
 \qquad U_k=F_{k_5}HF_{k_4}HF_{k_3}HF_{k_2}HF_{k_1},
 \qquad N=2^n.
 \label{eq:qprf-unitary}
\end{equation}
Each output word represents an element of \(\mathbb Z_N\).

\begin{theorem}[Plain-model strong PRU]\label{thm:qprf-pru}
If quantum-query-secure pseudorandom functions exist, then
\eqref{eq:qprf-unitary} defines a strong pseudorandom unitary in the
plain model. For every uniform quantum polynomial-time adversary \(A\)
with coherent access to
\(I,U_k,U_k^\dagger,U_k^T,\overline{U_k}\),
\begin{equation}
 \left|\Pr_k[A^{U_k}(1^\kappa)=1]
 -\Pr_{V\leftarrow\Haar(U(N))}
          [A^{V}(1^\kappa)=1]\right|
 \le\operatorname{negl}(\kappa).
 \label{eq:qprf-pru-bound}
\end{equation}
Both experiments use the global-phase convention of Section~\ref{sec:construction}.
\end{theorem}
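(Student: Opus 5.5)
The plan is a two-step hybrid argument: replace the PRF-derived phases by truly random phases, then invoke Theorem~\ref{thm:independent-phase}. Fix a uniform QPT adversary \(A\) making at most \(t=t(\kappa)\le\poly(\kappa)\) selector queries. Introduce the intermediate experiment in which \(f_{k_1},\ldots,f_{k_5}\) are replaced by five independent uniformly random functions \(h_1,\ldots,h_5:\{0,1\}^n\to\{0,1\}^n\), read as elements of \(\mathbb Z_N\). This experiment is exactly the independent-hidden-phase experiment \(A^{U(S)}\) of Section~\ref{sec:construction}. By the triangle inequality, the left side of \eqref{eq:qprf-pru-bound} is at most the PRF-to-random gap plus the random-to-Haar gap.

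\textbf{Random-to-Haar gap.} Since \(n\le\poly(\kappa)\) and, for this to be meaningful, \(n\ge\kappa^{\Omega(1)}\) (or at least \(n=\omega(\log\kappa)\)), the condition \(3(2t)^{16}\le N=2^n\) holds for all sufficiently large \(\kappa\). Theorem~\ref{thm:independent-phase} then bounds this gap by \(O(t^2N^{-1/8})=\poly(\kappa)\,2^{-n/8}\), which is negligible. The global-phase convention is the same in both experiments, as the statement stipulates, so no extra phase bookkeeping is needed.

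\textbf{PRF-to-random gap.} This step uses five sub-hybrids \(\mathsf{Hyb}_0,\ldots,\mathsf{Hyb}_5\). In \(\mathsf{Hyb}_j\), the first \(j\) tables are truly random and the remaining ones use PRF keys. Given a distinguisher between \(\mathsf{Hyb}_{j-1}\) and \(\mathsf{Hyb}_j\), build a reduction \(B\) against quantum-query PRF security. \(B\) has superposition access to an oracle \(\mathcal O\), which is either \(f_{k_j}\) or random. It samples its own keys for the PRF-based layers. For the truly random layers it cannot store exponential truth tables, so it simulates them with \(2q\)-wise independent functions, where \(q\) is the total number of phase-oracle calls. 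By Zhandry's result~\cite{ZhandryQRF}, these are perfectly indistinguishable from random for \(q\)-query algorithms.

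\textbf{Simulating \(\mathsf{Select}(U)\).} Each phase gate is implemented with two oracle calls: compute \(f(x)\) into an ancilla, apply the phase \(e^{2\pi i f(x)/N}\) via \(n\) controlled single-qubit phase rotations, then uncompute. Conjugated or inverted gates use the opposite rotation. \(U^\dagger\), \(U^T\) and \(\overline U\) are obtained by reversing the layer order and/or negating the rotation angles, since \(H\) is real and symmetric and the \(F_j\) are diagonal. Each branch of the selector is built from controlled versions of these gates. Controlling the phase rotations on the selector register suffices, because the compute and uncompute steps cancel in the idle branch and introduce no phase. Thus each selector query costs \(O(1)\) calls to \(\mathcal O\). \(B\) makes \(O(t)\) quantum queries, runs in polynomial time, and its advantage equals the hybrid gap. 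Summing over the five hybrids gives a negligible total.

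The only delicate point is this coherent simulation of all five branches, including the selector-controlled conjugate and transpose, with no stray relative phase between branches. This is routine once the compute–phase–uncompute gadget is fixed, since that gadget returns the ancilla to \(\ket{0}\) in every branch.
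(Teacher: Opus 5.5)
Your proposal is correct and follows essentially the paper's route: layer-by-layer PRF hybrids in which the already-random layers are simulated by efficiently sampled bounded-wise independent phase tables, the compute--phase--uncompute gadget (with only the rotations and Hadamards controlled) for coherent access to all five branches, and then Theorem~\ref{thm:independent-phase} for the random-to-Haar step. The differences are minor. The paper uses its own moment-matching Lemma~\ref{lem:phase-moment-matching}, which needs only $2t$-wise independence per layer because each unitary query touches each layer once, whereas you apply Zhandry's generic $2q$-wise bound to the word-oracle calls. The paper also explicitly charges a negligible synthesis error for approximating the controlled phase rotations over a fixed gate set, and you should add that step.
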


\part{\texorpdfstring{Security with correlated \(\mathsf{QMA}\) and \(\mathsf{PH}\) oracles}{Security with correlated QMA and PH oracles}}\label{part:correlated}
\section{\texorpdfstring{Security with completed \(\mathsf{QMA}\) oracles}{Security with completed QMA oracles}}\label{sec:model}

Section~\ref{sec:independent-phase} proves security for a unitary defined
by five independent random phase tables. The adversary has oracle access
to the unitary, but no separate access to these tables. We now define a
keyed construction to study security with additional oracle access.


\subsection{The oracles \texorpdfstring{\(R\) and \(B_R\)}{R and B\_R}}
Let \(R:\{0,1\}^*\to\{0,1\}\) be a uniform random oracle. For security
parameter \(\kappa\), write \(n=n(\kappa)\) and \(N=2^n\),
where \(n(\cdot)\) is polynomial-time computable and
\(\kappa\le n(\kappa)\le\poly(\kappa)\).


\begin{definition}[{\cite[Definition~18]{KQST}}]
\label{def:recursive-boolean-oracle}
Define \(B_R:\{0,1\}^*\to\{0,1\}\) inductively by input length.
For \(L\ge2\), an input \(z\in\{0,1\}^L\) represents a pair
\(\langle M,y\rangle\) with the following properties:
\begin{enumerate}
\item The binary description of \(\langle M,y\rangle\) has length
      less than \(L\).
\item \(M\) is an \(\mathsf{NP}\) oracle machine receiving input \(y\).
\item \(M\) is syntactically restricted: it takes fewer than \(L\)
      steps, and its queries to either \(R\) or \(B_R\) have
      length at most \(\lfloor\sqrt L\rfloor\).
\end{enumerate}
Set
\[
 B_R(z):=M^{R,B_R}(y),\qquad O_R:=(R,B_R),
\]
where \(M^{R,B_R}(y)\) is one exactly when some computation branch
accepts. Invalid descriptions, including strings of length at most
one, receive answer zero.
For \(L\ge2\), the bound \(\lfloor\sqrt L\rfloor<L\) ensures that
each query to \(B_R\) uses an answer already defined at a shorter length, avoiding self reference issues.
\end{definition}
By \cite[Proposition~19]{KQST},
\[
 \mathsf P^{O_R}=\mathsf{NP}^{O_R},\qquad
 \PH^R\subseteq\mathsf P^{O_R}.
\]

\begin{construction}[Keyed five-phase unitaries]
\label{constr:keyed-five-phase}
We build a keyed version of the construction in Section~\ref{sec:construction}.
For \(\ell\in[5]\), \(k\in\{0,1\}^\kappa\), and \(x\in\{0,1\}^n\),
define the \(n\)-bit string
\[
 h_\ell(k,x):=\bigl(R(\ell,k,x,1),\ldots,R(\ell,k,x,n)\bigr).
\]
Interpreting \(h_j(k,x)\) as an integer, define, for \(j\in[5]\),
\[
 \begin{aligned}
 F_{j,k}\ket{x}
   &:=e^{2\pi i(h_j(k,x))/N}\ket{x},\\
 U_k&:=F_{5,k}HF_{4,k}HF_{3,k}HF_{2,k}HF_{1,k}.
 \end{aligned}
\]
\end{construction}

\subsection{\texorpdfstring{One \(\mathsf{QMA}\) decision oracle}{One QMA decision oracle}}

Fix a universal polynomial-time \(\mathsf{QMA}\) verifier \(\mathcal V^{(\cdot)}\).
Its classical input \(z\) describes a circuit over a standard
universal gate set augmented with Boolean oracle gates. The description
lists every gate and its operand registers explicitly, and writes each
register size in unary.
Given a witness \(\rho\), the verifier executes the circuit on the
designated witness register and rejects if lengths mismatch or descriptions are malformed.
Descriptions may include padding that the verifier ignores.
Note that for a string $z$ the number of gates (and therefore number of queries), witness size, and query lengths are all at most \(|z|\).

Let \(\mathcal W_z\) be the witness density matrix space for input $z$. Define the \emph{optimal
acceptance probability} \(g_z\) and the \emph{clipped score} \(r_z\)
as functions of the oracle:
\begin{equation}
\begin{split}
 g_z(O)&=\max_{\rho\in\mathcal W_z}
   \Pr[\mathcal V^O(z,\rho)\text{ accepts}],\\
 r_z(O)&=\min\{1,\max\{0,3g_z(O)-1\}\}.
\end{split}\label{eq:boolean-completion-bias}
\end{equation}

Now consider $O_R = (R, B_R)$.  On a NO input,
\(g_z(O_R)\le1/3\) and \(r_z(O_R)=0\). On a YES input,
\(g_z(O_R)\ge2/3\) and \(r_z(O_R)=1\). Between these thresholds,
the \(\mathsf{QMA}\) promise does not specify an answer.
Conditional on \(R\), sample the bits \(Q(z)\) independently over all inputs \(z\):
\begin{equation}
 Q(z)\mid R\sim\operatorname{Bernoulli}(r_z(O_R)),\qquad
 Q\mid R\sim\nu_R:=\bigotimes_z\operatorname{Bernoulli}(r_z(O_R)).
 \label{eq:boolean-completion-law}
\end{equation}
That is, conditional on \(R\), \(Q(z)\) equals \(0\) with probability
\(1-r_z(O_R)\) and \(1\) with probability \(r_z(O_R)\).
$Q$ therefore defines a \emph{Boolean completion} oracle.
It answers every classical input and agrees with the \(\mathsf{QMA}\) promise wherever it applies.

\subsection{Main Theorem}
\begin{theorem}[\(\mathsf{QMA}\) oracle security]\label{thm:main}
For almost every pair \((R,Q)\) sampled as above, every \(\mathsf{QPT}\) adversary \(A\) satisfies 
\[
\begin{split}
\operatorname{Adv}_A^{R, Q}(\kappa):={}&
\left|\Pr_{k\leftarrow\{0,1\}^\kappa}
 [A^{Q;U_k}(1^\kappa)=1]\right.\\[-2pt]
&\left.{}-\Pr_{V\leftarrow\Haar(U(N))}
 [A^{Q;V}(1^\kappa)=1]\right|
\leq \negl(\kappa).
\end{split}
\]
\end{theorem}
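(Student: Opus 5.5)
The plan is the hybrid argument sketched in the technical overview, carried out in four steps. Throughout, fix a uniform QPT adversary \(A\) making at most \(t=\poly(\kappa)\) queries in total to \(Q\) and to \(\mathsf{Select}(U)\).

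\textbf{Step 1: reduce to replacement indistinguishability.} Sample \(R\), a key \(k\), and an independent uniform row \(S\), and set \(R'=R^{k\leftarrow S}\). In the real experiment the challenge is \(U_k=U(S)\) with \(Q\sim\nu_{R'}\); this has exactly the real distribution, since row \(k\) of \(R'\) is \(S\). In the ideal-phase experiment the challenge is still \(U(S)\), but \(Q\sim\nu_R\). Here \(R\) and \(Q\) are independent of \(S\), so the adversary can simulate \(Q\) internally, and Theorem~\ref{thm:independent-phase} gives distance \(O(t^2N^{-1/8})\) from Haar. It therefore suffices to bound the gap between these two experiments, first on average over \((R,k,S)\).

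\textbf{Step 2: a per-query score bound with \(S\)-dependent weights.} Purify the Boolean table \(Q\) in a private register that is kept for the whole run. Then a query at \(z\) under \(\nu_R\) versus \(\nu_{R'}\) differs in trace distance by at most the weighted score change \(\mathbb{E}_S[|r_z(O_{R'})-r_z(O_R)|\,W_v(S)]\) summed over the query amplitudes on \(z\). The weights \(W_v(S)\) come from representing the prefix state, via Proposition~\ref{prop:controlled-recorder-transfer}, as \(E_t\ket{v}\) with \(\ket{v}\) a path-recorded reference state independent of \(S\).

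\textbf{Step 3: control the two pieces of the weighted bound.}
\begin{itemize}
\item \emph{Unweighted stability.} Except with probability \(2^{-\kappa^2}\) over \(R\), \(\mathbb{E}_{k,S}|r_z(O_R)-r_z(O_{R'})|\le 2^{-\Omega(\kappa)}\) simultaneously for all \(z\). This uses the constant-depth circuit structure of \(B_R\) in the switch bits, AIK random restrictions to low-degree polynomials, and the max-of-polynomials sensitivity argument. The score is a max over witnesses \(\omega\) of polynomials \(P_\omega\), and the optimal witness at the higher endpoint controls each decrease.
\item \emph{Covariance with the weights.} Fix the boundary tables and split \(\mathbb{E}_{\mathrm{core}}[|r-c|W_v]\) into a product term plus a covariance. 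The product term is at most \(\mathbb{E}|r-c|\,\|v\|^2\), by contractivity of \(T_{\mathrm{good}}P_{\mathrm{good}}Q_b\). For the covariance, bound the expected maximal single-bit-flip change of \(W_v\) by \(\poly(t,n)/\sqrt N\): a flip in \(F_3\) enters through a Hadamard coefficient \(1/\sqrt N\) on one collision-free middle location, and flips in \(F_2\) or \(F_4\) pick up an adjacent boundary Hadamard. Pair this with the total-decrease bound from Step 3's first item.
\end{itemize}
Together these give per-query error \(\le(2^{-\Omega(\kappa)}+\poly(\kappa)/\sqrt N)\) plus the Proposition~\ref{prop:controlled-recorder-transfer} approximation error. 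The reference prefix must use \(\nu_R\) before the changed query, which matches the hybrid order. Summing over \(t\) hybrids bounds the average signed gap by a negligible amount.

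\textbf{Step 4: fix the oracle.} Average cancellation must be ruled out, so apply Step 3 to a two-run test with shared \((R,Q)\). Run \(A\) twice independently, once on \(U_k\) and once on Haar, and combine the runs to estimate the squared advantage. Since both runs use only \(Q\) and their challenges, the hybrid bound applies and gives \(\mathbb{E}_{R,Q}[\operatorname{Adv}^2]\le\negl\). Concretely, aim for \(2^{-\Omega(\kappa)}\) by taking \(n\ge\kappa\) and absorbing polynomial factors. Markov's inequality then shows that \(\Pr[\operatorname{Adv}>\kappa^{-c}]\) is summable in \(\kappa\), and Borel--Cantelli gives eventual negligibility for almost every \((R,Q)\). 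A countable intersection over uniform adversaries and over \(c\) finishes the proof.

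\textbf{Main obstacle.} I expect the covariance bound in Step 3 to be hardest. It needs the derivative estimates on the encoded state for endpoint-table flips, where the small factor must be traced through the boundary rotation back to a source label while interference across records is kept under control. I would first do the \(F_3\) case via the delete-one-pair injection on good records, then reduce \(F_2\) and \(F_4\) to it.
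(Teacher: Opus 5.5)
This is essentially the paper's proof: the row-replacement hybrid with a persistent purified completion table, weighted stability split into a boundary-conditioned mean (handled by contractivity) and a core-table covariance (maximal core-bit weight change against the score's downward variation), AIK-based scalar row stability, and a two-run second-moment bound followed by Markov and Borel--Cantelli. Two details need fixing. First, a coherent query moves the state by at most twice the square root of the squared-amplitude-weighted score change, and a bound linear in that quantity fails in general. Second, in the two-run test only one run may see the external challenge; the other run needs a self-implemented challenge chosen by a fair bit $b$ (a fresh keyed $U_J$ computed through public queries, or a private Haar $W$), and the test accepts iff $X\oplus Y=b$. Then, with $p,h$ the keyed and Haar acceptance probabilities at fixed $(R,Q)$, its signed gap is $(p-\Pr[A^{Q;W}=1])(p-h)$, which averages over $W$ to $(p-h)^2$; one run on $U_k$ and one on Haar yields only expressions affine in each of $p$ and $h$.
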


Note that
$R$ and \(B_R\) queries can be easily simulated through \(Q\).
Let the \emph{row} for \(k\) consist of the five complete tables
\[
 S_k=\bigl(h_1(k,\cdot),\ldots,h_5(k,\cdot)\bigr).
\]
It contains \(5nN\) bits, indexed by
\((\ell,x,i)\in[5]\times\{0,1\}^n\times[n]\).
There are \(K=2^\kappa\) rows, indexed by the keys \(k\in\{0,1\}^\kappa\).
For any row \(S\), write \(U(S)\) for the unitary constructed from
its five tables, as in Section~\ref{sec:construction}.
Thus \(U(S_k)=U_k\). We first state the following lemma showing that replacing \(U_k\) with \(U(S)\) for an independently sampled row \(S\) is undetectable even using queries to \(Q\).

\begin{lemma}[Independent-row comparison]\label{lem:row-decoupling}
Sample a uniform random oracle \(R\), then sample \(Q\mid R\sim\nu_R\)
according to \eqref{eq:boolean-completion-law}.
For almost every fixed pair \((R,Q)\), every uniform \(\mathsf{QPT}\) adversary \(A\)
satisfies
\[
 \begin{aligned}
 &\left|\Pr_{k\leftarrow\{0,1\}^\kappa}
       [A^{Q;U_k}(1^\kappa)=1]\right.\\[-2pt]
 &\left.\qquad-\Pr_S[A^{Q;U(S)}(1^\kappa)=1]\right|
 \le\negl(\kappa).
 \end{aligned}
\]
where $S$ is a uniformly sampled row.
\end{lemma}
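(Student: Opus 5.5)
We will prove Lemma~\ref{lem:row-decoupling} by a hybrid over the adversary's queries to \(Q\), following the outline of the technical overview. The two experiments are coupled as follows. Sample \(R\), a key \(k\), and an independent row \(S\), and set \(R'=R^{k\leftarrow S}\). The pair \((R',k)\) with challenge \(U(S)=U(S'_k)\) is distributed exactly like the real experiment (\(R\), key \(k\), challenge \(U_k\)). Meanwhile \(Q\sim\nu_R\) with challenge \(U(S)\) is the independent-row experiment, since \(S\) is independent of \(R\). So it suffices to compare, on average over \((R,k,S)\), the adversary run with completion \(Q\sim\nu_{R'}\) and with \(Q\sim\nu_R\), both with challenge \(U(S)\). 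We then upgrade from an average bound to an almost-every-oracle bound.

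\textbf{Step 1 (hybrid and purification).} Let \(A\) make at most \(p=\poly(\kappa)\) queries to \(Q\) and \(t\) queries to \(\mathsf{Select}(U(S))\). Represent the sampled completion table by a private register holding an independent Bernoulli bit per address. Equivalently, each coherent \(Q\)-query acts on address \(z\) as the controlled purification \(\sqrt{1-r_z}\ket0+\sqrt{r_z}\ket1\), which keeps repeated queries consistent. Hybrid \(H_i\) answers the first \(i\) \(Q\)-queries from \(\nu_R\) and the rest from \(\nu_{R'}\). The prefix before query \(i\) then uses only \(R\) and \(U(S)\), with \(S\) independent of \(R\).

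\textbf{Step 2 (one-query error).} Apply Proposition~\ref{prop:controlled-recorder-transfer} to this prefix. The prefix state is \(e_t\)-close to \(E_t\ket{v}=\sum_S\sqrt{\Pr[S]}\ket{q_v(S)}\ket S\), where \(\ket v\) is a path-recorded reference state independent of \(S\) and \(k\) (Lemma~\ref{lem:reference-forest}). The change in the state caused by swapping the score \(c=r_z(O_R)\) for \(r(S)=r_z(O_{R'})\) at this query is bounded through the Bernoulli purification, whose amplitude difference is at most \(\sqrt{|r(S)-c|}\) in squared norm. This gives a per-query error of order
\[
\Bigl(\E_{k,S}\bigl[|r(S)-c|\,W_v(S)\bigr]\Bigr)^{1/2}+e_t .
\]
We then invoke the weighted score bound from Section~\ref{sec:weights}: the covariance decomposition over the core tables, combined with single-core-bit-flip stability of the encoding. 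It gives
\[
\E_S\bigl[|r(S)-c|\,W_v(S)\bigr]\le\bigl(\E_S|r(S)-c|+\lambda_t\bigr)\|v\|^2 .
\]
We then average over \(k\) and apply the classical stability estimate of Section~\ref{sec:classical-stability}. Except with probability \(2^{-\kappa^2}\) over \(R\), this gives \(\E_{k,S}|r(S)-c|\le2^{-\Omega(\kappa)}\) simultaneously for all \(z\) of polynomial length; the exceptional \(R\) contribute at most \(2^{-\kappa^2}\). Because \(e_t=O(t^2N^{-1/8})\) and \(\lambda_t\le\poly(\kappa)/\sqrt N\), summing over \(p\) hybrids gives total error \(\poly(\kappa)\,2^{-\Omega(\kappa)}\), where we use \(N\ge2^\kappa\).

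\textbf{Step 3 (almost every oracle).} The bound from Step 2 is on the signed average gap, in which contributions from different oracles could cancel. To control this, consider a two-run test that runs \(A\) twice with the same \((R,Q)\) and fresh challenges. Its acceptance statistic equals the average over \((R,Q)\) of the squared gap, and the hybrid applies unchanged to this \(2p\)-query adversary. Hence \(\E_{R,Q}[\mathrm{gap}^2]\le2^{-\Omega(\kappa)}\). Markov's inequality gives \(\Pr_{R,Q}[\mathrm{gap}>2^{-c\kappa}]\le2^{-c'\kappa}\). These probabilities are summable in \(\kappa\), so Borel--Cantelli shows the gap is eventually negligible for almost every \((R,Q)\). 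Since there are countably many uniform adversaries, intersecting over them gives the statement for all of them simultaneously.

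\textbf{Main obstacle.} The hardest part will be Step 2, the weighted estimate. The query amplitudes \(W_v(S)\) depend on the same row \(S\) that is inserted into \(R'\), so the unweighted classical stability bound cannot be applied directly. The plan is to rely on the covariance bound, which needs the derivative bounds showing that the encoded state has small sensitivity to flips of core-table bits. A secondary concern is that in Step 2 the purified completion must be compared coherently across repeated queries, not one query at a time.
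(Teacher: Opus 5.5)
Your proposal matches the paper's proof essentially step for step: the coupling $R'=R^{k\leftarrow S}$, a persistent purified completion table with a public-query hybrid whose prefixes use only $R$ and $U(S)$, the recorder reference plus the weighted covariance bound and simultaneous scalar stability at each query, and then a two-run second-moment bound followed by Markov, Borel--Cantelli and a countable intersection over uniform adversaries. The one step you leave vague is the two-run test, which cannot run both copies on the single external challenge. As in the paper, run one copy on the external challenge and the other on an internally chosen challenge: if a fair bit is $b=0$, a fresh keyed $U_J$ simulated through $R$-queries encoded in $Q$, which adds public queries beyond $2p$; if $b=1$, an independent private $W=U(T)$. Accept iff $X\oplus Y=b$. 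The test's world-1/world-2 gap is then $(p_A-q_{A,W})\,d_A$, which averages over $T$ to $d_A^2$ because the hybrid bound is uniform over arbitrary fixed interquery operations.
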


Combining Lemma~\ref{lem:row-decoupling} with
Theorem~\ref{thm:independent-phase} proves Theorem~\ref{thm:main}.

\begin{proof}[Proof of Theorem~\ref{thm:main}]
Fix a pair \((R,Q)\) for which Lemma~\ref{lem:row-decoupling} holds,
and fix a uniform \(\mathsf{QPT}\) adversary \(A\).
Let \(t=\poly(\kappa)\) bound its unitary queries.
With this pair fixed, the public oracle gates are independent of \(S\).
Theorem~\ref{thm:independent-phase} therefore gives
\[
 \begin{aligned}
 &\left|\Pr_S[A^{Q;U(S)}(1^\kappa)=1]\right.\\[-2pt]
 &\left.\qquad-\Pr_{V\leftarrow\Haar(U(N))}
       [A^{Q;V}(1^\kappa)=1]\right|
 \le\Delta_t=O(t^2N^{-1/8}).
 \end{aligned}
\]
Since \(n(\kappa)\ge\kappa\), this bound is negligible and the
construction condition holds eventually. The triangle inequality and
Lemma~\ref{lem:row-decoupling} prove the theorem.
\end{proof}
In fact, Section~\ref{sec:fixed-oracle-conclusion} gives
\(\operatorname{Adv}_A^{R,Q}(\kappa)=O(2^{-\kappa/128})\) eventually
under the same quantifiers, with a cutoff depending on the sampled
oracle, adversary, and its fixed family and resource bounds.
The remaining technical sections are devoted to proving
Lemma~\ref{lem:row-decoupling}.


\section{Stability of verifier scores under row resampling}
\label{sec:classical-stability}

Throughout this section, use the scores \(g_z(O)\) and \(r_z(O)\)
from \eqref{eq:boolean-completion-bias}. For each input \(z\), the
set of allowed witnesses \(\mathcal W_z\) is a nonempty compact set of density
operators on a finite-dimensional register, independent of \(O\).

\subsection{The resampling bound}\label{sec:scalar-stability}

At key length \(\kappa\), the random oracle \(R\) contains
\(K=2^\kappa\) rows, each with \(5nN\) bits. Write
\(R^{k\leftarrow S}\) for the oracle obtained by replacing row
\(k\) with \(S\).

\begin{lemma}[Simultaneous row stability]\label{lem:scalar-row-stability}
Fix a positive integer-valued polynomial \(\ell_{\rm in}\).
For each \(\kappa\), let
\(\mathcal Z_\kappa:=\{0,1\}^{\le\ell_{\rm in}(\kappa)}\).
For all sufficiently large \(\kappa\) and every fixed assignment
to the bits of \(R\) outside the \(K\) rows at key length
\(\kappa\), sample those rows independently and uniformly.
With probability at least \(1-2^{-\kappa^2}\) over these rows,
the sampled \(R\) satisfies
\begin{equation}
 \mathbb E_{k,S}
 \left|r_z(O_R)-r_z(O_{R^{k\leftarrow S}})\right|
 \le\alpha:=6\cdot2^{-\lfloor\kappa/16\rfloor}
 \quad\text{for every }z\in\mathcal Z_\kappa,
 \label{eq:scalar-stability}
\end{equation}
where \(k\) is uniform among the \(K\) keys, and \(S\) is a fresh
uniform row independent of \(R,k\).
\end{lemma}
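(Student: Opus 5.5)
The plan is to reduce the row-resampling quantity to the average sensitivity of a bounded function of $K$ Boolean switches, and then bound that sensitivity with low-degree approximations. Concretely, fix the bits of $R$ outside the rows at key length $\kappa$, sample two independent candidate rows $S_k^{(0)},S_k^{(1)}$ for every key, and a uniform switch vector $\mathbf s\in\{0,1\}^K$. Then $R_{\mathbf s}$ has exactly the law of the sampled $R$. Moreover, for uniform $k$, the pair $(R_{\mathbf s},R_{\mathbf s^{\oplus k}})$ has the law of $(R,R^{k\leftarrow S})$ with $S$ a fresh row. Hence, writing $f_z(\mathbf s)=g_z(O_{R_{\mathbf s}})$, and using that clipping $g\mapsto r$ is $3$-Lipschitz,
\[
 \mathbb E_{k,S}\bigl|r_z(O_R)-r_z(O_{R^{k\leftarrow S}})\bigr|
 \le 3\,\mathbb E_{k}\,\mathbb E_{S^{(1-s_k)}_k}\bigl|f_z(\mathbf s)-f_z(\mathbf s^{\oplus k})\bigr| .
\]
Here $\mathbf s$ and the selected rows encode $R$. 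So it suffices to show that, for all but a $2^{-\kappa^2-\ell_{\rm in}(\kappa)-1}$ fraction of $(\mathbf s,\text{selected rows})$, the average single-switch change of $f_z$ is at most $2\cdot2^{-\lfloor\kappa/16\rfloor}$. A union bound over the at most $2^{\ell_{\rm in}(\kappa)+1}$ inputs $z\in\mathcal Z_\kappa$ then gives the simultaneous statement.

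The structural step uses \cite[Lemma~38]{KQST}. Since recursive $B_R$ queries have length at most the square root of the current length, only $O(\log\ell_{\rm in})=O(1)$ recursive levels touch key length $\kappa$. So every bit of $O_{R_{\mathbf s}}$ queried by $\mathcal V(z,\cdot)$ is computed by an $\mathrm{AC}^0$ circuit of depth $O(1)$ and size $\exp(\poly(\kappa))$ in the switches, with the candidate rows fixed. I would then apply the random-restriction theorem of \cite{AIK}, adapted from decision trees to polynomials. Restricting all but a $p=2^{-\Theta(\kappa)}$ fraction of switches at random makes every relevant oracle bit a polynomial of degree $d=\poly(\kappa)$ in the free switches, except with probability $2^{-\kappa^{3}}$, say. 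By the polynomial method \cite{BBCMW98}, for each fixed witness $\omega$ the acceptance probability is a polynomial $P_\omega$ of degree at most $2|z|d$ in the free switches, with values in $[0,1]$. The same restriction serves all witnesses simultaneously.

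Next comes the maximum over witnesses. On a good restriction, $f(y)=\max_\omega P_\omega(y)$, and choosing an optimal witness $P_y$ at each $y$ gives $f(y)-f(y')\le P_y(y)-P_y(y')$. I would invoke the scalar tools of Appendix~\ref{app:scalar-stability-tools}: a function whose every downward step from each point is dominated by a bounded degree-$D$ polynomial through that point has small total downward influence, at most $\poly(D)$ independently of the number of variables. Since the total variation $\sum_k|f(y)-f(y^{\oplus k})|$ averaged over $y$ equals twice the average downward variation, the total influence over the $pK$ free switches is $\poly(\kappa)$. Each free switch is a uniformly random key with probability $p$, so the average over all $K$ keys of the single-flip change is at most $\poly(\kappa)/(pK)$ in expectation. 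This is far below $2^{-\kappa/16}$ once $p\ge 2^{-\kappa/2}$.

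The main obstacle is upgrading these \emph{expected} sensitivity bounds to the required $2^{-\kappa^2}$ tail over the rows. My first attempt is to exploit the $K=2^\kappa$-fold independence. Partition the keys into many disjoint random blocks and run independent restrictions. The average single-flip change is then an average of bounded, nearly independent block contributions, and a Chernoff/bounded-differences bound gives deviation probability $\exp(-2^{\Omega(\kappa)})\ll2^{-\kappa^2}$. This works because changing the rows of one block alters the total by at most its block fraction. I would combine this concentration with the restriction failure probability and the set of rare non-sensitive but pathological rows to obtain the constant $6$.
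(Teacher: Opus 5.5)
You have the right ingredients up to an expected bound, but the step that produces the $2^{-\kappa^2}$ tail has a genuine gap. The switch coupling, the $3$-Lipschitz clipping, the union bound over $\mathcal Z_\kappa$, the constant-depth circuits, the random restriction, the polynomial method, and the domination $f(y)-f(y')\le P_y(y)-P_y(y')$ all match the paper. Your total-influence step is also correct: certificate mass $C=\pi^2D^2/4$ at every point gives $\mathbb E_y\sum_k|f(y)-f(y^{\oplus k})|\le 2C$.

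However, this only bounds the \emph{expected} stability quantity, by roughly $\poly(\kappa)/(pK)$. Markov at threshold $2^{-\kappa/16}$ then gives failure probability $\poly(\kappa)2^{-15\kappa/16}$ at best, not $2^{-\kappa^2}$.

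Your block-concentration repair does not work. The average single-flip change is not a sum of independent block terms: each block's contribution $\frac1K\sum_{k\in B}|f(\mathbf s)-f(\mathbf s^{\oplus k})|$ depends on every row. It is also not $O(|B|/K)$-Lipschitz in the rows of $B$. For example, take a verifier that uses $B_R$ to test whether a fixed row bit equals $1$ in all $K$ rows. The stability quantity is $1/2$ when every row passes and $1/(2K)$ when exactly one row fails. So changing a single row changes it by $\Theta(1)$, and McDiarmid-type bounds give nothing.

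No argument using only total influence can succeed either. Take a threshold function $\mathbf 1[|x|\ge K/2+a]$ with $a$ chosen so its total influence is $C$. It has sensitivity about $K/2$ on a set of measure about $C/K=\poly(\kappa)2^{-\kappa}$. So the tail must come pointwise from the max-of-low-degree-polynomials structure.

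That tail is what the paper supplies, in three steps.
\begin{enumerate}
\item Lemma~\ref{lem:polynomial-max-sensitivity} gives an exponential tail $\Pr[S_\delta(f,X)\ge t]\le\frac2\delta2^{-t\delta/(4\pi^2D^2)}$ for a maximum of $[0,1]$-valued degree-$D$ polynomials. It is proved from one-sided certificates via a PPZ-style encoding of sublevel sets with Kraft's inequality, averaged over the level.
\item Lemma~\ref{lem:sensitivity} lifts this to the full switch cube. A point with at least $s_*$ $\delta$-sensitive switches keeps at least $u_{\rm free}$ of them under independent retention with probability at least $1/2$. Hence $\Pr_{\mathbf s}[S_\delta(f_z,\mathbf s)\ge s_*]\le2(M_{\rm qry}+2/\delta)2^{-d_{\rm tree}}$, with $d_{\rm tree}\ge\kappa^2+\log_2(\cdots)$ and $s_*/K\le\delta$.
\item The pointwise bound $A_f\le\delta+S_\delta(f,\cdot)/K$ turns this into a tail for the average change. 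Markov is then applied (Lemma~\ref{lem:switch-to-resampling}) not to the stability quantity but to the exceptional probability $q_z(R)=\Pr_T[Z_z(R,T)>a]$, whose mean is already $2^{-d_{\rm tree}}$-small.
\end{enumerate}
This, not concentration across keys, is where the $2^{-\kappa^2}$ comes from.
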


The encoding in Section~\ref{sec:model} bounds both the query count
and each query length by \(\ell_{\rm in}(\kappa)\), for every
input in \(\mathcal Z_\kappa\), oracle, and allowed witness.
Also, \(\log_2|\mathcal Z_\kappa|\le\ell_{\rm in}(\kappa)+1\).

The proof uses two general bounds stated next. The first controls
sensitivity for a function that is a point-wise maximum of bounded low-degree polynomials.
The second bounds the average change in a low sensitivity function under a bit flip.
To apply them, we construct $R$ by fixing two candidate rows per key and use one
bit to select each row to actually include in the oracle. Existing work shows that random restriction makes every
answer of $O_R$ a low-degree polynomial in these free "switch" bits. The polynomial method composed with the above allows us to represent \(g_z\) as a maximum of bounded low-degree polynomials.
The general bounds then allow us to show that this doesn't change much when flipping a random switch bit which
gives the row stability lemma we need.

\subsection{Two general bounds}\label{sec:sensitivity}

For \(x\in\{0,1\}^m\), write \(x^{\oplus i}\) for \(x\) with
bit \(i\) flipped. For \(f:\{0,1\}^m\to[0,1]\) and
\(0<\delta\le1\), define its \emph{\(\delta\)-sensitivity} by
\[
 S_\delta(f,x)
 :=\#\{i:|f(x)-f(x^{\oplus i})|\ge\delta\}.
\]
Thus \(S_\delta(f,x)\) counts the coordinate flips that change
\(f\) by at least \(\delta\). We call these coordinates
\emph{\(\delta\)-sensitive} at \(x\).

\begin{lemma}[Sensitivity of a maximum of polynomials]
\label{lem:polynomial-max-sensitivity}
Let \(m\ge0\) and \(D\ge1\) be integers. Let \((p_a)_{a\in I}\)
be a family of real-valued polynomials on \(\{0,1\}^m\), each of
degree at most \(D\) and with values in \([0,1]\). Define the pointwise the maximum as
\[
 f(x)=\max_{a\in I}p_a(x)
\]
is for every \(x\). For every \(0<\delta\le1\) and \(t\ge0\),
\[
 \Pr_{X\sim\{0,1\}^m}[S_\delta(f,X)\ge t]
 \le\frac2\delta\,2^{-t\delta/(4\pi^2D^2)}.
\]
\end{lemma}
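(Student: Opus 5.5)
The plan is to split the \(\delta\)-sensitive coordinates at a point \(x\) into \emph{down} flips, where \(f(x^{\oplus i})\le f(x)-\delta\), and \emph{up} flips, where \(f(x^{\oplus i})\ge f(x)+\delta\). Down flips should be controlled pointwise by a single polynomial. Up flips are down flips seen from the neighbouring point, and their tail has to come from averaging over \(X\).

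\textbf{Step 1 (down flips, deterministic).} At each \(x\), fix an optimal index \(a(x)\), so that \(p:=p_{a(x)}\) satisfies \(p(x)=f(x)\) and \(p\le f\) everywhere. If \(i\) is a down flip, then
\[
p(x)-p(x^{\oplus i})\ge f(x)-f(x^{\oplus i})\ge\delta .
\]
Let \(s\) be the number of down flips. Symmetrize \(p\) over the \(s\) down coordinates, starting from \(x\). This gives a univariate polynomial \(q(k)\) of degree at most \(D\) with values in \([0,1]\) at \(k=0,\dots,s\): \(q(k)\) is the average of \(p\) over all ways of flipping \(k\) of these coordinates. It satisfies \(q(0)-q(1)\ge\delta\).

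A Markov-brothers inequality for polynomials bounded on integer points (Ehlich--Zeller/Coppersmith--Rivlin) then bounds \(|q'|\) near the endpoint by \(O(D^2/s)\) once \(s\gtrsim D^2\). Hence \(s\le c_0D^2/\delta\). I would track the constant so that it is compatible with the \(4\pi^2\) in the exponent; the \(\pi^2\) suggests a cosine-substitution version of Markov's inequality. Write \(B:=c_0D^2/\delta\) for this bound.

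\textbf{Step 2 (up flips, via thresholds).} Partition \([0,1]\) into a grid of step \(\delta/2\), giving at most \(2/\delta\) thresholds \(\theta\). This is the origin of the prefactor \(2/\delta\). Every up flip of size \(\ge\delta\) from \(x\) lands in \(A_\theta=\{f\ge\theta\}\) for some grid \(\theta\) with \(f(x)\le\theta-\delta/2\). By a union bound over \(\theta\), it suffices to show that
\[
\Pr_X\bigl[X\notin A_{\theta-\delta/2}\text{ and }X\text{ has }\ge t'\text{ neighbours in }A_\theta\bigr]\le 2^{-t'/(cB)}.
\]

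For this, use the optimal polynomials at the upper points \(y\in A_\theta\): each \(p_{a(y)}\) has value \(\ge\theta\) at \(y\) and value \(\le\theta-\delta/2\) at every point outside \(A_{\theta-\delta/2}\). My first attempt is a restriction/Chernoff argument. Condition on a uniformly random \(X\), and look at the \(t'\) up-neighbours \(y_1,\dots,y_{t'}\). Consider points obtained from \(X\) by flipping a random subset of the up coordinates, sampling each with rate about \(1/B\). With constant probability such a point lies in \(A_\theta\) while retaining many down-neighbours outside \(A_{\theta-\delta/2}\). By Step 1 this configuration is forbidden once the count exceeds \(B\), and iterating this (a hypercontractive or sequential-exposure estimate) gives the factor \(2^{-\Omega(t'/B)}\).

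\textbf{Step 3 (assembly).} Combine the deterministic bound \(s_{\rm down}\le B\) with the up-flip tail, so that \(t\) sensitive coordinates force at least \(t-B\) up flips. Sum over the \(2/\delta\) thresholds and absorb the constant into the exponent to reach \(\frac{2}{\delta}2^{-t\delta/(4\pi^2D^2)}\).

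The main obstacle is Step 2. Step 1 alone only gives, by double counting edges, that the \emph{average} up-sensitivity is at most \(B\), which is just a polynomial Markov-type tail. Getting the exponential tail requires genuinely exploiting that the set of up-neighbours lies in a superlevel set whose boundary points all have few exits. If the random-flip argument fails, I would instead apply Step 1 to the single polynomial \(p_{a(y)}\) for a well-chosen \(y\), restricted to the subcube spanned by the up coordinates of \(X\).
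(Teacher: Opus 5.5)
Your Step~1 is sound in spirit (the paper also bounds down flips pointwise), and your threshold grid correctly accounts for the prefactor $2/\delta$. The gap is Step~2, which carries the whole content of the lemma: you give no working mechanism there, and the one you propose is false.

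Take $p_i(x)=x_i$, so $D=1$ and $f$ is OR, and take $\delta=1$. At $X=0$ every coordinate is an up flip. Yet every point $e_S$ with $|S|\ge2$ lies in $A_\theta$ and has \emph{no} neighbour outside $A_{\theta-\delta/2}$, and no point of $A_\theta$ has more than one such neighbour. So a random multi-coordinate flip does not land in $A_\theta$ ``while retaining many down-neighbours.'' In fact no pointwise contradiction can exist: the bad configuration genuinely occurs, and it is merely rare (probability $2^{-m}$). Your fallback, applying Step~1 to $p_{a(y)}$ on the subcube of up coordinates, fails on the same example.

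The underlying problem is twofold. Step~1 controls only single-bit flips. Bounding the up tail instead requires controlling $f$ under simultaneous changes of \emph{many} coordinates near the upper point $y=X^{\oplus i}$. That control must then be turned into a counting bound, not a forbidden configuration.

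The paper supplies both pieces.

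\textbf{Certificates.} At every $y$ it applies the bounded-polynomial certificate theorem to the optimal $p_{a_y}$. This gives weights $w_y\ge0$ of total mass $C\le\pi^2D^2/4$ with $f(y)-f(z)\le\sum_{j:y_j\ne z_j}w_y(j)$ for \emph{every} $z$. This is where $\pi^2$ comes from, and it also recovers your Step~1 as $D_\delta(x)\le C/\delta$.

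\textbf{Encoding.} It then uses a Paturi--Pudl\'ak--Zane encoding. For a level $a$, encode each $x\in E_a=\{f\le a\}$ by reading coordinates in a uniformly random order, writing a bit only when both values remain possible within $E_a$. Suppose $f(x)\le a\le f(x)+\delta/2$ and $i$ is an up flip of $x$. With probability at least $\delta/(8C)$, coordinate $i$ is ranked in the last $\delta/(4C)$ fraction and $y=x^{\oplus i}$ has certificate mass below $\delta/2$ on later coordinates. In that case every $z$ agreeing with $x$ before $i$ and with $y$ at $i$ satisfies $f(z)>f(x)+\delta/2\ge a$, so bit $i$ is inferred rather than written.

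\textbf{Counting.} Let $U_\delta$ count up flips. The expected code length is at most $m-U_\delta(x)\,\delta/(8C)$. Kraft's inequality plus Jensen then give at most $2^{m-u\delta/(8C)}$ points $x$ with $U_\delta(x)\ge u$ at each level. Integrating $a$ over windows of length $\delta/2$ yields $\Pr[U_\delta\ge u]\le\frac2\delta\,2^{-u\delta/(8C)}$.

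\textbf{Assembly.} If $t>2C/\delta$, the down bound forces $U_\delta\ge t/2$; otherwise the claimed bound is trivial. This gives exactly the exponent $t\delta/(16C)=t\delta/(4\pi^2D^2)$, which your unspecified $c_0$ and $\Omega(\cdot)$ would not deliver.
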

Appendix~\ref{app:scalar-stability-tools} proves the lemma.

Next we show a simple bound that applies to any function with values in \([0,1]\).
For \(m\ge1\), define its \emph{average change} by
\[
 A_f(x):=\frac1m\sum_{i=1}^m|f(x)-f(x^{\oplus i})|.
\]

\begin{lemma}[From sensitivity to average change]
\label{lem:switch-variation}
Let \(m\ge1\), let \(f:\{0,1\}^m\to[0,1]\), and let
\(0<\delta\le1\). For every input \(x\),
\[
 A_f(x)\le\delta+\frac{S_\delta(f,x)}m.
\]
Consequently, for a uniform input \(X\) and every \(t\ge0\),
\begin{equation}
 \Pr_X[A_f(X)>\delta+t/m]
 \le\Pr_X[S_\delta(f,X)\ge t].
 \label{eq:average-switch-stability}
\end{equation}
\end{lemma}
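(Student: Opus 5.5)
The plan is to split the coordinates by sensitivity and bound each class separately. Fix $x$ and let $T=\{i:|f(x)-f(x^{\oplus i})|\ge\delta\}$, so that $|T|=S_\delta(f,x)$ by definition. The sum defining $A_f(x)$ then splits into two parts.

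For $i\notin T$, each term $|f(x)-f(x^{\oplus i})|$ is less than $\delta$. There are at most $m$ such indices, so these terms contribute at most $\delta\cdot(m-|T|)/m\le\delta$ to the average. For $i\in T$, both values of $f$ lie in $[0,1]$, so each term is at most $1$. Hence these terms contribute at most $|T|/m=S_\delta(f,x)/m$. Adding the two contributions gives
\[
A_f(x)\le\delta+\frac{S_\delta(f,x)}{m}
\]
for every $x$.

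The probabilistic consequence follows by event inclusion. If $A_f(X)>\delta+t/m$, the pointwise bound gives $\delta+S_\delta(f,X)/m>\delta+t/m$, and therefore $S_\delta(f,X)>t$, which implies $S_\delta(f,X)\ge t$. Thus the event $\{A_f(X)>\delta+t/m\}$ is contained in $\{S_\delta(f,X)\ge t\}$, and taking probabilities under uniform $X$ yields \eqref{eq:average-switch-stability}.

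There is no real obstacle here; the only care needed is to use the strict inequality in the event on the left side so that the inclusion goes through.
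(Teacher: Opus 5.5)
Your proposal is correct and follows the paper's proof exactly: you bound each $\delta$-sensitive flip by one and every other flip by less than $\delta$, sum and divide by $m$, and then deduce the tail bound from the pointwise inequality by event inclusion. Your handling of the strict inequality in that inclusion is also correct.
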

\begin{proof}
Each counted flip changes \(f\) by at most one; every other flip
changes it by less than \(\delta\). Sum these bounds and divide by
\(m\). The probability bound follows from the pointwise inequality.
\end{proof}

\subsection{Verifier polynomials and oracle answers}\label{sec:witnesses}

Let \(M_{\rm qry}\) be the number of oracle $O_R$ entries which can be reached by verifier queries of
length at most \(\ell_{\rm in}(\kappa)\). Then
\begin{equation}
 M_{\rm qry}=2^{O(\ell_{\rm in}(\kappa))}=2^{\poly(\kappa)}.
 \label{eq:oracle-coordinate-bound}
\end{equation}
The verifier's behaviour on any input cannot depend on any oracle coordinates beyond the first $M_{\rm qry}$ bits. We identify their
values with a vector \(a\in\{0,1\}^{M_{\rm qry}}\) for ease of notation; any verifier dependent functions of the oracle can be thought of as e just functions of the bits in $a$.

For a fixed witness \(\omega\in\mathcal W_z\), for any \(a\in\{0,1\}^{M_{\rm qry}}\) let
\(p_{z,\omega}(a)\) be the verifier's acceptance probability.
Purify the witness and defer measurements. The polynomial
method~\cite[Lemma~4.2]{BBCMW98} gives
\begin{equation}
 \deg p_{z,\omega}\le2\ell_{\rm in}(\kappa),\qquad
 0\le p_{z,\omega}(a)\le1,\qquad
 g_z(a)=\max_{\omega\in\mathcal W_z}p_{z,\omega}(a).
 \label{eq:verifier-polynomial-maximum}
\end{equation}
When \(a\) is the answer vector of \(O\), we also write
\(g_z(O)\) and \(p_{z,\omega}(O)\).

We also need to show that each bit of $O_R$ can be computed by an $\mathsf{AC^0}$ circuit representation of each answer in terms of the
raw row bits. The following is a direct consequence of
\cite[Lemma~38]{KQST}. We retain its restriction to the three core tables for
Section~\ref{sec:weights}.

\begin{lemma}[Circuits for oracle answers]
\label{lem:base-completion-locality}
Each answer \(R(x)\) or \(B_R(x)\) with
\(|x|\le\ell_{\rm in}(\kappa)\) has an \(\mathsf{AC}^0\) circuit in the bits
of the rows at \(\kappa\). The circuits have constant depth and
size \(\exp(\poly(\kappa))\), uniformly over \(x\) and the fixed
outside bits. These bounds also hold if a Boolean variable at each key
selects between two fixed candidate rows, or if any subset of raw row
bits is fixed. In particular, one may leave only a single row's three
core tables \(h_2(k,\cdot),h_3(k,\cdot),h_4(k,\cdot)\) variable.
\end{lemma}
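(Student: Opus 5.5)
The plan is to reproduce the unfolding argument behind \cite[Lemma~38]{KQST}, keeping track of which bits are variable. The argument inducts on the recursion depth of Definition~\ref{def:recursive-boolean-oracle}. A raw answer \(R(x)\) is either a fixed outside bit, which is a constant circuit, or a single row bit at key length \(\kappa\), which is a depth-zero circuit of size one. For \(B_R(z)\) with \(|z|=L\), the answer is \(M^{R,B_R}(y)\) for an \(\mathsf{NP}\) machine running fewer than \(L\) steps. We write it as an OR over all nondeterministic branches and all guessed query-answer transcripts. Each disjunct is the AND of two things: the branch accepting given that transcript, which is a constant once the transcript is fixed, and every guessed answer equalling the true oracle answer. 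Each query has length at most \(\lfloor\sqrt L\rfloor\). The number of disjuncts is \(2^{O(L)}\), and each conjunct involves fewer than \(L\) literals of shorter oracle answers.

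The key step is bounding the recursion depth. Starting from \(L\le\ell_{\rm in}(\kappa)=\poly(\kappa)\), after \(j\) levels the query lengths are at most \(\ell_{\rm in}(\kappa)^{2^{-j}}\). Rows at key length \(\kappa\) are addressed by strings of length at least \(\kappa\), since the key \(k\) is part of the address \((\ell,k,x,i)\). So after \(c=O(1)\) levels, where \(\ell_{\rm in}(\kappa)^{2^{-c}}<\kappa\) because \(\ell_{\rm in}\) is polynomial, every remaining query depends only on fixed outside bits and becomes a constant. Substituting the level-\((j+1)\) circuits into the literals at level \(j\), with negations pushed to the leaves by De Morgan, gives depth \(O(c)\) and size \(\prod_j 2^{O(L_j)}\cdot L_j\le\exp(\poly(\kappa))\). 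The depth and size do not depend on \(x\) or on the outside bits.

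For the variants, the only change is at the leaves, so the circuit skeleton is unchanged:
\begin{itemize}
\item With switches, a row bit at key \(k\) becomes \((s_k\wedge S^{(1)}_{k,\cdot})\vee(\neg s_k\wedge S^{(0)}_{k,\cdot})\) with the candidate bits fixed. This is a literal or a constant in \(s_k\).
\item Fixing any subset of raw row bits replaces those leaves by constants.
\item In particular, fixing all bits except \(h_2(k,\cdot),h_3(k,\cdot),h_4(k,\cdot)\) of a single row \(k\) gives the last claim.
\end{itemize}
Each substitution adds at most a constant to the depth and a polynomial factor to the size.

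The main obstacle is making the depth bound uniform. We have to check that the encoding of \(\langle M,y\rangle\) really forces queries at the \(\kappa\)-rows to have length \(\ge\kappa\). We also have to check that the \(\sqrt L\) query-length restriction applies at every level and not only at the top. Both follow from the syntactic restriction in Definition~\ref{def:recursive-boolean-oracle} together with the address format of Construction~\ref{constr:keyed-five-phase}, so we defer the detailed bookkeeping to \cite[Lemma~38]{KQST}.
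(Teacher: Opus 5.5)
Your proposal is correct and takes essentially the paper's route. The paper applies \cite[Lemma~38]{KQST} directly, with length parameter $\kappa$ and a constant depth $d_{\rm rec}$ satisfying $\ell_{\rm in}(\kappa)\le\kappa^{2^{d_{\rm rec}}}$; it then fixes the $B_R$-answers of length at most $\kappa$ and all $R$-bits outside the current rows, which is the same observation you use to stop your unfolding, and it treats the switch and partial-fixing variants by the same leaf substitution into constants, variables, or negated variables. Your explicit unfolding of the $\mathsf{NP}$ machine into an OR over branches and answer transcripts is the argument inside that cited lemma, so nothing is missing.
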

\begin{proof}
The answer \(R(x)\) is fixed or one input bit. The row-bit encoding
in Section~\ref{sec:model} gives every address in the current rows
length at least \(\kappa\). By
Definition~\ref{def:recursive-boolean-oracle}, each recursive query
has length at most the square root of the current input length.
Thus all \(B_R\)-answers at lengths at most \(\kappa\) depend only
on fixed \(R\)-bits below length \(\kappa\), and are fixed.

Choose a constant integer \(d_{\rm rec}\ge1\) such that
\(\ell_{\rm in}(\kappa)\le\kappa^{2^{d_{\rm rec}}}\)
for sufficiently large \(\kappa\).
Apply \cite[Lemma~38]{KQST} with its length parameter
\(\ell=\kappa\) and its depth parameter \(d=d_{\rm rec}\).
Each required \(B_R(x)\) has a circuit of depth at most
\(2d_{\rm rec}\) and size \(O(2^{\kappa^{2^{d_{\rm rec}}}})\).
Its inputs are \(R\)-bits at lengths at most
\(\kappa^{2^{d_{\rm rec}-1}}\) and \(B_R\)-answers at lengths
at most \(\kappa\).
Fix the latter inputs and all \(R\)-bits outside the current rows.
The same bounds hold for every assignment of these inputs.

Under the first substitution, each row bit becomes a constant, a
variable, or its negation. This preserves the depth and size bounds.
Fixing any subset of the raw row bits, including all but one row's
three core tables, cannot increase either bound.
\end{proof}

\subsection{Row switches and random restrictions}
\label{sec:base-locality}

Fix the bits of \(R\) outside the rows at \(\kappa\). For each key
\(k\), fix two candidate rows \(S_k^{(0)},S_k^{(1)}\). A Boolean
vector \(\mathbf s\in\{0,1\}^K\) selects row \(S_k^{(s_k)}\)
at each key; write \(R_{\mathbf s}\) for the resulting oracle.
We call the coordinates of \(\mathbf s\) \emph{row switches}.
Write \(O_{\mathbf s}:=O_{R_{\mathbf s}}\), and define
\[
 \mathsf{ans}_j(\mathbf s)=(O_{\mathbf s})_j,\qquad
 f_z(\mathbf s)=g_z(O_{\mathbf s}).
\]
Flipping switch \(k\) replaces exactly row \(k\) of \(R_{\mathbf s}\).
Only the switches will be restricted; each oracle answer remains a
function of the free switches.

Choose common circuit bounds \(d_{\rm circ}\) and \(s_{\rm circ}\ge2\)
for \(\mathsf{ans}_j\), \(j\le M_{\rm qry}\), with negations at the
inputs and alternating AND/OR layers.
Lemma~\ref{lem:base-completion-locality} gives constant
\(d_{\rm circ}\) and polynomially bounded \(\log_2s_{\rm circ}\).
Choose a constant \(c_{\rm res}\), depending only on \(d_{\rm circ}\),
large enough that
\[
 \Gamma_{\rm res}
 =c_{\rm res}(1+\log_2s_{\rm circ})^{d_{\rm circ}-1}\ge1,
 \qquad p_{\rm free}=\frac1{2\Gamma_{\rm res}}
\]
satisfy the restriction bound below. These choices are uniform in the
candidate rows and outside bits.

A restriction \(\varrho\in\{0,1,*\}^K\) leaves each starred
coordinate free and fixes the others to the specified bits.
Sample its coordinates independently, with probabilities
\(p_{\rm free}\) for \(*\) and \((1-p_{\rm free})/2\) for each bit.
Write \(f|_\varrho\) for the resulting function on the free coordinates.
For any integer \(d_{\rm tree}\ge1\) and \(j\le M_{\rm qry}\),
\cite[Theorem~42]{AIK} gives
\[
 \Pr_\varrho[\mathsf{ans}_j|_\varrho
   \text{ has no decision tree of depth at most }d_{\rm tree}]
 \le(p_{\rm free}\Gamma_{\rm res})^{d_{\rm tree}}
 =2^{-d_{\rm tree}}.
\]
Call a restriction \emph{successful} if all \(M_{\rm qry}\)
restricted answer functions have such trees. A union bound gives
\begin{equation}
 \Pr_\varrho[\varrho\text{ is not successful}]
 \le M_{\rm qry}2^{-d_{\rm tree}}.
 \label{eq:restriction-failure}
\end{equation}

On a successful restriction, each answer is a polynomial of degree
at most \(d_{\rm tree}\) in the free switches. Indeed, a decision
tree computes the sum, over its leaves labelled one, of the products
that specify their paths. Each product has at most \(d_{\rm tree}\)
factors. Substitute these answer polynomials into
\(p_{z,\omega}\) from \eqref{eq:verifier-polynomial-maximum}.
The resulting polynomial \(P^\varrho_{z,\omega}\) satisfies, for every
assignment \(\mathbf s_{\rm free}\) to the free switches,
\begin{equation}
 \begin{aligned}
 \deg P^\varrho_{z,\omega}\le
 D:=2\ell_{\rm in}(\kappa)d_{\rm tree},&\qquad
 0\le P^\varrho_{z,\omega}(\mathbf s_{\rm free})\le1,\\
 (f_z|_\varrho)(\mathbf s_{\rm free})
 &=\max_{\omega\in\mathcal W_z}
 P^\varrho_{z,\omega}(\mathbf s_{\rm free}).
 \end{aligned}
 \label{eq:restricted-polynomial-maximum}
\end{equation}
These properties hold on the entire cube of free switches: each
assignment gives Boolean oracle answers, and the composed polynomial
is their acceptance probability.

The next lemma applies the sensitivity bound on the free switches,
then returns the estimate to all \(K\) switches.

\begin{lemma}[Sensitivity of optimal acceptance]\label{lem:sensitivity}
Fix \(z\in\mathcal Z_\kappa\), the candidate rows, and the outside
bits. For \(0<\delta\le1\) and an integer \(d_{\rm tree}\ge3\), set
\begin{equation}
 D=2\ell_{\rm in}(\kappa)d_{\rm tree},\qquad
 u_{\rm free}=
 \left\lceil\frac{4\pi^2D^2d_{\rm tree}}{\delta}\right\rceil,
 \qquad
 s_*=\left\lceil\frac{2u_{\rm free}}{p_{\rm free}}\right\rceil.
 \label{eq:sensitivity-threshold}
\end{equation}
For a uniform switch vector \(\mathbf s\),
\begin{equation}
 \Pr_{\mathbf s}[S_\delta(f_z,\mathbf s)\ge s_*]
 \le2\left(M_{\rm qry}+\frac2\delta\right)2^{-d_{\rm tree}}.
 \label{eq:sensitivity-tail}
\end{equation}
\end{lemma}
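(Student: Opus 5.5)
The plan is to couple the uniform switch vector \(\mathbf s\) with a random restriction, so that conditionally on a successful restriction, Lemma~\ref{lem:polynomial-max-sensitivity} applies on the free coordinates. Concretely, I would sample \(\varrho\) as in Section~\ref{sec:base-locality} and then sample the starred coordinates uniformly. The resulting full assignment \(\mathbf s\) is exactly uniform on \(\{0,1\}^K\): each coordinate is \(*\) with probability \(p_{\rm free}\) and then uniform, and is otherwise a uniform fixed bit. This lets me write \(\mathbf s=(\varrho,\mathbf s_{\rm free})\) and bound \(\Pr_{\mathbf s}[S_\delta(f_z,\mathbf s)\ge s_*]\) by an average over \(\varrho\).

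Next I relate the full sensitivity to the sensitivity on the free coordinates. Let \(\mathcal S\) be the set of \(\delta\)-sensitive coordinates of \(f_z\) at \(\mathbf s\). A coordinate \(i\in\mathcal S\) that happens to be starred is \(\delta\)-sensitive for \(f_z|_\varrho\) at \(\mathbf s_{\rm free}\), since flipping a free bit is the same as flipping bit \(i\) of \(\mathbf s\). The key observation is that, conditioned on the full assignment \(\mathbf s\), the set of starred positions is an independent \(p_{\rm free}\)-biased subset of \([K]\): the star/fixed decision is independent of the bit value. Hence if \(|\mathcal S|\ge s_*\ge 2u_{\rm free}/p_{\rm free}\), the number of starred sensitive coordinates is Binomial\((|\mathcal S|,p_{\rm free})\) with mean at least \(2u_{\rm free}\). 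A Chernoff bound then gives fewer than \(u_{\rm free}\) starred ones with probability at most \(e^{-u_{\rm free}/4}\), and this is at most \(2^{-d_{\rm tree}}\) because \(u_{\rm free}\ge 4\pi^2 d_{\rm tree}\), using \(D\ge1\) and \(\delta\le1\). This yields
\[
 \Pr_{\mathbf s}[S_\delta(f_z,\mathbf s)\ge s_*]\,\tfrac12\le
 \Pr_{\varrho,\mathbf s_{\rm free}}[S_\delta(f_z|_\varrho,\mathbf s_{\rm free})\ge u_{\rm free}]
\]
(after absorbing the Chernoff failure). Rearranged, this says the target probability is at most twice the restricted event probability plus the Chernoff term. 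The factor \(2\) in \eqref{eq:sensitivity-tail} is meant to come from this step; if the Chernoff term is not \(\le\frac12\), I simply add it separately.

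Finally I bound the restricted probability. I split on whether \(\varrho\) is successful. Failure has probability at most \(M_{\rm qry}2^{-d_{\rm tree}}\) by \eqref{eq:restriction-failure}. On success, \eqref{eq:restricted-polynomial-maximum} shows \(f_z|_\varrho\) is a pointwise maximum of \([0,1]\)-valued polynomials of degree at most \(D\). Lemma~\ref{lem:polynomial-max-sensitivity} with threshold \(u_{\rm free}\) then gives probability at most \(\frac2\delta 2^{-u_{\rm free}\delta/(4\pi^2D^2)}\le\frac2\delta2^{-d_{\rm tree}}\) by the choice of \(u_{\rm free}\). Summing gives \((M_{\rm qry}+2/\delta)2^{-d_{\rm tree}}\). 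Doubling, and adding the Chernoff term (which is absorbed since \(M_{\rm qry}\ge1\)), yields \eqref{eq:sensitivity-tail}.

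The main obstacle is the transfer step between full and restricted sensitivity. It needs the conditional independence of the star pattern given \(\mathbf s\) and careful constant bookkeeping so that the Chernoff loss fits inside the stated factor. The condition \(d_{\rm tree}\ge3\) should be where the slack needed to make \(e^{-u_{\rm free}/4}\le 2^{-d_{\rm tree}}\) comfortably comes from.
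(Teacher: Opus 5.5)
Your proposal matches the paper's proof: the same coupling of a uniform $\mathbf s$ with an independent $p_{\rm free}$-biased star pattern, the same binomial/Chernoff transfer from full sensitivity at least $s_*$ to restricted sensitivity at least $u_{\rm free}$, and the same split of the restricted event into restriction failure plus Lemma~\ref{lem:polynomial-max-sensitivity} on successful restrictions. Tidy the final constants in one of two ways. Either use $1-e^{-u_{\rm free}/4}\ge\tfrac12$ to get $\tfrac12\Pr_{\mathbf s}[S_\delta(f_z,\mathbf s)\ge s_*]\le(M_{\rm qry}+2/\delta)2^{-d_{\rm tree}}$ directly, as the paper does, or add the Chernoff term $2^{-d_{\rm tree}}$ once without doubling; doubling and then also adding $2^{-d_{\rm tree}}$ would overshoot the stated bound.
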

\begin{proof}
We prove two claims. First, after restriction the sensitivity reaches
\(u_{\rm free}\) with probability at most
\((M_{\rm qry}+2/\delta)2^{-d_{\rm tree}}\).
Second, at any input with full sensitivity at least \(s_*\), restricted
sensitivity reaches \(u_{\rm free}\) with probability at least \(1/2\).
Together these claims give \eqref{eq:sensitivity-tail}.

\emph{Claim 1: restricted sensitivity.}
Sample \(\varrho\), then a uniform assignment
\(\mathbf s_{\rm free}\) to its free coordinates. On each successful
restriction, \eqref{eq:restricted-polynomial-maximum} permits
Lemma~\ref{lem:polynomial-max-sensitivity}. It gives
\[
 \Pr_{\mathbf s_{\rm free}}[
   S_\delta(f_z|_\varrho,\mathbf s_{\rm free})\ge u_{\rm free}]
 \le\frac2\delta\,2^{-u_{\rm free}\delta/(4\pi^2D^2)}
 \le\frac2\delta\,2^{-d_{\rm tree}}.
\]
Adding the failure probability in \eqref{eq:restriction-failure}
proves Claim~1.

\emph{Claim 2: retaining sensitive coordinates.}
The same joint distribution is obtained by first sampling uniform
\(\mathbf s\), then retaining each coordinate independently with
probability \(p_{\rm free}\) and fixing the others to their values
in \(\mathbf s\). For a specified free set \(J\) and full vector
\(\mathbf s\), both procedures assign probability
\(p_{\rm free}^{|J|}(1-p_{\rm free})^{K-|J|}2^{-K}\).
We use this description without conditioning on restriction success.

For each \(k\in J\), both \(\mathbf s\) and
\(\mathbf s^{\oplus k}\) agree with the fixed coordinates.
Their function values do not change under restriction. Thus
\[
 S_\delta(f_z|_\varrho,\mathbf s_{\rm free})
 =\#\{k\in J:|f_z(\mathbf s)-f_z(\mathbf s^{\oplus k})|
                   \ge\delta\}.
\]
Conditional on \(\mathbf s\), this count is binomial with mean
\(p_{\rm free}S_\delta(f_z,\mathbf s)\).
If \(S_\delta(f_z,\mathbf s)\ge s_*\), the mean is at least
\(2u_{\rm free}\). A lower-tail Chernoff bound gives probability
at most \(e^{-u_{\rm free}/4}<1/2\) that the count is below
\(u_{\rm free}\). This proves Claim~2.

Claim~2 bounds the probability in Claim~1 from below by
\(\frac12\Pr_{\mathbf s}[S_\delta(f_z,\mathbf s)\ge s_*]\).
Claim~1 therefore proves the lemma.
\end{proof}

\subsection{Independent row resampling}\label{sec:scalar-bound}

The sensitivity bound holds for every fixed set of candidate rows.
If both candidates are independent uniform rows, the selected and
unselected rows are also independent and uniform. This gives
independent row replacement. The next lemma makes this step precise.

\begin{lemma}[From row switches to independent replacements]
\label{lem:switch-to-resampling}
Fix the outside bits and a finite input set \(\mathcal Z_\kappa\).
Suppose that for every choice of two candidate rows per key and every
\(z\in\mathcal Z_\kappa\),
\[
 \Pr_{\mathbf s}\!\left[
   \frac1K\sum_k|r_z(O_{\mathbf s})
                -r_z(O_{\mathbf s^{\oplus k}})|>a\right]
 \le q_{\rm bad},
\]
where \(\mathbf s\) is uniform, \(a\ge0\), and \(q_{\rm bad}\ge0\).
If the rows of \(R\) are independent and uniform, then for every
\(b>0\),
\begin{equation}
 \Pr_R\!\left[\exists z\in\mathcal Z_\kappa:
    \mathbb E_{k,S}|r_z(O_R)-r_z(O_{R^{k\leftarrow S}})|>a+b\right]
 \le\frac{|\mathcal Z_\kappa|q_{\rm bad}}b.
 \label{eq:resampling-conversion}
\end{equation}
\end{lemma}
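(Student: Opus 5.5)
The plan is a coupling followed by Markov's inequality and a union bound over \(z\). Sample \(R\) (rows independent uniform) together with the fresh row \(S\) and the key \(k\) by the following switch procedure. Draw two independent uniform candidate rows \(S_{k'}^{(0)},S_{k'}^{(1)}\) for every key \(k'\), and an independent uniform switch vector \(\mathbf s\in\{0,1\}^K\). Set \(R=R_{\mathbf s}\). Then the selected rows \(S_{k'}^{(s_{k'})}\) are independent and uniform, so \(R\) has the correct law. Moreover, conditional on \(R\), the unselected rows \(S_{k'}^{(1-s_{k'})}\) are again independent uniform rows, independent of \(R\). Hence, for a uniform key \(k\) and \(S:=S_k^{(1-s_k)}\), the pair \((R,R^{k\leftarrow S})\) is distributed as \((R_{\mathbf s},R_{\mathbf s^{\oplus k}})\), with \(S\) a fresh uniform row independent of \(R\) and \(k\). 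Consequently, for each fixed \(z\),
\[
 \mathbb E_{k,S}\bigl|r_z(O_R)-r_z(O_{R^{k\leftarrow S}})\bigr|
 =\mathbb E\Bigl[\tfrac1K\textstyle\sum_{k}|r_z(O_{\mathbf s})-r_z(O_{\mathbf s^{\oplus k}})|\;\Big|\;R\Bigr],
\]
where the conditional expectation is over the unselected candidates. Conditional on \(R\), \(\mathbf s\) and the selected rows are determined up to the labelling of \(0/1\), which does not matter since both candidates are exchangeable.

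Write \(Y_z\) for the inner average \(\frac1K\sum_k|r_z(O_{\mathbf s})-r_z(O_{\mathbf s^{\oplus k}})|\in[0,1]\). By hypothesis, for every fixed choice of candidate rows, \(\Pr_{\mathbf s}[Y_z>a]\le q_{\rm bad}\). Averaging over the candidates gives the same bound for the joint sample. Then \(Y_z\le a+\mathbf 1[Y_z>a]\), so
\[
 \mathbb E[Y_z\mid R]\le a+\Pr[Y_z>a\mid R].
\]
Define \(X_z:=\Pr[Y_z>a\mid R]\). Then \(\mathbb E_R X_z\le q_{\rm bad}\), and Markov's inequality gives \(\Pr_R[X_z>b]\le q_{\rm bad}/b\). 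On the complement of this event, \(\mathbb E_{k,S}|r_z(O_R)-r_z(O_{R^{k\leftarrow S}})|\le a+b\). A union bound over the finite set \(\mathcal Z_\kappa\) yields \eqref{eq:resampling-conversion}.

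The main point requiring care is the coupling identity: the conditional law of the unselected rows given \(R\) must be the product of uniform rows, independent of the key. I would verify this by noting that the map \((\mathbf s,S^{(0)},S^{(1)})\mapsto(R_{\mathbf s},(S_{k'}^{(1-s_{k'})})_{k'},\mathbf s)\) is a bijection. It pushes the uniform measure to the uniform measure on triples, which makes all three components mutually independent. The remaining steps are routine. The fixed outside bits play no role beyond being held constant throughout.
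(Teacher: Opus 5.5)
Your proposal is correct and follows the paper's proof essentially step for step: the same switch coupling in which the unused rows \(T_k=S_k^{(1-s_k)}\) are independent of \(R\), the pointwise bound \(\mathbb E_T Z_z\le a+\Pr_T[Z_z>a]\), Markov's inequality on the conditional failure probability, and a union bound over \(\mathcal Z_\kappa\). One aside is misstated: conditional on \(R\), the switch vector \(\mathbf s\) is independent of \(R\), not determined by it. This is harmless, because the inner average depends on the sample only through \((R,T)\).
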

\begin{proof}
Sample all candidate rows \(S_k^{(0)},S_k^{(1)}\) independently and
uniformly, and sample an independent uniform switch vector \(\mathbf s\).
Form \(R=R_{\mathbf s}\), and let \(T_k=S_k^{(1-s_k)}\) be the
unused row at key \(k\). For each fixed \(\mathbf s\), this only
permutes the two independent samples at each key. Thus the rows of
\(R\) and \(T=(T_k)_k\) are jointly independent and uniform.
Flipping switch \(k\) replaces row \(k\) of \(R\) by \(T_k\).
Averaging the hypothesis over the candidate rows gives
\[
 \Pr_{R,T}[Z_z(R,T)>a]\le q_{\rm bad},\qquad
 Z_z(R,T)=\frac1K\sum_k
       |r_z(O_R)-r_z(O_{R^{k\leftarrow T_k}})|.
\]

For fixed \(R\), define
\[
 \operatorname{stab}_z(R)
    =\mathbb E_{k,S}|r_z(O_R)-r_z(O_{R^{k\leftarrow S}})|,
 \qquad
 q_z(R)=\Pr_T[Z_z(R,T)>a].
\]
The replacement distribution gives
\(\operatorname{stab}_z(R)=\mathbb E_T Z_z(R,T)\).
Since \(0\le Z_z(R,T)\le1\), splitting this expectation according to
\(Z_z(R,T)\le a\) or \(Z_z(R,T)>a\) gives
\(\operatorname{stab}_z(R)\le a+q_z(R)\).
Also, the joint probability bound gives
\(\mathbb E_Rq_z(R)\le q_{\rm bad}\). Therefore Markov's inequality
implies
\[
 \Pr_R[\operatorname{stab}_z(R)>a+b]
 \le\Pr_R[q_z(R)>b]\le q_{\rm bad}/b.
\]
A union bound over the fixed input set proves
\eqref{eq:resampling-conversion}.
\end{proof}

\begin{proof}[Proof of Lemma~\ref{lem:scalar-row-stability}]
Fix \(\kappa\) and the outside bits. We first bound the average
switch change in \(g_z\), then in \(r_z\), and then apply
Lemma~\ref{lem:switch-to-resampling}.

Use \(\varepsilon\) for the bound on average acceptance change and
\(\delta\) for the size of a counted change. Set
\begin{equation}
 \begin{aligned}
 \varepsilon&=2^{-\lfloor\kappa/16\rfloor},
 \qquad \delta=\varepsilon/2,\\
 d_{\rm tree}&=\max\left\{3,
  \left\lceil\kappa^2+
    \log_2\!\frac{2|\mathcal Z_\kappa|(M_{\rm qry}+1)}
                     {\varepsilon^2}\right\rceil\right\}.
 \end{aligned}
 \label{eq:scalar-depth}
\end{equation}
The circuit and query bounds give
\(\Gamma_{\rm res},d_{\rm tree}=\poly(\kappa)\).
These choices are independent of the candidate rows and outside bits.
The cutoff in \eqref{eq:sensitivity-threshold} satisfies
\[
 \frac{s_*}{K}
 =O\!\left(\frac{\Gamma_{\rm res}\ell_{\rm in}(\kappa)^2
                         d_{\rm tree}^3}{K\delta}\right)
 \le\poly(\kappa)K^{-15/16}
 =o(\delta).
\]
Thus \(s_*/K\le\delta\) for all sufficiently large \(\kappa\).

Apply Lemma~\ref{lem:switch-variation} to \(f_z\).
Since \(\delta+s_*/K\le2\delta=\varepsilon\),
Lemma~\ref{lem:sensitivity} gives
\[
 \begin{aligned}
 \Pr_{\mathbf s}[A_{f_z}(\mathbf s)>\varepsilon]
 &\le2\left(M_{\rm qry}+\frac4\varepsilon\right)2^{-d_{\rm tree}}\\
 &\le q_{\rm bad}:=
 \frac{6(M_{\rm qry}+1)}{\varepsilon}2^{-d_{\rm tree}}.
 \end{aligned}
\]
The last inequality uses \(M_{\rm qry}\ge1\) and
\(\varepsilon\le1\).
The map \(u\mapsto\min\{1,\max\{0,3u-1\}\}\) is \(3\)-Lipschitz.
Hence clipping gives
\[
 \Pr_{\mathbf s}\!\left[
   \frac1K\sum_k|r_z(O_{\mathbf s})
                  -r_z(O_{\mathbf s^{\oplus k}})|
        >3\varepsilon\right]\le q_{\rm bad}.
\]
This bound holds for every input and every choice of candidate rows.

Apply Lemma~\ref{lem:switch-to-resampling} with
\(a=b=3\varepsilon\). The probability that any input
has expected resampling change greater than
\(6\varepsilon\) is at most
\begin{equation}
 \varepsilon_{\rm stab}
 :=\frac{2|\mathcal Z_\kappa|(M_{\rm qry}+1)}
          {\varepsilon^2}2^{-d_{\rm tree}}
 \le2^{-\kappa^2}.
 \label{eq:scalar-exception}
\end{equation}
The last inequality follows from \eqref{eq:scalar-depth}.
Outside this event, all inputs satisfy \eqref{eq:scalar-stability}
with \(\alpha=6\varepsilon\).
Every estimate is uniform in the outside bits, so the required lower
bound on \(\kappa\) depends only on \(\ell_{\rm in}\).
\end{proof}

\subsection{An answer-bit certificate for weighted stability}

Section~\ref{sec:weights} also needs a direct bound on score decreases
in terms of changed oracle answers. We record that consequence of
\eqref{eq:verifier-polynomial-maximum} here. A \emph{one-sided
fractional certificate} for \(r_z\) at \(O\) is a nonnegative weight
function \(w_O\) that satisfies the first inequality below. Its
\emph{mass} is the sum of its weights.

\begin{corollary}[Certificates for clipped scores]
\label{cor:clipped-certificate}
For every input \(z\in\mathcal Z_\kappa\) and every Boolean
oracle \(O\), the function \(r_z\) has a one-sided fractional
certificate \(w_O\), zero for \(j>M_{\rm qry}\), such that
\begin{equation}
 \begin{aligned}
 r_z(O)-r_z(O')&\le\sum_{j:O_j\ne O'_j}w_O(j)
                  \quad\text{for every Boolean oracle }O',\\
 \sum_{j\ge1}w_O(j)&\le3\pi^2\ell_{\rm in}(\kappa)^2.
 \end{aligned}
 \label{eq:clipped-certificate}
\end{equation}
\end{corollary}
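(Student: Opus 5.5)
The plan is to reduce the clipped score to a single fixed-witness polynomial, and then obtain the certificate for that polynomial by LP duality.

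\emph{Reduction to one polynomial.} Fix \(z\) and \(O\), and let \(\omega^\ast\in\mathcal W_z\) attain the maximum in \eqref{eq:verifier-polynomial-maximum}; it exists because \(\mathcal W_z\) is compact. Write \(p=p_{z,\omega^\ast}\), a polynomial of degree \(D\le2\ell_{\rm in}(\kappa)\) with values in \([0,1]\) on the whole cube \(\{0,1\}^{M_{\rm qry}}\). For every \(O'\) we have \(g_z(O)-g_z(O')\le p(O)-p(O')\), as in the overview. The clipping map is monotone and \(3\)-Lipschitz, so
\[
 r_z(O)-r_z(O')\le3\max\{0,\,p(O)-p(O')\}.
\]
It therefore suffices to find a nonnegative weight function \(w\) on \([M_{\rm qry}]\) with mass at most \(\pi^2D^2/4\le\pi^2\ell_{\rm in}(\kappa)^2\) such that
\[
 p(O)-p(O')\le\sum_{j:O_j\ne O'_j}w(j)\quad\text{for every }O'.
\]
We then set \(w_O=3w\), and \(w_O=0\) beyond \(M_{\rm qry}\). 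Coordinates beyond \(M_{\rm qry}\) do not affect \(p\), so a changed set \(B\) may be intersected with \([M_{\rm qry}]\).

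\emph{LP duality.} Write \(O^B\) for \(O\) with the bits in \(B\) flipped. The minimal mass of a valid \(w\) is the value of the linear program
\[
 \min\sum_j w(j)\quad\text{subject to}\quad\sum_{j\in B}w(j)\ge p(O)-p(O^B)\ \text{for all }B,\qquad w\ge0.
\]
By LP duality this equals the fractional block sensitivity of \(p\) at \(O\):
\[
 \max\sum_B\lambda_B\bigl(p(O)-p(O^B)\bigr)\quad\text{subject to}\quad\lambda\ge0,\ \ \sum_{B\ni j}\lambda_B\le1\ \text{for all }j.
\]
We may restrict to blocks \(B\) with \(p(O)>p(O^B)\), since other blocks contribute nonpositively.

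\emph{Bounding the dual by a univariate polynomial.} Take an optimal \(\lambda\), and assume first that the blocks with \(\lambda_B>0\) are disjoint. For \(x\in[0,1]\), define \(q(x)=\mathbb E\,p(O^{\mathbf B_x})\), where each block \(B\) is independently flipped with probability \(\lambda_Bx\) and \(\mathbf B_x\) is the union of the flipped blocks. Multilinearity makes \(q\) a polynomial in \(x\) of degree at most \(D\) with values in \([0,1]\) on \([0,1]\). Its derivative satisfies
\[
 -q'(0)=\sum_B\lambda_B\bigl(p(O)-p(O^B)\bigr),
\]
and the Markov brothers' inequality at the endpoint gives \(|q'(0)|\le2D^2\). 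That already yields mass \(O(D^2)=O(\ell_{\rm in}^2)\). To reach the sharper constant \(\pi^2D^2/4\), I would instead parametrize the flip probabilities as \(\lambda_B\sin^2\theta\). The result is a bounded trigonometric polynomial in \(\theta\) of degree \(D\). Applying Bernstein's inequality to it, with the second-order expansion at \(\theta=0\) and the identity \(\sin^2\theta\approx\theta^2\), gives the \(\pi^2/4\) factor. This is the same mechanism that produces the \(4\pi^2D^2\) constant in Lemma~\ref{lem:polynomial-max-sensitivity}.

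\emph{Main obstacle.} The hardest step is overlapping blocks: the dual constraint \(\sum_{B\ni j}\lambda_B\le1\) permits them, so independent block flips no longer have a single-variable description. I would first handle this by flipping each coordinate \(j\) independently with probability \(\tfrac{x}{2}\sum_{B\ni j}\lambda_B\). I would then compare that process, via the standard fractional-block-sensitivity argument, with one in which each selected block is flipped as a whole. If the constant becomes awkward, I would accept any \(O(\ell_{\rm in}^2)\) bound in place of \(3\pi^2\ell_{\rm in}(\kappa)^2\), because only polynomial mass is used later in Section~\ref{sec:weights}.
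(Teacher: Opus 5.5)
Your reduction is the same as the paper's: you pick a witness attaining $g_z(O)$, use $g_z(O)-g_z(O')\le p_{z,\omega}(O)-p_{z,\omega}(O')$, and pass through the monotone $3$-Lipschitz clipping. Everything then hinges on a certificate of mass $\pi^2D^2/4$, with $D\le2\ell_{\rm in}(\kappa)$, for a single bounded degree-$D$ polynomial. The paper obtains this in one line from the bounded-polynomial certificate theorem, Lemma~\ref{lem:oracle-polynomial-certificate} (\cite[Theorem~1.7]{AB}), which even bounds $|p(x)-p(x')|$. You instead try to reprove it, and that is where the gap lies.

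Your LP-duality step is correct: the minimal certificate mass equals the one-sided fractional block sensitivity of $p$ at $O$. Your univariate argument is also correct when the optimal blocks are disjoint. In general, however, the optimal dual $\lambda$ uses overlapping blocks (fractional block sensitivity can exceed block sensitivity), and for these you have no argument. Your suggested fix does not address them. If each coordinate $j$ is flipped independently with probability $\tfrac{x}{2}\sum_{B\ni j}\lambda_B$, then $q'(0)=\tfrac12\sum_j\bigl(\sum_{B\ni j}\lambda_B\bigr)\bigl(p(O^{\{j\}})-p(O)\bigr)$ sees only single-bit sensitivities. A decreasing block may contain no sensitive coordinate at all: for $p=1-x_1x_2$ at $O=(0,0)$, flipping both bits lowers $p$, but flipping either one alone does not. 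If instead whole blocks are flipped independently, by union or by parity, each coordinate's flip indicator depends on every block containing it. The degree of $q$ in $x$ can then grow with the number of blocks meeting a monomial's support, so Markov's inequality no longer applies. The ``comparison'' you allude to is the entire content of the cited lemma. Without it, your proposal proves no $O(\ell_{\rm in}^2)$ bound at all, so falling back to a weaker constant does not close the gap. The fix is to apply Lemma~\ref{lem:oracle-polynomial-certificate} to $p_{z,\omega}$ on the first $M_{\rm qry}$ coordinates. This gives mass $\pi^2(2\ell_{\rm in})^2/4=\pi^2\ell_{\rm in}(\kappa)^2$, and multiplying by $3$ yields \eqref{eq:clipped-certificate}.

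Separately, your trigonometric detour rests on a miscomparison. Since $\pi^2/4>2$, your Markov bound $2D^2$ is already below $\pi^2D^2/4$. In fact, Markov's inequality on $[0,1]$ for a $[0,1]$-valued polynomial gives $|q'(0)|\le D^2$. So in the disjoint case the stated constant was never the obstacle.
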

\begin{proof}
Choose a witness \(\omega\) that attains \(g_z(O)\).
For every \(O'\), \eqref{eq:verifier-polynomial-maximum} gives
\[
 g_z(O)-g_z(O')
 \le p_{z,\omega}(O)-p_{z,\omega}(O').
\]
Lemma~\ref{lem:oracle-polynomial-certificate}, applied on the first
\(M_{\rm qry}\) coordinates with degree bound
\(2\ell_{\rm in}(\kappa)\), bounds the absolute polynomial
difference by nonnegative weights of mass at most
\(\pi^2\ell_{\rm in}(\kappa)^2\).
Extend these weights by zero to the other coordinates.
Clipping is nondecreasing and \(3\)-Lipschitz, so multiplying the
weights by three bounds every positive decrease in \(r_z\).
If \(r_z\) does not decrease, nonnegative weights suffice.
This proves \eqref{eq:clipped-certificate}.
\end{proof}

\section{\texorpdfstring{Weighted stability for \(\mathsf{QMA}\) oracle queries}{Weighted stability for QMA oracle queries}}\label{sec:weights}

We prove that the weights from earlier queries to the challenge unitary
cannot substantially increase the expected change in a clipped \(\mathsf{QMA}\)
score. For a computation with at most \(t\) challenge queries, the
weighted cost is at most the unweighted cost plus a small error
\(\lambda_t\). This is the estimate used in
Section~\ref{sec:weighted-query} to compare public oracle queries.

Section~\ref{sec:independent-phase} maps the path-recording database
to a purification of the five-table row \(S\). We later place this same
row at key \(k\) in the public database \(R^{k\leftarrow S}\).
We start from path-recording records because the analysis in that section uses
projection on good spaces to discard bad-record components of the database. This lets us bound the resulting weights on \(S\)
using the structure of the good records.

Fix the initial data and inter-query operations independently of
\(S\). Let \(\ket{v}\) be the path-recorded reference state
after at most \(t\) challenge queries:
\[
 \ket{v}=\sum_{L,R}\ket{v_{L,R}}_{\mathsf A}\otimes\ket{L,R}_{LR},
\]
where \(\mathsf A\) contains all query, work, and auxiliary registers.
The private databases \((L,R)\) have total size at most \(t\).
By Lemma~\ref{lem:reference-forest}, they are supported on signed
forests, so \(P_{\mathcal F}\ket{v}=\ket{v}\).
The state is independent of \(S\) and has norm at most one.

Recall that the map \(E_t\) from Proposition~\ref{prop:controlled-recorder-transfer}
converts the record purification in the \(\ket{L,R}\) basis to the
function-table purification in the \(\ket{S}\) basis, and acts as the
identity on \(\mathsf A\). For a row \(S=(f_1,\ldots,f_5)\), write
\begin{equation}
 E_t\ket{v}
 =\sum_S\sqrt{\Pr[S]}\,\ket{q_v(S)}_{\mathsf A}\otimes\ket{S}_{\mathsf F}.
 \label{eq:record-feature-evaluator}
\end{equation}
Here \(\Pr[S]\) is the uniform sampling probability of \(S\).
For \(\|\ket{v}\|\le1\), the \(S\)-block
\(\sqrt{\Pr[S]}\ket{q_v(S)}\otimes\ket{S}\) is a subnormalised
state. As in \eqref{eq:core-sample-coefficients}, we factor out
\(\sqrt{\Pr[S]}\) and call \(\ket{q_v(S)}\) the \emph{rescaled state}
at \(S\). Its norm can be greater than one.
Define its \emph{recording weight} by
\[
 W_v(S)=\|\ket{q_v(S)}\|^2.
\]
Thus \(\Pr[S]W_v(S)\) is the squared-norm mass of the \(S\) block.
The relative weight \(W_v(S)\) can exceed one, but we have 
\[
 \mathbb E_S[W_v(S)]=\|E_t\ket{v}\|^2\le\|\ket{v}\|^2.
\]
Neither the good-record projection in \(E_t\) nor any rescaled state
\(\ket{q_v(S)}\) is renormalized.

For the application, fix a baseline public oracle \(R\), a key \(k\),
and a verifier input \(z\), independently of \(S\), and put
\[
 r(S)=r_z(O_{R^{k\leftarrow S}}),\qquad c=r_z(O_R).
\]
Here \(r_z\) is the clipped optimal acceptance score from
\eqref{eq:boolean-completion-bias}. The same row \(S\) defines the
challenge \(U(S)\) and replaces row \(k\) in the score \(r(S)\).
The cost we need to bound is
\[
 \mathbb E_S\bigl[|r(S)-c|W_v(S)\bigr].
\]
Lemma~\ref{lem:scalar-row-stability} in
Section~\ref{sec:classical-stability} bounds the unweighted cost,
with the required average over \(k\), on its stated simultaneous
event over \(R\). The average bound on \(W_v\) alone does not control
the weighted cost: large weights could occur on rows with large score
changes. The next lemma rules out a substantial increase from this
correlation for the absolute score change \(|r(S)-c|\).

The scores specify the completion's answer probabilities. The lemma
bounds their weighted change; Section~\ref{sec:weighted-query} uses it
to compare coherent queries to the Boolean table sampled from these
probabilities.

\begin{lemma}[Weighted stability for absolute score changes]
\label{lem:weighted-stability}
Fix a key \(k\), all bits of \(R\) outside its row, and a classical
verifier input \(z\). Let \(S\) be a uniform row at key length
\(\kappa\), and put \(r(S)=r_z(O_{R^{k\leftarrow S}})\).
Fix a polynomial record bound \(t\) and a positive integer-valued
polynomial \(\ell_{\rm in}\) with \(|z|\le\ell_{\rm in}(\kappa)\).
Both the query count and query lengths are then at most
\(\ell_{\rm in}(\kappa)\), for every oracle and witness.
Let \(\ket{v}\) be a vector of record size at most \(t\), independent
of \(S\), with \(P_{\mathcal F}\ket{v}=\ket{v}\).
It may have arbitrary auxiliary registers.
For every \(c\in[0,1]\) fixed independently of \(S\), and all
sufficiently large \(\kappa\),
\begin{equation}
 \mathbb E_S[|r(S)-c|W_v(S)]
 \le\bigl(\mathbb E_S[|r(S)-c|]+\lambda_t\bigr)\|\ket{v}\|^2.
 \label{eq:weighted-stability}
\end{equation}
Here \(\lambda_t\le\poly(\kappa)N^{-1/2}\) depends on the fixed
bound on description length as well as \(t\). Its exact value is given
in \eqref{eq:weighted-leakage}. The lower bound on \(\kappa\) depends
only on these resource bounds. The estimate is uniform over the fixed
outside data, \(z\), and \(c\); the value of \(c\) may depend on the
fixed outside data.
\end{lemma}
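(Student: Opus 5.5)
We follow the strategy of the technical overview. First we condition on the boundary tables \(b=(h_1,h_5)\), leaving the core tables \(\vec f=(h_2,h_3,h_4)\) uniform. Writing \(\mathbb E_{\rm core}\) for this conditional average and \(\Delta(S)=|r(S)-c|\), we split
\[
 \mathbb E_{\rm core}[\Delta W_v]=\mathbb E_{\rm core}[\Delta]\,\mathbb E_{\rm core}[W_v]+\operatorname{Cov}_{\rm core}(\Delta,W_v).
\]
For the first term, \(\mathbb E_{\rm core}W_v\) is the squared norm of \(T_{\rm good}P_{\rm good}Q_b\ket{v}\) in the \(b\)-block. The map \(Q_b\) is unitary, \(P_{\rm good}\) is a projection, and \(T_{\rm good}\) is a contraction by \eqref{pubcore:contractive-encoding}. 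Hence \(\mathbb E_{\rm core}W_v\le\|\ket{v}\|^2\) for every \(b\), and averaging over \(b\) gives at most \(\mathbb E_S[\Delta]\|\ket{v}\|^2\). It therefore remains to show that \(\mathbb E_b\operatorname{Cov}_{\rm core}(\Delta,W_v)\le\lambda_t\|\ket{v}\|^2\).

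To bound the covariance we use an Efron--Stein/flip-type inequality on the hypercube of core-table bits (\(3nN\) bits). For functions \(\Delta,W\) on \(\{0,1\}^m\) under the uniform measure, the covariance satisfies a bound of the form
\[
 |\operatorname{Cov}(\Delta,W)|\le \mathbb E_{\mathbf f}\Bigl[\sum_i \Delta^{(i)}(\mathbf f)\Bigr]^{1/2}\cdot(\cdots),
\]
or, more usefully, \(|\operatorname{Cov}|\le \mathbb E\bigl[\max_i|W(\mathbf f)-W(\mathbf f^{\oplus i})|\cdot\sum_i|\Delta(\mathbf f)-\Delta(\mathbf f^{\oplus i})|\bigr]\) up to constants. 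This form follows from a path/telescoping argument along a uniformly random coordinate ordering, coupling two independent core samples through a chain of single-bit flips.

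We then control the two factors separately. For the score factor, Lemma~\ref{lem:base-completion-locality} makes every oracle answer an \(\mathsf{AC}^0\) function of the free core bits of row \(k\). Corollary~\ref{cor:clipped-certificate} gives, at each point, a one-sided certificate of mass \(3\pi^2\ell_{\rm in}^2\) over oracle answers. Composing this with the answers' dependence on core bits, via the random-restriction/sensitivity machinery of Section~\ref{sec:classical-stability} (Lemmas~\ref{lem:polynomial-max-sensitivity} and~\ref{lem:sensitivity}), the total score decrease over all single core-bit flips is \(\poly(\kappa)\) except on an exceptional set of probability \(2^{-\poly(\kappa)}\). On the exceptional set we use the crude bound \(\Delta\le1\) together with a separate bound \(\mathbb E[W_v^2]\le\poly(t)\|\ket{v}\|^4\) and Cauchy--Schwarz. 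One-sidedness suffices: pair each flip so that it is read from the higher endpoint, as in the overview.

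The main obstacle is the weight factor. We must show \(\mathbb E_S\max_i\|\,\ket{q_v(S)}-\ket{q_v(S^{\oplus i})}\|^2\le\poly(t,n)\|\ket{v}\|^2/N\). The \(\sqrt{\cdot}\) step then gives a largest weight change of \(\poly\cdot\|\ket{v}\|^2/\sqrt N\), which multiplied by the \(\poly(\kappa)\) total score variation yields \(\lambda_t\).

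For an \(h_3\)-bit at location \(z_0\), we write \(\mu^{\rm cf}_{L,R}\) and group the terms by the unique pair assigned to \(z_0\). Each group has coefficient \(H_{x\oplus y,z_0}\) of size \(N^{-1/2}\), times the collision-free encoding of the remaining record avoiding \(z_0\). Because removing a pair from a good record leaves a good record that, together with the pair, determines the original, this gives an almost-orthogonal decomposition. By \eqref{pubcore:contractive-encoding}, its norm is \(\le t\cdot\poly\cdot\|\ket{v}\|^2/N\) after maximizing over \(z_0\) via a union-free argument: the max over \(z_0\) is bounded by summing the Fourier tails, using \(\sum_{z_0}|H|^2=1\).

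For \(h_2\) (and symmetrically \(h_4\)), the endpoint phase changes by \(O(1)\), so the small factor must come instead from the scrambler. We expand the input labels \(x\) of the good record through \(Q_b\), exposing \(H_{x,p}\xi_1(p)^{\pm1}\) with magnitude \(N^{-1/2}\). We then repeat the grouping argument using the forest support and Lemma~\ref{lem:pub-boundary-good-mass} to control the interference between the projected terms. We expect this endpoint case, keeping the \(P_{\rm good}\) projection compatible with the label expansion, to be the hardest part. The exact \(\lambda_t\) then collects \(\poly(t,n,\ell_{\rm in})N^{-1/2}\) plus the exceptional-set term, and all the bounds are uniform in \(z\), \(c\), and the outside bits.
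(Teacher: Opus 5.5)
Your architecture matches the paper's: a contractive conditional mean, a flip-derivative bound on the core covariance against the score's downward variation, a certificate for the score, and a core-bit weight estimate followed by the square-root step. However, the step you call the main obstacle is not carried out, and the device you propose for it fails.

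\textbf{The maximum over core bits.} What is needed is \(\mathbb E_S\max_u\|\ket{q_v^{\Delta u}(S)}\|^2\), with the maximum over all \(3nN\) core bits \emph{inside} the expectation. For one fixed middle bit \(u=(z_0,r)\), your grouping does give roughly \(s\|\ket{v}\|^2/N\). But bounding the maximum by the sum over \(z_0\), via \(\sum_{z_0}|H_{d,z_0}|^2=1\), adds \(N\) such terms and returns \(O(ns)\|\ket{v}\|^2\). The whole factor \(1/N\) is lost, so \(\lambda_t\) becomes \(\poly(\kappa)\) rather than \(\poly(\kappa)N^{-1/2}\).

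The missing idea is concentration at logarithmic cost. The paper writes the difference as \(\sum_e\alpha_{u,e}\mathbb E_{z_0}[\ket{F_e}]\). It then replaces the two endpoint signs of the selected edge by fresh Rademacher signs (Lemma~\ref{lem:selected-sign}, \(L^2\) cost \((\mathrm e s)^2\)). Next it applies the moment bound of Lemma~\ref{lem:hilbert-chaos-moment} with \(p\approx2\ln(nN)\), so the maximum over \(nN\) bits costs only \(\mathrm e(p-1)^2\). The remaining maximum over \(z_0\) of the residual averages is handled by the sparse-mask contraction, Corollary~\ref{cor:whole-coordinate-mean}. Without the randomization, the edge terms are orthogonal only after averaging the endpoint phases; for a fixed table they can align for a few \(u\).

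\textbf{The balanced endpoint case.} For the endpoint tables there is a case you do not address. The unrotated source domain \(p\) can carry edges of both signs with net occupation \(C_p=0\); forests allow this, e.g.\ a path \(q\)--\(p\)--\(q'\). Then the factor \(\xi_1(p)^{\sigma}\) cancels and there is nothing to randomize. The paper extracts the opposite-sign partner and uses \((B_{1,\sigma})_{x,p}(B_{1,-\sigma})_{x',p}=N^{-1}(-1)^{(x\oplus x')\cdot p}\). Forest support and Lemma~\ref{lem:pub-boundary-good-mass} do not help here, and the paper's endpoint estimate uses neither.

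\textbf{The covariance inequality.} As you state it, with \(\max_i|W(\mathbf f)-W(\mathbf f^{\oplus i})|\) and \(\sum_i|\Delta(\mathbf f)-\Delta(\mathbf f^{\oplus i})|\) multiplied at the same point, it is false. Writing \(|\mathbf f|\) for Hamming weight, take \(\Delta=\mathbf 1[|\mathbf f|\ge m/2]\) and \(W=1+\mathbf 1[|\mathbf f|>m/2+\sqrt m]\). The right side vanishes identically, while the covariance is a positive constant. The telescoping argument actually evaluates the weight derivative at a resampled point. Lemma~\ref{lem:cube-covariance} keeps this as \(T_\theta^{\neg u}W^{\Delta u}\). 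It orients each flip by \(r\) itself (not \(|r-c|\)), using that \(|T_\theta^{\neg u}W^{\Delta u}|\) is invariant under flipping \(u\). It then bounds all \(u\) at once by \(T_\theta\max_u|W^{\Delta u}|\) and uses \(\mathbb E\,T_\theta=\mathbb E\).

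\textbf{The score variation.} Lemmas~\ref{lem:polynomial-max-sensitivity} and~\ref{lem:sensitivity} control \(\delta\)-sensitivity in the row-switch cube. They cannot bound a total variation over \(3nN\) core bits without a factor of order \(m\). The paper instead bounds \(\operatorname{Down}_{\rm core}(r)\le\sum_jw_O(j)s_j\). It controls each answer's core-bit sensitivity \(s_j\) by the \(\mathsf{AC}^0\) tail bound \cite[Lemma~43]{AIK}, with a union bound over the \(M_{\rm qry}\) answers.

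\textbf{The exceptional set.} \(\mathbb E[W_v^2]\) need not be \(\poly(t)\|\ket{v}\|^4\), but any \(2^{\poly(\kappa)}\) bound suffices once the cutoff \(u_0\) absorbs it, e.g.\ the pointwise \(W_v\le B_t\|\ket{v}\|^2\). On that set you must also bound the variation by \(m\), not just \(\Delta\) by \(1\).
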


We prove the weighted estimate in four steps.
Section~\ref{sec:conditional-mean} conditions on the boundary tables;
contractivity bounds the conditional mean with no error.
Section~\ref{sec:core-covariance} bounds the remaining covariance
by the largest core-bit change of the weight and the score's downward
variation. Appendices~\ref{app:core-tools}--\ref{app:derivatives}
prove the weight estimate. For the endpoint tables, marked-image
Hadamard transforms and whole-coordinate averages compose with residual encoding to give
contractions, and sparse masks control the maximum over domain labels.
Section~\ref{sec:variation} bounds the score
variation, and Section~\ref{sec:covariance} combines the estimates.

\subsection{Decomposition and the conditional mean}
\label{sec:conditional-mean}\label{sec:estimates}

Write the row as \(S=(b,\vec f)\), where
\[
 b=(f_1,f_5),\qquad \vec f=(f_2,f_3,f_4),\qquad m_{\rm core}=3nN.
\]
We condition only on the two boundary tables. The three core tables
remain independent and uniform. These five tables all belong to the
fresh row; none is part of the already-fixed oracle outside row \(k\).
Identify \(\vec f\) with its \(m_{\rm core}\) Boolean coordinates:
\begin{equation}
 h_j(x)=\bigl((\vec f)_{j,x,1},\ldots,(\vec f)_{j,x,n}\bigr),\qquad
 \xi_j(x)=e^{2\pi i h_j(x)/N},\qquad j\in\{2,3,4\}.
 \label{eq:core-bit-coordinates}
\end{equation}
Here \(h_j(x)\) is the binary word representing \(f_j(x)\), with the
construction's integer interpretation, and
\((\vec f)_{j,x,i}=R(j,k,x,i)\) for a row at key \(k\).

Retain the assumptions of Lemma~\ref{lem:weighted-stability}, and take
\(\kappa\) large enough that \(3(2t)^{16}\le N\). The threshold depends only on \(t\) and \(\ell_{\rm in}\),
uniformly in the fixed outside data. The weight estimates below hold
for every row-independent cutoff vector; forest support is used in
Section~\ref{sec:weighted-query} when applying the comparison from
Section~\ref{sec:independent-phase}.

Put
\[
 M_v=\|\ket{v}\|^2,\qquad
 \mu_v(b)=\mathbb E_{\vec f}[W_v(b,\vec f)],\qquad
 a_c(b)=\mathbb E_{\vec f}[|r(b,\vec f)-c|].
\]
For fixed \(b\), the coefficient vector in
\eqref{eq:record-feature-evaluator} is the sampled coefficient of
\(T_{\mathrm{good}}P_{\mathrm{good}}Q_b\ket{v}\) in the three
core-table registers. The square-root sampling factors therefore give
\begin{equation}
 \mu_v(b)=\|T_{\mathrm{good}}P_{\mathrm{good}}Q_b\ket{v}\|^2
 \le\|P_{\mathrm{good}}Q_b\ket{v}\|^2\le M_v.
 \label{eq:core-conditional-contraction}
\end{equation}
Here \eqref{pubcore:contractive-encoding} bounds \(T_{\mathrm{good}}\),
\(P_{\mathrm{good}}\) is an orthogonal projection, and \(Q_b\) is
unitary. All three maps act as the identity on the auxiliary registers.
There is no renormalization and no error term in this bound.

Write the conditional covariance as
\[
 \mathcal C_v(b)
 =\mathbb E_{\vec f}[|r(b,\vec f)-c|W_v(b,\vec f)]-a_c(b)\mu_v(b).
\]
Then
\begin{equation}
 \mathbb E_{b,\vec f}[|r(b,\vec f)-c|W_v(b,\vec f)]
 =\mathbb E_b[a_c(b)\mu_v(b)]+\mathbb E_b[\mathcal C_v(b)].
 \label{eq:weighted-decomposition}
\end{equation}
Since \(a_c(b)\ge0\), \eqref{eq:core-conditional-contraction} implies
\begin{equation}
 \mathbb E_b[a_c(b)\mu_v(b)]
 \le M_v\mathbb E_{b,\vec f}[|r(b,\vec f)-c|].
 \label{eq:conditional-mean-bound}
\end{equation}
It remains to bound the covariance over the three core tables.

\subsection{Bounding the core-table covariance}
\label{sec:core-covariance}

We bound the covariance using the weight's largest single-bit change
and the score's total decrease under bit flips. For a core-bit
coordinate \(u=(j,x,i)\), let \(\vec f^{\oplus u}\) denote
\(\vec f\) with bit \(u\) flipped. For a scalar- or vector-valued
function \(h\), define its half-difference function by
\[
 h^{\Delta u}(b,\vec f)
 :=\frac{h(b,\vec f)-h(b,\vec f^{\oplus u})}{2}.
\]
Thus \(\vec f^{\oplus u}\) is the modified input, whereas
\(h^{\Delta u}(b,\vec f)\) is the change in the function value, divided
by two. For a real number \(a\), write \(a_+=\max\{a,0\}\).
The score's \emph{downward variation} is the function
\begin{equation}
 \begin{aligned}
 \operatorname{Down}_{\rm core}(r;b,\vec f)
 &:=\sum_u\bigl(r(b,\vec f)-r(b,\vec f^{\oplus u})\bigr)_+\\
 &=2\sum_u\bigl(r^{\Delta u}(b,\vec f)\bigr)_+.
 \end{aligned}
 \label{eq:downward-variation-definition}
\end{equation}
There are \(m_{\rm core}\) coordinates. The next lemma applies to a
cube of any dimension \(m\). It allows the variation bound to fail on
a small set, whose contribution is controlled by a pointwise weight bound.

\begin{lemma}[Covariance for absolute score changes]
\label{lem:cube-covariance}
Let \(m\ge1\), let \(\vec f\) be uniform on \(\{0,1\}^m\), independently of \(b\),
and let \(r(b,\vec f)\in[0,1]\). Suppose
\(0\le W(b,\vec f)\le BM_0\), where \(B,M_0\ge0\).
For a fixed \(c\in[0,1]\), define
\[
 \mathcal C(b)
 :=\operatorname{Cov}_{\vec f}
       \bigl(|r(b,\vec f)-c|,W(b,\vec f)\bigr).
\]
For every \(L\ge0\),
\begin{equation}
 \begin{aligned}
 \mathbb E_b[|\mathcal C(b)|]
 &\le L\mathbb E_{b,\vec f}
       \bigl[\max_u|W^{\Delta u}(b,\vec f)|\bigr]\\
 &\quad+\frac{mBM_0}{2}
       \Pr_{b,\vec f}[\operatorname{Down}_{\rm core}(r;b,\vec f)>L].
 \end{aligned}
 \label{eq:core-covariance}
\end{equation}
\end{lemma}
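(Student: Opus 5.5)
The plan is to fix \(b\), write \(g(\vec f):=|r(b,\vec f)-c|\) and \(W(\vec f):=W(b,\vec f)\), and express the covariance on the cube \(\{0,1\}^m\) as a sum over coordinate directions of products of half-differences. Then I split according to whether the downward variation of \(r\) is at most \(L\). With the half-difference operators \(h^{\Delta u}\), the generator of the standard noise semigroup \(P_s\) is \(-\sum_u(\cdot)^{\Delta u}\), and summation by parts on the cube gives
\[
 \operatorname{Cov}_{\vec f}(g,W)=\sum_u\int_0^\infty
 \mathbb E_{\vec f}\bigl[g^{\Delta u}\,(P_sW)^{\Delta u}\bigr]\,ds .
\]
Since \(P_s\) commutes with each difference operator up to the factor \(e^{-s}\), we have \(|(P_sW)^{\Delta u}|\le e^{-s}P_s|W^{\Delta u}|\le e^{-s}P_s M_W\), where \(M_W:=\max_u|W^{\Delta u}|\). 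This yields
\[
 |\mathcal C(b)|\le\mathbb E_{\vec f}\Bigl[\Bigl(\sum_u|g^{\Delta u}|\Bigr)\widetilde M\Bigr],\qquad
 \widetilde M:=\int_0^\infty e^{-s}P_sM_W\,ds,\qquad \mathbb E\widetilde M=\mathbb E M_W .
\]
If the smoothing by \(\widetilde M\) proves awkward, I would fall back on the more elementary route: write \(\operatorname{Cov}=\mathbb E[g(W-\mathbb EW)]\) and telescope \(W(\vec f)-W(\vec f')\) along a random monotone path between independent points. This gives the same type of bound with \(M_W\) evaluated at the path points.

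The second step converts \(\sum_u|g^{\Delta u}|\) into the downward variation of \(r\). Since \(|\,|a-c|-|a'-c|\,|\le|a-a'|\), each edge \(\{\vec f,\vec f^{\oplus u}\}\) of the cube satisfies \(|g(\vec f)-g(\vec f^{\oplus u})|\le|r(\vec f)-r(\vec f^{\oplus u})|\). This difference equals the positive part \((r(\vec f)-r(\vec f^{\oplus u}))_+\) taken at the endpoint where \(r\) is larger. Averaging over the uniform cube and reindexing each edge from its higher endpoint, every edge's contribution is charged to \(\operatorname{Down}_{\rm core}(r;\cdot)\) at that endpoint. The weight factor is the average of \(\widetilde M\) over the two endpoints of the edge.

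The third step splits on the event \(\{\operatorname{Down}_{\rm core}(r;b,\vec f)\le L\}\).
\begin{itemize}
\item On this event the charged contribution is at most \(L\) times the weight factor. After averaging over \(b\), this gives \(L\,\mathbb E_{b,\vec f}[\max_u|W^{\Delta u}|]\), using \(\mathbb E\widetilde M=\mathbb E M_W\) and the measure-preservation of \(\vec f\mapsto\vec f^{\oplus u}\).
\item On the complement I use only crude pointwise bounds. Since \(0\le W\le BM_0\), each \(|W^{\Delta u}|\le BM_0/2\). Since \(r\in[0,1]\), each of the \(m\) charged edges contributes at most \(1\). This gives at most \(\tfrac{mBM_0}{2}\Pr_{b,\vec f}[\operatorname{Down}_{\rm core}>L]\).
\end{itemize}
Summing the two pieces and taking \(\mathbb E_b\) of \(|\mathcal C(b)|\) produces \eqref{eq:core-covariance}.

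The main obstacle is the first half of the split. Under the charging scheme, the weight factor attached to an edge involves \(\widetilde M\) (or \(M_W\)) at the lower endpoint \(\vec f^{\oplus u}\), while the event \(\operatorname{Down}_{\rm core}\le L\) is tested at the higher endpoint \(\vec f\). A sum like \(\sum_u(r(\vec f)-r(\vec f^{\oplus u}))_+\,M_W(\vec f^{\oplus u})\) is not obviously at most \(L\,\mathbb E M_W\).

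I would first try to arrange the identity so that the weight difference is evaluated at the same point as the downward variation. The plan is to write \(\mathbb E[g^{\Delta u}W^{\Delta u}]\)-type terms as \(\tfrac12\mathbb E[(g(\vec f)-g(\vec f^{\oplus u}))(W(\vec f)-W(\vec f^{\oplus u}))]\). Here both factors are differences along the same edge, and \(|W(\vec f)-W(\vec f^{\oplus u})|/2\le M_W(\vec f)\) holds at either endpoint. I would then choose the endpoint to be the higher-\(r\) one for every edge. This requires replacing the semigroup integral by a form in which only single-edge differences of \(W\) appear, for example by bounding \(|(P_sW)^{\Delta u}|\) pointwise by \(\sup\) over the edge. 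That substitution is exactly where the constant in \eqref{eq:core-covariance} must be checked.
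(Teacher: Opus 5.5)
Your semigroup identity $\operatorname{Cov}_{\vec f}(g,W)=\sum_u\int_0^\infty\mathbb E[g^{\Delta u}(P_sW)^{\Delta u}]\,ds$ is correct, and so is the bound $|(P_sW)^{\Delta u}|\le e^{-s}P_s|W^{\Delta u}|$. With $\theta=e^{-s}$ this is the paper's interpolation, since $(P_sW)^{\Delta u}=e^{-s}K_u^{e^{-s}}$ where $K_u^\theta=T_\theta^{\neg u}W^{\Delta u}$.

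There is a genuine gap, and it is the one you flag. It comes from the order of your steps. You replace $|(P_sW)^{\Delta u}|$ by the maximal function $e^{-s}P_sM_W$ \emph{before} orienting the edges, and $\widetilde M$ is not invariant under $\vec f\mapsto\vec f^{\oplus u}$. After reindexing, the first bullet of your third step must control $\mathbb E\bigl[\mathbf 1[\operatorname{Down}_{\rm core}(r;\vec f)\le L]\sum_u(r(\vec f)-r(\vec f^{\oplus u}))_+\widetilde M(\vec f^{\oplus u})\bigr]$. This is only bounded by $L\,\mathbb E[\max_u\widetilde M(\vec f^{\oplus u})]$, which in general is not controlled by $L\,\mathbb E[M_W]$: a maximum over the $m$ neighbours can be much larger in expectation. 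So the claimed term $L\,\mathbb E[\max_u|W^{\Delta u}|]$ does not follow. Your suggested repair, a sup over the edge or single-edge differences of $W$, still leaves a factor evaluated at the lower endpoint.

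The missing observation is that $(P_sW)^{\Delta u}$ is itself odd under $\vec f\mapsto\vec f^{\oplus u}$, because $P_s$ commutes with that flip. Hence $|(P_sW)^{\Delta u}|$ takes the same value at both ends of every $u$-edge. The fix has three steps.

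\emph{Orient first.} By the reverse triangle inequality and this symmetry, $\mathbb E[|g^{\Delta u}|\,|(P_sW)^{\Delta u}|]\le\mathbb E[(r(\vec f)-r(\vec f^{\oplus u}))_+\,|(P_sW)^{\Delta u}(\vec f)|]$.

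\emph{Then bound at the same point.} Only now use $|(P_sW)^{\Delta u}(\vec f)|\le e^{-s}(P_sM_W)(\vec f)$, at the point where the downward variation is evaluated. Summing over $u$ gives $|\mathcal C(b)|\le\int_0^\infty e^{-s}\,\mathbb E_{\vec f}[\operatorname{Down}_{\rm core}(r;b,\vec f)\,(P_sM_W)(\vec f)]\,ds$.

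\emph{Split.} On $\{\operatorname{Down}_{\rm core}\le L\}$, drop the indicator, since everything is nonnegative, and use $\mathbb E[P_sM_W]=\mathbb E[M_W]$. On the complement, use $\operatorname{Down}_{\rm core}\le m$ and $P_sM_W\le BM_0/2$. The factor $e^{-s}$ integrates to one, which gives \eqref{eq:core-covariance} with no loss in constants.

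This is exactly the paper's proof in the variable $\theta=e^{-s}$. There, the invariance of $|K_u^\theta|$ under flipping bit $u$ is what lets each edge be charged at its higher-score endpoint before the maximal bound $T_\theta G_W$ is applied. Your fallback path-telescoping argument is therefore unnecessary.
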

\begin{proof}
Fix \(b\). Write the average over bit \(u\) as
\[
 h^{\mathrm{av}\,u}(b,\vec f)
 :=\frac{h(b,\vec f)+h(b,\vec f^{\oplus u})}{2}.
\]
Then \(h(b,\vec f)-h^{\mathrm{av}\,u}(b,\vec f)
=h^{\Delta u}(b,\vec f)\).
For \(0\le\theta\le1\), let \((T_\theta h)(b,\vec f)\) average
\(h(b,\vec g)\), where each bit of \(\vec g\) independently retains
the corresponding bit of \(\vec f\) with probability \(\theta\)
and is otherwise resampled uniformly. Define \(T_\theta^{\neg u}\)
in the same way, except that bit \(u\) is always retained.
On coordinate \(u\), resampling sends \(h(b,\vec f)\) to
\(\theta h(b,\vec f)+(1-\theta)h^{\mathrm{av}\,u}(b,\vec f)\).
The coordinate maps commute, so their product rule gives
\[
 \frac{d}{d\theta}(T_\theta h)(b,\vec f)
 =\sum_u(T_\theta^{\neg u}h^{\Delta u})(b,\vec f).
\]

For an independent uniform \(\vec f'\), the covariance formula is
\[
 \mathcal C(b)=\mathbb E_{\vec f}\Bigl[
 |r(b,\vec f)-c|\bigl(W(b,\vec f)
                   -\mathbb E_{\vec f'}[W(b,\vec f')]\bigr)\Bigr].
\]
Since \(T_1=I\) and \(T_0\) averages all bits, the centered weight
equals \(((T_1-T_0)W)(b,\vec f)\). Put
\(K_u^\theta(b,\vec f)=(T_\theta^{\neg u}W^{\Delta u})(b,\vec f)\).
The fundamental theorem of calculus gives
\[
 ((T_1-T_0)W)(b,\vec f)
 =\int_0^1\sum_uK_u^\theta(b,\vec f)\,d\theta.
\]
Substitute this identity in the covariance formula. Each
\(K_u^\theta(b,\vec f)\) changes sign when bit \(u\) is flipped,
so pairing those two inputs gives
\[
 \mathcal C(b)=\int_0^1\sum_u\mathbb E_{\vec f}\left[
       \frac{|r(b,\vec f)-c|-|r(b,\vec f^{\oplus u})-c|}{2}
       K_u^\theta(b,\vec f)\right]d\theta.
\]
The reverse triangle inequality gives
\[
 \frac{\bigl||r(b,\vec f)-c|-|r(b,\vec f^{\oplus u})-c|\bigr|}{2}
 \le|r^{\Delta u}(b,\vec f)|.
\]
Since \(|K_u^\theta(b,\vec f)|\) is unchanged by flipping \(u\),
pairing the two endpoints also gives
\[
 \begin{aligned}
 &\mathbb E_{\vec f}
   \bigl[|r^{\Delta u}(b,\vec f)|\,|K_u^\theta(b,\vec f)|\bigr]\\
 &\quad=\mathbb E_{\vec f}\bigl[
       (r(b,\vec f)-r(b,\vec f^{\oplus u}))_+
       |K_u^\theta(b,\vec f)|\bigr].
 \end{aligned}
\]
Consequently,
\begin{equation}
 |\mathcal C(b)|
 \le\int_0^1\mathbb E_{\vec f}\bigl[
       \operatorname{Down}_{\rm core}(r;b,\vec f)
       \max_u|K_u^\theta(b,\vec f)|\bigr]d\theta.
 \label{eq:covariance-orientation}
\end{equation}
We orient each edge by the score \(r(b,\vec f)\), even when the
absolute cost \(|r(b,\vec f)-c|\) increases in the other direction.

Put \(G_W(b,\vec f)=\max_u|W^{\Delta u}(b,\vec f)|\).
Each resampling map takes an average, so the triangle inequality
bounds the absolute value of its output by the average of the
absolute values. Also, \(|W^{\Delta u}(b,\vec f)|\) is unchanged
by flipping \(u\); resampling that bit therefore leaves this function
unchanged. Finally, \(|W^{\Delta u}(b,\vec f)|\le G_W(b,\vec f)\)
pointwise, and averaging preserves this inequality. These three facts give
\[
 |K_u^\theta(b,\vec f)|
 \le(T_\theta^{\neg u}|W^{\Delta u}|)(b,\vec f)
 =(T_\theta|W^{\Delta u}|)(b,\vec f)
 \le(T_\theta G_W)(b,\vec f).
\]
The right side is common to all \(u\). Taking the maximum and using that \(T_\theta\) preserves the uniform measure gives
\begin{equation}
 \mathbb E_{\vec f}\bigl[\max_u|K_u^\theta(b,\vec f)|\bigr]
 \le\mathbb E_{\vec f}\bigl[\max_u|W^{\Delta u}(b,\vec f)|\bigr].
 \label{eq:noise-derivative}
\end{equation}
Also, \(0\le W(b,\vec f)\le BM_0\) gives
\(\max_u|K_u^\theta(b,\vec f)|\le BM_0/2\), while
\(\operatorname{Down}_{\rm core}(r;b,\vec f)\le m\).
Split according to whether the downward variation is at most \(L\):
\[
 \begin{aligned}
 &\mathbb E_{\vec f}\bigl[
   \operatorname{Down}_{\rm core}(r;b,\vec f)
       \max_u|K_u^\theta(b,\vec f)|\bigr]\\
 &\quad\le L\mathbb E_{\vec f}
       \bigl[\max_u|W^{\Delta u}(b,\vec f)|\bigr]
       +\frac{mBM_0}{2}
          \Pr_{\vec f}[\operatorname{Down}_{\rm core}(r;b,\vec f)>L].
 \end{aligned}
\]
Integrate over \(\theta\) in \eqref{eq:covariance-orientation}, then
average over \(b\).
\end{proof}

\paragraph{The weight estimates.}
For the core bits \(u=(j,x,i)\), Proposition~\ref{prop:maximal-core-bit-change}
gives
\begin{equation}
 \begin{aligned}
 \mathbb E_S\bigl[\max_u\|\ket{q_v^{\Delta u}(S)}\|^2\bigr]
   &\le\mathfrak d_{\rm core}^2M_v,\\
 \mathbb E_S\bigl[\max_u|W_v^{\Delta u}(S)|\bigr]
   &\le\phi_{\rm core}M_v,\qquad
 \phi_{\rm core}=2\mathfrak d_{\rm core}+2\mathfrak d_{\rm core}^2.
 \end{aligned}
 \label{eq:weight-derivatives}
\end{equation}
Here
\[
 p_\star=2\lceil\ln(24nN)\rceil,\quad K_t=16(t+1)^3,\quad
 \mathfrak d_{\rm core}^2
 =\frac{64\mathrm e^7(t+1)^{10}(1+K_t)(p_\star-1)^3}{N}.
\]
Thus \(\phi_{\rm core}\le\poly(t,n)N^{-1/2}\).

The pointwise bound \eqref{eq:pointwise-weight} remains
\[
 0\le W_v(b,\vec f)\le B_tM_v,\qquad
 B_t=2(t+1)(2N^3)^t,\qquad \log B_t=\poly(\kappa).
\]
Both estimates allow arbitrary auxiliary registers and require no
forest support. Apply Lemma~\ref{lem:cube-covariance} with
\(m=m_{\rm core}\), \(W=W_v\), and \(M_0=M_v\).
Its covariance is \(\mathcal C_v(b)\), so for every \(L\ge0\),
\begin{equation}
 \mathbb E_b[|\mathcal C_v(b)|]
 \le\left(L\phi_{\rm core}+\frac{m_{\rm core}B_t}{2}
       \Pr_{b,\vec f}[\operatorname{Down}_{\rm core}(r;b,\vec f)>L]\right)M_v.
 \label{eq:recording-covariance-bound}
\end{equation}
We next bound the exceptional probability.

\subsection{Downward variation of the clipped score}\label{sec:variation}

A certificate bounds the score decrease caused by changing oracle
answers. If each answer changes under at most \(u_0\) core-bit
flips, summing the same certificate over all flips bounds the downward
variation by \(u_0\) times the certificate's mass. We first prove this
implication, then bound the probability that an answer has larger
sensitivity.

Write the base-oracle answers available to the verifier as
\[
 \mathsf{ans}_j(b,\vec f)=(O_{R^{k\leftarrow S}})_j,
 \qquad S=(b,\vec f),\quad j\le M_{\rm qry}.
\]
By \eqref{eq:oracle-coordinate-bound}, the query-length bound
\(\ell_{\rm in}(\kappa)\) gives \(M_{\rm qry}=2^{\poly(\kappa)}\)
possible coordinates. This counts all possible addresses, even though
the verifier makes only polynomially many queries. Define the
sensitivity of answer \(j\) by
\[
 s_j(b,\vec f)=\#\{u:\mathsf{ans}_j(b,\vec f)
                       \ne\mathsf{ans}_j(b,\vec f^{\oplus u})\}.
\]

\begin{lemma}[Downward variation of the clipped score]
\label{lem:clipped-downward-variation}
Let \(r(b,\vec f)\) be the clipped score of the fixed verifier above,
whose explicit description has length at most \(\ell_{\rm in}(\kappa)\).
Fix the boundary tables \(b\) and a core-bit array \(\vec f\).
For \(u_0\ge0\), suppose
\begin{equation}
 \max_{j\le M_{\rm qry}}s_j(b,\vec f)\le u_0.
 \label{eq:answer-sensitivity}
\end{equation}
Then
\begin{equation}
 \operatorname{Down}_{\rm core}(r;b,\vec f)
 \le L_{\rm score}(u_0):=3\pi^2\ell_{\rm in}(\kappa)^2u_0.
 \label{eq:downward-variation}
\end{equation}
\end{lemma}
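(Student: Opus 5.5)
The plan is to apply Corollary~\ref{cor:clipped-certificate} once, at the base point, and sum its inequality over all core-bit flips. Fix \(b\) and \(\vec f\), and let \(O=O_{R^{k\leftarrow(b,\vec f)}}\) be the oracle at the base point. For each core coordinate \(u\), let \(O^{(u)}=O_{R^{k\leftarrow(b,\vec f^{\oplus u})}}\) be the oracle after flipping bit \(u\). Corollary~\ref{cor:clipped-certificate} gives a single nonnegative weight function \(w_O\), supported on \(j\le M_{\rm qry}\), with mass at most \(3\pi^2\ell_{\rm in}(\kappa)^2\). Since \(w_O\) depends only on \(O\), the same weights serve for every flipped oracle \(O^{(u)}\):
\[
 r(b,\vec f)-r(b,\vec f^{\oplus u})\le\sum_{j:\,O_j\ne O^{(u)}_j}w_O(j).
\]
The right side is nonnegative, so the same bound holds for the positive part \(\bigl(r(b,\vec f)-r(b,\vec f^{\oplus u})\bigr)_+\).

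Next, sum over all \(m_{\rm core}\) coordinates \(u\) and exchange the order of summation:
\[
 \operatorname{Down}_{\rm core}(r;b,\vec f)
 \le\sum_u\sum_{j\le M_{\rm qry}}w_O(j)\,\mathbf 1[O_j\ne O^{(u)}_j]
 =\sum_{j\le M_{\rm qry}}w_O(j)\,s_j(b,\vec f).
\]
Here \(O_j=\mathsf{ans}_j(b,\vec f)\) and \(O^{(u)}_j=\mathsf{ans}_j(b,\vec f^{\oplus u})\) by definition, so the inner count over \(u\) is exactly \(s_j(b,\vec f)\). Coordinates with \(j>M_{\rm qry}\) contribute nothing, because \(w_O\) vanishes there. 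Applying hypothesis~\eqref{eq:answer-sensitivity}, the sum is at most \(u_0\sum_j w_O(j)\le 3\pi^2\ell_{\rm in}(\kappa)^2u_0=L_{\rm score}(u_0)\), which is~\eqref{eq:downward-variation}.

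The only point needing care is that the certificate is one-sided and anchored at \(O\). It is therefore essential to orient each edge from the base point \((b,\vec f)\) downward, which is exactly how \(\operatorname{Down}_{\rm core}\) is defined. It is also essential that a single certificate controls all neighbours simultaneously, which holds because \(w_O\) is chosen before \(O'\) and the inequality in Corollary~\ref{cor:clipped-certificate} holds for every Boolean oracle \(O'\). We must also check that the verifier's score depends only on the first \(M_{\rm qry}\) coordinates, which is guaranteed by the description-length bound \(|z|\le\ell_{\rm in}(\kappa)\). Beyond these checks the argument is a direct double-counting, and no step should present a real obstacle.
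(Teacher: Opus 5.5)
Your proposal is correct and matches the paper's proof. Both take a single certificate $w_O$ from Corollary~\ref{cor:clipped-certificate} at the base oracle, use it to bound every downward core-bit flip, and double count to get $\sum_j w_O(j)\,s_j(b,\vec f)\le 3\pi^2\ell_{\rm in}(\kappa)^2u_0$. The separate check you mention about dependence on the first $M_{\rm qry}$ coordinates is not needed, because the corollary's support condition already guarantees it.
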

\begin{proof}
Put \(S=(b,\vec f)\) and \(O=O_{R^{k\leftarrow S}}\).
Corollary~\ref{cor:clipped-certificate} gives a nonnegative certificate
\(w_O\) of mass at most \(3\pi^2\ell_{\rm in}(\kappa)^2\), supported
on \(j\le M_{\rm qry}\). By \eqref{eq:clipped-certificate}, these
same weights bound the positive score decrease for every core-bit
flip. Summing over flips, each weight \(w_O(j)\) occurs exactly
\(s_j(b,\vec f)\) times in the resulting upper bound. Therefore
\[
 \operatorname{Down}_{\rm core}(r;b,\vec f)
 \le\sum_{j\le M_{\rm qry}}w_O(j)s_j(b,\vec f)
 \le u_0\sum_jw_O(j)
 \le3\pi^2\ell_{\rm in}(\kappa)^2u_0.
\]
\end{proof}

It remains to control the event
\[
 \mathcal B_{u_0}
 =\{(b,\vec f):\max_{j\le M_{\rm qry}}s_j(b,\vec f)>u_0\}.
\]
For fixed \(b\), restrict the row-bit circuit of
Lemma~\ref{lem:base-completion-locality} by fixing the boundary bits.
This leaves a constant-depth circuit on the \(m_{\rm core}\) core bits,
with the same size bound \(\exp(\poly(\kappa))\), uniformly in \(b\)
and the fixed outside oracle. By \cite[Lemma~43]{AIK},
\[
 \Pr_{\vec f}[s_j(b,\vec f)>u_0]
 \le2m_{\rm core}2^{-u_0/\Gamma_{\rm core}},
\]
where \(\Gamma_{\rm core}=\poly(\kappa)\) depends only on the fixed
resource bound. A union bound over \(j\), followed by averaging over
\(b\), gives
\begin{equation}
 \Pr_{b,\vec f}[(b,\vec f)\in\mathcal B_{u_0}]
 \le\varepsilon(u_0):=2m_{\rm core}M_{\rm qry}
                         2^{-u_0/\Gamma_{\rm core}}.
 \label{eq:completion-sensitivity}
\end{equation}
The lemma applies outside \(\mathcal B_{u_0}\), so
\begin{equation}
 \begin{aligned}
 &\Pr_{b,\vec f}[\operatorname{Down}_{\rm core}(r;b,\vec f)>L_{\rm score}(u_0)]\\
 &\quad\le\Pr_{b,\vec f}[(b,\vec f)\in\mathcal B_{u_0}]
       \le\varepsilon(u_0).
 \end{aligned}
 \label{eq:downward-variation-tail}
\end{equation}
This is the score estimate needed in
\eqref{eq:recording-covariance-bound}.

\subsection{Combining the bounds}\label{sec:covariance}

\begin{proof}[Proof of Lemma~\ref{lem:weighted-stability}]
Use \eqref{eq:conditional-mean-bound} in
\eqref{eq:weighted-decomposition}, and bound the covariance by its
absolute value. Equations~\eqref{eq:recording-covariance-bound}
and~\eqref{eq:downward-variation-tail} give
\begin{equation}
 \begin{aligned}
 &\mathbb E_{b,\vec f}[|r(b,\vec f)-c|W_v(b,\vec f)]\\
 &\quad\le\left(\mathbb E_{b,\vec f}[|r(b,\vec f)-c|]
       +3\pi^2\ell_{\rm in}(\kappa)^2u_0\phi_{\rm core}
       +\frac{m_{\rm core}B_t\varepsilon(u_0)}2\right)M_v.
 \end{aligned}
 \label{eq:weighted-before-cutoff}
\end{equation}
Choose the cutoff so that the exceptional probability absorbs the
pointwise factor \(m_{\rm core}B_t/2\):
\begin{equation}
 u_0=\left\lceil\Gamma_{\rm core}
       \left(\kappa^2+2n+
       \log_2(M_{\rm qry}m_{\rm core}^2B_t)\right)\right\rceil.
 \label{eq:sensitivity-cutoff}
\end{equation}
Then
\[
 \frac{m_{\rm core}B_t\varepsilon(u_0)}2
 =m_{\rm core}^2B_tM_{\rm qry}2^{-u_0/\Gamma_{\rm core}}
 \le2^{-\kappa^2}N^{-2}.
\]
Set
\begin{equation}
 \lambda_t=3\pi^2\ell_{\rm in}(\kappa)^2u_0\phi_{\rm core}
                    +2^{-\kappa^2}N^{-2}.
 \label{eq:weighted-leakage}
\end{equation}
Substituting in \eqref{eq:weighted-before-cutoff} and using
\(M_v=\|\ket{v}\|^2\) proves
\[
 \mathbb E_S[|r(S)-c|W_v(S)]
 \le\bigl(\mathbb E_S[|r(S)-c|]+\lambda_t\bigr)\|\ket{v}\|^2.
\]
For each fixed polynomial resource bound, \(\Gamma_{\rm core}\),
\(\log M_{\rm qry}\), \(\log B_t\), and \(n\) are polynomially
bounded in \(\kappa\), so \(u_0\) is polynomially bounded.
Together with \(\phi_{\rm core}\le\poly(t,n)N^{-1/2}\) and
\(\kappa\le n(\kappa)\le\poly(\kappa)\), this gives
\begin{equation}
 \lambda_t\le\operatorname{poly}(\kappa)N^{-1/2}.
 \label{eq:covariance-rate}
\end{equation}
All estimates are uniform in the fixed outside data, \(z\), and \(c\).
\end{proof}

Section~\ref{sec:weighted-query} combines this weighted score bound
with the separate state-comparison error \(e_t\) from
Proposition~\ref{prop:controlled-recorder-transfer} to bound a coherent
query to the persistent Boolean completion.

\section{Proof of the independent-row comparison}\label{sec:erasure}

We combine Sections~\ref{sec:independent-phase}, \ref{sec:classical-stability},
and~\ref{sec:weights} to compare three worlds, then fix the public
oracle by Borel--Cantelli. The query comparison first bounds a signed
average. A two-run reduction gives the second moment needed to rule
out cancellation between positive and negative gaps.

\subsection{Three worlds}\label{sec:comparison-experiments}
For fixed \((R,Q)\), consider the following acceptance probabilities:
\begin{center}
\begin{tabular}{@{}llll@{}}
\toprule World & Public oracle & Challenge & Acceptance\\\midrule
1 & \((O_R,Q)\) & \(U_k\), uniform key \(k\) & \(p_A(R,Q)\)\\
2 & \((O_R,Q)\) & \(U(S)\), independent uniform row \(S\) & \(q_A(R,Q)\)\\
3 & \((O_R,Q)\) & independent Haar unitary \(V\) & \(h_A(R,Q)\)\\
\bottomrule
\end{tabular}
\end{center}
Each world uses one persistent completion \(Q\sim\nu_R\).
All challenge queries use the coherent interfaces of
Section~\ref{sec:construction}. Put
\(d_A(R,Q)=p_A(R,Q)-q_A(R,Q)\) and
\(d_A^{\mathrm{Haar}}(R,Q)=p_A(R,Q)-h_A(R,Q)\).
For every fixed \((R,Q)\), Theorem~\ref{thm:independent-phase} gives
\begin{equation}
 |q_A(R,Q)-h_A(R,Q)|\le\Delta_t=O(t^2N^{-1/8}),
 \label{eq:worlds-two-three}
\end{equation}
provided \(3(2t)^{16}\le N\): the public gates are independent of the
fresh challenge. It remains to compare worlds 1 and 2.

For that comparison, sample \(R,k,S\) independently and set
\(R'=R^{k\leftarrow S}\). The experiment with public oracle
\((O_{R'},Q')\), \(Q'\sim\nu_{R'}\), and challenge \(U(S)\) has the
averaged law of world 1. Indeed, \(R'\) is uniform, \(k\) is independent
of \(R'\), and its \(k\)-th row is \(S\). World 2 uses the same
challenge with \((O_R,Q)\). Thus only the public oracle changes.

\begin{lemma}[Public-query normal form]\label{lem:public-query-normal-form}
There are polynomial-time, oracle-independent encodings satisfying
\begin{equation}
 Q(e_R(x))=R(x),\qquad Q(e_B(x))=B_R(x).
 \label{eq:public-query-encodings}
\end{equation}
They give exact coherent simulations of base queries through \(Q\),
including controlled and complement queries, with polynomial overhead.
\end{lemma}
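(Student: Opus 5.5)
The plan is to make \(Q\) itself answer base queries on a reserved family of inputs. For each base-oracle address \(x\), \(e_R(x)\) will be the description of a fixed verifier circuit that ignores its witness, makes one classical query to \(R\) at \(x\), and accepts exactly when the answer is \(1\). The encoding \(e_B(x)\) is defined the same way with \(B_R\) in place of \(R\). The verifier \(\mathcal V^{(\cdot)}\) runs on oracle \(O_R=(R,B_R)\), and the circuit's output is deterministic given the oracle bit. Hence for every witness \(\rho\) the acceptance probability is exactly \(R(x)\) (respectively \(B_R(x)\)) in \(\{0,1\}\). Therefore \(g_{e_R(x)}(O_R)\in\{0,1\}\), the clipped score \(r_{e_R(x)}(O_R)=\min\{1,\max\{0,3g-1\}\}\) equals the same bit, and the Bernoulli law in \eqref{eq:boolean-completion-law} is a point mass. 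This gives \(Q(e_R(x))=R(x)\) and \(Q(e_B(x))=B_R(x)\) for every sample \(Q\sim\nu_R\), not merely with high probability. Importantly, this holds for every oracle \(R\), so the identity survives row replacement \(R\mapsto R^{k\leftarrow S}\) as well.

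Next I check the syntactic and efficiency requirements. The circuit description lists one oracle gate on an explicitly given input register of length \(|x|\), a one-qubit answer register, and a measurement. All register sizes are written in unary, and the description is padded in a fixed, oracle-independent way. So \(e_R,e_B\) are computable in time \(\poly(|x|)\), are injective, and have length \(O(|x|)\). This keeps queries of length at most \(\ell_{\rm in}\) inside a polynomially larger \(\mathcal Z_\kappa\).

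For coherent simulation, a quantum query \(\ket{x,y}\mapsto\ket{x,y\oplus R(x)}\) is implemented in three steps. First compute \(\ket{x}\ket{0}\mapsto\ket{x}\ket{e_R(x)}\) reversibly. Then apply the \(Q\)-gate on \(\ket{e_R(x)}\ket{y}\). Finally uncompute \(e_R(x)\). Since \(Q(e_R(x))=R(x)\) pointwise, this is exactly the base gate. Controlled queries follow by controlling the middle \(Q\)-gate only, because the encoding and its uncomputation cancel when the control is off. Complement queries \(y\oplus\neg R(x)\) are obtained by a subsequent \(X\) on \(y\). Each base query costs one \(Q\)-query plus \(\poly(|x|)\) gates.

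The only point needing care is that the verifier's own oracle is \(O_R\), not \(Q\). This is what makes the score exactly \(0\) or \(1\), with no promise gap and no witness dependence. I expect this to be routine given the model's definition of \(\mathcal V\); the main thing to verify is that the universal verifier's malformed-input rejection does not trigger on these canonical descriptions.
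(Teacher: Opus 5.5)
Your encoding argument is the paper's. A witness-free verifier that queries $R$ (or $B_R$) at $x$ and accepts iff the bit is one has $g\in\{0,1\}$, and clipping leaves that bit unchanged. The completion law is therefore a point mass, so \eqref{eq:public-query-encodings} holds for every completion and every base oracle, including $R^{k\leftarrow S}$. Compute--query--uncompute for the coherent gate, and a trailing $X$ for complements, are also exactly as in the paper.

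The one step that does not go through as written is the controlled simulation. You propose to ``control the middle $Q$-gate''. However, the interface supplies only the uncontrolled XOR gate $U_Q:\ket{z,y}\mapsto\ket{z,y\oplus Q(z)}$, so a controlled $U_Q$ is not an available primitive. Building one is exactly what the lemma's ``including controlled queries'' clause claims.

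The paper handles this by routing the target, on the inactive control branch, to a spare ancilla in $\ket{+}$. Concretely, a control-conditioned swap is applied before and after a single uncontrolled $Q$-query. Since $X\ket{+}=\ket{+}$, the query acts trivially on that branch and the ancilla returns clean. A two-query alternative is to query $Q$ into a fresh $\ket{0}$, copy that bit into the target with a Toffoli on the control, and query again to erase it. With either addition your proof is complete.
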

\begin{proof}
Encode a witness-free verifier that reads the specified bit and accepts
exactly when it is one. Its clipped score is that bit, so every completion
agrees. Compute the encoding, query, and uncompute. On an inactive control
branch route the target to a spare \(\ket{+}\); complement access appends
\(X\). Both preserve the same sampled table.
\end{proof}

\subsection{Comparing worlds 1 and 2}
\label{sec:weighted-query}\label{sec:public-query-hybrid}
Count all public queries after this simulation. Fix polynomial bounds
on challenge queries, public queries, address length, workspace, and
verifier resources, covering both verifier query count and query length.
Let \(\alpha,\varepsilon_{\rm stab}\) be the simultaneous scalar bound
and its exceptional probability from Section~\ref{sec:classical-stability}, and
let \(\lambda_t\) be the weighted bound from Section~\ref{sec:weights}.

\paragraph{One persistent table.}
\label{sec:boolean-table}\label{sec:persistent-table}
We need a query bound that holds even after earlier queries have become
correlated with the completion. Let \(\mathcal Z\) contain every tagged
address within the fixed bounds and introduce a private qubit \(D_z\)
for each \(z\). Put
\[
 P_u=\begin{pmatrix}\sqrt{1-u}&\sqrt u\\\sqrt u&-\sqrt{1-u}\end{pmatrix},
 \qquad G(u)=(P_u\otimes I)\operatorname{CNOT}_{D_z\to Y}(P_u\otimes I).
\]
Initialize the private table to \(\ket{0^{\mathcal Z}}\) and use
\begin{equation}
 \widetilde O_R=\sum_z\ket{z}\bra{z}\otimes G_z(r_z(O_R)).
 \label{eq:table-query}
\end{equation}
This represents one fixed \(Q\), throughout all queries and stages.
To see this, conjugate by \(\mathcal P_R=\bigotimes_zP_{r_z(O_R)}\).
The initial table becomes
\(\sum_b\sqrt{\nu_R(b)}\ket{b}\), and every query reads \(b_z\)
without changing \(b\). Tracing out the table averages over exactly
one completion. Adjacent conjugations cancel, and the final private
unitary does not affect the output. Other initial data may depend on
\(R\), but are independent of these completion coins conditional on \(R\).
This finite table is a marginal of the single infinite completion;
it is never reset or resampled.

On the target \(\ket{+}\) subspace, \(G(u)-G(v)\) vanishes; on
\(\ket{-}\), direct calculation gives
\[
 \|G(u)-G(v)\|^2
 =4\bigl(\sqrt{u(1-v)}-\sqrt{v(1-u)}\bigr)^2\le4|u-v|.
\]
The last inequality follows from \((a-b)^2\le|a^2-b^2|\) for \(a,b\ge0\).
Hence, for \(D_{k,S}=\widetilde O_{R'}-\widetilde O_R\),
\begin{equation}
 D_{k,S}^\dagger D_{k,S}
 \preceq4\sum_z|r_z(O_{R'})-r_z(O_R)|\ket{z}\bra{z}\otimes I,
 \qquad \|D_{k,S}\|\le2.
 \label{eq:table-query-weighted}
\end{equation}
This holds with arbitrary auxiliary entanglement and for controlled or
complement access; the private table need not be in its initial state.

\paragraph{One public query.}
Fix \(R\) in the simultaneous scalar-stability event. A purified
world-2 prefix \(\ket{\Phi_{R,S}}\) has operations independent of \(S,k\)
except for at most \(t\) challenge queries. Proposition~\ref{prop:controlled-recorder-transfer}
gives an unrotated, forest-supported reference vector
\[
 \ket{v}=\sum_z\ket{z}\ket{v_z},\qquad
 \sum_z\|\ket{v_z}\|^2=\|\ket{v}\|^2\le1,\qquad
 \mathbb E_S[\|\ket{\Phi_{R,S}}-\ket{q_v(S)}\|^2]\le e_t^2.
\]
Retain all auxiliary registers, including the private table.
The address projections commute with \(E_t\) and \(P_{\mathcal F}\), so
\(\ket{v_z}\) remains forest-supported and independent of \(S,k\).
At each fixed \(k,z\), apply Lemma~\ref{lem:weighted-stability} with
\(c=r_z(O_R)\). Average over \(k\) and use scalar stability to obtain
\begin{equation}
 \begin{aligned}
 &\mathbb E_{k,S}\Bigl[\sum_z|r_z(O_{R'})-r_z(O_R)|W_{v_z}(S)\Bigr]\\
 &\quad\le\sum_z\bigl(\mathbb E_{k,S}[|r_z(O_{R'})-r_z(O_R)|]
                     +\lambda_t\bigr)\|\ket{v_z}\|^2
 \le\alpha+\lambda_t.
 \end{aligned}
 \label{eq:weighted-score-change}
\end{equation}
No address component is normalized. By
\eqref{eq:table-query-weighted}, linearity of evaluation, and the
triangle inequality in \(L^2(k,S)\),
\begin{equation}
 \begin{aligned}
 \bigl(\mathbb E_{k,S}[\|D_{k,S}\ket{\Phi_{R,S}}\|^2]\bigr)^{1/2}
 &\le\bigl(\mathbb E_{k,S}[\|D_{k,S}\ket{q_v(S)}\|^2]\bigr)^{1/2}+2e_t\\
 &\le2\sqrt{\alpha+\lambda_t}+2e_t.
 \end{aligned}
 \label{eq:rounded-query-loss}
\end{equation}
Thus summing addresses costs only \(\sum_z\|\ket{v_z}\|^2\le1\),
not the number of addresses.

\begin{lemma}[Averaged signed comparison]\label{lem:finite-query-bound}
Under these resource bounds and the hypotheses of Sections~\ref{sec:independent-phase}
and~\ref{sec:weights}, an algorithm with \(Q_{\rm pub}\) public queries satisfies
\begin{equation}
 |\mathbb E_{R,Q}[d_A(R,Q)]|
 \le\epsilon_A^{\rm round}:=
 2Q_{\rm pub}(\sqrt{\alpha+\lambda_t}+e_t)+\varepsilon_{\rm stab}.
 \label{eq:signed-average}
\end{equation}
The bound is uniform over arbitrary fixed initial states and interquery
operations independent of the sampled public oracle and challenge;
they need not have efficient descriptions. Completion verifiers keep
their original gate set and resource bounds.
\end{lemma}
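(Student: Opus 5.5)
The plan is a standard hybrid over the \(Q_{\rm pub}\) public queries, carried out on purified states. Sample \(R,k,S\) independently and put \(R'=R^{k\leftarrow S}\). As noted in Section~\ref{sec:comparison-experiments}, the world-1 average is the experiment with public oracle \((O_{R'},Q')\) and challenge \(U(S)\); world 2 uses \((O_R,Q)\) with the same challenge. First I would convert every base query into a \(Q\)-query via Lemma~\ref{lem:public-query-normal-form}, so that only completion queries remain. I would then represent each completion by the persistent private table \(\widetilde O_R\) or \(\widetilde O_{R'}\) of \eqref{eq:table-query}. This makes each world a pure-state computation whose acceptance probability, after averaging over \(k,S\) and tracing out the table, equals \(\mathbb E_Q[p_A]\) or \(\mathbb E_Q[q_A]\).

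For \(0\le i\le Q_{\rm pub}\), define hybrid \(H_i\) to answer the first \(i\) public queries with \(\widetilde O_R\) and the rest with \(\widetilde O_{R'}\). Then \(H_{Q_{\rm pub}}\) is world 2 and \(H_0\) is world 1. Adjacent hybrids differ only at query \(i+1\). Before that query, the state is a world-2 prefix \(\ket{\Phi_{R,S}}\), since every earlier public query used \(\widetilde O_R\), which is independent of \(S,k\). After the query, both branches apply the same unitary continuation. Hence the final states differ in norm by at most \(\|D_{k,S}\ket{\Phi_{R,S}}\|\), and the acceptance probabilities by at most twice that. Averaging over \(k,S\) and applying Cauchy--Schwarz gives a per-step gap of at most \(2(\mathbb E_{k,S}\|D_{k,S}\ket{\Phi_{R,S}}\|^2)^{1/2}\). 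When \(R\) lies in the simultaneous stability event of Lemma~\ref{lem:scalar-row-stability}, \eqref{eq:rounded-query-loss} bounds this by \(2(2\sqrt{\alpha+\lambda_t}+2e_t)\). Up to the constant convention in \eqref{eq:signed-average}, this yields the first term. Using the trace-distance gap \(\tfrac12\|\cdot\|_1\le\|\cdot\|\) instead of twice the norm gives exactly \(2\sqrt{\alpha+\lambda_t}+2e_t\) per step. The factor then matches \eqref{eq:signed-average} once the per-step estimate is \(\|\cdot\|\) rather than \(2\|\cdot\|\).

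Summing over the \(Q_{\rm pub}\) steps gives the bound for \(R\) in the good event. Outside that event, which has probability at most \(\varepsilon_{\rm stab}\), I use \(|d_A|\le1\). Taking \(\mathbb E_R\) and applying the triangle inequality then gives \(|\mathbb E_{R,Q}[d_A]|\le\epsilon_A^{\rm round}\). The argument never uses efficiency of the initial state or of the interquery operations, only their independence of \(S,k\) and of the completion coins. The uniformity claim therefore follows, and the verifier resources enter only through \(\alpha,\lambda_t\).

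The main obstacle is justifying the hypotheses of \eqref{eq:rounded-query-loss} at an intermediate hybrid step. The prefix contains challenge queries interleaved with earlier \(\widetilde O_R\)-queries, and the private table itself is entangled with the prefix. The reference vector \(\ket{v}\) from Proposition~\ref{prop:controlled-recorder-transfer} must treat those earlier public queries and the table as part of the inter-query isometries. This is legitimate because they are independent of \(S\) and \(k\). I must also check that the \(k\)-average in Lemma~\ref{lem:weighted-stability} is taken with \(\ket{v_z}\) independent of \(k\), and that the baseline \(c=r_z(O_R)\) is fixed with respect to \(S\). I would verify these points first, together with the address-decomposition step \(\sum_z\|\ket{v_z}\|^2\le1\), since everything else is bookkeeping.
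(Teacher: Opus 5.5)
Your proposal is correct and follows the paper's proof. The paper's induction via \(\widetilde O_{R'}\ket{\Phi'_j}-\widetilde O_R\ket{\Phi_j}=\widetilde O_{R'}(\ket{\Phi'_j}-\ket{\Phi_j})+D_{k,S}\ket{\Phi_j}\) is exactly your hybrid over world-2 prefixes: each step is charged via \eqref{eq:rounded-query-loss} after Jensen, and the bad-\(R\) event is charged once at \(\varepsilon_{\rm stab}\). Your factor-of-two hedging is unnecessary, because for fixed \((R,k,S)\) both final states are unit vectors, so the acceptance gap is at most their vector distance (as the paper uses), which gives exactly \(2Q_{\rm pub}(\sqrt{\alpha+\lambda_t}+e_t)\).
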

\begin{proof}
Let \(\ket{\Phi_j}\) and \(\ket{\Phi'_j}\) be the two prefixes before
public query \(j\), with \(\ket{\Phi_j}\) from world 2. Write
\[
 \widetilde O_{R'}\ket{\Phi'_j}-\widetilde O_R\ket{\Phi_j}
 =\widetilde O_{R'}(\ket{\Phi'_j}-\ket{\Phi_j})
       +D_{k,S}\ket{\Phi_j}.
\]
The common gates between public queries, including the same challenge
\(U(S)\), preserve vector distance. Starting from identical initial
states and the same zero table, induction bounds the final distance
by \(\sum_j\|D_{k,S}\ket{\Phi_j}\|\).
For good \(R\), average \eqref{eq:rounded-query-loss} and use that
acceptance distance is at most vector distance. Bad \(R\) contributes
at most \(\varepsilon_{\rm stab}\) once. Arbitrary auxiliary registers
and fixed interquery operations are allowed throughout.
\end{proof}

\subsection{From the signed mean to the second moment}
\label{sec:second-moment}\label{sec:fixing}\label{sec:training}
We use the two-run argument of \cite[Corollary~32]{KQST}.
Both runs must share the same public oracle, so their conditional
acceptance probabilities multiply before we average over \((R,Q)\).

\begin{lemma}[Exact second-moment reduction]\label{lem:exact-second-moment}
Let \(A\) have at most \(t\ge1\) challenge queries, \(q\) encoded public
queries, address length \(\ell\), and accessible register size \(w\).
Let \(\ell_{\rm enc}(\kappa,n)\) bound the encoded row-bit address length.
Apply Lemma~\ref{lem:finite-query-bound} with enlarged resources
\[
 t,\qquad 2q+10nt,\qquad
 \max\{\ell,\ell_{\rm enc}(\kappa,n)\},\qquad
 2w+\poly(\kappa,n,\ell,t,q),
\]
respectively, and verifier bounds covering every resulting address.
Writing its bound as \(\epsilon_B^{\rm round}\), we have
\begin{equation}
 \mathbb E_{R,Q}[d_A(R,Q)^2]\le\epsilon_B^{\rm round}.
 \label{eq:second-moment}
\end{equation}
\end{lemma}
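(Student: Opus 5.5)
We follow the two-run strategy of \cite[Corollary~32]{KQST}: build an adversary \(B\) from two copies of \(A\) whose signed averaged gap equals \(\mathbb E_{R,Q}[d_A(R,Q)^2]\). Then we apply Lemma~\ref{lem:finite-query-bound} to \(B\) with the enlarged resource bounds. Fix \((R,Q)\). In world 1 the conditional acceptance probability is \(p_A(R,Q)=\mathbb E_k[\Pr[A^{Q;U_k}=1]]\), and in world 2 it is \(q_A(R,Q)=\mathbb E_S[\Pr[A^{Q;U(S)}=1]]\). The identity we aim for is
\[
 d_A^2=(p_A-q_A)^2=p_A(p_A-q_A)-q_A(p_A-q_A).
\]
Each product should be the acceptance probability of a two-run experiment in which both runs see the same public oracle \((O_R,Q)\). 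One run uses a world-1 challenge and the other uses either a world-1 or a world-2 challenge. The runs use independent keys or rows, and the copies of \(A\) run on disjoint registers, which gives the accessible size \(2w+\cdots\).

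\textbf{Realizing the products with one challenge slot.} Lemma~\ref{lem:finite-query-bound} compares worlds with a single challenge: \(U_k\) versus \(U(S)\). The first run can therefore use the actual challenge oracle. The second run needs its own independent world-1 challenge \(U_{k'}\), which \(B\) must implement itself. For this, \(B\) samples a fresh uniform key \(k'\) and implements \(\mathsf{Select}(U_{k'})\) exactly from public queries. Each phase layer \(F_{j,k'}\) is computed by querying the \(n\) row bits \(R(j,k',x,i)\) through Lemma~\ref{lem:public-query-normal-form}, applying the phase, and uncomputing. The Hadamards are free, and inverse, transpose and conjugate follow by reordering layers and negating phases.

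Counting queries: one selector query uses five phase layers, each computed and uncomputed with \(n\) bit queries, for \(10n\) public queries. Over \(t\) challenge queries this gives \(10nt\). Together with the \(2q\) public queries from two copies of \(A\), the total is \(2q+10nt\) encoded public queries. The addresses have length at most \(\max\{\ell,\ell_{\rm enc}\}\), and the workspace is polynomial.

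\textbf{Combining the two experiments.} Define two distinguishers. \(B_1\) runs \(A\) with the external challenge and \(A\) with a self-simulated \(U_{k'}\), and accepts iff both accept. In world 1 it accepts with probability \(p_A^2\), and in world 2 with probability \(q_Ap_A\). Its gap is therefore \(p_A(p_A-q_A)\). For the \(q_A\) factor, let \(B_2\) simulate the second run with an independent row \(S'\) of its own, sampled uniformly and independently of everything. Its gap is \(q_A(p_A-q_A)\).

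To obtain \(d_A^2\) as a single gap, \(B\) flips a private coin, runs \(B_1\) or \(B_2\), and outputs accordingly with the second branch negated. The constant offsets cancel in the difference of worlds, so the gap of \(B\) is a fixed multiple of \(d_A^2\). Alternatively we apply the lemma to each of \(B_1\) and \(B_2\) and sum.

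The self-sampled \(S'\) and \(k'\) are independent of the public oracle and challenge, and Lemma~\ref{lem:finite-query-bound} explicitly allows arbitrary oracle-independent inter-query operations and initial data. Averaging over \((R,Q)\) then gives \(\mathbb E[d_A^2]\le\epsilon_B^{\rm round}\), with constants absorbed into the enlarged bounds.

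\textbf{Main obstacle.} The hardest step is making the self-simulated run legitimate inside Lemma~\ref{lem:finite-query-bound}. That run's public queries read row \(k'\) of whichever public oracle is in use, so the simulated run must behave as a world-1 run in both worlds. In world 1 the public oracle is \(R'\), and in world 2 it is \(R\); in each case \(U_{k'}\) must be read from the current oracle, which it is by construction. The remaining concern is the event \(k'=k\), which occurs with probability \(2^{-\kappa}\); we absorb it into the bound or resample \(k'\) to avoid it. We also need to check that simulating \(\mathsf{Select}(U_{k'})\) through \(Q\) is exact, including coherent control and a consistent global phase. This holds because Lemma~\ref{lem:public-query-normal-form} gives exact controlled simulations and every completion agrees on these witness-free encodings.
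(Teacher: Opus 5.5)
Your route is the paper's two-run argument from \cite[Corollary~32]{KQST}. Both runs share one fixed \((R,Q)\) table. One run uses the external challenge. The other uses either a self-simulated \(U_{k'}\), costing \(10nt\) row-bit queries, or a private independent-row unitary treated as an oracle-independent interquery operation.

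The one real defect is the combination step, which loses a factor of two. With the ``both accept'' test, \(B_1\) has gap \(p_A d_A\) and \(B_2\) has gap \(q_A d_A\). Your coin-and-negate algorithm therefore has gap exactly \(d_A^2/2\), and applying Lemma~\ref{lem:finite-query-bound} to \(B_1,B_2\) separately loses the same factor. What you prove is \(\mathbb E_{R,Q}[d_A^2]\le2\epsilon_B^{\rm round}\), not \eqref{eq:second-moment}. The factor cannot be ``absorbed into the enlarged bounds'', because \(\epsilon_B^{\rm round}\) is the specific output of Lemma~\ref{lem:finite-query-bound} at the stated resources.

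The fix is to test equality of outputs rather than joint acceptance, as the paper does. Sample a bit \(b\). Run the internal copy with \(U_J\) if \(b=0\) and with \(W=U(T)\) if \(b=1\), obtaining \(X\). Run the external copy, obtaining \(Y\), and accept iff \(X\oplus Y=b\). With \(s\) the external run's acceptance probability,
\[
 \Pr[B_W=1\mid R,Q]=\tfrac12+(p_A-q_{A,W})(s-\tfrac12).
\]
Hence \(d_{B_W}=(p_A-q_{A,W})d_A\) and \(\mathbb E_T[d_{B_{U(T)}}]=d_A^2\), with no constant lost. Equivalently, give your \(B_1\) and \(B_2\) the tests ``accept iff \(X=Y\)'' and ``accept iff \(X\ne Y\)''. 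Their gaps are \((2p_A-1)d_A\) and \((1-2q_A)d_A\), and the coin mixture of the two has gap exactly \(d_A^2\).

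Your ``main obstacle'' is not an obstacle. In the definition of \(d_B(R,Q)\), both worlds use the same fixed public oracle \((O_R,Q)\). The coupling \(R'=R^{k\leftarrow S}\) lives inside the proof of Lemma~\ref{lem:finite-query-bound} and plays no role when you apply that lemma as a black box.

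Write \(a(k)\) for the acceptance probability of \(A^{Q;U_k}\) at fixed \((R,Q)\). The two runs are conditionally independent given \((R,Q,k,k')\), and \(k'\) is independent of \(k\). So the joint acceptance is \(\mathbb E_{k,k'}[a(k)a(k')]=p_A^2\) exactly, with the diagonal \(k=k'\) included. Resampling \(k'\) to avoid \(k\) is impossible, since \(B\) does not know \(k\). It would also spoil the exact identity.
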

\begin{proof}
For a fixed private unitary \(W\), write \(q_{A,W}(R,Q)\) for \(A\)'s
acceptance probability with challenge \(W\). Define \(B_W\) as follows.
Sample a fair bit \(b\). Run \(A\) with a fresh keyed challenge \(U_J\)
computed from \(R\) if \(b=0\), and with \(W\) if \(b=1\), obtaining
\(X\). Run a fresh copy of \(A\) on the external challenge, obtaining
\(Y\), and accept if \(X\oplus Y=b\). The two runs use the same
\((R,Q)\), independent local randomness and fresh work registers;
\(J\) is independent of the external key. Conditional on \((R,Q)\),
if the second run accepts with probability \(s\), then
\[
 \Pr[B_W=1\mid R,Q]=\tfrac12+
       (p_A(R,Q)-q_{A,W}(R,Q))(s-\tfrac12).
\]
For an independent uniform row \(T\), \(\mathbb E_T[q_{A,U(T)}(R,Q)]=q_A(R,Q)\).
Subtracting the two external-challenge cases therefore gives
\begin{equation}
 \begin{aligned}
 d_{B_W}(R,Q)&=(p_A(R,Q)-q_{A,W}(R,Q))d_A(R,Q),\\
 \mathbb E_T[d_{B_{U(T)}}(R,Q)]&=d_A(R,Q)^2.
 \end{aligned}
 \label{eq:predictor-identity}
\end{equation}
Only the second run uses the external challenge. Each \(U_J\) query
costs at most \(10n\) exact row-bit queries by
Appendix~\ref{sec:phase-implementation}; encode them through \(Q\).
This gives the displayed resources. Treat \(W\) as a fixed interquery
operation and apply the uniform bound of Lemma~\ref{lem:finite-query-bound}.
Average over \(T\) in \eqref{eq:predictor-identity}. The same private
table represents both runs; no completion bits are resampled.
\end{proof}

\begin{remark}[The Haar gap]\label{rem:haar-second-moment}
Equation~\eqref{eq:worlds-two-three} adds \(\Delta_t\) to the signed
bound. In the two-run proof sample the private \(W\) from Haar measure
instead. The same identity and the bound uniform in \(W\) give
\(\mathbb E_{R,Q}[(d_A^{\mathrm{Haar}}(R,Q))^2]\le\epsilon_B^{\rm round}+\Delta_t\).
The private-unitary algorithms need not be efficient or uniform;
only the original adversaries enter the countable intersection below.
\end{remark}

\subsection{Fixing the oracles}\label{sec:fixed-oracle-conclusion}
For \(\kappa\le n(\kappa)\le\poly(\kappa)\) and \(N=2^n\ge K=2^\kappa\),
the previous sections give
\begin{equation}
 \begin{aligned}
 \alpha&=6\cdot2^{-\lfloor\kappa/16\rfloor}=O(K^{-1/16}),&
 \varepsilon_{\rm stab}&\le2^{-\kappa^2},\\
 \phi_{\rm core},\lambda_t&\le\poly(\kappa)N^{-1/2},&
 e_t,\Delta_t&=O(t^2N^{-1/8}).
 \end{aligned}
 \label{eq:coarse-rates}
\end{equation}
All hypotheses hold eventually for fixed polynomial resources, with
restriction depth chosen as in \eqref{eq:scalar-depth}. Consequently
\(\sqrt{\alpha+\lambda_t}+e_t\le\poly(\kappa)2^{-\kappa/32}\);
\(\sqrt\alpha\) determines the exponent. Apply this also with the
\emph{enlarged two-run resources}, recomputing \(\lambda_t\) and the
simultaneous scalar event for those bounds. Let \(d(R,Q)\) denote either
\(d_A(R,Q)\) or \(d_A^{\mathrm{Haar}}(R,Q)\). One polynomial \(P_A\ge1\) gives
\begin{equation}
 |\mathbb E_{R,Q}[d(R,Q)]|,\quad\mathbb E_{R,Q}[d(R,Q)^2]
 \le P_A(\kappa)2^{-\kappa/32}.
 \label{eq:mean-and-moment}
\end{equation}
Here \(P_A\) may depend on the adversary, dimension rule, finite verifier
family, and enlarged resources. The separate absolute-mean bound is
\(\mathbb E_{R,Q}[|d(R,Q)|]\le\sqrt{P_A(\kappa)}2^{-\kappa/64}\), by Cauchy--Schwarz.

\begin{proposition}[A fixed public oracle for all uniform adversaries]
\label{prop:fixed-oracle-comparison}
There is a probability-one set of infinite pairs \((R,Q)\) on which
\(|d_A(R,Q)|,|d_A^{\mathrm{Haar}}(R,Q)|=O(2^{-\kappa/128})\) eventually for every
uniform polynomial-time adversary and permitted fixed finite verifier
family, each with its own polynomial resources. The cutoff may depend
on the pair, adversary, family, and resources. For almost every \(R\),
this set has conditional \(\nu_R\)-measure one.
\end{proposition}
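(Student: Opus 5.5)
The argument is a Markov plus Borel--Cantelli step applied to the second-moment bound \eqref{eq:mean-and-moment}, followed by a countable intersection. Fix one uniform polynomial-time adversary \(A\), one permitted finite verifier family, and polynomial resource bounds. There are countably many such triples. Let \(d(R,Q)\) be either \(d_A(R,Q)\) or \(d_A^{\mathrm{Haar}}(R,Q)\). By Lemma~\ref{lem:exact-second-moment} and Remark~\ref{rem:haar-second-moment}, with the rates \eqref{eq:coarse-rates}, for all sufficiently large \(\kappa\) we have
\[
 \mathbb E_{R,Q}[d(R,Q)^2]\le P_A(\kappa)2^{-\kappa/32}.
\]
Here \((R,Q)\) is the infinite joint sample, and \(d\) at parameter \(\kappa\) depends only on a finite marginal of it. Markov's inequality applied to \(d^2\) gives
\[
 \Pr_{R,Q}\bigl[|d(R,Q)|>2^{-\kappa/128}\bigr]
 \le 2^{\kappa/64}P_A(\kappa)2^{-\kappa/32}
 =P_A(\kappa)2^{-\kappa/64}.
\]
This is summable over \(\kappa\). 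By Borel--Cantelli, almost surely only finitely many \(\kappa\) are exceptional. So on a probability-one set \(E_{A,\mathcal F,\mathrm{res}}\) we have \(|d(R,Q)|\le 2^{-\kappa/128}\) for all \(\kappa\) beyond a cutoff, and that cutoff depends on the pair.

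Next, intersect the sets \(E_{A,\mathcal F,\mathrm{res}}\) over all uniform adversaries (countably many Turing machine descriptions), all finite verifier families (finite subsets of a countable set of descriptions), all integer polynomial resource bounds, and both choices of \(d\). A countable intersection of probability-one sets has probability one. Every actual uniform polynomial-time adversary is dominated by some listed resource bound, so it is covered. On this intersection every such adversary satisfies the claimed eventual bound \(O(2^{-\kappa/128})\), with the cutoff depending on the pair, adversary, family and resources, as the proposition allows.

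For the conditional statement, write \(\mathcal E\) for the intersection. Then
\[
 1=\Pr[\mathcal E]=\mathbb E_R\bigl[\nu_R(\{Q:(R,Q)\in\mathcal E\})\bigr].
\]
The integrand is at most one, so it equals one for almost every \(R\) (Fubini/disintegration of the joint law \(R\) followed by \(Q\sim\nu_R\)).

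The main obstacle is checking that \eqref{eq:mean-and-moment} really holds for the fixed adversary at every large \(\kappa\), with one polynomial \(P_A\). The simultaneous scalar-stability event and \(\lambda_t\) must be recomputed for the enlarged two-run resources, and the conditions \(3(2t)^{16}\le N\) and the thresholds in Section~\ref{sec:weights} hold only eventually. For the finitely many small \(\kappa\), I would bound \(d^2\le1\), which does not affect summability. I would also check measurability: \(d\) at each \(\kappa\) is a function of finitely many oracle bits and completion coins, so the events are measurable.
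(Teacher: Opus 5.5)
Your proposal is correct and follows the paper's proof essentially step for step. Both apply Markov's inequality to the second moment in \eqref{eq:mean-and-moment} at threshold $2^{-\kappa/128}$, which gives the summable tail $P_A(\kappa)2^{-\kappa/64}$. Both then use Borel--Cantelli for each fixed adversary, family, and resource bound, intersect countably over these choices and both gaps, and obtain the conditional statement from the identity $1=\mathbb E_R[\nu_R(\{Q:(R,Q)\in\mathcal G\})]$. Your extra handling of small $\kappa$ and measurability is harmless bookkeeping that the paper leaves implicit.
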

\begin{proof}
For either gap, Markov's inequality gives
\(\Pr[|d(R,Q)|>\epsilon]\le\min\{1,\mathbb E[d(R,Q)^2]/\epsilon^2\}\).
In particular,
\begin{equation}
 \Pr_{R,Q}[|d(R,Q)|>2^{-\kappa/128}]
 \le P_A(\kappa)2^{-\kappa/64}.
 \label{eq:summable-tail}
\end{equation}
This is summable in \(\kappa\). Borel--Cantelli gives the eventual
bound for each fixed adversary, finite verifier family, and polynomial
resource bound. These choices are countable; intersect
their probability-one events, for both gaps, to obtain \(\mathcal G\).
All parameters use the same infinite pair \((R,Q)\); no independence
between parameters is required. Finally,
\begin{equation}
 1=\Pr[(R,Q)\in\mathcal G]
   =\mathbb E_R[\nu_R(\{Q:(R,Q)\in\mathcal G\})]
 \label{eq:conditional-fixing}
\end{equation}
implies the conditional assertion. The phase-synthesis error
\(\poly(\kappa)2^{-\kappa^2}\) from
Appendix~\ref{sec:phase-implementation} is absorbed in the final bound.
The adversaries are uniform, without oracle-dependent advice, and
verifier inputs remain classical.
\end{proof}

This proves Lemma~\ref{lem:row-decoupling} and, using the world-2/world-3
comparison, Theorem~\ref{thm:main}.

\begin{remark}[Sharper eventual rates]\label{rem:sharper-fixed-rates}
Every fixed \(0<c<1/64\) also works: the failure bound at threshold
\(2^{-c\kappa}\) is \(P_A(\kappa)2^{-(1/32-2c)\kappa}\).
Intersect along \(c_j=(1-1/j)/64\), \(j\ge2\), to obtain a common
probability-one event for all such \(c\). At the endpoint, threshold
\(\kappa\sqrt{P_A(\kappa)}2^{-\kappa/64}\) gives failure probability
at most \(1/\kappa^2\), and hence an eventual bound with a polynomial
prefactor.
\end{remark}

\subsection{\texorpdfstring{Other verifier tables and \(\mathsf{PH}\) queries}{Other verifier tables and PH queries}}
\label{sec:other-verifiers}\label{sec:exact}
For a fixed ordinary \(\mathsf{QMA}\) verifier \(\mathcal W\) over the same gate set
and classical input \(x\), encode
its explicit circuit in the universal verifier, with register sizes in
unary and an injective identifying string \((j,\mathcal W,x)\).
Exact circuit execution gives the same maximum acceptance score over
all witnesses, for every base oracle; unused witness qubits are ignored.
Thus clipping gives the same answer probabilities. Distinct tagged
inputs use independent bits conditional on \(R\); repeated inputs reuse
the same bit, even across stages. Reversible encoding therefore preserves
the entire completion law of any fixed finite verifier family, including
off-promise entries. If a verifier has no gap inputs, its scores are
\(0\) or \(1\), so its language gate is deterministic. Since
\(\PH^R\subseteq\mathsf P^{O_R}\), compute, XOR, and uncompute also
implements every fixed \(\mathsf{PH}\) query coherently, including within verifiers.
For Appendix~\ref{app:qma2}, retain all table tags and use
\(\nu_R^{\rm mix}\) from \eqref{eq:qma2-joint-law}; encode base queries
through the ordinary table \(Q^{(1)}\).

\clearpage
\phantomsection
\addcontentsline{toc}{section}{References}
\begingroup\raggedright
\bibliographystyle{alpha}
\bibliography{references}
\endgroup

\clearpage

\appendix
\section{Collision-free encoding and core compression}\label{app:core}

We prove the encoding estimates on
\(\mathcal G_{\mathrm{good}}=\bigoplus_{s=0}^t\mathcal G_{\mathrm{good},s}\).
Throughout, \(N=2^n\), \(t\ge1\), and \(q>8t\); the construction uses
\(q=N\). Every map acts as the identity on untouched Hilbert references.

\subsection{Normalized memories and collision-free packing}

Use the normalized multiset convention of~\eqref{pubcore:record-relation-state}.
Put \(\mathcal T_q=\{f:\mathbb F_2^n\to\mathbb Z_q\}\),
\(\vec f=(f_2,f_3,f_4)\), and \(\xi_j(z)=e^{2\pi i f_j(z)/q}\).
For disjoint sets \(P,M\), define
\begin{equation}
 J_j\ket{P,M}
 =\frac1{q^{N/2}}\sum_{f_j\in\mathcal T_q}
       \left(\prod_{u\in P}\xi_j(u)
             \prod_{u\in M}\overline{\xi_j(u)}\right)\ket{f_j}.
 \label{pubcore:signed-set-packing}
\end{equation}
Character orthogonality gives
\[
\begin{aligned}
 \bra{P',M'}J_j^\dagger J_j\ket{P,M}
 &=\E_{f_j}\Bigl[\prod_u\xi_j(u)^{\mathbf1_P(u)-\mathbf1_M(u)
                         -\mathbf1_{P'}(u)+\mathbf1_{M'}(u)}\Bigr]\\
 &=\mathbf1[(P,M)=(P',M')].
\end{aligned}
\]
The endpoint registers contain all disjoint set pairs, with no size
cutoff or constraint linking their sizes to the middle memory. The latter has space
\[
 \mathcal F_{\le t}
 =\bigoplus_{\ell+r\le t}
       \operatorname{Sym}^{\ell}(\mathbb C^N)
       \otimes\operatorname{Sym}^{r}(\mathbb C^N).
\]
For a good record \((L,R)\), define
\begin{equation}
 J_D\ket{L,R}
 =\ket{X_+,X_-}\otimes\ket{D_+,D_-}
       \otimes\ket{Y_+,Y_-}.
 \label{pubcore:decoding-isometry}
\end{equation}
The six sets recover \((L,R)\) by~\eqref{eq:good-record-recovery}.
The memory Hadamards preserve each symmetric length sector, so
\[
 J_D^\dagger J_D=I,\qquad
 J_D^\dagger\mathsf H_{\mathrm{mem}}^\dagger
                \mathsf H_{\mathrm{mem}}J_D=I.
\]

Let \(\Pi_{\mathrm{cf}}\) retain middle multisets satisfying
\(m_{Z_+}(z)+m_{Z_-}(z)\le1\) for every \(z\). Define
\(J_{\mathrm{pack}}=J_2\otimes J_3\otimes J_4\) there and zero
on its orthogonal complement. Thus~\eqref{eq:all-table-packing}, with
phase order \(q\), gives
\[
 J_{\mathrm{pack}}^\dagger J_{\mathrm{pack}}=\Pi_{\mathrm{cf}}
\]
across all bidegrees. Its output contains only the three table registers.
Consequently,
\begin{equation}
\begin{aligned}
 T_{\mathrm{good}}
 &=J_{\mathrm{pack}}\Pi_{\mathrm{cf}}\mathsf H_{\mathrm{mem}}J_D,\\
 T_{\mathrm{good}}^\dagger T_{\mathrm{good}}
 &=J_D^\dagger\mathsf H_{\mathrm{mem}}^\dagger\Pi_{\mathrm{cf}}
                    \mathsf H_{\mathrm{mem}}J_D\preceq I.
\end{aligned}
 \label{pubcore:cf-factorization}
\end{equation}
In particular, \(T_{\mathrm{good}}\ket{\varnothing,\varnothing}
=\ket{\mathbf1}=\ket{\Omega_{234}}\), the uniform table purification.

For the column expansion, order
\(L=((x_i^+,y_i^+))_{i=1}^{\ell}\) and
\(R=((x_j^-,y_j^-))_{j=1}^{r}\).
Input symmetrization and output packing each contribute
\((\ell!r!)^{-1/2}\). Hence
\begin{equation}
\begin{aligned}
 &q^{3N/2}\bra{\vec f}T_{\mathrm{good}}\ket{L,R}\\
 &=\frac{\chi_{L,R}(\vec f)}{\ell!r!}
   \sum_{\pi\in S_\ell,\,\tau\in S_r}
   \sum_{\substack{z_1,\ldots,z_\ell,w_1,\ldots,w_r\\
                   \text{all distinct}}}\\
 &\qquad\cdot
     \prod_{i=1}^{\ell}H_{x_{\pi(i)}^+\oplus y_{\pi(i)}^+,z_i}\xi_3(z_i)
     \prod_{j=1}^{r}H_{x_{\tau(j)}^-\oplus y_{\tau(j)}^-,w_j}
                                      \overline{\xi_3(w_j)}\\
 &=\chi_{L,R}(\vec f)\mu_{L,R}^{\mathrm{cf}}(\vec f)
   =a_{L,R}^{\mathrm{cf}}(\vec f).
\end{aligned}
 \label{pubcore:packed-column}
\end{equation}
The \(\ell!r!\) permutation terms coincide after relabeling the
distinct locations. Thus \(T_{\mathrm{good}}\ket{L,R}
=\ket{a_{L,R}^{\mathrm{cf}}}\).
These are the same coefficient functions used in
Appendices~\ref{app:endpoint-differences} and~\ref{app:derivatives}.

\subsection{A two-address collision estimate}

\begin{lemma}[Collision-free encoding loss]
\label{pubcore:good-gram}
On the good-record space,
\begin{equation}
\begin{aligned}
 \|(I-\Pi_{\mathrm{cf}})\mathsf H_{\mathrm{mem}}J_D\|^2
 &\le\varepsilon_{\mathrm{cf},t}=\frac{2t(t-1)}N,\\
 0\preceq I-T_{\mathrm{good}}^\dagger T_{\mathrm{good}}
 &\preceq\varepsilon_{\mathrm{cf},t}I.
\end{aligned}
 \label{pubcore:good-gram-bound}
\end{equation}
\end{lemma}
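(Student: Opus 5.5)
The second claim follows from the first. By \eqref{pubcore:cf-factorization},
\[
 I-T_{\mathrm{good}}^\dagger T_{\mathrm{good}}
 =J_D^\dagger\mathsf H_{\mathrm{mem}}^\dagger(I-\Pi_{\mathrm{cf}})\mathsf H_{\mathrm{mem}}J_D .
\]
This uses \(J_D^\dagger\mathsf H_{\mathrm{mem}}^\dagger\mathsf H_{\mathrm{mem}}J_D=I\). The right side is positive semidefinite, and its norm equals \(\|(I-\Pi_{\mathrm{cf}})\mathsf H_{\mathrm{mem}}J_D\|^2\). So the whole task is to bound the squared norm that \(\mathsf H_{\mathrm{mem}}J_D\) places on middle multisets \((Z_+,Z_-)\) containing a repeated location. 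The bound must hold for an arbitrary superposition of good records, not just for basis vectors.

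I will proceed as follows.

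\textbf{Reduce to a single bidegree sector.} Record sizes and the endpoint registers are preserved, and both \(\mathsf H_{\mathrm{mem}}\) and \(\Pi_{\mathrm{cf}}\) are block diagonal in \((\ell,r)\). It therefore suffices to fix \(\ell+r=s\le t\).

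\textbf{Pass to ordered tuples.} On that sector, write the middle memory in unsymmetrized form, with ordered tuples \((z_1,\dots,z_\ell,w_1,\dots,w_r)\). The normalized symmetrization is an isometry from \(\operatorname{Sym}^\ell\otimes\operatorname{Sym}^r\) into \((\mathbb C^N)^{\otimes s}\). It commutes with \(H^{\otimes s}\). Moreover, \(I-\Pi_{\mathrm{cf}}\) corresponds to the projector \(\Pi_{\rm coll}\) onto tuples with some coincidence \(u_a=u_b\) for \(a\ne b\). Apply the union bound
\[
 \Pi_{\rm coll}\preceq\sum_{a<b}\Pi_{ab},\qquad \Pi_{ab}=\textstyle\sum_u\ket{u}\bra{u}_a\otimes\ket{u}\bra{u}_b .
\]
This leaves \(\binom{s}{2}\) two-address terms to bound.

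\textbf{Bound each two-address term.} The key identity is
\[
 (H\otimes H)^\dagger\Pi_{ab}(H\otimes H)=\tfrac1N\sum_{d,d',e}\ket{d}\bra{d'}\otimes\ket{d\oplus e}\bra{d'\oplus e}.
\]
This is \(\tfrac1N\) times the operator \(\sum_e\) of rank-one projections onto the vectors \(\sum_d\ket{d,d\oplus e}\). Equivalently, it equals \(\tfrac1N\Phi\Phi^\dagger\), where \(\Phi=\sum_e\ket{\phi_e}\bra{e}\) and \(\ket{\phi_e}=\sum_d\ket{d}\ket{d\oplus e}\).

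Now evaluate on a good superposition \(\sum_{L,R}c_{L,R}\,J_D\ket{L,R}\). The coordinates \(a,b\) carry the XOR labels \(d_a=x_a\oplus y_a\) and \(d_b=x_b\oplus y_b\) of two edges. The quadratic form groups amplitudes by the combined value \(e=d_a\oplus d_b\). Hence the norm is at most \(\tfrac1N\sum_e\bigl\|\sum_{\text{terms with }d_a\oplus d_b=e}(\cdots)\bigr\|^2\).

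The main obstacle is controlling these cross terms between different records. Two records \((L,R)\ne(L',R')\) interfere only if they share all other labels and endpoint sets, and their two distinguished XOR labels have the same sum. This is where goodness enters. Within one sign, \(d_a\oplus d_b=x_a\oplus x_b\oplus y_a\oplus y_b\). Given the other labels, the endpoint sets \(X_\sigma,Y_\sigma\) are fixed, and Cartesian XOR injectivity (condition~2, in the form \eqref{pubcore:same-sign-condition}) then pins down the unordered pair \(\{d_a,d_b\}\) from \(e\). So for fixed \(e\) and fixed remaining labels, at most one record contributes, up to the order of \(a,b\), and Cauchy–Schwarz costs a factor of at most \(2\). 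Across signs, condition~3 together with bijectivity gives the analogous injectivity.

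The result is
\[
 \bigl\|\Pi_{ab}H^{\otimes s}J_D\textstyle\sum c\,\ket{L,R}\bigr\|^2\le\tfrac{4}{N}\|c\|^2 .
\]
I expect this uniqueness step, done uniformly over superpositions, to be the delicate part. If it fails, I would fall back to bounding the Gram matrix row sums via Schur's test.

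\textbf{Sum the pairs.} Summing over the \(\binom{s}{2}\) pairs gives \(4\binom{s}{2}/N=2s(s-1)/N\le 2t(t-1)/N\), which is \(\varepsilon_{\mathrm{cf},t}\).
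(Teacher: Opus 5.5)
Your route is the paper's. The factorization reduces the second claim to the first, you pass to ordered middle tuples, dominate the collision projector by $\sum_{a<b}\Pi_{ab}$, and use the two-address identity. That identity says each entry of the conjugated pair projector is $\frac1N\mathbf 1[d_a\oplus d_b=d'_a\oplus d'_b]$ times agreement at all other positions. You then sum $\binom s2$ pair terms of size at most $4/N$.

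The gap is the uniqueness step for two positions of the \emph{same} sign, which is false. Fix the other labels and the endpoint block. Cartesian XOR injectivity then recovers all but two positive edges, leaving domains $\{x_1,x_2\}$ and images $\{y_1,y_2\}$. These admit two matchings, the original one and its two-edge swap, with label pairs $\{x_1\oplus y_1,x_2\oplus y_2\}$ and $\{x_1\oplus y_2,x_2\oplus y_1\}$. Injectivity makes the two pairs distinct, but both have XOR sum $x_1\oplus x_2\oplus y_1\oplus y_2$. So $e$ does not determine $\{d_a,d_b\}$, and both records may be good and interfere.

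If your uniqueness claim were true, the two orderings of a single record would add coherently, and your own computation would give $2/N$. That value is wrong. For $\ell=2$, $r=0$ with both matchings good, the compressed pair operator is $\frac2N\begin{pmatrix}1&1\\1&1\end{pmatrix}$, of norm $4/N$. So $4/N$ is the right (tight) constant, but your extra Cauchy--Schwarz factor of $2$ is not supported by the argument you gave.

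The repair is what the paper does. For each $e$ and each configuration of the remaining $s-2$ labels, at most two records contribute: $L$ and its swap $L^{\mathrm{sw}}$, each through both orderings of positions $a,b$. With $m=\ell!\,r!$, the contracted amplitude is $\frac{2}{\sqrt m}(c_L+c_{L^{\mathrm{sw}}})$. Cauchy--Schwarz across the two records bounds its square by $\frac8m(|c_L|^2+|c_{L^{\mathrm{sw}}}|^2)$. Each record occurs in exactly $m/2$ configurations, so the pair term is at most $\frac4N\|c\|^2$.

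The paper phrases the same count as a Schur row-sum bound on the compressed operator. The diagonal entry is $2/N$, and there are at most $\binom\ell2$ swapped records, each with entry $\frac{2}{N\binom\ell2}$.

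For one positive and one negative position the record is indeed unique, giving $1/N$. The reason, however, is Cartesian injectivity within each sign together with the single leftover domain and image of each sign; condition~3 is not used there.
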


\begin{proof}
Fix an endpoint block \(\mathcal E=(X_+,X_-,Y_+,Y_-)\), with
\(s=\ell+r\) and \(m=\ell!r!\). Suppress its endpoint factors in
\(J_D\) to obtain \(J_{\mathcal E}\). If
\(\mathcal O(L,R)\) is the set of separate orderings of \(D_+,D_-\), then
\[
 J_{\mathcal E}\ket{L,R}
   =\frac1{\sqrt m}\sum_{\mathbf d\in\mathcal O(L,R)}\ket{\mathbf d},
 \qquad |\mathcal O(L,R)|=m.
\]
Define 
\begin{equation}
 \Gamma_{\mathcal E}^{\mathrm{cf}}
 =J_{\mathcal E}^\dagger(H^{\otimes s})^\dagger
          \Pi_{\mathrm{cf}}H^{\otimes s}J_{\mathcal E}.
 \label{pubcore:cf-block-gram}
\end{equation}
In ordered coordinates, let \(P_{ij}\) impose equality of positions
\(i,j\), and put
\[
 B_{ij}=J_{\mathcal E}^\dagger(H^{\otimes s})^\dagger
                  P_{ij}H^{\otimes s}J_{\mathcal E}\succeq0.
\]
Since \(I-\Pi_{\mathrm{cf}}\preceq\sum_{i<j}P_{ij}\), it suffices
to bound these compressed pair operators. Next for two orderings $\ket{\mathbf d}$ and $\ket{\mathbf e}$, we bound
\begin{equation}
\begin{aligned}
 &\bra{\mathbf d}(H^{\otimes s})^\dagger
                  P_{ij}H^{\otimes s}\ket{\mathbf e}\\
 &=\left(\sum_z\overline{H_{d_i,z}H_{d_j,z}}
                         H_{e_i,z}H_{e_j,z}\right)
                         \prod_{k\ne i,j}\mathbf1[d_k=e_k]\\
 &=\frac{\mathbf1[d_i\oplus d_j=e_i\oplus e_j]}N
                         \prod_{k\ne i,j}\mathbf1[d_k=e_k].
\end{aligned}
 \label{pubcore:cf-pair-bound}
\end{equation}
Each matrix entry of \(B_{ij}\) therefore has normalization \(1/(Nm)\)
times the number of compatible pairs of orderings.

\emph{Opposite signs.}
The untouched labels recover all but one edge of each sign by
Cartesian XOR injectivity. The remaining domain and image of each
sign force the same record. Each input ordering has one compatible
output ordering, so
\[
 \bra{L',R'}B_{ij}\ket{L,R}
 =\frac{m}{Nm}\mathbf1[(L',R')=(L,R)],
 \qquad B_{ij}=\frac1N I.
\]

\emph{Two positive positions.}
The untouched labels recover all but two positive edges. The remaining
domains and images admit the original matching and its two-edge swap.
The ordering counts are
\[
\begin{aligned}
 \text{same record:}\qquad &\frac{2m}{Nm}=\frac2N,\\
 \text{specified swapped record:}\qquad
 &\frac{(2!)(2!)(\ell-2)!r!}{N\ell!r!}
       =\frac{2}{N\binom\ell2}.
\end{aligned}
\]
In the second line, the two selected input edges and two selected
output edges each have \(2!\) orders; the untouched input positions
have \((\ell-2)!r!\) orders and determine the output positions.
There are at most \(\binom\ell2\) swapped records; nongood targets
are absent. Thus every absolute row sum is at most
\[
 \frac2N+\binom\ell2\frac{2}{N\binom\ell2}=\frac4N.
\]
Symmetry gives the same column bound. Hence, 
\(\|B_{ij}\|\le4/N\). Two negative positions give the same bound
with \(\ell,r\) interchanged.

Summing over the \(\binom\ell2\) positive, \(\binom r2\) negative,
and \(\ell r\) opposite-sign pairs gives
\begin{equation}
\begin{aligned}
 0\preceq I-\Gamma_{\mathcal E}^{\mathrm{cf}}
 &\preceq\sum_{i<j}B_{ij}\\
 &\preceq\frac{4\binom\ell2+4\binom r2+\ell r}{N}I
 \preceq\frac{2s(s-1)}N I.
\end{aligned}
 \label{pubcore:good-row-bound}
\end{equation}
Take the orthogonal sum over endpoint blocks and use \(s\le t\).
Since \(\mathsf H_{\mathrm{mem}}J_D\) is an isometry,
\[
 \|(I-\Pi_{\mathrm{cf}})\mathsf H_{\mathrm{mem}}J_D\|^2
 =\|I-T_{\mathrm{good}}^\dagger T_{\mathrm{good}}\|
 \le\frac{2t(t-1)}N.
\]
\end{proof}

\subsection{\texorpdfstring{\(HFH\)}{HFH} records XORs}
\label{pubcore:xor-recording}

Let \(f:\mathbb F_2^n\to\mathbb Z_q\) be uniform, put
\(\omega=e^{2\pi i/q}\), and write
\[
 \mathsf F\ket{z}\ket{f}=\omega^{f(z)}\ket{z}\ket{f},
 \qquad
 \ket{\Omega}=q^{-N/2}\sum_f\ket{f},
 \qquad
 \ket{\zeta_z}=q^{-N/2}\sum_f\omega^{f(z)}\ket{f}.
\]
Since
\(\langle\zeta_{z'}\mid\zeta_z\rangle
=\E_f\bigl[\omega^{f(z)-f(z')}\bigr]=\mathbf1[z=z']\), the states
\(\ket{d}_{\mathrm{xor}}:=\sum_zH_{z,d}\ket{\zeta_z}\)
are orthonormal. Notice that
\begin{equation}
 \sum_zH_{y,z}H_{z,x}H_{z,d}
 =\frac1{N\sqrt N}\sum_z(-1)^{(x\oplus y\oplus d)\cdot z}
 =\frac1{\sqrt N}\mathbf1[d=x\oplus y]
 \label{pubcore:xor-hadamard-identity}
\end{equation}
and hence
\begin{equation}
 (H_A\mathsf F H_A)\ket{x}\ket{\Omega}
 =\frac1{\sqrt N}\sum_y\ket{y}\ket{x\oplus y}_{\mathrm{xor}}.
 \label{pubcore:empty-xor-recording}
\end{equation}
Thus \(HFH\) records the XOR of its input and output in a basis of
its private table register. This basis change is only for the analysis.
The next lemma gives the corresponding update on a populated memory.

Use the signed symmetric memory \(\mathcal F_{\le t}\).
For a multiset \(M_\sigma\), let \(m_{M_\sigma}(z)\) count occurrences
of \(z\). Normalized insertion and deletion are
\begin{equation}
 a_{\sigma,z}^\dagger\ket{M_\sigma}
 =\sqrt{m_{M_\sigma}(z)+1}\,\ket{M_\sigma\cup\{z\}},
 \qquad
 a_{\sigma,z}\ket{M_\sigma}
 =\sqrt{m_{M_\sigma}(z)}\,\ket{M_\sigma\setminus\{z\}}.
 \label{pubcore:normalized-insertion}
\end{equation}
Deletion is zero when \(z\) is absent. Define insertion through total
length \(t+1\) and use \(\Pi_{\le t}\) on both sides of every update;
the cutoff deletion is the exact adjoint of the cutoff insertion.
Let \(J_{\mathrm{cf}}\) be the single-table packing map
\eqref{pubcore:signed-set-packing} on collision-free signed sets,
extended by zero to the rest of \(\mathcal F_{\le t}\). Put
\[
 T_{\mathrm{mid}}=J_{\mathrm{cf}}\mathsf H_{\mathrm{mem}},
 \qquad
 \Pi_{\mathrm{rot}}=\mathsf H_{\mathrm{mem}}^\dagger
                      \Pi_{\mathrm{cf}}\mathsf H_{\mathrm{mem}}.
\]
Here \(\Pi_{\mathrm{rot}}\) tests distinct table locations after the
memory Hadamards; it is not the projector onto distinct XOR labels.

\begin{lemma}[Hadamard-conjugated phase recording]
\label{pubcore:xor-recorder}
Define the bounded update
\begin{equation}
 \mathsf X_t=\Pi_{\le t}
   \left[\frac1{\sqrt N}\sum_{x,y}\ket{y}\bra{x}_A\otimes
        (a_{+,x\oplus y}^\dagger+a_{-,x\oplus y})\right]\Pi_{\le t}.
 \label{pubcore:xor-update}
\end{equation}
Then
\begin{equation}
\begin{aligned}
 T_{\mathrm{mid}}^\dagger(H_A\mathsf F H_A)T_{\mathrm{mid}}
    &=\Pi_{\mathrm{rot}}\mathsf X_t\Pi_{\mathrm{rot}},\\
 T_{\mathrm{mid}}^\dagger T_{\mathrm{mid}}&=\Pi_{\mathrm{rot}},
 \qquad
 \|\mathsf X_t\|\le\sqrt{t+1}+\sqrt t\le2\sqrt{t+1}.
\end{aligned}
 \label{pubcore:middle-compression}
\end{equation}
Replacing \(\mathsf F\) by \(\mathsf F^\dagger\) replaces
\(\mathsf X_t\) by \(\mathsf X_t^\dagger\), with the same encoding.
\end{lemma}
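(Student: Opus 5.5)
The plan is to prove the identity entrywise in the basis of ordered signed multisets, using the Hadamard-transformed memory basis so that $T_{\mathrm{mid}}$ becomes a concrete packing map. Set $\ket{M_+,M_-}$ for a signed memory basis state and $\mathsf H_{\mathrm{mem}}\ket{M_+,M_-}$ for its image. The identity $T_{\mathrm{mid}}^\dagger T_{\mathrm{mid}}=\Pi_{\mathrm{rot}}$ is immediate. Since $J_{\mathrm{cf}}^\dagger J_{\mathrm{cf}}=\Pi_{\mathrm{cf}}$ by character orthogonality as in \eqref{pubcore:signed-set-packing}, we get
\[
 T_{\mathrm{mid}}^\dagger T_{\mathrm{mid}}=\mathsf H_{\mathrm{mem}}^\dagger\Pi_{\mathrm{cf}}\mathsf H_{\mathrm{mem}}.
\]

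For the main identity, I would first compute the action of $H_A\mathsf F H_A$ on a packed collision-free state $J_{\mathrm{cf}}\ket{P,M}$ with $P\cap M=\varnothing$ (table locations). The intermediate Hadamard spreads $\ket{x}$ to $\sum_z H_{z,x}\ket{z}$. Then $\mathsf F$ multiplies by $\xi(z)$, which changes the signed set as follows:
\begin{itemize}
\item if $z\notin P\cup M$, it adds $z$ to $P$;
\item if $z\in M$, it removes $z$ from $M$;
\item if $z\in P$, it gives $\xi(z)^2$, a state orthogonal to every collision-free packed state.
\end{itemize}
Hence, after compressing with $J_{\mathrm{cf}}^\dagger$, only the first two cases survive. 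They are exactly the location-basis operator $\sum_z\ket{z}\bra{z}_A\otimes(b_{+,z}^\dagger+b_{-,z})$, restricted on both sides to collision-free sets, where $b$ are the set insertion and deletion operators. On collision-free sets the multiplicity factors in \eqref{pubcore:normalized-insertion} are $1$, so $b=a$ there.

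Conjugating by the outer Hadamards on $A$ and by $\mathsf H_{\mathrm{mem}}$ then transforms $\ket{z}\bra{z}_A\otimes a^\dagger_{+,z}$ into the XOR form. The identity \eqref{pubcore:xor-hadamard-identity} gives
\[
 \sum_z H_{y,z}H_{z,x}\,(\text{Hadamard image of }a^\dagger_{\sigma,z})=\frac{1}{\sqrt N}\,a^\dagger_{\sigma,x\oplus y},
\]
because a single-mode Hadamard commutes with symmetric insertion covariantly: $\mathsf H_{\mathrm{mem}}^\dagger a_{\sigma,z}^\dagger\mathsf H_{\mathrm{mem}}=\sum_d H_{z,d}a^\dagger_{\sigma,d}$, and $H$ is real and symmetric. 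The deletion term is its adjoint and transforms the same way. This produces $\mathsf X_t$ sandwiched by $\Pi_{\mathrm{rot}}$.

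The main obstacle is the bookkeeping at the boundaries. First, the projections $\Pi_{\mathrm{cf}}$ in location coordinates must be placed correctly on both sides, so that the doubled-phase branch and the overflow beyond length $t$ vanish exactly. Second, the cutoff $\Pi_{\le t}$ must commute with $\mathsf H_{\mathrm{mem}}$, which holds since each length sector is preserved. The norm bound follows because insertion from length $\le t-1$ into a symmetric space has norm $\le\sqrt{t}$ on each mode. I would bound $\mathsf X_t$ by writing it as $(H\otimes\mathsf H_{\mathrm{mem}})^\dagger(\sum_z\ket z\bra z\otimes(a^\dagger_{+,z}+a_{-,z}))(H\otimes\mathsf H_{\mathrm{mem}})$. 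This is block diagonal in $z$, so its norm is at most $\max_z\|a^\dagger_{+,z}\Pi_{\le t}\|+\|a_{-,z}\Pi_{\le t}\|\le\sqrt{t+1}+\sqrt t$.

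Finally, replacing $\mathsf F$ by $\mathsf F^\dagger$ swaps the roles of the signs: it adds to $M$ and deletes from $P$. This gives $a^\dagger_{-}+a_{+}$, which is $\mathsf X_t^\dagger$ because $\sum_{x,y}\ket{x}\bra{y}$ is symmetric under the relabeling $x\leftrightarrow y$ with the same $x\oplus y$.
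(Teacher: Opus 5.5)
Your proposal is correct and follows the paper's proof essentially step for step. Both compress $\mathsf F$ through $J_{\mathrm{cf}}$ to the location-basis update $\sum_z\ket{z}\bra{z}_A\otimes(a^\dagger_{+,z}+a_{-,z})$, with collision-free and length cutoffs on both sides, then conjugate by $H_A$ and $\mathsf H_{\mathrm{mem}}$ using insertion covariance and the XOR Hadamard identity, bound the norm blockwise in $z$, and take adjoints for $\mathsf F^\dagger$. One point deserves to be stated explicitly: the orthogonality you use (of the doubled-phase branch, and of distinct signed-set characters) requires the phase order to exceed $3$, which the paper secures by invoking $q>8t$ to rule out modular aliases.
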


\begin{proof}
In the packed basis, a positive phase increment retains exactly
\(0\mapsto+1\) and \(-1\mapsto0\), each with coefficient one.
The first inserts a positive entry; the second deletes a negative
entry. The transition \(+1\mapsto+2\) has no collision-free target.
Since \(q>8t\), these counters have no modular aliases, and a counter
vector determines its signed sets and their sizes. Thus, on the full
sum of length and sign sectors,
\begin{equation}
 J_{\mathrm{cf}}^\dagger\mathsf FJ_{\mathrm{cf}}
 =\Pi_{\mathrm{cf}}\Pi_{\le t}
   \left[\sum_z\ket{z}\bra{z}_A\otimes
                  (a_{+,z}^\dagger+a_{-,z})\right]
       \Pi_{\le t}\Pi_{\mathrm{cf}},
 \qquad J_{\mathrm{cf}}^\dagger J_{\mathrm{cf}}=\Pi_{\mathrm{cf}}.
 \label{pubcore:local-packed-update}
\end{equation}

The change-of-basis argument is similar to the proofs of Claims~20
and~21 in~\cite[Section~10.1]{MH}, with locations in place of edges.
For a symmetric \(k\)-tensor \(\ket{v}\), normalized insertion is
\[
 a^\dagger(\ket{u})\ket{v}
 =\sqrt{k+1}\,\operatorname{Sym}_{k+1}(\ket{u}\otimes\ket{v}),
\]
where \(\operatorname{Sym}_{k+1}\) is the orthogonal symmetrizer.
It commutes with \(H^{\otimes(k+1)}\). Hence, between matching
length sectors,
\begin{equation}
 \mathsf H_{\mathrm{mem}}^\dagger a_{\sigma,z}^\dagger
           \mathsf H_{\mathrm{mem}}
       =\sum_d H_{z,d}a_{\sigma,d}^\dagger,
 \qquad
 \mathsf H_{\mathrm{mem}}^\dagger a_{\sigma,z}
           \mathsf H_{\mathrm{mem}}
       =\sum_d H_{z,d}a_{\sigma,d}.
 \label{pubcore:insertion-covariance}
\end{equation}
The second identity is the adjoint of the first, since \(H\) is real.
Conjugating the bracketed update in~\eqref{pubcore:local-packed-update},
with its length cutoffs, by the query and memory Hadamards gives \(\mathsf X_t\) by
\eqref{pubcore:xor-hadamard-identity}. The length cutoff commutes
with these Hadamards, and the collision projections become
\(\Pi_{\mathrm{rot}}\), proving the two identities in
\eqref{pubcore:middle-compression}.

The address controls in~\eqref{pubcore:local-packed-update} are
orthogonal. Since
\(\|a_{+,z}^\dagger\Pi_{\le t}\|\le\sqrt{t+1}\) and
\(\|a_{-,z}\Pi_{\le t}\|\le\sqrt t\), the update between its
collision projections has norm at most \(\sqrt{t+1}+\sqrt t\)
even before those projections are applied. Unitary change of basis
gives the same bound for \(\mathsf X_t\). Taking adjoints proves
the negative-query identity.
\end{proof}

The lemma is exact with its two collision projections. The small loss
from removing them will come from Lemma~\ref{pubcore:good-gram} on
\(J_D\), which also retains the endpoint records.

\subsection{Applying the XOR recorder to the core}
\label{pubcore:all-branch-proof}

There are two losses to bound. The Hadamards give each allowed insertion
amplitude \(1/\sqrt N\), whereas the recorder uses \(1/\sqrt{N-s}\)
on a source of size \(s\). The collision projection also removes
assignments with repeated middle locations. We bound the amplitude
change first, then the loss from this projection.

\paragraph{Changing the amplitude.}
Let \(\widetilde W\) use the same allowed insertions and adjoint
deletions as \(W_{\mathrm{good}}\), all with amplitude \(1/\sqrt N\):
\begin{equation}
\begin{aligned}
 \widetilde W\ket{x,L,R}
 =\frac1{\sqrt N}\biggl(&\sum_{y\in f_+(x;L,R)}
             \ket{y,L\cup\{(x,y)\},R}\\
       &+\sum_{y:\,(x,y)\in R}\ket{y,L,R\setminus\{(x,y)\}}\biggr).
\end{aligned}
 \label{pubcore:creation-maps}
\end{equation}
In terms of the creation
blocks in~\eqref{eq:allowed-creations}, where \(s\) is the source size
of the creation map,
\begin{align}
 W_{\mathrm{good}}
 &=\sum_{s=0}^{t-1}
       (\mathsf C_{+,\to,s}+\mathsf C_{-,\leftarrow,s}^\dagger),
 \label{pubcore:restricted-recorder}\\
 \widetilde W
 &=\sum_{s=0}^{t-1}\sqrt{1-s/N}\,
       (\mathsf C_{+,\to,s}+\mathsf C_{-,\leftarrow,s}^\dagger).
 \label{eq:raw-allowed-creations}
\end{align}
Thus deletion from size \(s+1\) uses the same factor as insertion
from size \(s\). The creation maps are contractions with orthogonal
size blocks, so
\begin{equation}
 \|\widetilde W-W_{\mathrm{good}}\|
 \le2\max_{s<t}(1-\sqrt{1-s/N})\le\frac{2t}{N}.
 \label{pubcore:amplitude-loss}
\end{equation}

\paragraph{Attaching the endpoint records.}
For \(j=2,4\), the endpoint phase adds one to the exponent at the
query address. Its action in signed-set coordinates is
\begin{equation}
\begin{aligned}
 \mathsf E_j&:=J_j^\dagger\mathsf F_jJ_j,\\
 \mathsf E_j\ket{x}\ket{P,M}
 &=\ket{x}\otimes
 \begin{cases}
  \ket{P\cup\{x\},M},&x\notin P\cup M,\\
  \ket{P,M\setminus\{x\}},&x\in M,\\
  0,&x\in P.
 \end{cases}
\end{aligned}
 \label{pubcore:endpoint-set-update}
\end{equation}
Distinct retained inputs have distinct outputs, so
\(\|\mathsf E_j\|\le1\). Define the formal core in XOR coordinates by
\begin{equation}
 \widehat C:=\mathsf E_4\mathsf X_t\mathsf E_2.
 \label{pubcore:bounded-insertion-core}
\end{equation}
Lemma~\ref{pubcore:xor-recorder}, applied to the middle table, gives
\begin{equation}
 \|\widehat C\|\le2\sqrt{t+1}
 \label{pubcore:insertion-core-norm}
\end{equation}
and
\begin{equation}
 T_{\mathrm{good}}^\dagger\mathsf C T_{\mathrm{good}}
 =J_D^\dagger\Pi_{\mathrm{rot}}\widehat C\Pi_{\mathrm{rot}}J_D.
 \label{pubcore:packed-core}
\end{equation}
Indeed, each endpoint table occurs once in \(\mathsf C\), and its
packing maps commute with the other layers. Its compression therefore
gives \(\mathsf E_2\) or \(\mathsf E_4\); the middle compression is
\(\Pi_{\mathrm{rot}}\mathsf X_t\Pi_{\mathrm{rot}}\).
The endpoint updates commute with \(\Pi_{\mathrm{rot}}\).

On good records, the endpoint sets and middle labels have equal
cardinalities for each sign. A positive insertion by \(\mathsf X_t\)
therefore forces both endpoint updates to insert fresh labels
\(x,y\). Cartesian XOR injectivity in the target identifies the
new edge as \((x,y)\). A negative deletion instead forces both
endpoint updates to erase negative labels; injectivity in the source
\(X_-\times Y_-\) forces them to belong to that same edge.
The respective middle multiplicities are zero and one, so both
coefficients are one. The good source and target conditions and the
two-sided cutoffs retain exactly the allowed updates of
\(\widetilde W\), each with amplitude \(1/\sqrt N\). Thus
\begin{equation}
 J_D^\dagger\widehat C J_D=\widetilde W.
 \label{eq:raw-core-compression}
\end{equation}

\begin{proof}[Proof of Proposition~\ref{pubcore:compression}]
Lemma~\ref{pubcore:good-gram} gives
\[
 \|(I-\Pi_{\mathrm{rot}})J_D\|
 =\|(I-\Pi_{\mathrm{cf}})\mathsf H_{\mathrm{mem}}J_D\|
 \le\sqrt{\varepsilon_{\mathrm{cf},t}}.
\]
Removing the two collision projections gives
\[
 \Pi_{\mathrm{rot}}\widehat C\Pi_{\mathrm{rot}}-\widehat C
 =-(I-\Pi_{\mathrm{rot}})\widehat C\Pi_{\mathrm{rot}}
       -\widehat C(I-\Pi_{\mathrm{rot}}).
\]
By \eqref{pubcore:packed-core}, \eqref{eq:raw-core-compression}, and
\(\|J_D\|=1\),
\begin{equation}
\begin{aligned}
 \|T_{\mathrm{good}}^\dagger\mathsf C T_{\mathrm{good}}-\widetilde W\|
 &\le2\|\widehat C\|
          \|(I-\Pi_{\mathrm{rot}})J_D\|\\
 &\le4\sqrt{t+1}\sqrt{\varepsilon_{\mathrm{cf},t}}.
\end{aligned}
 \label{pubcore:compression-unnormalized}
\end{equation}
The last inequality uses Lemma~\ref{pubcore:good-gram} and
\eqref{pubcore:insertion-core-norm}. Adding the amplitude loss
\eqref{pubcore:amplitude-loss} gives
\begin{equation}
 \|T_{\mathrm{good}}^\dagger\mathsf C T_{\mathrm{good}}-W_{\mathrm{good}}\|
 \le4\sqrt{t+1}\sqrt{\varepsilon_{\mathrm{cf},t}}+\frac{2t}{N}
 \le\eta_t=\frac{8t^2}{\sqrt N},
 \label{pubcore:detailed-compression}
\end{equation}
since, for \(t,N\ge1\),
\[
 \frac{4\sqrt{2(t^3-t)}}{\sqrt N}+\frac{2t}{N}
 \le(4\sqrt2+2)\frac{t^2}{\sqrt N}
 <\frac{8t^2}{\sqrt N}.
\]

For the conjugate sampled core, the same lemma gives
\(\mathsf E_4^\dagger\mathsf X_t^\dagger\mathsf E_2^\dagger\).
The same calculation
gives the bound for the directly defined
\(\overline W_{\mathrm{good}}
=\mathsf C_{-,\to}+\mathsf C_{+,\leftarrow}^\dagger\).
Adjoints give the inverse and transpose bounds, while
\eqref{pubcore:contractive-encoding} bounds idle by
\(\varepsilon_{\mathrm{cf},t}\le\eta_t\).
Orthogonality of the five control blocks makes the controlled error
the maximum branch error, at most \(\eta_t\). All branches use the
same encoding and global phase, and act as identity on
untouched reference registers.
\end{proof}

\section{Boundary twirls and defect calculations}\label{app:prefix}

The boundary estimates in Section~\ref{sec:pub-boundary-details}
follow from the two-copy twirls below. We use the normalized symmetric
embedding~\eqref{pubcore:record-relation-state}.

\subsection{Phase--Hadamard twirls}
The phases below have order \(q\ge3\), or are independent uniform
unit-circle phases.

For \(a\in\mathbb F_2^n\), define
\[
 Q_a^\oplus=\sum_u\ket{u,u\oplus a}\bra{u,u\oplus a},\qquad
 \mathsf Z_a=\sum_p(-1)^{a\cdot p}\ket{p}\bra{p},\qquad
 \ket{\Phi_a}=N^{-1/2}\sum_p(-1)^{a\cdot p}\ket{p,p}.
\]
Let \(\mathsf{Swap}\) exchange the two coordinate registers.

\begin{lemma}[Exact phase--Hadamard twirls]
\label{lem:pub-boundary-twirls}
For \(B=HF\), where \(F\) is diagonal with independent uniform phases,
\begin{align}
 \E_F\Bigl[(B\otimes B)^\dagger Q_a^\oplus(B\otimes B)\Bigr]
   &=\frac{I+(\mathsf Z_a\otimes\mathsf Z_a)\mathsf{Swap}-Q_0^\oplus}{N}
      \preceq\frac{2I}{N},\label{eq:pub-same-twirl}\\
 \E_F\Bigl[(B\otimes\overline B)^\dagger Q_a^\oplus
                   (B\otimes\overline B)\Bigr]
   &=\ket{\Phi_a}\bra{\Phi_a}+
                   \frac{I-Q_0^\oplus}{N},\label{eq:pub-mixed-twirl}\\
 \E_F\Bigl[(B\otimes\overline B)^\dagger
   (Q_0^\oplus-\ket{\Phi_0}\bra{\Phi_0})(B\otimes\overline B)\Bigr]
   &=\frac{I-Q_0^\oplus}{N}.\label{eq:pub-deletion-twirl}
\end{align}
The same bounds hold after conjugating both tensor factors or
interchanging the two mixed factors.
\end{lemma}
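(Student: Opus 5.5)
The three identities are direct two-copy moment computations. My plan is to expand in the computational basis, average the phases entrywise, and then apply a Hadamard change of basis.

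For $B=HF$ we have $B\otimes B=(H\otimes H)(F\otimes F)$. The first step is to compute $M_a:=(H\otimes H)^\dagger Q_a^\oplus(H\otimes H)$ in the basis $\ket{p,p'}$:
\[
\bra{p,p'}M_a\ket{r,r'}=\frac1{N^2}\sum_u(-1)^{u\cdot(p\oplus r)+(u\oplus a)\cdot(p'\oplus r')}
=\frac{(-1)^{a\cdot(p'\oplus r')}}{N}\mathbf 1[p\oplus r=p'\oplus r'].
\]
Conjugating by $F\otimes F$ multiplies each entry by $\overline{\xi(p)\xi(p')}\xi(r)\xi(r')$. With independent uniform phases of order $q\ge3$, the average of this factor is $\mathbf 1[\{p,p'\}=\{r,r'\}]$ as multisets. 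The case $q\ge3$ is needed so that $\xi^2$ also averages to zero, which rules out the alias $p=p'$, $r=r'$, $p\ne r$.

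I will split the surviving entries into three cases.
\begin{itemize}
\item The diagonal entries $(r,r')=(p,p')$ contribute $I/N$.
\item The swapped entries $(r,r')=(p',p)$ satisfy the XOR constraint automatically. They carry the sign $(-1)^{a\cdot(p\oplus p')}$ and contribute $(\mathsf Z_a\otimes\mathsf Z_a)\mathsf{Swap}/N$.
\item The entries with $p=p'$ are counted twice by the first two cases, so I subtract $Q_0^\oplus/N$, which is the projector onto $\ket{p,p}$, to correct the double count.
\end{itemize}
This yields \eqref{eq:pub-same-twirl}. The operator bound follows because $(\mathsf Z_a\otimes\mathsf Z_a)\mathsf{Swap}$ is a unitary, hence $\preceq I$, and $Q_0^\oplus\succeq0$.

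For the mixed twirl, $\overline B=H\overline F$, since $H$ is real. The phase factor becomes $\overline{\xi(p)}\xi(p')\xi(r)\overline{\xi(r')}$. Its average is nonzero exactly when $\{p,r'\}=\{p',r\}$, i.e.\ when $p=p'$ and $r=r'$, or when $p=r$ and $p'=r'$.
\begin{itemize}
\item The pairing $p=r$, $p'=r'$ gives the diagonal part $I/N$.
\item The pairing $p=p'$, $r=r'$ gives $N^{-1}(-1)^{a\cdot(p\oplus r)}$ on $\ket{p,p}\bra{r,r}$. This is exactly $\ket{\Phi_a}\bra{\Phi_a}$, using $(-1)^{a\cdot p'}(-1)^{a\cdot r'}$ with $p'=p$ and $r'=r$. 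The XOR constraint is automatic in both pairings.
\item The overlap $p=p'=r=r'$ is again double counted, and I remove it with $-Q_0^\oplus/N$.
\end{itemize}
This gives \eqref{eq:pub-mixed-twirl}. For \eqref{eq:pub-deletion-twirl}, note that $\ket{\Phi_0}$ is invariant under $B\otimes\overline B$: it is the maximally entangled vector, and $(B\otimes\overline B)\ket{\Phi_0}=\ket{\Phi_0}$ for any unitary $B$. Subtracting this invariance from the $a=0$ case of \eqref{eq:pub-mixed-twirl} yields $(I-Q_0^\oplus)/N$.

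The variants are handled as follows. Conjugating both tensor factors replaces $\xi$ by $\overline\xi$, which has the same distribution. Interchanging the mixed factors is conjugation by $\mathsf{Swap}$, which commutes with $Q_a^\oplus$, $Q_0^\oplus$ and $\ket{\Phi_a}$, up to the relabeling $a\mapsto a$; the identities follow by the same computation.

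For independent uniform unit-circle phases, the averages $\E[\xi^k]=\mathbf 1[k=0]$ hold identically, so the same argument applies.

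The main obstacle I expect is bookkeeping of the coincident index cases, namely $p=p'$ and the order-$q$ aliases. This is where the correction term $-Q_0^\oplus/N$ and the hypothesis $q\ge3$ enter.
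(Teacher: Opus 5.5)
Your proposal is correct and follows the paper's proof essentially step for step. It uses the same Hadamard-conjugated matrix entry, the same two-pairings-minus-overlap phase moments (with $q\ge3$ killing the $\xi^{\pm2}$ aliases), and the same invariance $(B\otimes\overline B)\ket{\Phi_0}=\ket{\Phi_0}$ for the deletion identity.

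The one slip is the inference ``unitary, hence $\preceq I$''. That needs Hermiticity as well. It holds here because $\mathsf Z_a\otimes\mathsf Z_a$ commutes with $\mathsf{Swap}$, so $S_a=(\mathsf Z_a\otimes\mathsf Z_a)\mathsf{Swap}$ satisfies $S_a^\dagger=S_a$ and $S_a^2=I$, exactly as the paper records.
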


\begin{proof}
Write the diagonal phase operator as
\[
 F=\sum_p\xi_p\ket{p}\bra{p},
\]
where \(\xi_p\) is the random phase at address \(p\).
For \(k=0,\pm1,\pm2\), phase orthogonality gives
\(\E_F\bigl[\xi_p^k\bigr]=\mathbf1[k=0]\). By independence, each
product below survives averaging exactly when every address occurs
equally often with each sign. The two possible pairings overlap when all four indices agree,
so we subtract that case once:
\[
\begin{aligned}
 \E_F\Bigl[\overline{\xi_p\xi_r}\,\xi_{p'}\xi_{r'}\Bigr]
 &=\mathbf1[p=p',r=r']+\mathbf1[p=r',r=p']
       -\mathbf1[p=r=p'=r'],\\
 \E_F\Bigl[\overline{\xi_p}\xi_r\xi_{p'}\overline{\xi_{r'}}\Bigr]
 &=\mathbf1[p=p',r=r']+\mathbf1[p=r,p'=r']
       -\mathbf1[p=r=p'=r'].
\end{aligned}
\]
Expanding the Hadamards before applying the projector gives
\[
\begin{aligned}
 &\bra{p,r}(H\otimes H)Q_a^\oplus(H\otimes H)\ket{p',r'}\\
 &=\frac1{N^2}\sum_{u,v,u',v'}
    (-1)^{p\cdot u+r\cdot v+p'\cdot u'+r'\cdot v'}
       \bra{u,v}Q_a^\oplus\ket{u',v'}\\
 &=\frac1{N^2}\sum_u
    (-1)^{u\cdot(p\oplus p')+(u\oplus a)\cdot(r\oplus r')}\\
 &=\frac{(-1)^{a\cdot(r\oplus r')}}N
          \mathbf1[p\oplus r=p'\oplus r'].
\end{aligned}
\]
The projector retains exactly \(u'=u\) and \(v'=v=u\oplus a\);
the final sum vanishes unless \(p\oplus r=p'\oplus r'\).
Multiplying this entry by the two phase moments gives
\eqref{eq:pub-same-twirl} and~\eqref{eq:pub-mixed-twirl}, respectively.
The crossed pairing gives the signed swap in the first formula and
\(\ket{\Phi_a}\bra{\Phi_a}\) in the second. In each moment, the
overlap of the two pairings gives the subtracted diagonal projector.
For \(S_a=(\mathsf Z_a\otimes\mathsf Z_a)\mathsf{Swap}\),
\[
 S_a^\dagger=S_a,\qquad S_a^2=I,
 \qquad I+S_a-Q_0^\oplus\preceq I+S_a\preceq2I.
\]
For the deletion identity, the maximally entangled vector is fixed by
\(B\otimes\overline B\). Subtracting its projector from the mixed twirl gives
\[
 (B\otimes\overline B)\ket{\Phi_0}=\ket{\Phi_0},
 \qquad
 \E_F\Bigl[(B\otimes\overline B)^\dagger Q_0^\oplus
                (B\otimes\overline B)\Bigr]
       -\ket{\Phi_0}\bra{\Phi_0}
 =\frac{I-Q_0^\oplus}{N}.
\]
Complex conjugation and exchange of the two copies give the remaining
versions.
\end{proof}

\subsection{Domain and image defects}
For \(U\in\{W,\overline W\}\) and
\(\mathcal J\in\{\mathcal D,\mathcal I\}\), put
\[
\begin{aligned}
 \Pi_{\mathcal D(U)}&=U^\dagger U,
 &\Pi_{\mathcal I(U)}&=UU^\dagger,\\
 \Delta^{\mathcal J}_{\ell,r}(U)
 &=\Pi_{\mathrm{bij},\ell,r}-\Pi_{\mathcal J(U)}\Pi_{\ell,r},
 &\Delta^{\mathcal J}_{\ell,r}&:=\Delta^{\mathcal J}_{\ell,r}(W).
\end{aligned}
\]
The domain and image projectors commute with each sign-length
projector \(\Pi_{\ell,r}\) and lie in the bijective sector. Hence
\[
 \bigl(\Delta^{\mathcal J}_{\ell,r}(U)\bigr)^2
   =\Delta^{\mathcal J}_{\ell,r}(U),\qquad
 0\preceq\Delta^{\mathcal J}_{\ell,r}(U)
       \preceq\Pi_{\mathrm{bij},\ell,r}.
\]
We use the ordered-register inequalities
\eqref{eq:pub-domain-input}--\eqref{eq:pub-image-input}, stated in
Appendix~\ref{app:controlled-recording} and obtained from
\cite[Corollary~11.1 and Claim~26]{MH}.

\begin{proof}[Proof of Lemma~\ref{lem:pub-boundary-defects}]
Fix \(m=\ell+r\), with \(1\le m\le N/2\).
For the domain inequality, the query and positive domain slots use
\(G=HF_1\), while the negative domain slots use \(\overline G\).
The image rotations cancel on the right side of
\eqref{eq:pub-domain-input}. Each positive slot contributes at most
\(2I/N\) by~\eqref{eq:pub-same-twirl}; each negative slot contributes
at most \(I/N\) by~\eqref{eq:pub-deletion-twirl}. Including the remainder
in~\eqref{eq:pub-domain-input} gives
\[
 \E_b\Bigl[(G\otimes Q_b)^\dagger\Delta^{\mathcal D}_{\ell,r}
                   (G\otimes Q_b)\Bigr]
 \preceq\frac{N}{N-m+1}
       \left[\frac{2\ell+r}{N}+2r\sqrt{m/N}\right]I.
\]
For the image inequality, the query and negative image slots use
\(D^\dagger=\overline{HF_5}\), while the positive image slots use
\(D^T=HF_5\). The domain rotations cancel on the right side of
\eqref{eq:pub-image-input}, giving
\[
 \E_b\Bigl[(D^\dagger\otimes Q_b)^\dagger\Delta^{\mathcal I}_{\ell,r}
                   (D^\dagger\otimes Q_b)\Bigr]
 \preceq\frac{N}{N-m+1}
       \left[\frac{\ell+2r}{N}+2\ell\sqrt{m/N}\right]I.
\]
Using \(\ell,r\le m\), \(N/(N-m+1)\le2\), and
\(1/N\le\sqrt{m/N}\), both coefficients are bounded by
\[
 \frac{N}{N-m+1}
       \left[\frac{2m}{N}+2m\sqrt{m/N}\right]
 \le\frac{4m}{N}+4m\sqrt{m/N}
 \le8m\sqrt{m/N}.
\]
The barred-recorder and deletion bounds follow by symmetry: relabel the
coefficient indices \(L,R\) and conjugate the boundary tables for the barred recorder;
exchange domain and image for the adjoint. The boundary distribution
is invariant under conjugation, so the same bound holds for all four
nonidle branches in Table~\ref{tab:controlled-boundaries}.
The zero-length defect is zero. Since every term preserves \(\ell,r\),
\[
 \bigoplus_{\ell+r\le i}8(\ell+r)\sqrt{(\ell+r)/N}\,I_{\ell,r}
 \preceq8i\sqrt{i/N}\,I=\gamma_i I.
\]
\end{proof}

\subsection{Additional XOR collisions}
\begin{proof}[Proof of Lemma~\ref{lem:pub-boundary-good-mass}]
Fix an ordered \((\ell,r)\) block and a boundary-independent vector
\(\ket{v}=P_{\mathrm{nc}}\ket{v}\) in this block. For a positive edge
\(L_j\) and a negative edge \(R_k\), write
\(X=(X_{L_j},X_{R_k})\), \(Y=(Y_{L_j},Y_{R_k})\), and define
\[
 B_{jk}=\sum_a(Q_a^\oplus)_X\otimes(Q_a^\oplus)_Y,
 \qquad Q_X=(Q_0^\oplus)_X,\quad Q_Y=(Q_0^\oplus)_Y,
 \quad \Theta_a=\ket{\Phi_a}\bra{\Phi_a}.
\]
The domain and image tables are independent, so
\eqref{eq:pub-mixed-twirl} gives
\begin{align}
 \E_b\Bigl[Q_b^\dagger B_{jk}Q_b\Bigr]
 &=\sum_a\left[(\Theta_a)_X+\frac{I-Q_X}{N}\right]
       \otimes\left[(\Theta_a)_Y+\frac{I-Q_Y}{N}\right]\nonumber\\
 &=\sum_a(\Theta_a)_X\otimes(\Theta_a)_Y
                  +\frac{I-Q_X\otimes Q_Y}{N}.
 \label{eq:pub-double-mixed-twirl}
\end{align}
Here we used \(\sum_a\Theta_a=Q_0^\oplus\). Moreover,
\[
 0\preceq\sum_a(\Theta_a)_X\otimes(\Theta_a)_Y
       \preceq Q_X\otimes Q_Y,
 \qquad (Q_X\otimes Q_Y)P_{\mathrm{nc}}=0.
\]
The projector \(Q_X\otimes Q_Y\) requires the two edges to have
both endpoints equal. Such a pair is excluded by \(P_{\mathrm{nc}}\),
so the rank-one terms contribute zero on this subspace. Therefore
\[
 P_{\mathrm{nc}}\E_b\Bigl[Q_b^\dagger B_{jk}Q_b\Bigr]P_{\mathrm{nc}}
 \preceq\frac1N P_{\mathrm{nc}}.
\]

For \(j<k\) and \(u<v\), define the positive rectangle projector
\[
 R^+_{jk;uv}
 =\sum_a(Q_a^\oplus)_{X_{L_j},X_{L_k}}
                 \otimes(Q_a^\oplus)_{Y_{L_u},Y_{L_v}}.
\]
For each fixed \(F_5\), equation~\eqref{eq:pub-same-twirl} gives
\[
\begin{aligned}
 \E_{F_1}\Bigl[Q_b^\dagger R^+_{jk;uv}Q_b\Bigr]
 &\preceq\frac2N\sum_a
   \Bigl[(D^T\otimes D^T)^\dagger Q_a^\oplus
                              (D^T\otimes D^T)\Bigr]_{Y_{L_u},Y_{L_v}}\\
 &=\frac2N I,
\end{aligned}
\]
since \(\sum_aQ_a^\oplus=I\). The \(X\) and \(Y\) registers are
distinct even when the two edge pairs overlap. Define \(R^-_{jk;uv}\)
by replacing \(L\) with \(R\); complex conjugation gives its bound.

A record fails the additional-goodness conditions only if at least one
of these rectangle or opposite-sign events occurs. Their projectors
are diagonal in the ordered record basis, so the union bound gives
\[
 I-P_{\mathrm{add}}
 \preceq\sum_{\substack{j<k\\u<v}}R^+_{jk;uv}
       +\sum_{\substack{j<k\\u<v}}R^-_{jk;uv}
       +\sum_{j=1}^{\ell}\sum_{k=1}^r B_{jk}.
\]
There are \(\binom\ell2^2\), \(\binom r2^2\), and \(\ell r\) terms,
respectively. Thus
\[
 \E_b\Bigl[\|(I-P_{\mathrm{add}})Q_b\ket{v}\|^2\Bigr]
 \le\frac{2[\binom\ell2^2+\binom r2^2]+\ell r}{N}\|\ket{v}\|^2.
\]
The sums commute with separate same-sign permutations. Compressing
the summed inequality to the normalized symmetric subspace therefore
preserves the bound. For \(\ell+r\le i\),
\[
 \binom\ell2^2+\binom r2^2\le\binom i2^2,
 \qquad \ell r\le\frac{i^2}{4},
 \qquad
 \frac{2\binom i2^2+i^2/4}{N}\le\frac{9i^4}{N}\quad(i\ge1).
\]
Finally, for \(\ket{v}=\sum_{\ell+r\le i}\ket{v_{\ell,r}}\),
\[
 \E_b\Bigl[\|(I-P_{\mathrm{add}})Q_b\ket{v}\|^2\Bigr]
 =\sum_{\ell+r\le i}
      \E_b\Bigl[\|(I-P_{\mathrm{add}})Q_b\ket{v_{\ell,r}}\|^2\Bigr]
 \le\frac{9i^4}{N}\|\ket{v}\|^2.
\]
The loss on the empty record is zero. Tensoring with an identity retains arbitrary
reference registers.
\end{proof}

\section{Local tools for core-table differences}\label{app:core-tools}

We prove the randomization and moment bounds used to control the
largest change under a core-bit flip. We then count the labels that
prevent a good record from extending by one edge or an opposite-sign
pair. All Hilbert spaces below may contain arbitrary auxiliary
registers. No bound depends on their dimension.
For the encoding and extension lemmas, use the construction's
independent uniform \(N\)-th root phase tables, with \(N=2^n>8t\).

\subsection{Selected characters and Hilbert moments}

The first estimate preserves a maximum by treating it as the norm on
a finite direct product. Characters are indexed by exponents modulo
the phase order; their support consists of the nonzero exponents.

\begin{lemma}[Randomizing a selected phase character]
\label{lem:selected-phase}
Let \((\xi_a)_a\) be independent uniform \(N\)-th roots of unity.
Let \(\ket{F_a(\xi)}\) be polynomials with values in a complex normed
space \(X\). Suppose every character in \(\ket{F_a(\xi)}\) has exactly
\(d\ge1\) coordinates in its support and has the same exponent
\(k\not\equiv0\pmod N\) at \(a\). For independent uniform
\(N\)-th roots \((\delta_a)_a\),
\begin{equation}
 \left\|\sum_a\ket{F_a(\xi)}\right\|_{L^2(\xi;X)}
 \le\mathrm e d
       \left\|\sum_a\delta_a^k\ket{F_a(\xi)}\right\|_{L^2(\xi,\delta;X)}.
 \label{eq:selected-phase-randomization}
\end{equation}
\end{lemma}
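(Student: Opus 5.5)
The plan is to realize $\sum_a\ket{F_a(\xi)}$, up to a constant factor of at least $1/(\mathrm e d)$, as an average of contractions applied to
\[
 G(\xi,\delta):=\sum_a\delta_a^k\ket{F_a(\xi)}.
\]
Only two kinds of operation will be used. The first is an invariant change of variables $\xi\mapsto\xi\cdot\eta$, with $\eta$ a fixed tuple of roots of unity, which preserves the uniform law of $\xi$. The second is an average over $\delta$ against unimodular weights. Both are contractions on $L^2(\xi,\delta;X)$ for an arbitrary normed $X$, by Minkowski's inequality and Jensen's inequality, with no Hilbert structure required. This is why we avoid orthogonal projections.

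\emph{Step 1 (a random marked set).} Fix a subset $M$ of coordinates. Define
\[
 (T_MG)(\xi):=\E_\delta\bigl[G(\xi\cdot\delta^{-1}|_M,\delta)\bigr],
\]
where $(\xi\cdot\delta^{-1}|_M)_b=\xi_b\delta_b^{-1}$ for $b\in M$ and $\xi_b$ otherwise. For each fixed $\delta$ the substituted $\xi$ is again uniform, so $\|T_MG\|_{L^2(\xi;X)}\le\|G\|_{L^2(\xi,\delta;X)}$.

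\emph{Step 2 (effect on one term).} Expand $\ket{F_a}$ into characters. Take a term $\delta_a^k\chi(\xi)\ket{w}$, where $\chi$ has support $A\ni a$ with $|A|=d$ and exponent $k$ at $a$. After the substitution its $\delta$-dependence is
\[
 \delta_a^{k\mathbf1[a\notin M]}\prod_{b\in A\cap M}\delta_b^{-e_b(\chi)}\cdot\delta_a^{k\mathbf 1[a\in M]}
 =\delta_a^{k\mathbf1[a\notin M]}\prod_{b\in (A\cap M)\setminus\{a\}}\delta_b^{-e_b}.
\]
Here the factor at $a$ cancels exactly when $a\in M$, because $e_a=k$. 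All nonzero exponents are nonzero modulo $N$, so averaging over $\delta$ kills the term unless $a\in M$ and $A\cap M=\{a\}$. In that case the term survives unchanged as $\chi(\xi)\ket{w}$. Hence $T_MG=\sum_a\sum_{\chi}\mathbf1[A_\chi\cap M=\{a\}]\,\chi(\xi)\ket{w_{a,\chi}}$.

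\emph{Step 3 (averaging over $M$).} Let $M$ contain each coordinate independently with probability $1/d$. Every character has support of size exactly $d$, so each pair $(a,\chi)$ survives with the same probability
\[
 c=\tfrac1d(1-\tfrac1d)^{d-1}\ge\tfrac1{\mathrm e d}.
\]
Therefore $\E_MT_MG=c\sum_a\ket{F_a(\xi)}$. Minkowski's inequality over $M$ gives $c\|\sum_aF_a\|\le\|G\|$, which is \eqref{eq:selected-phase-randomization}.

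The only delicate point is Step 2. It needs the uniform hypotheses that the exponent at the selected coordinate is the same $k$ for every character of $\ket{F_a}$ and that the support size is exactly $d$. These are what make the surviving weight a constant rather than character-dependent, so that no non-contractive reweighting is needed. One also needs that exponents are read modulo $N$, so the averages over $N$-th roots vanish exactly on nonzero residues.
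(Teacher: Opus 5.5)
Your proof is correct and follows essentially the paper's argument: a random coordinate set with inclusion probability $1/d$, survival exactly when the support meets the set only at the selected coordinate (giving the constant $\frac1d(1-\frac1d)^{d-1}\ge\frac1{\mathrm e d}$), and Jensen/Minkowski combined with a measure-preserving phase substitution. In fact your $T_MG$ is exactly the paper's $\ket{T_A(\xi)}$. You verify its contractivity in a single substitute-then-average step, where the paper uses a chain: replacing $\xi_A$ by $\delta_A$, rewriting as an $\E_{\xi_A}$-average, substituting $\delta_a\mapsto\delta_a\xi_a$, and averaging out the fresh phases outside $A$. Your route also covers $d=1$ without a separate remark.
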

\begin{proof}
For \(d\ge2\), include each coordinate in a random set \(A\)
independently with probability \(1/d\), and put
\[
 \ket{T_A(\xi)}=\sum_{a\in A}
   \mathbb E_{\xi_{A\setminus\{a\}}}[\ket{F_a(\xi)}].
\]
A character survives exactly when its support meets \(A\) only at
\(a\). Hence
\[
 \mathbb E_A[\ket{T_A(\xi)}]
 =d^{-1}(1-d^{-1})^{d-1}\sum_a\ket{F_a(\xi)}.
\]
For fixed \(A\), write
\(\ket{T_A(\xi)}=\sum_{a\in A}\xi_a^k
\ket{g_{a,A}(\xi_{A^c})}\).
Replacing \(\xi_a^k\) by \(\delta_a^k\) preserves its distribution.
Also,
\[
 \sum_{a\in A}\delta_a^k\ket{g_{a,A}(\xi_{A^c})}
 =\mathbb E_{\xi_A}\left[
    \sum_{a\in A}\delta_a^k\xi_a^{-k}\ket{F_a(\xi)}\right].
\]
Jensen and the measure-preserving substitution
\(\delta_a\mapsto\delta_a\xi_a\) bound this norm by
\(\|\sum_{a\in A}\delta_a^k\ket{F_a(\xi)}\|_{L^2}\).
Since \(\mathbb E[\delta_a^k]=0\), averaging the fresh phases outside
\(A\) bounds the latter by the norm of the full randomized sum.
Minkowski and
\(d(1-d^{-1})^{-(d-1)}\le\mathrm e d\) prove the claim.
For \(d=1\), take \(A\) to contain every coordinate; the same
argument gives constant one.
\end{proof}

We will split terms by their nonzero selected exponent and support
size before applying this lemma. A zero selected exponent does not
satisfy its hypothesis.

\begin{lemma}[Randomizing a selected sign]\label{lem:selected-sign}
Let \(\ket{F_x(\varepsilon)}\) be homogeneous multilinear polynomials
of degree \(s\ge1\) in independent fair signs, with values in a
normed space \(X\). Suppose every monomial of \(\ket{F_x(\varepsilon)}\) contains
\(\varepsilon_x\). For independent fair signs \(\delta_x\),
\begin{equation}
 \left\|\sum_x\ket{F_x(\varepsilon)}\right\|_{L^2(\varepsilon;X)}
 \le\mathrm e s
       \left\|\sum_x\delta_x\ket{F_x(\varepsilon)}\right\|_{L^2(\varepsilon,\delta;X)}.
 \label{eq:rademacher-contraction}
\end{equation}
\end{lemma}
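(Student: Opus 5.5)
The plan is to mirror the proof of Lemma~\ref{lem:selected-phase}, with fair signs in place of \(N\)-th roots and degree \(s\) in place of support size. Every monomial of \(\ket{F_x(\varepsilon)}\) is multilinear of degree exactly \(s\) and contains \(\varepsilon_x\), so it has support of size \(s\) containing \(x\).

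\textbf{Case \(s=1\).} Then \(\ket{F_x(\varepsilon)}=\varepsilon_x\ket{c_x}\) for fixed vectors \(\ket{c_x}\). The map \((\varepsilon_x)_x\mapsto(\delta_x\varepsilon_x)_x\) preserves the uniform law. Averaging over \(\varepsilon\) with \(\delta\) fixed therefore shows that the randomized sum has the same distribution as the original sum. The inequality holds with constant \(1\le\mathrm e\).

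\textbf{Case \(s\ge2\).} Sample a random set \(A\) that contains each coordinate independently with probability \(1/s\). Define
\[
 \ket{T_A(\varepsilon)}=\sum_{x\in A}\mathbb E_{\varepsilon_{A\setminus\{x\}}}\bigl[\ket{F_x(\varepsilon)}\bigr].
\]
A monomial of \(\ket{F_x}\) survives exactly when its support meets \(A\) only at \(x\), which happens with probability \(s^{-1}(1-s^{-1})^{s-1}\). Hence
\[
 \mathbb E_A\ket{T_A}=s^{-1}(1-s^{-1})^{s-1}\sum_x\ket{F_x}.
\]

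For fixed \(A\), each surviving monomial in the \(x\)-th term is \(\varepsilon_x\) times a function of \(\varepsilon_{A^c}\) alone. So we can write
\[
 \ket{T_A(\varepsilon)}=\sum_{x\in A}\varepsilon_x\ket{g_{x,A}(\varepsilon_{A^c})}.
\]

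The remaining steps are as follows.
\begin{enumerate}
\item Replace \(\varepsilon_x\) for \(x\in A\) by independent signs \(\delta_x\). This preserves the joint distribution, since \(\varepsilon_A\) is independent of \(\varepsilon_{A^c}\).
\item Use the identity
\[
 \sum_{x\in A}\delta_x\ket{g_{x,A}}=\mathbb E_{\varepsilon_A}\Bigl[\sum_{x\in A}\delta_x\varepsilon_x\ket{F_x(\varepsilon)}\Bigr].
\]
It holds because multiplying by \(\varepsilon_x\) and averaging over \(\varepsilon_A\) keeps exactly the monomials whose \(A\)-part is \(\{x\}\).
\item Apply Jensen and then the measure-preserving substitution \(\delta_x\mapsto\delta_x\varepsilon_x\). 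This bounds the \(L^2\) norm by \(\|\sum_{x\in A}\delta_x\ket{F_x}\|_{L^2(\varepsilon,\delta)}\).
\item Add back the terms with \(x\notin A\). Conditional on everything else, these are \(\delta_x\)-odd with \(\mathbb E\delta_x=0\), so Jensen in the variables \(\delta_{A^c}\) bounds the partial norm by the full randomized norm.
\item Apply Minkowski to pass the average over \(A\) through the norm. Then invert the survival factor using \(s(1-s^{-1})^{-(s-1)}\le\mathrm e s\).
\end{enumerate}

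The main point to check is step 2. It requires each monomial to be multilinear, so that \(\varepsilon_x^2=1\) never creates a stray surviving term. Homogeneity of degree \(s\) then makes the survival probability uniform across monomials. Both hypotheses are given, so this step should go through exactly as in Lemma~\ref{lem:selected-phase}.
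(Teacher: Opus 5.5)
Your proposal is correct and follows the paper's proof: the paper proves this lemma by invoking the proof of Lemma~\ref{lem:selected-phase} with \(N=2\), \(d=s\), \(k=1\). Fair signs are uniform square roots of unity, and degree-\(s\) multilinear monomials are characters with support size \(s\); your random-set, replacement, Jensen, substitution and Minkowski chain is that proof written out for signs, and your direct \(s=1\) case matches the paper's choice of taking \(A\) to be all coordinates when \(d=1\).
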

\begin{proof}
Use the preceding proof with \(N=2\), \(d=s\), and \(k=1\).
All scalar multipliers are real, so this also applies to real normed
spaces.
\end{proof}

\begin{lemma}[Hilbert moment bound]
\label{lem:hilbert-chaos-moment}
For \(1\le j\le\ell\), let \((\tau_a^{(j)})_a\) be independent
unit-modulus random variables whose distributions are invariant under
multiplication by \(-1\). The families are mutually independent; let \(\tau\) denote their full array.
For fixed complex Hilbert vectors \(\ket{g_{a_1,\ldots,a_\ell}}\), put
\[
 \ket{G(\tau)}=\sum_{a_1,\ldots,a_\ell}
     \tau_{a_1}^{(1)}\cdots\tau_{a_\ell}^{(\ell)}
     \ket{g_{a_1,\ldots,a_\ell}}.
\]
For every even integer \(p\ge2\),
\begin{equation}
 \left(\mathbb E[\|\ket{G(\tau)}\|^p]\right)^{2/p}
 \le(p-1)^\ell
       \sum_{a_1,\ldots,a_\ell}\|\ket{g_{a_1,\ldots,a_\ell}}\|^2.
 \label{eq:hilbert-chaos-moment}
\end{equation}
For \(m\) such sums \(\ket{G_u(\tau)}\) using the same fresh variables,
\begin{equation}
 \mathbb E[\max_{u\le m}\|\ket{G_u(\tau)}\|^2]
 \le m^{2/p}(p-1)^\ell
       \max_{u\le m}\sum_{a_1,\ldots,a_\ell}
                \|\ket{g^{(u)}_{a_1,\ldots,a_\ell}}\|^2.
 \label{eq:chaos-maximum}
\end{equation}
\end{lemma}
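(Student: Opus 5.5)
The moment bound \eqref{eq:hilbert-chaos-moment} is a vector-valued hypercontractivity (Bonami--Beckner) inequality for a multilinear chaos; I will prove it by induction on \(\ell\), one independent family at a time. The bound \eqref{eq:chaos-maximum} then follows by replacing the maximum with a sum of \(p\)-th powers.

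\emph{Reduction to a single layer.} First I will reduce to Rademacher signs. Each \(\tau_a^{(j)}\) is symmetric, so \(\tau_a^{(j)}\) has the same law as \(\epsilon_a^{(j)}\tau_a^{(j)}\), where the \(\epsilon_a^{(j)}\) are fresh fair signs independent of \(\tau\). Conditioning on the unit-modulus magnitudes, I can absorb \(\tau_a^{(j)}\epsilon_a^{(j)}/\epsilon_a^{(j)}\) into the vectors. This replaces \(g\) by vectors of the same norms, so the right-hand side is unchanged. It therefore suffices to treat independent fair signs. Averaging the conditional bound over the magnitudes then gives the general case, because the right-hand side does not depend on them.

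\emph{One layer.} For \(\ell=1\), the claim is the Hilbert-space Khintchine inequality with hypercontractive constant:
\[
 \Bigl(\mathbb E\Bigl\|\sum_a\epsilon_a\ket{g_a}\Bigr\|^p\Bigr)^{2/p}\le(p-1)\sum_a\|\ket{g_a}\|^2 .
\]
I will prove this by induction on the number of coordinates, using the two-point inequality
\[
 \Bigl(\tfrac12\bigl(\|x+y\|^p+\|x-y\|^p\bigr)\Bigr)^{2/p}\le\|x\|^2+(p-1)\|y\|^2
\]
for Hilbert vectors \(x,y\). This is the Hilbert-valued Bonami lemma; it reduces to the scalar two-dimensional case because \(\|x\pm y\|^2=\|x\|^2+\|y\|^2\pm2\operatorname{Re}\langle x,y\rangle\). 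For the induction step, I will condition on all but one sign and combine the two-point inequality with Minkowski's inequality in \(L^{p/2}\) to peel off coordinates one at a time.

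\emph{Induction on layers.} For \(\ell\ge2\), I will write \(\ket{G}=\sum_{a_\ell}\tau^{(\ell)}_{a_\ell}\ket{G'_{a_\ell}}\), where each \(\ket{G'_{a_\ell}}\) is a chaos of order \(\ell-1\) in the remaining families. Conditioning on those families and applying the one-layer bound gives
\[
 \bigl(\mathbb E\|\ket{G}\|^p\bigr)^{2/p}\le(p-1)\Bigl\|\sum_{a_\ell}\|\ket{G'_{a_\ell}}\|^2\Bigr\|_{L^{p/2}} .
\]
The inner sum of squared norms equals \(\|\oplus_{a_\ell}\ket{G'_{a_\ell}}\|^2\), the squared norm of a single order-\((\ell-1)\) chaos valued in the direct-sum Hilbert space. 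Applying the induction hypothesis to that chaos yields \((p-1)^{\ell-1}\sum\|\ket{g}\|^2\), and multiplying gives \eqref{eq:hilbert-chaos-moment}. Passing to the direct sum is the key step, since it avoids any loss from Minkowski's inequality across layers. The main obstacle I expect is getting the vector-valued two-point inequality right with the exact constant \(p-1\). For even \(p\), I would first try expanding \((\|x\|^2+\|y\|^2+2c)^{p/2}\) and its counterpart with \(-2c\), where \(c=\operatorname{Re}\langle x,y\rangle\) and \(|c|\le\|x\|\|y\|\), and reduce to the known scalar Bonami inequality.

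\emph{Maximum.} For \eqref{eq:chaos-maximum}, I will use
\[
 \mathbb E\max_u\|\ket{G_u}\|^2\le\Bigl(\mathbb E\max_u\|\ket{G_u}\|^p\Bigr)^{2/p}\le\Bigl(\sum_u\mathbb E\|\ket{G_u}\|^p\Bigr)^{2/p},
\]
where the first step is Jensen's inequality. Bounding each summand by \eqref{eq:hilbert-chaos-moment} and taking the largest of them gives the factor \(m^{2/p}\).
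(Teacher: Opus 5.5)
Your proposal is correct, and its architecture matches the paper's. You symmetrize to fair signs, prove a one-family bound, and pass between families through the Hilbert direct sum $\bigoplus_{a_\ell}\ket{G'_{a_\ell}}$, so no Minkowski loss accrues across layers. You then bound the maximum by $\bigl(\sum_u\mathbb E\|\ket{G_u}\|^p\bigr)^{2/p}$.

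You diverge only in proving the one-family Khintchine inequality. The paper argues combinatorially for $p=2k$. It expands $\mathbb E\|\sum_a\varepsilon_a\ket{g_a}\|^{2k}$ and keeps only index tuples with even multiplicities. It dominates each such tuple by a perfect matching of the $2k$ positions and bounds inner products by products of norms. Counting $(2k-1)!!\le(p-1)^k$ matchings finishes the step.

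You instead use the Hilbert-valued Bonami two-point inequality and induct on coordinates. This works provided you carry the already-revealed partial sum as the ``constant'' term $x$ and apply Minkowski in $L^{p/2}$. Your reduction of the two-point inequality to the scalar case is sound. Put $a=\|x\|^2+\|y\|^2$ and $c=\operatorname{Re}\langle x,y\rangle$. The map $c\mapsto(a+2c)^{p/2}+(a-2c)^{p/2}$ is even and convex for $p\ge2$, so it is largest at $|c|=\|x\|\|y\|$. There $\|x\pm y\|=\bigl|\|x\|\pm\|y\|\bigr|$. For even $p=2k$, the scalar inequality then reduces to the elementary comparison $\binom{2k}{2j}\le\binom kj(2k-1)^j$.

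The trade-off is this. The paper's count is self-contained, and its intermediate constant $((2k-1)!!)^{1/k}$ is slightly sharper, but it needs even $p$. Your hypercontractive route works for every real $p\ge2$, at the cost of importing or proving the two-point inequality.

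One small wording fix: the $\tau$'s have modulus one, so there are no ``magnitudes'' to condition on. What you mean is to replace $\tau$ by $\epsilon\tau$ in law and then condition on $\tau$ (or on representatives modulo sign, as the paper does). This absorbs the unit-modulus products into the vectors without changing their norms.
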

\begin{proof}
First take one family of fair signs and write \(p=2k\).
In the expansion of
\(\mathbb E[\|\sum_a\varepsilon_a\ket{g_a}\|^{2k}]\), a term
survives only if every sign index occurs an even number of times.
Pair its \(2k\) occurrences into pairs with equal indices. Bound
coefficient inner products by products of norms. There are
\((2k-1)!!\) pairings, and each pairing contributes at most
\((\sum_a\|\ket{g_a}\|^2)^k\). Thus
\[
 \mathbb E\left[\left\|\sum_a\varepsilon_a\ket{g_a}\right\|^{2k}\right]
 \le(2k-1)!!\left(\sum_a\|\ket{g_a}\|^2\right)^k
 \le(p-1)^k\left(\sum_a\|\ket{g_a}\|^2\right)^k.
\]
For the stated unit-modulus variables, condition on representatives
modulo sign. The remaining factors are independent fair signs; the
representatives multiply coefficients by scalars of modulus one.
This proves the one-family estimate. Apply it successively to the
\(\ell\) families, using the Hilbert direct sum of coefficient
vectors at each step. This gives~\eqref{eq:hilbert-chaos-moment}.
Finally,
\[
 \mathbb E[\max_u\|\ket{G_u(\tau)}\|^2]
 \le\left(\sum_u\mathbb E[\|\ket{G_u(\tau)}\|^p]\right)^{2/p},
\]
which proves~\eqref{eq:chaos-maximum}.
\end{proof}

The finite phase alphabet satisfies the symmetry needed here.
If \(N=2^n\), \(0<|k|<N\), and \(\delta\) is a uniform
\(N\)-th root, then \(\delta^k\) is uniform on the roots of order
\(N/\gcd(N,k)\). This order is even, so conditioning modulo sign
leaves an independent fair sign. This includes even \(k\); the power
map need not be a permutation of all \(N\)-th roots.

\subsection{Sparse masks and contractions}

\begin{lemma}[Sparse masks before a contraction]
\label{lem:sparse-mask-contraction}
Let \(\mathcal A:\mathcal K\to L^2(\Omega;\mathcal H)\) be a
linear contraction between Hilbert spaces for a fixed probability
measure on \(\Omega\).
Let \((P_x)_x\) be a finite family of orthogonal projections with
\(\sum_x(I-P_x)\preceq K I\). For a fixed \(\ket{c}\in\mathcal K\),
\begin{equation}
 \mathbb E_\omega[\max_x\|(\mathcal AP_x\ket{c})(\omega)\|^2]
 \le2(1+K)\|\ket{c}\|^2.
 \label{eq:sparse-mask-contraction}
\end{equation}
The map is the same for every \(x\). The vector and projections are
independent of \(\omega\); they may depend on fixed outside data.
The projections need not be mutually orthogonal.
\end{lemma}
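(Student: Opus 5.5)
The plan is to split off a common term that does not depend on \(x\), and to replace the maximum over the remaining pieces by a sum. For each \(x\), write \(P_x\ket{c}=\ket{c}-(I-P_x)\ket{c}\). Linearity of \(\mathcal A\) then gives, for every \(\omega\),
\[
 (\mathcal AP_x\ket{c})(\omega)=(\mathcal A\ket{c})(\omega)-(\mathcal A(I-P_x)\ket{c})(\omega).
\]
Apply \(\|u-v\|^2\le2\|u\|^2+2\|v\|^2\) pointwise in \(\omega\). Then bound the maximum of the nonnegative quantities \(\|(\mathcal A(I-P_x)\ket{c})(\omega)\|^2\) by their sum over the finite family. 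This yields
\[
 \max_x\|(\mathcal AP_x\ket{c})(\omega)\|^2
 \le2\|(\mathcal A\ket{c})(\omega)\|^2
   +2\sum_x\|(\mathcal A(I-P_x)\ket{c})(\omega)\|^2 .
\]
The bound holds for every \(\omega\), so the maximum has been removed before any averaging takes place.

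Next take \(\mathbb E_\omega\). For any vector \(\ket{w}\in\mathcal K\), the quantity \(\mathbb E_\omega\|(\mathcal A\ket{w})(\omega)\|^2\) is exactly \(\|\mathcal A\ket{w}\|_{L^2(\Omega;\mathcal H)}^2\). Since \(\mathcal A\) is a contraction, this is at most \(\|\ket{w}\|^2\). The first term is therefore at most \(2\|\ket{c}\|^2\). For the second term, interchange the finite sum with the expectation and apply contractivity to each summand. This bounds it by \(2\sum_x\|(I-P_x)\ket{c}\|^2\).

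Because each \(I-P_x\) is an orthogonal projection, \(\|(I-P_x)\ket{c}\|^2=\bra{c}(I-P_x)\ket{c}\). The hypothesis \(\sum_x(I-P_x)\preceq KI\) then bounds the sum by \(K\|\ket{c}\|^2\). Adding the two terms gives \(2(1+K)\|\ket{c}\|^2\), which is \eqref{eq:sparse-mask-contraction}.

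No step is a real obstacle; the only point requiring care is the order of operations. The maximum must be replaced by the sum pointwise in \(\omega\), before averaging, and this is legitimate because the family is finite and all terms are nonnegative. Contractivity is used only in the averaged \(L^2\) sense, applied to the fixed \(\omega\)-independent vectors \(\ket{c}\) and \((I-P_x)\ket{c}\). That is why the argument needs \(\ket{c}\) and the \(P_x\) to be independent of \(\omega\), and why it does not need the projections to be mutually orthogonal.
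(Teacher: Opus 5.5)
Your proof is correct and follows essentially the same route as the paper's. Both write $P_x\ket{c}=\ket{c}-(I-P_x)\ket{c}$, apply the squared triangle inequality and replace the maximum by a sum pointwise in $\omega$, then use contractivity and $\sum_x(I-P_x)\preceq KI$ to reach $2(1+K)\|\ket{c}\|^2$.
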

\begin{proof}
Linearity and the squared triangle inequality give
\[
 \begin{aligned}
 &\mathbb E_\omega[\max_x\|(\mathcal AP_x\ket{c})(\omega)\|^2]\\
 &\quad\le2\mathbb E_\omega[\|(\mathcal A\ket{c})(\omega)\|^2]
       +2\sum_x\mathbb E_\omega[\|(\mathcal A(I-P_x)\ket{c})(\omega)\|^2]\\
 &\quad\le2\|\ket{c}\|^2+2\bra{c}\sum_x(I-P_x)\ket{c}
 \le2(1+K)\|\ket{c}\|^2.
 \end{aligned}
\]
\end{proof}
A preliminary projection \(P^0\), independent of \(x,\omega\), may
be absorbed into \(\ket{c}\). In the diagonal applications below,
set \(P_x=I\) off its range; the exclusion bound need only hold on
that range. Marked labels remain as direct-sum controls.

\begin{corollary}[Averaging one phase coordinate]
\label{cor:whole-coordinate-mean}
Let \(\ket{F(\xi)}\) be a Hilbert-valued polynomial in independent
uniform phase coordinates, with at most \(r\) occupied coordinates
in each product character. Let \(\mathbb E_z[\cdot]\) average the
whole phase value at \(z\). Then
\begin{equation}
 \mathbb E_\xi[\max_z\|\mathbb E_z[\ket{F(\xi)}]\|^2]
 \le2(r+1)\mathbb E_\xi[\|\ket{F(\xi)}\|^2].
 \label{eq:whole-coordinate-mean}
\end{equation}
\end{corollary}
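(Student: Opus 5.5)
The plan is to apply Lemma~\ref{lem:sparse-mask-contraction} directly, with the Hilbert space \(\mathcal K\) taken to be \(L^2(\xi;\mathcal H)\) itself and \(\mathcal A\) the identity map into \(L^2(\Omega;\mathcal H)\), where \(\Omega\) is the phase sample space \(\xi\). The identity is trivially a linear contraction. The fixed vector is \(\ket{c}=\ket{F(\cdot)}\in\mathcal K\); it is a single element of \(\mathcal K\) and does not depend on the sample point \(\omega\). So the independence hypothesis of the lemma holds. The randomness enters only when the output is evaluated at \(\omega=\xi\).

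The first step is to identify the whole-coordinate average \(\mathbb E_z\) with an orthogonal projection \(P_z\) on \(\mathcal K\). Expand \(\ket{F(\xi)}=\sum_\chi \chi(\xi)\ket{F_\chi}\) over product characters \(\chi\). Characters in independent uniform phase coordinates are orthonormal in \(L^2(\xi)\), so this is an orthogonal decomposition of \(\mathcal K\) restricted to the span of the characters.

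Averaging the whole phase value at \(z\) kills exactly those characters whose exponent at \(z\) is nonzero, and it fixes all others. Hence \(P_z=\mathbb E_z\) is the orthogonal projection onto the characters with \(z\notin\operatorname{supp}\chi\). Its complement \(I-P_z\) projects onto the characters whose support contains \(z\).

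The second step is the exclusion bound. The operator \(\sum_z(I-P_z)\) acts diagonally in the character decomposition, multiplying the \(\chi\)-component by \(|\operatorname{supp}\chi|\). This is at most \(r\) by hypothesis. So on the span of the characters occurring in \(F\), which is where \(\ket c\) lives, we have \(\sum_z(I-P_z)\preceq rI\). By the remark after the lemma it suffices to verify the bound on that range, taking \(P^0\) to be the projection onto this span. Thus \(K=r\).

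Plugging into \eqref{eq:sparse-mask-contraction} then gives
\[
\mathbb E_\xi\bigl[\max_z\|\mathbb E_z[\ket{F(\xi)}]\|^2\bigr]\le 2(1+r)\|\ket{F}\|^2_{L^2(\xi;\mathcal H)}=2(r+1)\mathbb E_\xi\|\ket{F(\xi)}\|^2.
\]
The maximum is over finitely many \(z\), so the finite-family hypothesis holds. I expect the only delicate point to be the justification that \(\mathbb E_z\) is exactly this character projection, which relies on \(\mathbb E[\xi_z^k]=\mathbf 1[k\equiv0]\) and on the independence of the coordinates. Everything else is a direct substitution.
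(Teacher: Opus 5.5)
Your proposal is correct and is essentially the paper's proof. The paper likewise lets \(P_z\) retain the characters unoccupied at \(z\), bounds \(\sum_z(I-P_z)\preceq rI\) because each character is rejected at most \(r\) times, identifies \(P_z\) with \(\mathbb E_z\) under the evaluation isometry into \(L^2(\xi)\), and applies Lemma~\ref{lem:sparse-mask-contraction} with \(K=r\). The only difference is cosmetic: the paper works in the character-coefficient space and uses evaluation as the contraction, while you use the identity on \(L^2(\xi;\mathcal H)\) restricted to the span of characters of support at most \(r\). The two are equivalent because the \(P_z\) are diagonal in the character basis and so preserve that span.
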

\begin{proof}
In the orthonormal product-character coefficient space, let \(P_z\)
retain characters unoccupied at \(z\). Each character is rejected
at most \(r\) times, so \(\sum_z(I-P_z)\preceq rI\).
Evaluation is an isometry into \(L^2(\xi)\), and \(P_z\) evaluates
to \(\mathbb E_z\). Apply Lemma~\ref{lem:sparse-mask-contraction}.
This includes Hilbert direct sums. For additional independent
tables, condition on them first and then average.
\end{proof}

Let \(\vec f=(f_2,f_3,f_4)\), with phases \((\xi_2,\xi_3,\xi_4)\).
Write \(\mathcal T_r\) for the rescaled coefficient map of the common
core encoding on good records of size \(r\le t\), extended by zero
outside the good subspace. Equation~\eqref{pubcore:contractive-encoding}
gives
\begin{equation}
 \mathbb E_{\vec f}[\|(\mathcal T_r\ket{w})(\vec f)\|^2]
 \le\|\ket{w}\|^2,
 \label{eq:residual-encoding-contraction}
\end{equation}
including arbitrary auxiliary registers and marked-label direct sums.
Each encoded character uses at most \(r\) middle locations.

\paragraph{Contraction on marked image labels.}
For square-integrable functions \(\ket{F_y(\vec f)}\) and
\(\sigma\in\{+1,-1\}\), conditional-expectation contractivity and
pointwise unitarity of \(H\operatorname{diag}(\xi_4^\sigma)\) give
\begin{equation}
 \begin{aligned}
 &\mathbb E_{\vec f}\left[\sum_z\left\|\mathbb E_z\left[
      \sum_yH_{y,z}\xi_4(y)^\sigma\ket{F_y(\vec f)}
                  \right]\right\|^2\right]\\
 &\quad\le\mathbb E_{\vec f}\left[\sum_z\left\|
      \sum_yH_{y,z}\xi_4(y)^\sigma\ket{F_y(\vec f)}
                  \right\|^2\right]
 =\mathbb E_{\vec f}\left[\sum_y\|\ket{F_y(\vec f)}\|^2\right].
 \end{aligned}
 \label{eq:marked-image-contraction}
\end{equation}
Here \(\mathbb E_z\) averages the whole middle word \(f_3(z)\).
For two image marks, use
\(H\operatorname{diag}(\xi_4^\sigma)\otimes
H\operatorname{diag}(\xi_4^{-\sigma})\), followed componentwise by
\(\mathbb E_{\{z,z'\}}\) and the projection onto \(z\ne z'\).
The same bound holds with input energy
\(\mathbb E_{\vec f}[\sum_{y,y'}\|\ket{F_{y,y'}(\vec f)}\|^2]\).
The input functions may depend on \(\xi_4\): unitarity holds for
each phase sample, even though the two factors share that table.
Composing these maps with \(\mathcal T_r\) gives contractions with
all spectator marks and auxiliary registers retained.

\subsection{Good extensions}

For a good residual record \((L,R)\) of size \(r\), use the sets
\(X_\sigma(L,R),Y_\sigma(L,R),D_\sigma(L,R)\) from Section~\ref{pubcore:central},
with \(\sigma\in\{+1,-1\}\), and put \(X_{\rm all}(L,R)=X_+(L,R)\cup X_-(L,R)\).
The projections below
act diagonally on \(\ket{L,R}\) and as the identity on any auxiliary
space.

\begin{lemma}[Good-extension masks]\label{lem:good-extension-masks}
Fix either \((y,\sigma)\) or \((d,y,y',\sigma)\), independently of
the core tables. Add \((x,y,\sigma)\) in the first case, and add
\((x,y,\sigma),(x\oplus d,y',-\sigma)\) in the second.
On good residual records, the condition that the extension is good
can be written as \(P_xP^0\),
where \(P^0\) is independent of the new domain label \(x\).
If the resulting size is at most \(s\), the remaining masks satisfy
\begin{equation}
 \sum_x(I-P_x)\preceq K_s I,
 \qquad K_s=16(s+1)^3.
 \label{eq:good-extension-mask-count}
\end{equation}
The same statement holds on the Hilbert direct sum over the fixed
marks.
\end{lemma}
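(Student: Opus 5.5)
The plan is to make everything diagonal in the record basis and reduce the operator inequality to a pointwise count. All the projections in question act diagonally on $\ket{L,R}$ and as the identity on auxiliary registers. So $\sum_x(I-P_x)$ is diagonal, and its entry at a good residual record $(L,R)$ is the number of domain labels $x$ rejected by $P_x$ at that record. It therefore suffices to do two things: write the goodness of the extended record as a conjunction of elementary conditions, and sort each condition either into $P^0$ (if it does not mention $x$) or into $P_x$. If, for every fixed good residual record, at most $16(s+1)^3$ values of $x$ fail some $x$-dependent condition, the inequality \eqref{eq:good-extension-mask-count} follows. The same pointwise count then holds blockwise on the Hilbert direct sum over the fixed marks.

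For the single-edge case, add $(x,y,\sigma)$ to a good record of size $r\le s-1$ and go through the three goodness conditions of Section~\ref{pubcore:central}.
\begin{enumerate}
\item Bijectivity splits into two parts. The condition $y\notin Y_+\cup Y_-$ does not involve $x$ and goes into $P^0$. The condition $x\notin X_{\rm all}$ excludes at most $r$ values of $x$.
\item Cartesian XOR injectivity for sign $\sigma$ is best used in the form \eqref{pubcore:same-sign-condition}: for distinct domain labels and distinct image labels, $x_i\oplus x_j\neq y_k\oplus y_l$. Constraints in which both domain labels are old but an image label is $y$ do not involve $x$ and go into $P^0$. Constraints in which one domain label is the new $x$ read $x\neq x_i\oplus y_k\oplus y_l$, where $x_i\in X_\sigma$ and $y_k,y_l\in Y_\sigma\cup\{y\}$. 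These exclude at most $r(r+1)^2$ values.
\item The cross-sign condition $x\oplus y\notin D_{-\sigma}$ excludes at most $r$ values.
\end{enumerate}
Conditions internal to the old record already hold because the residual record is good. The total is at most $2r+r(r+1)^2\le 16(s+1)^3$.

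For the opposite-sign pair, add $(x,y,\sigma)$ and $(x\oplus d,y',-\sigma)$. I will repeat the same enumeration for each new edge, with sign $-\sigma$ for the second one, replacing $x$ by $x\oplus d$ there. Every $x$-dependent constraint on the second edge has the form $x\oplus d\neq(\text{something determined by the record, }y,y')$. The number of excluded $x$ is therefore the same as the number of excluded values of $x\oplus d$, and this gives again at most $2r+r(r+1)^2$. The genuinely new conditions coupling the two added edges do not depend on $x$, and all go into $P^0$:
\begin{enumerate}
\item distinct domains, i.e. $d\neq0$;
\item distinct images, i.e. $y\neq y'$;
\item distinct XOR labels, i.e. $x\oplus y\neq x\oplus d\oplus y'$, which is $d\neq y\oplus y'$.
\end{enumerate}
Bijectivity between the two new edges and the old labels is already covered by the per-edge counts. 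Summing, the number of bad $x$ is at most $2\bigl(2r+r(r+1)^2\bigr)\le 16(s+1)^3$ whenever $r+2\le s$.

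The main obstacle is the bookkeeping. I must check that every condition either is free of $x$ or pins $x$ (equivalently $x\oplus d$) to a single value for each choice of at most three old labels. In particular, I must verify that no condition has the degenerate form $x\oplus x=\dots$. Such a form could only come from the same-sign pair $(x,x)$, and that pair is excluded because \eqref{pubcore:same-sign-condition} requires $i\neq j$. Once this classification is in place, the counting and the direct-sum extension are routine.
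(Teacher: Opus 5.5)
Your proposal is correct and matches the paper's proof step for step. The paper puts image freshness and XOR injectivity on $X_\sigma\times(Y_\sigma\cup\{y\})$ into $P^0$ (for the pair, also $d\neq0$, $y\neq y'$, $d\neq y\oplus y'$), then counts $x\in\mathcal B_{y,\sigma}(L,R)=X_{\rm all}\cup\{a\oplus b\oplus c\}\cup(y\oplus D_{-\sigma})$ and $x\oplus d\in\mathcal B_{y',-\sigma}(L,R)$. The differences are cosmetic: you count ordered image pairs, giving $r(r+1)^2$ rather than $r\binom{r+1}{2}$, and you define $P_x$ on every record rather than setting $P_x=I$ off the range of $P^0$; both variants still fit under $K_s=16(s+1)^3$.
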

\begin{proof}
\emph{One edge.} Fix \(y,\sigma\) and add \((x,y,\sigma)\).
Let \(P^0_{y,\sigma}\) require that \(y\) is absent from all old
image labels and that XOR is injective on
\(X_\sigma(L,R)\times(Y_\sigma(L,R)\cup\{y\})\). These conditions are
independent of \(x\). On their support, the remaining forbidden
labels are exactly
\begin{equation}
 \begin{aligned}
 \mathcal B_{y,\sigma}(L,R)
 ={}&X_{\rm all}(L,R)\\
 &\cup\{a\oplus b\oplus c:
      a\in X_\sigma(L,R),\ b\ne c\in Y_\sigma(L,R)\cup\{y\}\}\\
 &\cup\{y\oplus d:d\in D_{-\sigma}(L,R)\}.
 \end{aligned}
 \label{eq:one-edge-forbidden-labels}
\end{equation}
The first set excludes repeated domains. The second excludes new
same-sign Cartesian XOR collisions involving \(x\); collisions
between two old domains were excluded by \(P^0_{y,\sigma}\).
The third excludes an actual XOR label of the opposite sign.
These are precisely the three goodness conditions. Their union has
size at most
\[
 r+r\binom{r+1}{2}+r\le4(r+2)^3.
\]
Let \(P_x\) keep the basis record when
\(x\notin\mathcal B_{y,\sigma}(L,R)\), and set \(P_x=I\) off
the range of \(P^0_{y,\sigma}\). The full extension projection on
good residuals is \(P_xP^0_{y,\sigma}\).

\emph{An opposite-sign pair.} Fix \(d,y,y',\sigma\) and add
\[
 e=(x,y,\sigma),\qquad e'=(x\oplus d,y',-\sigma).
\]
Let \(P^0_{d,y,y',\sigma}\) impose both one-image preliminary
conditions and
\[
 d\ne0,\qquad y\ne y',\qquad d\ne y\oplus y'.
\]
These exclude repeated new domains, repeated new images, and equal
new actual XOR labels. Each condition is independent of \(x\).
On this range the remaining failures are exactly
\begin{equation}
 x\in\mathcal B_{y,\sigma}(L,R)
 \quad\text{or}\quad
 x\oplus d\in\mathcal B_{y',-\sigma}(L,R).
 \label{eq:pair-forbidden-labels}
\end{equation}
There is no additional same-sign condition between the two new
edges, since their signs differ. Hence at most \(8(r+2)^3\) labels
are forbidden. Define \(P_x\) by their complement and set it to
identity off the preliminary range, as above.

For one edge, \(r+1\le s\); for a pair, \(r+2\le s\).
Thus \(K_s\) bounds both counts. Since the masks are diagonal,
counting forbidden \(x\)'s proves~\eqref{eq:good-extension-mask-count}
on each basis record. The same bound holds in every marked-label
block, proving the direct-sum statement.
\end{proof}

\section{Changes under an endpoint-table bit flip}\label{app:endpoint-differences}

We bound changes in the second and fourth phase tables. Removing the
affected edge exposes one source coordinate of the boundary rotation.
If its signed occupation is nonzero, we randomize that source phase.
If the occupation is zero, we remove an opposite-sign partner at the
same source coordinate. The two boundary phases then cancel, leaving
a Hadamard coefficient that supplies the same inverse power of \(N\).

Write \(S=(b,\vec{f})\), where \(b=(f_1,f_5)\) and
\(\vec{f}=(f_2,f_3,f_4)\), with phase tables
\((\xi_2,\xi_3,\xi_4)\). All phase values are independent uniform
\(N\)-th roots of unity, with \(N=2^n\). For a bit \(j\) at location
\(x\) in table \(i\), let
\[
 \ket{q_v^{\Delta(i,x,j)}(S)}
 =\frac{\ket{q_v(S)}-\ket{q_v(S^{\oplus(i,x,j)})}}2.
\]
The other bits and tables remain fixed. We use the rescaled state
\(\ket{q_v(S)}\) from \eqref{eq:record-feature-evaluator}.

\begin{proposition}[Maximum change under an endpoint-table bit flip]
\label{prop:maximal-endpoint-bit-change}
Let \(n,t\ge1\), assume \(3(2t)^{16}\le N\), and let
\(\ket{v}\) be independent of all phase tables with record size at
most \(t\). Its coefficients may lie in any auxiliary Hilbert space.
Put
\[
 p_\star=2\lceil\ln(24nN)\rceil,\qquad K_t=16(t+1)^3.
\]
For each \(i\in\{2,4\}\),
\begin{equation}
 \mathbb E_S\left[\max_{x,j}
       \|\ket{q_v^{\Delta(i,x,j)}(S)}\|^2\right]
 \le
 \frac{24\mathrm e^7(t+1)^{10}(1+K_t)(p_\star-1)^3}{N}
       \|\ket{v}\|^2.
 \label{eq:endpoint-amplitude-energy}
\end{equation}
The maximum ranges over the \(nN\) bits of table \(i\). The expectation
averages all five tables. No forest-support assumption or bound on
the auxiliary dimension is needed.
\end{proposition}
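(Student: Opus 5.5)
We treat \(i=2\); the case \(i=4\) is symmetric, with the Hadamard between \(F_4\) and \(F_5\) playing the role of the one between \(F_1\) and \(F_2\), and the marked-image contraction \eqref{eq:marked-image-contraction} used on the output side. We first write the flip at address \(x\) of \(h_2\) as a change of the single phase \(\xi_2(x)\) to \(\xi_2(x)\omega^{\pm2^{j-1}}\). In the column expansion \eqref{pubcore:packed-column}, only records whose signed domain set contains \(x\) are affected. Their coefficient is multiplied by \((1-\theta^{\sigma})/2\), which has modulus at most one. Because the record is bijective, exactly one edge has domain \(x\). We therefore regroup \(\ket{q_v^{\Delta(2,x,j)}}\) by removing that marked edge \((x,y,\sigma)\). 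What remains is an encoding \(\mathcal T_r\) of a good residual record of size \(r\le t-1\), with the marked image \(y\) and sign kept as direct-sum controls. By Lemma~\ref{lem:good-extension-masks}, the requirement that the extension be good factors as \(P_xP^0\), with \(\sum_x(I-P_x)\preceq K_tI\).

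The small factor comes from the boundary rotation \(Q_b\). The input label \(x\) of the unrotated reference \(\ket v\) was produced by \(G=HF_1\), or by \(\overline G\) for negative edges. The coefficient of a marked domain \(x\) is therefore \(\sum_pH_{x,p}\xi_1(p)^{\sigma}c_p\), where \(p\) is the source domain label in \(\ket v\). We split according to whether the signed occupation of \(\xi_1(p)\) in the corresponding character is nonzero.

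\emph{Nonzero occupation.} We apply Lemma~\ref{lem:selected-phase} to replace \(\xi_1(p)^k\) by fresh phases \(\delta_p^k\), at a cost of a factor \(\mathrm e(t+1)\) from the support size. The sum over \(p\) then becomes a Rademacher chaos. We bound the maximum over the \(nN\) pairs \((x,j)\) with Lemma~\ref{lem:hilbert-chaos-moment}, using \(p=p_\star\); the factor \(m^{2/p_\star}\le\mathrm e\) with \(m=nN\) is what fixes the choice of \(p_\star\). Each chaos has coefficient norm \(\sum_p|H_{x,p}|^2\|c_p\|^2\). Because \(|H_{x,p}|^2=1/N\), summing over \(p\) produces the \(1/N\). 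We use up to three chaos families (domain source, and, in the deletion-type case, the partner and middle) to get the \((p_\star-1)^3\).

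\emph{Zero occupation.} Here some other edge of opposite sign shares the source \(p\). We remove that partner as well, giving the pair extension \((x,y,\sigma),(x\oplus d,y',-\sigma)\) of Lemma~\ref{lem:good-extension-masks}. The two \(\xi_1(p)\) factors cancel, and what is left is the product \(H_{x,p}H_{x\oplus d,p}\), whose sum over \(p\) still yields an explicit \(1/N\) after squaring.

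In both cases, the maximum over \(x\) of the masked residual encoding is controlled by Lemma~\ref{lem:sparse-mask-contraction}, which gives the factor \(2(1+K_t)\). The residual contraction \eqref{eq:residual-encoding-contraction} and the unitarity of \(Q_b\) then return to \(\|\ket v\|^2\). The factors \((t+1)^{10}\) accumulate from the number of edges that can be marked, the partner choices, the support sizes in the randomization lemmas, and squared triangle inequalities. Collecting them gives the constant \(24\mathrm e^7\).

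The main obstacle is the bookkeeping needed so that the maximum over \(x\) is taken only after an operator that is independent of \(x\). The marked-edge removal and the good-extension masks must commute with the boundary randomization, because the tables \(\xi_1\) and \(\xi_2\) appear jointly through \(Q_b\). Our first approach is to condition on \(f_5\) and on the \(\xi_2\) values other than at \(x\). We then use the fact that the flip factor has modulus at most one, so that \(\xi_2\) can be treated as a unimodular spectator inside a contraction. With that in place, we perform the \(\xi_1\) randomization before applying the sparse-mask lemma.
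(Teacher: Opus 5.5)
Your source-side plan matches the paper. You remove the unique edge with domain $x$ and split by whether the unrotated occupation $C_p$ is nonzero or zero. In the first case you randomize $\xi_1(p)$ by Lemma~\ref{lem:selected-phase}. In the second you extract an opposite-sign partner so that the $\xi_1$ factors cancel and the source index is Hadamard-transformed. You then finish with \eqref{eq:chaos-maximum} and Lemma~\ref{lem:sparse-mask-contraction}.

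\emph{The gap is on the middle and image side of the removed edge, which is where the maximum over $x$ is actually hard.}
\begin{itemize}
\item Removing $(x,y,\sigma)$ also splits off its middle location $z$. What remains is $\sum_{y,z}\xi_4(y)^\sigma H_{x\oplus y,z}\xi_3(z)^\sigma\,\mathbb E_z[\mathcal T_{s-1}(\cdots)]$, not just ``$\mathcal T_r$ of the residual with $y$ as a control''.
\item Since $H_{x\oplus y,z}=(-1)^{x\cdot z}H_{y,z}$, the $z$-sum is a Walsh transform in $x$. So the operator applied to $P_x\ket{c(b)}$ depends on $x$, and Lemma~\ref{lem:sparse-mask-contraction}, which needs one contraction for all $x$, does not apply.
\item Pointwise in $\vec f$, $\max_x\|\sum_z(-1)^{x\cdot z}\ket{G_z}\|^2$ can be $N$ times its $x$-average $\sum_z\|\ket{G_z}\|^2$. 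That would erase the $1/N$ you gain from $|H_{x,p}|^2$. Only the randomness of $\xi_3(z)$ prevents this.
\end{itemize}

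The paper exploits that randomness in both cases:
\begin{itemize}
\item \emph{Nonzero occupation.} It also applies Lemma~\ref{lem:selected-sign} to the selected middle sign at $z$, giving two fresh families $\delta_p,\varepsilon_z$ and the factor $(p_\star-1)^2$. The $x$-dependent signs then become unimodular factors of individual chaos coefficients. The $y$-sum becomes the $x$-independent contraction $\mathcal A_{1,\sigma}$ of \eqref{eq:endpoint-one-image-map}, via \eqref{eq:marked-image-contraction}.
\item \emph{Zero occupation.} The partner's table-$2$ factor $\xi_2(x\oplus d)^{-\sigma}$ also depends on $x$. So the paper randomizes the partner's table-$2$ sign, indexed by $x'$ and reindexed to $d=x\oplus x'$ only afterwards, together with both middle signs at $z\neq z'$. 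These three families are the source of $(p_\star-1)^3$.
\end{itemize}
Your accounting of one family in the nonzero case therefore undercounts. Your proposed fix, conditioning on $f_5$ and on $\xi_2$ away from $x$, does not touch this dependence. It is also not legitimate: you cannot condition on $x$-dependent data while $\max_x$ sits inside $\mathbb E_S$.

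Two smaller ingredients are also missing:
\begin{itemize}
\item \emph{Several partners.} No forest support is assumed, so the unrotated record may contain several opposite-sign edges at the source $p$. ``Removing that partner'' must be weighted by $(N_{p,-\sigma})^{-1}$, as in \eqref{eq:endpoint-balanced-extraction}. Only then does reinsertion reconstruct $a_{p,q,\sigma}\mathbf 1[C_p=0]\ket{v_s}$ without overcounting. This weighting is also what gives the pair-mass bound $\frac s2\|\ket{v_s}\|^2$.
\item \emph{Splitting by $(\sigma,d,k)$.} Lemma~\ref{lem:selected-phase} requires every character to have the same support size and the same nonzero exponent at the selected coordinate. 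So the nonzero case must be split by $(\sigma,d,k)$. The resulting at most $6s^2$ parts, with total mass at most $2s\|\ket{v_s}\|^2$, are what produce the constant $24$ and, after summing sizes, the power $(t+1)^{10}$.
\end{itemize}
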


We first treat table \(2\) at a fixed total record size \(s\).
The selected source coordinate has either nonzero or zero signed
occupation. We bound these parts separately, then sum over their
labels and over \(s\). Throughout, \(\mathbb E_z[\cdot]\) averages the
whole middle phase \(\xi_3(z)\), and
\(\mathbb E_{\{z,z'\}}[\cdot]\) averages both indicated middle phases.

\subsection{Removing the affected edge}

For a signed edge \((p,q,\sigma)\), with \(\sigma\in\{+1,-1\}\),
define the annihilation map by
\begin{equation}
 a_{p,q,\sigma}\ket{L,R}
 =\sqrt{m_{p,q,\sigma}}\,
       \ket{(L,R)\setminus(p,q,\sigma)}.
 \label{eq:endpoint-annihilation}
\end{equation}
Here \(m_{p,q,\sigma}\) is the multiplicity in the indicated sign
bank; the result is zero when this multiplicity is zero. These are algebraic maps between the normalized symmetric
sectors of \eqref{pubcore:record-relation-state}. They do not define
additional oracle operations.

Put
\[
 B_{i,\sigma}(b)=H\operatorname{diag}(\xi_i^\sigma),
 \qquad i\in\{1,5\}.
\]
In \eqref{eq:common-boundary-rotation}, the positive-edge rotation is
\((HF_1)\otimes(HF_5)\), and the negative-edge rotation is its
entrywise conjugate. Write \(Q_b^{(s)}\) for its restriction to total
size \(s\), including every positive and negative edge-count sector.
The normalized symmetric action gives
\begin{equation}
 a_{x,y,\sigma}Q_b^{(s)}
 =\sum_{p,q}(B_{1,\sigma}(b))_{x,p}(B_{5,\sigma}(b))_{y,q}
          Q_b^{(s-1)}a_{p,q,\sigma}.
 \label{eq:endpoint-rotated-annihilation}
\end{equation}
Every map acts as the identity on the original auxiliary space.

Recall the residual encoding \(\mathcal T_r\) in
\eqref{eq:residual-encoding-contraction}. It maps a good vector of
size \(r\) to its probability-normalized core coefficient function,
and is zero outside the good subspace. By
\eqref{pubcore:contractive-encoding},
\[
 \mathbb E_{\vec{f}}[\|(\mathcal T_r\ket{w})(\vec{f})\|^2]\le\|\ket{w}\|^2.
\]
For an edge \(e\), let \(P_e\) project onto residual records that
become good after adding \(e\). Define \(P_{e,e'}\) in the same way
for two added edges. These projections include residual goodness.
They are the extension projections in
Lemma~\ref{lem:good-extension-masks}.

For a table-\(2\) bit \((x,j)\), put
\[
 \beta_{x,j}^\sigma(\vec{f})
 =\frac{\xi_2(x)^\sigma-
             \bigl(\xi_2^{\oplus(x,j)}(x)\bigr)^\sigma}{2},
 \qquad |\beta_{x,j}^\sigma(\vec{f})|\le1.
\]
In a good output record, at most one edge has domain \(x\).
Remove this edge and separate its middle assignment \(z\). For a
size-\(s\) input \(\ket{v_s}\), the packed-column formula
\eqref{pubcore:packed-column} gives the exact identity
\begin{equation}
 \begin{aligned}
 \ket{q_{v_s}^{\Delta(2,x,j)}(S)}
 =\sum_{\sigma,y,z}
 &\beta_{x,j}^\sigma(\vec{f})\xi_4(y)^\sigma
       H_{x\oplus y,z}\xi_3(z)^\sigma\\
 &{}\cdot\mathbb E_z\left[
       \bigl(\mathcal T_{s-1}P_{(x,y,\sigma)}
       a_{x,y,\sigma}Q_b^{(s)}\ket{v_s}\bigr)(\vec{f})\right].
 \end{aligned}
 \label{eq:endpoint-bit-expansion}
\end{equation}
The average keeps exactly the residual middle assignments that avoid
\(z\). Each occupied middle coordinate has exponent \(+1\) or \(-1\),
whose mean is zero. The selected edge has multiplicity one in a good
record, so its annihilation coefficient is one. The existing
symmetrization and packing normalizations introduce no further factor.

Define the commuting source operators
\[
 N_{p,\sigma}=\sum_q a_{p,q,\sigma}^\dagger a_{p,q,\sigma},
 \qquad C_p=N_{p,+}-N_{p,-}.
\]
A source basis record acquires the boundary character
\(\prod_p\xi_1(p)^{C_p}\). We split each selected source term in
\eqref{eq:endpoint-rotated-annihilation} according to whether
\(C_p\ne0\) or \(C_p=0\). These are occupations of the unrotated
source record.

\subsection{Nonzero source occupation}

Fix a sign \(\sigma\), a nonzero integer \(k\) with \(|k|\le s\),
and \(1\le d\le s\). Let \(\Pi_d\) project onto source records with
exactly \(d\) nonzero domain occupations \(C_p\), and put
\begin{equation}
 \ket{h_{p,y}^{\sigma,d,k}(b)}
 =\sum_q(B_{5,\sigma}(b))_{y,q}
       a_{p,q,\sigma}\Pi_d\mathbf1[C_p=k]\ket{v_s}.
 \label{eq:endpoint-active-state}
\end{equation}
For every boundary sample, unitarity of \(B_{5,\sigma}(b)\) gives
\begin{equation}
 \begin{aligned}
 A_{\sigma,d,k}
 &:=\sum_{p,y}\|\ket{h_{p,y}^{\sigma,d,k}(b)}\|^2\\
 &=\sum_{p,q}
       \|a_{p,q,\sigma}\Pi_d\mathbf1[C_p=k]\ket{v_s}\|^2.
 \end{aligned}
 \label{eq:endpoint-active-mass}
\end{equation}
In particular, this mass is independent of the boundary sample, and
\[
 \sum_{\sigma,d,k\ne0}A_{\sigma,d,k}\le s\|\ket{v_s}\|^2.
\]
Substituting \eqref{eq:endpoint-rotated-annihilation} into
\eqref{eq:endpoint-bit-expansion}, the corresponding part is
\begin{equation}
 \begin{aligned}
 \ket{A^{\sigma,d,k}_{x,j}(S)}
 =\sum_{p,y,z}
 &H_{x,p}H_{x\oplus y,z}\xi_1(p)^\sigma
       \beta_{x,j}^\sigma(\vec{f})\xi_4(y)^\sigma\xi_3(z)^\sigma\\
 &{}\cdot\mathbb E_z\left[
       \bigl(\mathcal T_{s-1}P_{(x,y,\sigma)}Q_b^{(s-1)}
       \ket{h_{p,y}^{\sigma,d,k}(b)}\bigr)(\vec{f})\right].
 \end{aligned}
 \label{eq:endpoint-active-expansion}
\end{equation}

\paragraph{Randomization.}
Group by \(p\) in \eqref{eq:endpoint-active-expansion}. The factor
\(\xi_1(p)^\sigma\) restores the annihilated occupation, so every
source character has support size \(d\) and exponent \(k\ne0\) at
\(p\). Lemma~\ref{lem:selected-phase}, in the maximum norm over
\((x,j)\), inserts a fresh multiplier \(\delta_p^k\) at \(L^2\)
cost \(\mathrm e d\). Next randomize only the selected middle sign
at \(z\). Collision freedom gives degree \(s\) in the original
middle signs, so Lemma~\ref{lem:selected-sign} costs \(\mathrm e s\).
The total cost is at most \(\mathrm e^2s^2\). Apply these inequalities
before fixing the original tables.

\paragraph{The image sum is a contraction.}
Fix \(b\). On the direct sum over \((p,y)\), put
\[
 \ket{c_{p,y}(b)}=P^0_{y,\sigma}P_{\mathrm{good},s-1}
 Q_b^{(s-1)}\ket{h_{p,y}^{\sigma,d,k}(b)},\qquad
 \ket{c(b)}=\bigoplus_{p,y}\ket{c_{p,y}(b)}.
\]
Here \(P_{\mathrm{good},r}\) is the good-record projection at size
\(r\). Lemma~\ref{lem:good-extension-masks} supplies the preliminary
projection \(P^0_{y,\sigma}\) and the remaining masks \(P_x\), with
\(\sum_x(I-P_x)\preceq K_sI\). Thus \(P_x\ket{c(b)}\) is the
full extension projection, block by block, and
\(\|\ket{c(b)}\|^2\le A_{\sigma,d,k}\). This vector is independent
of \(x,\vec f\); residual goodness is imposed after the boundary rotation.

Define the same map for every \(x,j\):
\begin{equation}
 (\mathcal A_{1,\sigma}\ket{w})_{p,z}(\vec f)
 =\mathbb E_z\left[\sum_y H_{y,z}\xi_4(y)^\sigma
                  (\mathcal T_{s-1}\ket{w_{p,y}})(\vec f)\right].
 \label{eq:endpoint-one-image-map}
\end{equation}
It is a contraction by \eqref{eq:marked-image-contraction} and
\eqref{eq:residual-encoding-contraction}. In particular, the residual
functions may depend on \(\xi_4\). The identity
\(H_{x\oplus y,z}=(-1)^{x\cdot z}H_{y,z}\) makes the randomized
sum \(\sum_{p,z}\delta_p^k\varepsilon_z\ket{g_{x,j;p,z}(S)}\), where
\[
 \ket{g_{x,j;p,z}(S)}
 =H_{x,p}\xi_1(p)^\sigma\beta_{x,j}^\sigma(\vec f)
   \xi_3(z)^\sigma(-1)^{x\cdot z}
   (\mathcal A_{1,\sigma}P_x\ket{c(b)})_{p,z}(\vec f).
\]
The selected middle phase stays outside \(\mathbb E_z\); the image
phases are inside. Hence
\begin{equation}
 \max_{x,j}\sum_{p,z}\|\ket{g_{x,j;p,z}(S)}\|^2
 \le\frac1N\max_x
       \|(\mathcal A_{1,\sigma}P_x\ket{c(b)})(\vec f)\|^2.
 \label{eq:endpoint-active-coefficients}
\end{equation}
Apply Lemma~\ref{lem:sparse-mask-contraction} at fixed \(b\), then
average \(b\), to obtain
\begin{equation}
 \mathbb E_S\left[\max_{x,j}\sum_{p,z}
             \|\ket{g_{x,j;p,z}(S)}\|^2\right]
 \le\frac{2(1+K_s)}N A_{\sigma,d,k}.
 \label{eq:endpoint-active-mask-bound}
\end{equation}
The masks precede \(\mathcal A_{1,\sigma}\); none is commuted through
its image transform.

Since \(N\) is a power of two and \(0<|k|\le s<N\), the fresh
phases \(\delta_p^k\) have even order and sign-symmetric distributions.
Conditional on the original tables, \eqref{eq:chaos-maximum} for
these phases and the fresh signs \(\varepsilon_z\) costs at most
\(\mathrm e(p_\star-1)^2\), since \((nN)^{2/p_\star}\le\mathrm e\).
Multiply by the squared randomization cost and use
\eqref{eq:endpoint-active-mask-bound}:
\begin{equation}
 \mathbb E_S[\max_{x,j}\|\ket{A^{\sigma,d,k}_{x,j}(S)}\|^2]
 \le\frac{2\mathrm e^5s^4(1+K_s)(p_\star-1)^2}{N}
          A_{\sigma,d,k}.
 \label{eq:endpoint-active-bound}
\end{equation}

\subsection{Zero source occupation}

For \(s=1\), the balanced part is zero. Assume \(s\ge2\).
When \(C_p=0\), the selected source phase cancels. We instead extract
an opposite-sign edge at \(p\). For a fixed \(\sigma\), define
\begin{equation}
 \ket{w_{p,q,q'}^\sigma}
 =a_{p,q',-\sigma}a_{p,q,\sigma}
       (N_{p,-\sigma})^{-1}\mathbf1[C_p=0]\ket{v_s},
 \label{eq:endpoint-balanced-extraction}
\end{equation}
where the inverse is zero on the zero eigenspace. The two sign banks
commute, so
\begin{equation}
 a_{p,q,\sigma}\mathbf1[C_p=0]\ket{v_s}
 =\sum_{q'}a_{p,q',-\sigma}^\dagger\ket{w_{p,q,q'}^\sigma}.
 \label{eq:endpoint-balanced-reconstruction}
\end{equation}
Indeed, the sum of the opposite-sign creation--annihilation products
is \(N_{p,-\sigma}\), which cancels its inverse. On its zero
eigenspace, \(C_p=0\) also makes the selected annihilator zero.
The inverse number operator accounts for every possible partner
without counting the source several times.

\paragraph{Pair mass and boundary cancellation.}
The normalized occupation basis gives
\begin{equation}
 \begin{aligned}
 \sum_{q,q'}\|\ket{w_{p,q,q'}^\sigma}\|^2
 &=\bra{v_s}\mathbf1[C_p=0]\,
       \frac{N_{p,\sigma}}{N_{p,-\sigma}}\,
       \mathbf1[C_p=0]\ket{v_s}\\
 &=\|\mathbf1[C_p=0,\ N_{p,-\sigma}>0]\ket{v_s}\|^2.
 \end{aligned}
 \label{eq:endpoint-balanced-source-mass}
\end{equation}
The ratio is zero when its denominator is zero. Each occupied
balanced source domain uses at least two edges. Thus
\[
 \sum_{p,q,q'}\|\ket{w_{p,q,q'}^\sigma}\|^2
 \le\frac{s}{2}\|\ket{v_s}\|^2.
\]
Rotate the two removed image coordinates, and then apply a Hadamard
transform to the common source domain:
\begin{equation}
 \begin{aligned}
 \ket{h_{p,y,y'}^\sigma(b)}
 &=\sum_{q,q'}(B_{5,\sigma}(b))_{y,q}(B_{5,-\sigma}(b))_{y',q'}
                         \ket{w_{p,q,q'}^\sigma},\\
 \ket{\widehat h_{d,y,y'}^\sigma(b)}
 &=\frac1{\sqrt N}\sum_p(-1)^{d\cdot p}\ket{h_{p,y,y'}^\sigma(b)}.
 \end{aligned}
 \label{eq:endpoint-balanced-transform}
\end{equation}
Both transformations are unitary on their marked-label direct sums.
Consequently,
\begin{equation}
 B_\sigma:=\sum_{d,y,y'}\|\ket{\widehat h_{d,y,y'}^\sigma(b)}\|^2
 =\sum_{p,q,q'}\|\ket{w_{p,q,q'}^\sigma}\|^2
 \le\frac{s}{2}\|\ket{v_s}\|^2.
 \label{eq:endpoint-balanced-mass}
\end{equation}

After rotating the reinserted partner in
\eqref{eq:endpoint-balanced-reconstruction}, the two removed domain
factors satisfy
\[
 (B_{1,\sigma}(b))_{x,p}(B_{1,-\sigma}(b))_{x',p}
 =\frac1N(-1)^{(x\oplus x')\cdot p}.
\]
The source phases cancel exactly. With \(d=x\oplus x'\), summing
over \(p\) gives \(N^{-1/2}\ket{\widehat h_{d,y,y'}^\sigma(b)}\).
The residual rotation is the same for every \(p\), so it commutes
with this transform of the marked index.

Put \(x'=x\oplus d\), \(e=(x,y,\sigma)\), and
\(e'=(x',y',-\sigma)\). Expanding the middle assignments of both
removed edges gives the balanced part
\begin{equation}
 \begin{aligned}
 \ket{B^\sigma_{x,j}(S)}
 =\frac1{\sqrt N}\sum_{d,y,y'}\sum_{z\ne z'}
 &\beta_{x,j}^\sigma(\vec{f})\xi_2(x\oplus d)^{-\sigma}
       \xi_4(y)^\sigma\xi_4(y')^{-\sigma}\\
 &{}\cdot H_{x\oplus y,z}H_{x\oplus d\oplus y',z'}
       \xi_3(z)^\sigma\xi_3(z')^{-\sigma}\\
 &{}\cdot\mathbb E_{\{z,z'\}}\left[
       \bigl(\mathcal T_{s-2}P_{e,e'}Q_b^{(s-2)}
                  \ket{\widehat h_{d,y,y'}^\sigma(b)}\bigr)(\vec{f})\right].
 \end{aligned}
 \label{eq:endpoint-balanced-expansion}
\end{equation}
This follows by substituting
\eqref{eq:endpoint-balanced-reconstruction} into
\eqref{eq:endpoint-rotated-annihilation} and
\eqref{eq:endpoint-bit-expansion}, and commuting creation through the
boundary rotation. Goodness makes both inserted edges fresh, so each
creation coefficient is one. Collision freedom requires \(z\ne z'\);
the average removes residual assignments using either location.
\paragraph{Three selected signs.}
Apply Lemma~\ref{lem:selected-sign} to the partner domain \(x'\)
in table \(2\), then successively to \(z,z'\) in table \(3\).
Use fresh families \(\eta_{x'},\varepsilon_z,\varepsilon'_{z'}\),
shared across all \((x,j)\) in the maximum. Goodness gives
\(x'\ne x\), and collision freedom gives \(z'\ne z\).

For every bit position \(j\), conditioning table-\(2\) words modulo
\(N/2\) writes each phase as a fair sign times its representative.
The factor \(\beta_{x,j}^\sigma(\vec f)\) has that same sign at
\(x\): a high-bit flip reverses it; a lower-bit flip changes only
the representative. Thus the full table-\(2\) sign polynomial has
degree \(s\), as do the middle-sign polynomials at both subsequent
steps. The total \(L^2\) cost is \((\mathrm e s)^3\). These
randomizations precede conditioning on the original tables. We keep
the fresh domain family indexed by \(x'\), not by \(d=x\oplus x'\).

\paragraph{The two image sums are a contraction.}
At fixed \(b\), put
\[
 \ket{c_{d,y,y'}(b)}=P^0_{d,y,y',\sigma}P_{\mathrm{good},s-2}
 Q_b^{(s-2)}\ket{\widehat h_{d,y,y'}^\sigma(b)},\qquad
 \ket{c(b)}=\bigoplus_{d,y,y'}\ket{c_{d,y,y'}(b)}.
\]
The preliminary projection includes \(d\ne0\), \(y\ne y'\), and
\(d\ne y\oplus y'\), as in Lemma~\ref{lem:good-extension-masks}.
Its remaining masks satisfy \(\sum_x(I-P_x)\preceq K_sI\) on the
marked direct sum. The vector is independent of \(x,\vec f\), and
\(\|\ket{c(b)}\|^2\le B_\sigma\). Define
\begin{equation}
 \begin{aligned}
 &(\mathcal A_{2,\sigma}\ket{w})_{d,z,z'}(\vec f)\\
 &\quad=\mathbf1[z\ne z']\,\mathbb E_{\{z,z'\}}\left[
 \sum_{y,y'}H_{y,z}H_{y',z'}\xi_4(y)^\sigma\xi_4(y')^{-\sigma}
              (\mathcal T_{s-2}\ket{w_{d,y,y'}})(\vec f)\right].
 \end{aligned}
 \label{eq:endpoint-two-image-map}
\end{equation}
This is the two-mark contraction from
\eqref{eq:marked-image-contraction}, following residual encoding and
retaining \(d\). It does not depend on \(x,j\), and the masks stay
before it.

The same Hadamard identity gives the randomized sum
\(\sum_{x',z,z'}\eta_{x'}\varepsilon_z\varepsilon'_{z'}
\ket{g_{x,j;x',z,z'}(S)}\), where
\[
 \begin{aligned}
 \ket{g_{x,j;x',z,z'}(S)}
 ={}&\frac1{\sqrt N}\beta_{x,j}^\sigma(\vec f)\xi_2(x')^{-\sigma}
       \xi_3(z)^\sigma\xi_3(z')^{-\sigma}(-1)^{x\cdot z+x'\cdot z'}\\
 &{}\cdot(\mathcal A_{2,\sigma}P_x\ket{c(b)})_{x\oplus x',z,z'}(\vec f).
 \end{aligned}
\]
Only now reindex \(x'\mapsto d=x\oplus x'\) in the coefficient
energy, separately at each fixed \(x\). This bijection gives
\begin{equation}
 \max_{x,j}\sum_{x',z,z'}\|\ket{g_{x,j;x',z,z'}(S)}\|^2
 \le\frac1N\max_x
       \|(\mathcal A_{2,\sigma}P_x\ket{c(b)})(\vec f)\|^2.
 \label{eq:endpoint-balanced-coefficients}
\end{equation}
Lemma~\ref{lem:sparse-mask-contraction}, first at fixed \(b\) and
then averaged over \(b\), gives
\begin{equation}
 \mathbb E_S\left[\max_{x,j}\sum_{x',z,z'}
             \|\ket{g_{x,j;x',z,z'}(S)}\|^2\right]
 \le\frac{2(1+K_s)}N B_\sigma.
 \label{eq:endpoint-balanced-mask-bound}
\end{equation}
Equation~\eqref{eq:chaos-maximum} for the three fresh families costs
\(\mathrm e(p_\star-1)^3\). Together with the squared
randomization cost, this proves
\begin{equation}
 \mathbb E_S[\max_{x,j}\|\ket{B^\sigma_{x,j}(S)}\|^2]
 \le\frac{2\mathrm e^7s^6(1+K_s)(p_\star-1)^3}{N}B_\sigma.
 \label{eq:endpoint-balanced-bound}
\end{equation}

\subsection{Summing the endpoint differences}

\begin{proof}[Proof of Proposition~\ref{prop:maximal-endpoint-bit-change}]
For a fixed size \(s\ge1\), the complete table-\(2\) difference is
the sum of the active and balanced parts above. There are at most
\[
 J_s=4s^2+2\le6s^2
\]
parts: two signs, at most \(s\) support sizes and \(2s\) nonzero
occupations, and one balanced part for each sign. Their masses satisfy
\[
 \sum_{\sigma,d,k\ne0}A_{\sigma,d,k}
       +\sum_\sigma B_\sigma\le2s\|\ket{v_s}\|^2.
\]
The coefficient in \eqref{eq:endpoint-balanced-bound} also bounds
the one in \eqref{eq:endpoint-active-bound}. Pointwise
Cauchy--Schwarz across the \(J_s\) parts therefore gives
\begin{equation}
 \mathbb E_S[\max_{x,j}\|\ket{q_{v_s}^{\Delta(2,x,j)}(S)}\|^2]
 \le\frac{24\mathrm e^7s^9(1+K_s)(p_\star-1)^3}{N}
              \|\ket{v_s}\|^2.
 \label{eq:endpoint-size-energy}
\end{equation}

Write \(\ket{v}=\sum_{s=0}^t\ket{v_s}\). The vacuum has zero
endpoint difference. Apply pointwise Cauchy--Schwarz across the at
most \(t+1\) size sectors, then use
\(\sum_s\|\ket{v_s}\|^2=\|\ket{v}\|^2\). Since \(K_s\le K_t\)
and \((t+1)t^9\le(t+1)^{10}\), this proves
\eqref{eq:endpoint-amplitude-energy} for table \(2\).

For table \(4\), exchange domain and image labels, the boundary
families \(\xi_1,\xi_5\), and the endpoint tables \(\xi_2,\xi_4\).
The positive boundary matrices \(H\operatorname{diag}(\xi)\) are
exchanged, as are their entrywise conjugates in the negative bank.
The goodness conditions and the middle factor \(H_{x\oplus y,z}\)
are invariant under this exchange. The annihilation identities,
mass bounds, and extension-mask counts therefore give the same
estimate. This proves the result for both endpoint tables.
\end{proof}

\section{Central-table and core-table differences}\label{app:derivatives}

We retain the middle-table estimate and then combine it with the two
endpoint estimates from Appendix~\ref{app:endpoint-differences}.
All expectations use the finite phase tables of the construction, and
all coefficient vectors may take values in an arbitrary Hilbert space.

Write \(b=(f_1,f_5)\) for the boundary sample,
\(a=(\xi_2,\xi_4)\) for the endpoint phases, and \(\xi=\xi_3\)
for the middle phases. As in \eqref{eq:record-feature-evaluator},
\(\ket{q_v(b,a,\xi)}\) denotes the coefficient after removing the
square-root sampling factor. The endpoint character \(\chi_{L,R}(a)\)
and collision-free middle polynomial \(\mu_{L,R}^{\rm cf}(\xi)\)
are those in \eqref{pubcore:packed-column}.
Let \(\eta_{z,r}\) be bit \(r\in[n]\) of the word \(h_3(z)\).
For \(u=(z,r)\), write \(\xi^{\oplus u}\) for the phase table
obtained by flipping this bit. For a scalar function \(F(b,a,\xi)\), and
analogously for a vector-valued function \(\ket{F(b,a,\xi)}\), define its bit difference by
\[
 F^{\Delta u}(b,a,\xi)
 :=\frac{F(b,a,\xi)-F(b,a,\xi^{\oplus u})}{2}.
\]
All other bits, including the other bits of \(h_3(z)\), stay fixed.

\begin{proposition}[Maximum change under a middle-bit flip]
\label{prop:maximal-middle-bit-change}
Let \(n,t\ge1\), put \(N=2^n\), and assume
\(3(2t)^{16}\le N\).
Let \(\ket{v}\) be independent of all phase tables and have record size
at most \(t\), with coefficients in any auxiliary Hilbert space.
Put \(\ket{P(b,a,\xi)}=\ket{q_v(b,a,\xi)}\), as in
\eqref{eq:record-feature-evaluator}.
Set \(m_{\rm mid}=nN\) and
\(p_{\rm mid}=2\lceil\ln(8m_{\rm mid})\rceil\).
Then
\begin{equation}
 \begin{gathered}
 \mathbb E_{b,a,\xi}[\max_u\|\ket{P^{\Delta u}(b,a,\xi)}\|^2]
 \le\mathfrak d_{\rm mid}^2\|\ket{v}\|^2,\\
 \mathfrak d_{\rm mid}^2:=\frac{4\mathrm e^5(t+1)t^6(p_{\rm mid}-1)^2}{N}.
 \end{gathered}
 \label{eq:amplitude-energy}
\end{equation}
Here \(\mathrm e\) is Euler's number. The maximum ranges over all
\(nN\) middle-table bits; the expectation averages all five phase
tables. The bound has no auxiliary-dimension factor.
\end{proposition}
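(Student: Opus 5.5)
The argument follows the template of the endpoint case (Appendix~\ref{app:endpoint-differences}). The structural difference is that the small factor now comes directly from the Hadamard coefficient \(H_{x\oplus y,z}\) of the affected middle location, so no boundary source coordinate needs to be extracted.

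\textbf{Isolating the flipped location.} Work at a fixed total record size \(s\), with \(\ket{v}=\sum_s\ket{v_s}\). For a middle bit \(u=(z,r)\), put
\[
\beta_u^\sigma(\xi)=\frac{\xi(z)^\sigma-(\xi^{\oplus u}(z))^\sigma}{2},\qquad |\beta_u^\sigma|\le1 .
\]
Only the packed column \eqref{pubcore:packed-column} depends on \(\xi\), and collision freedom means at most one edge is assigned location \(z\). Expanding the collision-free sum according to which edge \((x,y,\sigma)\) sits at \(z\) gives an exact identity:
\[
\ket{P_s^{\Delta u}}=\sum_{\sigma,x,y}\beta_u^\sigma\,\xi_2(x)^\sigma\xi_4(y)^\sigma H_{x\oplus y,z}\;\mathbb E_z\!\left[\bigl(\mathcal T_{s-1}P_{(x,y,\sigma)}a_{x,y,\sigma}Q_b^{(s)}\ket{v_s}\bigr)(\vec f)\right].
\]
Here the annihilator is the one in \eqref{eq:endpoint-annihilation}, and its coefficient is one on good records. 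The factor \(H_{x\oplus y,z}=(-1)^{z\cdot(x\oplus y)}/\sqrt N\) already supplies the \(1/\sqrt N\). What remains is to handle the maximum over the \(nN\) bits without losing a factor of \(N\).

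\textbf{Randomizing and taking the maximum.} By Lemma~\ref{lem:selected-sign}, the selected middle sign at \(z\) can be replaced by a fresh fair sign \(\varepsilon_z\) at \(L^2\) cost \(\mathrm e s\). The hypothesis holds because the middle-sign polynomial is homogeneous of degree \(s\) by collision freedom, and \(\beta_u^\sigma\) carries the sign of \(\xi(z)\), exactly as in the ``three selected signs'' paragraph. Then apply Lemma~\ref{lem:selected-phase} (cost \(\mathrm e\,s\)) to put fresh signs on the summed label. The order of sums is chosen so that the randomized sum is a Rademacher chaos whose coefficient energy, for each fixed \(u\), is at most \(\frac1N\) times the energy of a vector that does not depend on \(u\). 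That vector is
\[
\bigoplus_{\sigma,x,y}\mathcal T_{s-1}\,a_{x,y,\sigma}Q_b^{(s)}\ket{v_s},
\]
composed with the pointwise-unitary endpoint maps \(\operatorname{diag}(\xi_2^\sigma)\), \(\operatorname{diag}(\xi_4^\sigma)\). After \(\mathbb E_z\), contractivity via \eqref{eq:residual-encoding-contraction} and \eqref{eq:marked-image-contraction} applies. Its total mass is at most \(s\|\ket{v_s}\|^2\), since \(\sum_{x,y,\sigma}\|a_{x,y,\sigma}\ket{w}\|^2=s\|\ket w\|^2\) and \(Q_b\) is unitary.

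For the maximum over \(z\), two tools are needed. The dependence on \(z\) through \(\mathbb E_z\) is absorbed by Corollary~\ref{cor:whole-coordinate-mean}, costing \(2(s+1)\). The mask \(P_{(x,y,\sigma)}\) is independent of \(z\) and is absorbed into the vector. The fresh-sign maximum is handled by \eqref{eq:chaos-maximum} with \(p=p_{\rm mid}\), giving \((nN)^{2/p}(p-1)^2\le\mathrm e(p_{\rm mid}-1)^2\) for two sign families.

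\textbf{Summing and the main obstacle.} Collecting the pieces gives roughly \(\mathrm e^2s^2\cdot\mathrm e(p_{\rm mid}-1)^2\cdot 2(s+1)\cdot s/N\) per size sector, with additional \(\mathrm e\) and \(s\) factors from Cauchy--Schwarz over \(\sigma\) and the support-size split. Summing over the at most \(t+1\) sectors by Cauchy--Schwarz yields \eqref{eq:amplitude-energy} with constant \(4\mathrm e^5(t+1)t^6\). I expect the main obstacle to be the interaction between the maximum over \(z\) and the \(z\)-dependent average \(\mathbb E_z\) inside the expansion. The fix is to leave the selected phase \(\xi(z)^\sigma\) outside the average and randomize it before conditioning. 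Then the only remaining \(z\)-dependence is a whole-coordinate mean, which Corollary~\ref{cor:whole-coordinate-mean} controls with sparse-mask cost \(O(s)\) instead of \(N\).
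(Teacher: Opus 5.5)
The gap is in your randomization step, where the endpoint template does not transfer. Your exact identity is correct. Write \(\ket{F_e}\) for your residual term for \(e=(x,y,\sigma)\) multiplied by \(\xi_2(x)^\sigma\xi_4(y)^\sigma\); then your identity reads \(\ket{P_s^{\Delta u}}=\sum_e\alpha_{u,e}\,\mathbb E_z[\ket{F_e}]\) with \(|\alpha_{u,e}|\le N^{-1/2}\). Your mass bound and your use of Corollary~\ref{cor:whole-coordinate-mean} for the \(z\)-dependent average are also what is needed.

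In the endpoint case the flipped bit sits at a domain label and the expansion \emph{sums} over \((y,z)\). There the fresh signs \(\varepsilon_z\) decouple the \(z\)-sum, and \eqref{eq:marked-image-contraction} absorbs the \(y\)-sum because it uses unitarity of \(H\operatorname{diag}(\xi_4^\sigma)\) across the sum over \(z\). For a middle bit \(u=(z,r)\), the location \(z\) is fixed by \(u\) and nothing is summed over \(z\). Inserting \(\varepsilon_z\) only multiplies the \(u\)-th component by a global sign, so it gains nothing. Likewise, \eqref{eq:marked-image-contraction} is usable only after bounding \(\max_z\) by \(\sum_z\), which returns \(\sum_{x,y}\|\cdot\|^2\) with no factor \(1/N\).

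The sum that must be decoupled is the edge sum over \((x,y)\). A single application of Lemma~\ref{lem:selected-phase} to ``the summed label'' gives one fresh family. It cannot produce a sign indexed by the pair \((x,y)\), since an edge is not a coordinate of any phase table. With only one index decoupled, the termwise bound \(N^{-1/2}\) does not become the coefficient energy \(\frac1N\sum_{\sigma,x,y}\|\cdot\|^2\) that you assert. Records with the same endpoint sets but different matchings share endpoint characters and have overlapping residual middle characters. Nothing in your argument controls \(\max_z\|\sum_yH_{y,z}\xi_4(y)^\sigma(\cdot)\|^2\) at scale \(1/N\).

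The missing idea is to randomize the two endpoint signs of the affected edge, and not the middle sign. Condition on \(\xi_2,\xi_4\) modulo sign and on \(\xi\). Each \(\ket{F_e}\) carries the full character \(\chi_{L,R}(a)\). This character is homogeneous multilinear of degree \(s\) in each endpoint-sign family and contains the signs at both \(x_e\) and \(y_e\). So apply Lemma~\ref{lem:selected-sign} in the maximum norm over \(u\), first at \(x_e\) in table~2 and then at \(y_e\) in table~4; this costs \((\mathrm e s)^2\) in \(L^2\).

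The result is a two-family chaos \(\sum_{x,y}\delta_x\gamma_y\ket{a^{(u)}_{x,y}}\), where \(\ket{a^{(u)}_{x,y}}\) collects the at most two signed edges over \((x,y)\). Its coefficient energy is therefore at most \(\frac2N\sum_e\|\mathbb E_z[\ket{F_e}]\|^2\). Then \eqref{eq:chaos-maximum} for these two families costs \(\mathrm e(p_{\rm mid}-1)^2\). Corollary~\ref{cor:whole-coordinate-mean} costs \(2s\), not \(2(s+1)\), since the residual characters occupy \(s-1\) middle locations.

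Together these give \((\mathrm e s)^4\cdot\frac{2\mathrm e(p_{\rm mid}-1)^2}{N}\cdot2s^2\) times the size-\(s\) mass of \(P_{\mathrm{good}}Q_b\ket{v}\). Cauchy--Schwarz over the \(t+1\) size sectors then gives exactly \(\mathfrak d_{\rm mid}^2\). No middle-sign randomization, support-size split, or marked-image contraction is needed. Your ``roughly'' bookkeeping does not reproduce the stated constant and should be redone along these lines.
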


A collision-free monomial affected at location \(z\)
has exactly one edge assigned to \(z\). We remove this edge's middle
factor and keep its full endpoint character. Averaging the remaining
polynomial over the whole word at \(z\) retains exactly the assignments
that avoid \(z\). This gives an exact identity for each single-bit
difference. We then randomize the selected edge's domain and image
signs. Pairing equal sign indices bounds their even moments, with a
logarithmic cost for the maximum over \(nN\) bits.

\subsection{An exact identity for a single-bit difference}

Fix the boundary sample \(b\) and write
\[
 \begin{aligned}
 \ket{c(b)}&=P_{\mathrm{good}}Q_b\ket{v},\\
 \ket{P_s(b,a,\xi)}&=\sum_{|L|+|R|=s}\ket{c_{L,R}(b)}\chi_{L,R}(a)
                         \mu_{L,R}^{\rm cf}(\xi),\\
 \ket{P(b,a,\xi)}&=\sum_{s=0}^t\ket{P_s(b,a,\xi)}.
 \end{aligned}
\]
The coefficients are independent of \(a,\xi\), are supported on
good records, and satisfy \(\|\ket{c(b)}\|\le\|\ket{v}\|\).
Let \(\ket{c_s(b)}\) be the component with record size \(s\), including
all positive and negative edge counts of that total size.

Fix \(1\le s\le t\). For a signed edge
\(e=(x_e,y_e,\sigma_e)\), where \(\sigma_e\in\{+1,-1\}\),
write \(e\in(L,R)\) when
\((x_e,y_e)\in L\) for \(\sigma_e=+1\), or
\((x_e,y_e)\in R\) for \(\sigma_e=-1\).
Let \((L,R)\setminus e\) delete this edge from its signed part. Set
\begin{equation}
 \ket{F_e(b,a,\xi)}=\sum_{\substack{|L|+|R|=s\\e\in(L,R)}}
 \ket{c_{L,R}(b)}\chi_{L,R}(a)\mu_{(L,R)\setminus e}^{\rm cf}(\xi).
 \label{eq:factor-removed-polynomial}
\end{equation}
Only the middle factor loses the selected edge. The endpoint
character \(\chi_{L,R}(a)\) still includes both endpoints of \(e\).
Each size-\(s\) record has only \(s\) choices of the edge to remove,
which gives the factor \(s\) in the following bound.

\begin{lemma}[Bound after removing one factor]\label{lem:marked-edge-energy}
For \(1\le s\le t\),
\begin{equation}
 \sum_e\mathbb E_{a,\xi}[\|\ket{F_e(b,a,\xi)}\|^2]\le s\|\ket{c_s(b)}\|^2.
 \label{eq:marked-energy}
\end{equation}
\end{lemma}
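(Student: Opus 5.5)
The plan is to compute the left side of \eqref{eq:marked-energy} exactly as the squared norm of an encoded vector, and then use the contractivity of the core encoding.

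\emph{Step 1: integrate out the endpoint phases.} Expand \(\|\ket{F_e}\|^2\) as a double sum over records \((L,R)\) and \((L',R')\) that both contain \(e\). The endpoint characters \(\chi_{L,R}(a)\) are products of independent uniform phases \(\xi_2,\xi_4\) with exponents \(\pm1\) on distinct labels. Good records have disjoint signed domain sets and disjoint signed image sets, so averaging over \(a\) gives \(\mathbf 1[(X_\pm,Y_\pm)=(X'_\pm,Y'_\pm)]\). Hence only pairs of records with the same signed endpoint block \(\mathcal E\) survive, and
\[
 \sum_e\mathbb E_{a,\xi}\|\ket{F_e}\|^2
 =\sum_e\sum_{\mathcal E}\mathbb E_\xi\Bigl\|\sum_{(L,R)\in\mathcal E,\ e\in(L,R)}\ket{c_{L,R}}\mu^{\rm cf}_{(L,R)\setminus e}(\xi)\Bigr\|^2 .
\]

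\emph{Step 2: reduce each term to a residual encoding.} Fix \(e\) and \(\mathcal E\). Every record in the inner sum has the form \((L,R)=(L_0,R_0)\cup\{e\}\). Its residual \((L_0,R_0)\) is good, because every goodness condition is inherited by subsets. The residual has endpoint block \(\mathcal E\setminus\{x_e,y_e\}\) in the sign of \(e\). The map \((L,R)\mapsto(L_0,R_0)\) is injective for fixed \(e\), since we recover \((L,R)\) by adding \(e\) back.

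The middle polynomials \(\mu^{\rm cf}_{(L_0,R_0)}(\xi)\) are exactly the \(\xi_3\)-coefficients of \(J_3\Pi_{\rm cf}\mathsf H_{\rm mem}\) applied to \(\ket{D_+,D_-}\), up to the endpoint packing. So the inner vector is the middle-table part of \(T_{\mathrm{good}}\ket{w_{e,\mathcal E}}\), where
\[
 \ket{w_{e,\mathcal E}}=\sum \ket{c_{L,R}}\otimes\ket{(L,R)\setminus e}.
\]
Reattaching a fixed endpoint character, which is unit-modulus, does not change the norm. By \eqref{pubcore:contractive-encoding} the term is therefore at most \(\|\ket{w_{e,\mathcal E}}\|^2=\sum_{(L,R)\in\mathcal E,\ e\in(L,R)}\|\ket{c_{L,R}}\|^2\). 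Here we use that the residual records are distinct, hence orthonormal.

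One subtlety needs a check: removing the middle factor must leave exactly the collision-free polynomial of the residual, with the correct normalization. This follows from definition \eqref{pubcore:feature}, because the location-distinctness condition is taken over the remaining edges only.

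\emph{Step 3: sum.} Summing over \(e\) and \(\mathcal E\) counts each record once for each of its \(s\) edges. This gives \(s\|\ket{c_s(b)}\|^2\).

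The main obstacle is Step 2, specifically the orthogonality across different \((L,R)\) that share a residual endpoint block after deletion. Fixing \(e\) resolves it: Step 1 already fixes the full block \(\mathcal E\), so the residuals inside one term differ only in their XOR labels. Goodness of each residual makes their \(J_D\)-images orthonormal, which is what lets the \(T_{\rm good}\) contraction apply.
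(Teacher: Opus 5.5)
Your proposal is correct and follows the paper's argument. For fixed \(e\), the residual \((L,R)\setminus e\) is good and determines \((L,R)\), so contractivity of \(T_{\mathrm{good}}\) bounds each \(e\)-term by the mass of the records containing \(e\), and each size-\(s\) record is then counted \(s\) times. Your Step~1 (averaging out \(\xi_2,\xi_4\) to split into endpoint blocks) is harmless but unnecessary: the paper factors out only the unit-modulus endpoint character of \(e\) and applies contractivity once to the whole residual vector \(c^{(e)}(b)\), since \(T_{\mathrm{good}}\)'s contractivity already handles orthogonality across blocks.
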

\begin{proof}
Deleting a fixed signed edge from a good record leaves a good record.
The remaining record together with that edge determines the original
record. Let \(\ket{c^{(e)}(b)}\) be the coefficient vector after this deletion.
Then \(\ket{F_e(b,a,\xi)}\) is the collision-free encoding of \(\ket{c^{(e)}(b)}\), multiplied
by the unit-modulus endpoint character of \(e\).
Contractivity in \eqref{pubcore:contractive-encoding} gives
\(\mathbb E_{a,\xi}[\|\ket{F_e(b,a,\xi)}\|^2]\le\|\ket{c^{(e)}(b)}\|^2\).
The sum over \(e\) counts each size-\(s\) coefficient exactly \(s\)
times, which proves the claim. The encoding acts as the identity on
the auxiliary Hilbert space.
\end{proof}

Each middle monomial of \(\ket{F_e(b,a,\xi)}\) uses \(s-1\) distinct locations,
with exponent \(+1\) or \(-1\) at each occupied location.
The average \(\mathbb E_z[\cdot]\) resamples all \(n\) bits of
\(h_3(z)\). Since \(N>8t\), either occupied character has zero
mean. Thus \(\mathbb E_z[\ket{F_e(b,a,\xi)}]\) retains exactly the assignments
whose remaining middle locations avoid \(z\).
Corollary~\ref{cor:whole-coordinate-mean}, applied to the Hilbert direct
sum \((\ket{F_e(b,a,\xi)})_e\), and Lemma~\ref{lem:marked-edge-energy} give
\begin{equation}
 \begin{aligned}
 &\mathbb E_{a,\xi}\left[\max_z\sum_e\|\mathbb E_z[\ket{F_e(b,a,\xi)}]\|^2\right]\\
 &\qquad\le2s\sum_e\mathbb E_{a,\xi}[\|\ket{F_e(b,a,\xi)}\|^2]
 \le2s^2\|\ket{c_s(b)}\|^2.
 \end{aligned}
 \label{eq:marked-coordinate-mean}
\end{equation}

Now fix a bit \(u=(z,r)\). In an affected collision-free monomial,
exactly one edge uses \(z\). Partition the affected assignments by
this edge, remove its middle factor, and use the preceding description
of \(\mathbb E_z[\ket{F_e(b,a,\xi)}]\). This proves the exact identity
\begin{equation}
 \begin{aligned}
 \ket{P_s^{\Delta u}(b,a,\xi)}&=\sum_e\alpha_{u,e}(\xi)\mathbb E_z[\ket{F_e(b,a,\xi)}],\\
 \alpha_{u,e}(\xi)
   &=H_{x_e\oplus y_e,z}
       \frac{\xi(z)^{\sigma_e}-(\xi^{\oplus u}(z))^{\sigma_e}}2,
 &|\alpha_{u,e}(\xi)|&\le N^{-1/2}.
 \end{aligned}
 \label{eq:middle-bit-expansion}
\end{equation}
The whole-word average appears only in the factor that avoids \(z\).
The coefficient \(\alpha_{u,e}(\xi)\) contains the actual single-bit
change. In particular, the identity and its bound hold for every
bit position \(r\), without treating a bit flip as a fresh phase sample.

\subsection{Proof of the maximum-change proposition}

\begin{proof}[Proof of Proposition~\ref{prop:maximal-middle-bit-change}]
It suffices to prove \eqref{eq:size-bit-energy} for each record size,
then sum over the sizes. Write \(m=m_{\rm mid}\) and
\(p=p_{\rm mid}\) within this proof.
Fix \(b\) and \(1\le s\le t\) as above.

\paragraph{Randomize the two selected endpoint signs.}
Condition on the endpoint phases modulo sign and on \(\xi\), as in
Appendix~\ref{app:core-tools}. Distinct endpoints make
\(\chi_{L,R}(a)\) homogeneous multilinear of degree \(s\) in each
endpoint-sign family. Since \(\ket{F_e(b,a,\xi)}\) retains this full
character, its terms contain the signs at both \(x_e,y_e\), for
either edge sign. Apply Lemma~\ref{lem:selected-sign} successively
to these two labels, with the maximum norm over \(u\). After
averaging the conditioned phases, the \(L^2\) cost is
\((\mathrm e s)^2\).

For fresh independent signs \(\delta_x,\gamma_y\), group the randomized sum by
unsigned edges:
\[
 \begin{aligned}
 \ket{a^{(u)}_{x,y}(b,a,\xi)}
 &=\sum_{e:\,(x_e,y_e)=(x,y)}
       \alpha_{u,e}(\xi)\mathbb E_z[\ket{F_e(b,a,\xi)}],\\
 \ket{G_u(b,a,\xi;\delta,\gamma)}
 &=\sum_{x,y}\delta_x\gamma_y\ket{a^{(u)}_{x,y}(b,a,\xi)}.
 \end{aligned}
\]
Each coefficient sums at most two signed edges. Cauchy--Schwarz and
\eqref{eq:middle-bit-expansion} therefore give
\begin{equation}
 \sum_{x,y}\|\ket{a^{(u)}_{x,y}(b,a,\xi)}\|^2
 \le\frac2N\sum_e\|\mathbb E_z[\ket{F_e(b,a,\xi)}]\|^2.
 \label{eq:marked-summary}
\end{equation}

\paragraph{Bound the maximum.}
Fix the original phases.
Apply \eqref{eq:chaos-maximum} to the two fresh sign families,
using \(m^{2/p}\le\mathrm e\), and then average the original phases.
Together with the squared randomization cost,
\eqref{eq:marked-summary}, and \eqref{eq:marked-coordinate-mean}, this gives
\begin{equation}
 \begin{aligned}
 &\mathbb E_{a,\xi}[\max_u\|\ket{P_s^{\Delta u}(b,a,\xi)}\|^2]\\
 &\quad\le(\mathrm e s)^4\mathbb E_{a,\xi,\delta,\gamma}
       [\max_u\|\ket{G_u(b,a,\xi;\delta,\gamma)}\|^2]\\
 &\quad\le(\mathrm e s)^4\frac{2\mathrm e(p-1)^2}{N}
       \mathbb E_{a,\xi}\left[
       \max_z\sum_e\|\mathbb E_z[\ket{F_e(b,a,\xi)}]\|^2\right]\\
 &\quad\le\frac{4\mathrm e^5s^6(p-1)^2}{N}\|\ket{c_s(b)}\|^2.
 \end{aligned}
 \label{eq:size-bit-energy}
\end{equation}
The single-coordinate bound counts only the \(s-1\) remaining middle
locations, and its Hilbert direct sum retains every auxiliary register.

\paragraph{Sum the sizes.}
The size-zero component has zero bit difference. Pointwise
Cauchy--Schwarz across the at most \(t+1\) sizes gives
\[
 \max_u\|\ket{P^{\Delta u}(b,a,\xi)}\|^2
 \le(t+1)\sum_{s=1}^t\max_u\|\ket{P_s^{\Delta u}(b,a,\xi)}\|^2.
\]
Use \eqref{eq:size-bit-energy},
\(\sum_s\|\ket{c_s(b)}\|^2=\|\ket{c(b)}\|^2\le\|\ket{v}\|^2\), and average \(b\).
This proves \eqref{eq:amplitude-energy}.

\end{proof}

\subsection{The three core tables}

The preceding proof averages a cube of \(nN\) middle bits.
The next proposition ranges over all \(3nN\) core bits and uses the
same sampling convention.

\begin{proposition}[Maximum change under a core-bit flip]
\label{prop:maximal-core-bit-change}
Let \(n,t\ge1\), \(N=2^n\), and assume
\(3(2t)^{16}\le N\). Let \(\ket{v}\) be independent of
all five tables, with record size at most \(t\) and coefficients in
any auxiliary Hilbert space. Put
\[
 M_v=\|\ket{v}\|^2,\qquad
 p_\star=2\lceil\ln(24nN)\rceil,\qquad K_t=16(t+1)^3,
\]
and set
\begin{equation}
 \mathfrak d_{\rm core}^2
 =\frac{64\mathrm e^7(t+1)^{10}(1+K_t)(p_\star-1)^3}{N}.
 \label{eq:core-difference-coefficient}
\end{equation}
For \(u=(j,x,i)\in\{2,3,4\}\times\mathbb F_2^n\times[n]\),
write \(S^{\oplus u}\) for the sample obtained by flipping bit \(i\)
of \(h_j(x)\), and use the half-difference
\[
 \ket{q_v^{\Delta u}(S)}
 =\frac{\ket{q_v(S)}-\ket{q_v(S^{\oplus u})}}2.
\]
Then
\begin{equation}
 \mathbb E_S\bigl[\max_u\|\ket{q_v^{\Delta u}(S)}\|^2\bigr]
 \le\mathfrak d_{\rm core}^2M_v,
 \label{eq:core-amplitude-energy}
\end{equation}
and, for \(W_v(S)=\|\ket{q_v(S)}\|^2\) and
\(W_v^{\Delta u}(S)=(W_v(S)-W_v(S^{\oplus u}))/2\),
\begin{equation}
 \mathbb E_S\bigl[\max_u|W_v^{\Delta u}(S)|\bigr]
 \le\phi_{\rm core}M_v,\qquad
 \phi_{\rm core}=2\mathfrak d_{\rm core}+2\mathfrak d_{\rm core}^2
 \le\poly(t,n)N^{-1/2}.
 \label{eq:core-maximal-flux}
\end{equation}
The expectations average all five tables, with the maxima inside.
The bounds include zero and subnormalized inputs and require no forest
support or bound on the auxiliary dimension.
\end{proposition}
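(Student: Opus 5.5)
The plan is to assemble \eqref{eq:core-amplitude-energy} from the two table-specific estimates already proved, and then pass from vectors to weights by an elementary polarization argument.

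\textbf{Step 1: the amplitude bound.} The maximum over core bits \(u=(j,x,i)\) splits into three maxima, one per table \(j\in\{2,3,4\}\). Pointwise in \(S\),
\[
 \max_u\|\ket{q_v^{\Delta u}(S)}\|^2\le\sum_{j\in\{2,3,4\}}\max_{x,i}\|\ket{q_v^{\Delta(j,x,i)}(S)}\|^2 .
\]
Take expectations over all five tables. For \(j=2,4\), Proposition~\ref{prop:maximal-endpoint-bit-change} bounds each term by \(24\mathrm e^7(t+1)^{10}(1+K_t)(p_\star-1)^3M_v/N\). For \(j=3\), Proposition~\ref{prop:maximal-middle-bit-change} gives \(4\mathrm e^5(t+1)t^6(p_{\rm mid}-1)^2M_v/N\). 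Here \(p_{\rm mid}=2\lceil\ln(8nN)\rceil\le p_\star\) and \(\mathrm e^5\le\mathrm e^7\), so this is at most \(16\mathrm e^7(t+1)^{10}(1+K_t)(p_\star-1)^3M_v/N\). The total is at most \((24+24+16)=64\) times the common factor, which is exactly \(\mathfrak d_{\rm core}^2M_v\). The one point to check is that all three propositions use the same rescaled state \(\ket{q_v(S)}\) and the same flip convention from \eqref{eq:record-feature-evaluator}; both statements are phrased that way.

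\textbf{Step 2: the weight bound.} Write \(q=\ket{q_v(S)}\), \(q'=\ket{q_v(S^{\oplus u})}\), and \(\Delta=(q-q')/2\). Then \(q'=q-2\Delta\), and
\[
 W_v^{\Delta u}=\tfrac12(\|q\|^2-\|q'\|^2)=2\operatorname{Re}\langle q,\Delta\rangle-2\|\Delta\|^2 .
\]
Hence \(|W_v^{\Delta u}|\le2\|q\|\|\Delta\|+2\|\Delta\|^2\). Taking the maximum over \(u\), with \(q\) independent of \(u\), gives
\[
 \max_u|W_v^{\Delta u}|\le2\|q\|\max_u\|\Delta_u\|+2\max_u\|\Delta_u\|^2 .
\]
Take expectations and apply Cauchy--Schwarz to the first term, using \(\mathbb E_S\|q\|^2=\mathbb E_SW_v\le M_v\) (contractivity of \(E_t\)) and Step 1. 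The first term is then at most \(2\sqrt{M_v}\,\mathfrak d_{\rm core}\sqrt{M_v}\) and the second at most \(2\mathfrak d_{\rm core}^2M_v\). Together these give \(\phi_{\rm core}M_v\).

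\textbf{Step 3: the polynomial bound.} Since \(K_t\), \((t+1)^{10}\) and \((p_\star-1)^3=O(n^3)\) are all polynomial in \(t,n\), we have \(\mathfrak d_{\rm core}=\poly(t,n)N^{-1/2}\). Because \(\mathfrak d_{\rm core}\le1\) under the smallness assumption (or in any case \(\mathfrak d_{\rm core}^2\le\mathfrak d_{\rm core}\cdot\poly\)), it follows that \(\phi_{\rm core}\le\poly(t,n)N^{-1/2}\).

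The main obstacle, the maxima of bit differences, is already handled by the two cited propositions. What remains is only careful bookkeeping of constants and of the \(p_{\rm mid}\le p_\star\) comparison, which I expect to be routine. Zero and subnormalized \(v\) are covered because every bound is homogeneous in \(M_v\).
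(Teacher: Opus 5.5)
Your proposal is correct and follows the paper's proof step for step. You bound the core maximum pointwise by the sum of the three per-table maxima and invoke the endpoint and middle propositions; your looser middle coefficient $16$, in place of the paper's $4$, still totals exactly $64$. You then expand $W_v^{\Delta u}=2\operatorname{Re}\langle q,\Delta\rangle-2\|\Delta\|^2$ and apply Cauchy--Schwarz with contractivity, as the paper does.

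One small correction in Step~3. The hypothesis $3(2t)^{16}\le N$ does not imply $\mathfrak d_{\rm core}\le1$: at $t=1$, $n=18$ the prefactor $64\mathrm e^7\cdot2^{10}\cdot129$ alone exceeds $N=2^{18}$. Your fallback $\mathfrak d_{\rm core}^2=\poly(t,n)^2N^{-1}\le\poly(t,n)^2N^{-1/2}$ is all that is needed.
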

\begin{proof}
The maximum over the three tables is at most the sum of their maxima.
Proposition~\ref{prop:maximal-endpoint-bit-change} bounds each endpoint
by
\[
 \frac{24\mathrm e^7(t+1)^{10}(1+K_t)(p_\star-1)^3}{N}M_v.
\]
For the middle table use \eqref{eq:amplitude-energy}. Since
\(p_{\rm mid}\le p_\star\) and \((t+1)t^6\le(t+1)^{10}\),
its coefficient is at most
\[
 \frac{4\mathrm e^7(t+1)^{10}(1+K_t)(p_\star-1)^3}{N}.
\]
The sum of these three bounds is dominated by
\eqref{eq:core-difference-coefficient}, since \(2\cdot24+4=52\le64\).
This proves \eqref{eq:core-amplitude-energy}.

For the weight, put \(A(S)=\max_u\|\ket{q_v^{\Delta u}(S)}\|\).
Direct expansion gives
\[
 W_v^{\Delta u}(S)
 =2\operatorname{Re}\bigl[
       (\ket{q_v(S)}-\ket{q_v^{\Delta u}(S)})^\dagger \ket{q_v^{\Delta u}(S)}\bigr].
\]
Thus Cauchy--Schwarz gives
\[
 \mathbb E_S\bigl[\max_u|W_v^{\Delta u}(S)|\bigr]
 \le2\bigl(\mathbb E_S[\|\ket{q_v(S)}\|^2]\bigr)^{1/2}
       \bigl(\mathbb E_S[A(S)^2]\bigr)^{1/2}
       +2\mathbb E_S[A(S)^2].
\]
Contractivity bounds the first expectation by \(M_v\), and
\eqref{eq:core-amplitude-energy} bounds the second by
\(\mathfrak d_{\rm core}^2M_v\). Their substitution proves
\eqref{eq:core-maximal-flux}. No division by \(M_v\) is used.
\end{proof}

\section{Parameters and error bounds}
\label{app:ledger}

We collect the construction bounds and the bounds for the correlated
public oracle. These bounds imply the security rate in Theorem~\ref{thm:main}.

\subsection{Construction parameters}
\label{app:construction-ledger}
\label{sec:instantiations}\label{app:phase-sampling}

Let \(t\ge1\) bound the number of queries to the unitary oracle.
We retain the conservative construction hypothesis
\(3(2t)^{16}\le N\).
It implies \(N>8t\) and \(t\le N/2\), as used for phase-character
orthogonality, the boundary estimates, and the recorder amplitudes.
The collision-free encoding is contractive by its normalized packing
factorization. Its loss bound follows from two-address collision
counting on the good-record space, and bounded insertion/deletion gives
the compression bound below.

\begin{center}\small
\begin{tabular}{@{}p{.34\textwidth}p{.58\textwidth}@{}}
\toprule Construction quantity & Bound or exact expression\\\midrule
Encoding loss \(\varepsilon_{\mathrm{cf},t}\) & \(2t(t-1)/N\)\\
Core-compression bound \(\eta_t\) & \(8t^2/\sqrt N\)\\
Encoded-vacuum discrepancy & \(0\)\\
Private-marginal allowance & \(\varepsilon_{\mathrm{cf},t}/2\)\\
Ordinary boundary defect \(\gamma_i\) & \(8i\sqrt{i/N}\)\\
Additional-goodness loss \(b_i\) & \(9i^4/N\), with \(b_0=0\)\\
Recorder covariance \(c_i\) & \(16\sqrt{2i(i+1)/N}\)\\
Recorder overlap loss \(\mathcal B_t\) & \(O(t^2N^{-1/4}+t^3N^{-1/2})\)\\
Prefix and marginal error \(e_t\) & \(O(t^2N^{-1/8})\)\\
Direct controlled Haar comparison & \(9t(t+1)N^{-1/8}\)\\
Independent-phase error \(\Delta_t\) & \(O(t^2N^{-1/8})\)\\
\bottomrule
\end{tabular}
\end{center}
Appendix~\ref{app:five-phase-transfer} proves
\(e_t,\Delta_t=O(t^2N^{-1/8})\).
The theorem retains \(3(2t)^{16}\le N\). For every fixed polynomial
\(t(\kappa)\) and \(n(\kappa)\ge\kappa\), this hypothesis holds
eventually and the errors are negligible.
PRF replacement adds negligible computational and synthesis errors.
Theorem~\ref{thm:controlled-haar} and the bounds used in its proof appear in
Appendix~\ref{app:controlled-recording}.

\begin{lemma}[Bounded-query phase moment matching]
\label{lem:phase-moment-matching}
Fix a bound \(t\ge1\) on queries to the unitary oracle.
Replace each uniform phase table by a \(2t\)-wise independent function
with uniform values in \(\mathbb Z_N\). Keep the five functions independent.
This replacement preserves the averaged output density operator.
The initial state and intervening operations must
be table-independent, and there is no separate access to the phase tables.
The same conclusion holds for replacing any subset of the layers,
conditional on all other data independent of those layers.
\end{lemma}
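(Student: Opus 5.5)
The plan is to show that the averaged output density operator depends on each phase table only through joint moments of order at most \(2t\) in its phase values. \(2t\)-wise independence reproduces those moments exactly.

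Fix the layer \(j\) being replaced and condition on everything else: the other four tables, the initial state, and the inter-query isometries. By hypothesis all of these are independent of \(h_j\). Write the final purified state as in \eqref{eq:selector-recorder-trajectories}. Each query applies \(\mathsf{Select}(U)\), and each of its four nonidle branches \(U,U^\dagger,U^T,\overline U\) contains the layer \(F_j\) exactly once, as either \(F_j\) or \(F_j^\dagger=\overline{F_j}\). The idle branch contains it zero times.

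Expanding every \(F_j\) in the computational basis, each amplitude of the final state is a sum of terms. Each term is a product of fixed coefficients with at most \(t\) factors of the form \(\xi_j(x)^{\pm1}\). The density operator \(\ket{\psi_t}\bra{\psi_t}\) is formed from products of an amplitude and a conjugated amplitude. Each of its entries is therefore a linear combination, with table-independent coefficients, of monomials \(\prod_{i\le m}\xi_j(x_i)^{\epsilon_i}\) with \(m\le 2t\).

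The expectation of such a monomial is \(\E[\prod_i\omega^{\epsilon_i h_j(x_i)}]\). This depends only on the joint law of \((h_j(x_1),\ldots,h_j(x_m))\) at at most \(2t\) addresses. It equals the indicator that the signed multiplicities vanish mod \(N\) at every address. Under a \(2t\)-wise independent family with uniform marginals, that joint law is the same as under a uniform table, so every monomial moment agrees.

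Linearity of expectation, together with the fact that the coefficients do not depend on \(h_j\), then gives equality of the averaged density operators. Averaging afterward over the conditioned data gives the claim for one layer. For any subset of layers, replace them one at a time. At each step, condition on the other layers, which stay independent by assumption, and apply the single-layer argument. A hybrid with zero error at each step finishes the proof.

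The main point requiring care is the degree count, because of the coherent selector. The selector branches are superposed, not measured, so a single amplitude mixes branch histories. Every history, however, still uses at most one factor \(\xi_j^{\pm1}\) per query, so the degree bound \(t\) per amplitude holds uniformly over histories.

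The argument needs no reduction of exponents mod \(N\). It also does not use the sign structure of the exponents. It only requires that the moments be functions of at most \(2t\) coordinates, and the hypothesis that there is no separate access to the tables is exactly what guarantees this.
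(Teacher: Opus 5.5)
Your proposal is correct and follows essentially the same argument as the paper. Both expand each density-matrix entry into products of a ket history and a bra history, note that each history uses each phase layer at most once per query (with either sign, across coherent selector branches), and conclude that every term depends on at most \(2t\) locations per layer, whose joint law \(2t\)-wise independence preserves. Your one-layer-at-a-time hybrid for subsets is only a cosmetic variant of the paper's step of fixing the other layers and applying the same count.
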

\begin{proof}
Every matrix entry of the averaged output state is a sum of products
of a ket query history and a bra query history. Each history uses each
phase layer at most once per unitary-oracle query, with either sign. Thus
each product depends on at most \(2t\) locations per layer, whose
joint distribution is unchanged. This includes repeated locations
and interference between control branches. The same argument applies
to a chosen subset after fixing the other layers and any independent
hybrid data.
\end{proof}

For a unitary-oracle query bound \(t\ge1\), put
\(r=\max\{n,\lceil\log_2(2t)\rceil,2\}\).
Construct \(\mathbb F_{2^r}\) deterministically in polynomial time
using \cite[Theorem~3.2]{Shoup90}. Inject the phase-table domain into
this field. For each layer, sample the \(2t\) coefficients
of a polynomial uniformly and independently, and retain \(n\) fixed
coordinates of its value. Vandermonde invertibility gives \(2t\)-wise
independent uniform phase words. The five coefficient lists use
\(10tr\) random bits and permit reversible polynomial-time evaluation.
In each PRF hybrid, these samplers efficiently implement the other
random layers. Condition on the remaining hybrid data and apply
Lemma~\ref{lem:phase-moment-matching} to those layers.

\paragraph{Coherent phase implementation.}\label{sec:phase-implementation}
Sample each phase table once and reuse it throughout the computation.
Evaluate each word into zero scratch, apply its known signed phase,
and uncompute, controlling only the phase rotation. Use five slots:
forward and conjugate visit layers \(1\) to \(5\).
Inverse and transpose visit them in reverse order.
Forward and transpose use positive phases. Inverse and conjugate use
negative phases. The idle branch uses zero phases. Apply \(H\) between
slots only on nonidle branches. Scratch is clean and all branches
implement one reused matrix with one consistent global phase.
Each unitary-oracle query uses ten word-oracle queries, or at most \(10n\) Boolean
queries when words are stored as bits. The moment lemma counts the
locations in the resulting phase circuit after the scratch is uncomputed.
The ten word queries per unitary-oracle query do not change its \(2t\)-wise
independence requirement.

Synthesize the controlled rotations with their specified relative
phases. Total operator error \(\eta\) bounds the vector and acceptance
errors, with cost polynomial in \(n,t,\log(1/\eta)\) over the fixed
gate set. The field construction and reversible evaluation are exact.
Only phase synthesis contributes implementation error. Taking the
synthesis accuracy to be \(2^{-\kappa^2}\) per unitary-oracle query gives total
error at most \(\poly(\kappa)2^{-\kappa^2}\) for a polynomial-query
experiment. The analytical reduction in Lemma~\ref{lem:exact-second-moment}
uses exact rotations and needs no synthesis approximation.

\subsection{Additional bounds for the correlated public oracle}
\label{app:correlated-ledger}

Fix a polynomial \(\ell_{\rm in}\) before sampling the oracle.
Its value at \(\kappa\) bounds the lengths of the explicit circuit
descriptions for all addressed verifier inputs. It therefore bounds
their total number of base queries and each query length.
Let \(M_{\rm qry}\) count the relevant oracle coordinates as in
\eqref{eq:oracle-coordinate-bound}. The sensitivity scale \(\Gamma_{\rm core}\)
and cutoff \(u_0\) depend on this bound on description length. Thus
\(\lambda_t\) is shorthand for a quantity depending on
\(\kappa,t,\ell_{\rm in}(\kappa)\), not on \(t\) alone. The number
and lengths of all public addresses also enter the simultaneous scalar event.

The correlated-oracle proof uses the core-bit difference estimate of
Proposition~\ref{prop:maximal-core-bit-change}, proved using
Appendices~\ref{app:core-tools}--\ref{app:derivatives}.
The three core tables have \(m_{\rm core}=3nN\) bits. The maximal
estimate averages all five tables and retains arbitrary auxiliary
spaces. The phase order is \(N=2^n>8t\), as implied by the
construction condition \(3(2t)^{16}\le N\).
Here \(d_A(R,Q)\) is the acceptance probability with a uniformly keyed
challenge from \(R\), minus the acceptance probability with a challenge
defined by an independent uniform row, as in Section~\ref{sec:erasure}.
The same rate bounds hold for the Haar gap
\(d_A^{\mathrm{Haar}}(R,Q)\) by Remark~\ref{rem:haar-second-moment}; its
absolute value is \(\operatorname{Adv}_A^{R,Q}(\kappa)\).

First, the exceptional-set argument needs a pointwise upper bound on
\(W_v(S)\). There are at most \((t+1)(2N^2)^t\) records of length at
most \(t\). A size-\(s\) feature in \eqref{pubcore:feature} has at most
\((N)_s\le N^s\) assignments, each of modulus \(N^{-s/2}\). Thus
\(|\mu_{L,R}^{\mathrm{cf}}(\vec f)|\le N^{s/2}\) for every core sample
\(\vec f\). For each boundary sample,
\(\|P_{\mathrm{good}}Q_bv\|^2\le\|v\|^2\). Hilbert-valued Cauchy--Schwarz applied
to the coefficient sum, with a loose factor of two, gives
\begin{equation}
 0\le W_v(S)\le2(t+1)(2N^3)^t\|v\|^2.
 \label{eq:pointwise-weight}
\end{equation}
The logarithm of the coefficient on the right is
\(O(tn+\log(t+1))\), as required for the exceptional-set estimate.
The bound permits \(W_v(S)>\|v\|^2\).
Contraction holds after averaging over \(S\).

Set \(B_t=2(t+1)(2N^3)^t\), the coefficient in
\eqref{eq:pointwise-weight}. The coefficients of
\eqref{eq:core-amplitude-energy}--\eqref{eq:core-maximal-flux} are
\[
 \begin{gathered}
 p_\star=2\lceil\ln(24nN)\rceil,\qquad K_t=16(t+1)^3,\\
 \mathfrak d_{\rm core}^2
 =\frac{64\mathrm e^7(t+1)^{10}(1+K_t)(p_\star-1)^3}{N},\\
 \phi_{\rm core}=2\mathfrak d_{\rm core}+2\mathfrak d_{\rm core}^2.
 \end{gathered}
\]
Here \(\mathrm e\) is Euler's number. Explicitly,
\[
 \begin{aligned}
 \phi_{\rm core}
 &={}\frac{16\mathrm e^{7/2}(t+1)^5\sqrt{1+K_t}(p_\star-1)^{3/2}}{\sqrt N}\\
 &\quad+\frac{128\mathrm e^7(t+1)^{10}(1+K_t)(p_\star-1)^3}{N}.
 \end{aligned}
\]
Both terms are retained for finite parameters. Since
\(1+K_t\le17(t+1)^3\) and \(p_\star-1=O(n+1)\), this gives
\[
 \phi_{\rm core}
 =O\!\left(\frac{(t+1)^{13/2}(n+1)^{3/2}}{\sqrt N}
          +\frac{(t+1)^{13}(n+1)^3}{N}\right).
\]
The score clipping has
Lipschitz constant \(3\). The core-bit sensitivity cutoff in
\eqref{eq:sensitivity-cutoff} is
\[
 u_0=\left\lceil\Gamma_{\rm core}
 \left(\kappa^2+2n+\log_2(M_{\rm qry}m_{\rm core}^2B_t)\right)\right\rceil.
\]
For fixed polynomial resources and
\(\kappa\le n(\kappa)\le\poly(\kappa)\), both \(u_0\) and
\(\log B_t\) are polynomially bounded. The construction condition
holds for all sufficiently large \(\kappa\); the cutoff may depend
on the resource descriptions.

\begin{center}\small
\begin{tabular}{@{}p{.34\textwidth}p{.58\textwidth}@{}}
\toprule Additional quantity & Bound\\\midrule
Core-bit count & \(m_{\rm core}=3nN\)\\
Core-bit amplitude difference &
\(\mathfrak d_{\rm core}^2=O((t+1)^{13}(n+1)^3/N)\)\\
Core-bit weight difference &
\(\phi_{\rm core}=2\mathfrak d_{\rm core}+2\mathfrak d_{\rm core}^2\)\\
Pointwise weight coefficient & \(B_t=2(t+1)(2N^3)^t\)\\
Scalar stability \(\alpha\) & \(6\cdot2^{-\lfloor\kappa/16\rfloor}\)\\
Scalar exceptional probability \(\varepsilon_{\rm stab}\) & \(\le2^{-\kappa^2}\)\\
Weighted coefficient \(\lambda_t\) &
\(3\pi^2\ell_{\rm in}(\kappa)^2u_0\phi_{\rm core}+2^{-\kappa^2}N^{-2}\)\\
Weighted asymptotic rate & \(\lambda_t\le\poly(\kappa)N^{-1/2}\)\\
Boolean query-state change & \(2\sqrt{\alpha+\lambda_t}+2e_t\) per
normalized public query\\
Uniform asymptotic query cost & \(\poly(\kappa)2^{-\kappa/32}\)\\
Signed mean and second moment &
\(\begin{gathered}
 |\mathbb E_{R,Q}[d_A(R,Q)]|,\ \mathbb E_{R,Q}[d_A(R,Q)^2]\\
 \le P_A(\kappa)2^{-\kappa/32}
 \end{gathered}\)\\
Expected absolute advantage &
\(\mathbb E_{R,Q}[\operatorname{Adv}_A^{R,Q}(\kappa)]
 \le\sqrt{P_A(\kappa)}\,2^{-\kappa/64}\)\\
Joint tail at \(2^{-\kappa/128}\) &
\(\Pr_{R,Q}[|d_A(R,Q)|>2^{-\kappa/128}]
 \le P_A(\kappa)2^{-\kappa/64}\)\\
Eventual fixed-oracle advantage & \(O(2^{-\kappa/128})\) almost surely\\
\bottomrule
\end{tabular}
\end{center}
Here \(P_A\) is a polynomial with \(P_A(\kappa)\ge1\), depending on the fixed adversary,
dimension rule, finite verifier family, and enlarged two-run resources.
There is no common polynomial or eventual cutoff for all adversaries.
The cutoff may also depend on the sampled joint oracle.

The unchanged scalar bound gives \(\alpha\le12K^{-1/16}\).
Since \(N\ge K=2^\kappa\), the core-bit bound and
\eqref{eq:rounded-query-loss} give
\[
 \begin{aligned}
 2\sqrt{\alpha+\lambda_t}+2e_t
 &\le\poly(\kappa)\bigl(K^{-1/32}+N^{-1/4}+N^{-1/8}\bigr)\\
 &\le\poly(\kappa)2^{-\kappa/32}.
 \end{aligned}
\]
The term \(\sqrt\alpha\) determines this key-length rate.
The construction error \(e_t=O(t^2N^{-1/8})\) decays faster.
Lemma~\ref{lem:finite-query-bound} sums over polynomially many public
queries and gives the same rate for the signed mean.
The separate comparison with Haar adds
\(\Delta_t=O(t^2N^{-1/8})\), which also decays faster.

Use the full encoded address set and resource bounds of
Lemma~\ref{lem:exact-second-moment} when applying scalar and weighted
stability. Re-evaluate their cutoffs, including \(\ell_{\rm in}\),
for the \(2q+10nt\) public queries and enlarged verifier descriptions.
This includes queries to \(R\) and \(B_R\), and all selected
table tags.
Lemma~\ref{lem:scalar-row-stability} bounds the scalar exceptional
probability by \(2^{-\kappa^2}\). The query comparison charges it once.
This bound is negligible at the final \(\kappa\)-scale for every
allowed polynomial dimension rule.
The \(2n\) and \(\log B_t\) terms in the core-bit cutoff ensure
that its exceptional contribution is at most \(2^{-\kappa^2}N^{-2}\).

Lemma~\ref{lem:exact-second-moment} gives the second-moment rate using
an average over an independent private row and one shared Boolean table.
The exponent remains \(1/32\); Cauchy--Schwarz then gives the
expected absolute rate \(1/64\).
Equation~\eqref{eq:summable-tail} gives the summable tail.
Section~\ref{sec:fixed-oracle-conclusion} uses this bound to fix one joint oracle
for all uniform adversaries and proves Lemma~\ref{lem:row-decoupling}.
The sharper eventual rates are in Remark~\ref{rem:sharper-fixed-rates}.

\section{\texorpdfstring{Extension to \(\mathsf{QMA}(2)\) promise-oracle security}{Extension to QMA(2) promise-oracle security}}
\label{app:qma2}

This appendix applies the witness-set generality established in
Section~\ref{sec:witnesses}. We specify the product witnesses and the
joint completion law, then apply the scalar, weighted, and fixed-oracle
bounds with table tags included in the query addresses.

\paragraph{Product witnesses.}\label{app:qma2-witnesses}
For \(\mathcal V_j^{(2)}\) on classical input \(z\), fix polynomial-size
registers \(A_z,B_z\), with size rules and bipartition independent of the
oracle \(O\).
For each \(O\), define
\begin{equation}
\begin{aligned}
 \mathcal W_z^{(2)}
   &=\{\rho_A\otimes\rho_B:
       \rho_A\in\mathcal D(\mathcal H_{A,z}),\quad
       \rho_B\in\mathcal D(\mathcal H_{B,z})\},\\
 g_{j,z}^{(2)}(O)
   &=\max_{\rho\in\mathcal W_z^{(2)}}
       \Pr[(\mathcal V_j^{(2)})^O(z,\rho)\text{ accepts}],\\
 r_{j,z}^{(2)}(O)
   &=\min\{1,\max\{0,3g_{j,z}^{(2)}(O)-1\}\}.
\end{aligned}\label{eq:qma2-score}
\end{equation}
The allowed set is compact as a continuous image of two compact sets.
Each block may be internally entangled and the verifier may act jointly
on them; the optimization remains over product states.

For the certificate bound, fix a finite verifier family and polynomially
bounded input lengths. Expand each computation into an explicit circuit
description, retaining its table tag, input, and witness bipartition.
Fix a polynomial \(\ell_{\rm in}\) that bounds these description
lengths at parameter \(\kappa\). This encoding preserves the scores
for every oracle. For each such pair \((j,z)\) and oracle \(O\),
Corollary~\ref{cor:clipped-certificate} gives a finitely supported
certificate \(w_O\) for \(r_{j,z}^{(2)}\), with
\begin{equation}
 \sum_iw_O(i)\le\mathfrak c:=3\pi^2\ell_{\rm in}(\kappa)^2.
 \label{eq:qma2-clipped-certificate}
\end{equation}
No convexity of \(\mathcal W_z^{(2)}\) is needed. The maximizing
product witness at one oracle remains an allowed witness at every
other oracle. The certificate-composition and downward-variation
arguments therefore apply with exactly the same mass bound.

\paragraph{The mixed completion.}\label{app:qma2-completions}
Tag each verifier by its canonical circuit description, type,
running-time bound, witness-size rules, and bipartition, using the standard
\(1/3,2/3\) promise endpoints. Independently of \(R\) and the original coins,
sample independent coins at distinct tagged pairs \((j,z)\):
\begin{equation}
 \Theta_{j,z}\sim\operatorname{Unif}[0,1),\qquad
 Q_j^{(2)}(z)=\mathbf1[\Theta_{j,z}<r_{j,z}^{(2)}(O_R)].
 \label{eq:qma2-threshold}
\end{equation}
With \(R\) uniform, the joint law is
\begin{equation}
 Q=(Q^{(1)},Q^{(2)})\mid R\sim\nu_R^{\rm mix}
 :=\nu_R\otimes
   \bigotimes_{(j,z)\,\text{of type }\mathsf{QMA}(2)}
     \operatorname{Bernoulli}(r_{j,z}^{(2)}(O_R)).
 \label{eq:qma2-joint-law}
\end{equation}
The product includes all such descriptions and finite inputs, preserving
independence even when scores coincide. Repeated queries, both runs of
the second-moment argument, and different security parameters reuse the
same bits. The \(Q^{(1)}\) marginal is unchanged.

\begin{corollary}[Mixed \(\mathsf{QMA}\) and \(\mathsf{QMA}(2)\) promise-oracle security]
\label{app:qma2-corollary}\label{cor:qma2-security}
Use Theorem~\ref{thm:main}'s construction with polynomial-time computable
\(\kappa\le n(\kappa)\le\poly(\kappa)\).
Almost every infinite \((R,Q)\) under \eqref{eq:qma2-joint-law} works
simultaneously for every uniform polynomial-time adversary and every
fixed finite family of the specified product-witness verifiers, each
with its own fixed polynomial resources.
For \(\mathcal L_{R,Q}=(R,Q^{(1)},Q_1^{(2)},\ldots,Q_r^{(2)})\), put
\begin{equation}
\begin{aligned}
 d_A^{\mathrm{Haar}}(R,Q)={}&\Pr_{k\leftarrow\{0,1\}^\kappa}
 [A^{\mathcal L_{R,Q};U_k}(1^\kappa)=1]\\
 &-\Pr_{U\leftarrow\Haar(U(N))}
 [A^{\mathcal L_{R,Q};U}(1^\kappa)=1].
\end{aligned}\label{eq:qma2-gap}
\end{equation}
Then \(|d_A^{\mathrm{Haar}}(R,Q)|=O(2^{-\kappa/128})\) eventually, with an
oracle-, adversary-, family-, and resource-dependent cutoff.
Both experiments retain the same public oracle and allow coherent controlled
access to one reused matrix with consistent global phase and arbitrary
work-register entanglement. No totality assumption is imposed.
Equivalently, for almost every \(R\), almost every conditional completion
works. Section~\ref{sec:other-verifiers} includes ordinary \(\mathsf{QMA}\) verifier tables over
\(O_R\), with their specified completion law, and direct \(B_R\) queries.
\end{corollary}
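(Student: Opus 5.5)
The plan is to rerun the $\mathsf{QMA}$ argument of Sections~\ref{sec:classical-stability}--\ref{sec:erasure} with the enlarged oracle $\mathcal L_{R,Q}$. The only witness-dependent ingredients are treated abstractly, so they should carry over.

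The first step is to check which properties of $\mathcal W_z$ those sections actually use. They use three things. The set $\mathcal W_z$ must be a nonempty compact set of density operators independent of $O$. For each fixed $\rho$, the acceptance probability $p_{z,\rho}$ must be a degree-$2\ell_{\rm in}$ polynomial with values in $[0,1]$. Finally, $g=\max_\rho p_{z,\rho}$. Convexity is never used. For $\mathcal W_z^{(2)}$, compactness holds because it is the image of a product of compact sets, and oracle-independence holds because the bipartition rule is fixed. The polynomial bound follows from \cite{BBCMW98} after purifying the product witness. Hence \eqref{eq:restricted-polynomial-maximum} holds verbatim, and Lemma~\ref{lem:polynomial-max-sensitivity}, Lemma~\ref{lem:sensitivity} and Lemma~\ref{lem:scalar-row-stability} give simultaneous row stability for every tagged $\mathsf{QMA}(2)$ address. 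The union bound over $\mathcal Z_\kappa$ now also ranges over tags. These tags add only polynomially many bits, so $\log|\mathcal Z_\kappa|$ stays polynomial. Likewise, Corollary~\ref{cor:clipped-certificate}, which only picks a maximizing witness, gives \eqref{eq:qma2-clipped-certificate}. Lemma~\ref{lem:clipped-downward-variation} and Lemma~\ref{lem:weighted-stability} therefore hold unchanged.

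The second step is to represent the joint completion $\nu_R^{\rm mix}$ by one persistent private table. This table has a qubit $D_{(j,z)}$ for every tagged address of either type, rotated by $P_{r_{j,z}(O_R)}$ or $P_{r^{(2)}_{j,z}(O_R)}$ respectively. The threshold coins in \eqref{eq:qma2-threshold} have exactly these Bernoulli marginals and are independent across tags, so this is the same conjugation argument as before. The operator bound \eqref{eq:table-query-weighted} is address-diagonal, so it holds with the sum over both tag types. The per-query estimate \eqref{eq:rounded-query-loss} follows by applying Lemma~\ref{lem:weighted-stability} at each address with $c$ equal to the appropriate baseline score. I then invoke Lemma~\ref{lem:finite-query-bound} to bound the signed mean. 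I invoke Lemma~\ref{lem:exact-second-moment} and Remark~\ref{rem:haar-second-moment} for the second moment, with the two runs sharing the single mixed table. Base queries to $R$ and $B_R$ are encoded through $Q^{(1)}$ by Lemma~\ref{lem:public-query-normal-form}, and $\mathsf{QMA}$ verifier tables are handled as in Section~\ref{sec:other-verifiers}.

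The final step copies Proposition~\ref{prop:fixed-oracle-comparison}. Markov's inequality gives the summable tail $P_A(\kappa)2^{-\kappa/64}$. Borel--Cantelli then applies to each adversary, finite mixed family and resource bound, and there are countably many of these, so the probability-one events can be intersected. Combined with \eqref{eq:worlds-two-three}, this gives $|d_A^{\rm Haar}|=O(2^{-\kappa/128})$ eventually. The conditional statement for almost every $R$ follows as in \eqref{eq:conditional-fixing}.

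I expect the main obstacle to be confirming that the random-restriction step is uniform over the witness maximum when the witness set is nonconvex. The argument requires a single successful restriction to serve every product witness. The restriction acts on the oracle answers and does not depend on $\rho$, so this should hold. The other point to check is that Lemma~\ref{lem:polynomial-max-sensitivity} needs only the pointwise-maximum structure. I would verify this directly from its proof in Appendix~\ref{app:scalar-stability-tools}, and specifically check that it never relies on the family $(p_a)$ being closed under mixtures.
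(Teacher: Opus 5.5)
Your proposal is correct and follows essentially the same route as the paper's proof. Both arguments observe that the $\mathsf{QMA}$ argument uses only a nonempty, compact, oracle-independent witness set, the per-witness polynomial bound, and a maximizing-witness certificate, with no convexity and with a witness-independent restriction. Both then rerun scalar and weighted stability with tagged addresses, represent $\nu_R^{\rm mix}$ by one persistent table, apply the Haar two-run second-moment bound, and conclude by Borel--Cantelli over countably many adversaries, families, and resource bounds.
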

\begin{proof}
The product-witness sets are nonempty, compact, and independent of the
oracle. Their register sizes and fixed bipartitions satisfy the polynomial
bounds in Section~\ref{sec:witnesses}. Equation~\eqref{eq:qma2-clipped-certificate}
therefore supplies the certificate used by both stability lemmas.
The completion law supplies independent persistent bits conditional on
\(R\). These are all the additional hypotheses needed by the comparison
proof. Include table tags in every address and use common resource
bounds for the selected family and the two-run algorithm.
Choose \(\ell_{\rm in}\) to cover all their explicit circuit
descriptions. Lemma~\ref{lem:scalar-row-stability} then applies
with this cutoff.
Lemma~\ref{lem:public-query-normal-form} encodes base queries through
\(Q^{(1)}\); product-witness scores remain in separate tables.
The gap \(d_A^{\mathrm{Haar}}(R,Q)\) in \eqref{eq:qma2-gap} compares the keyed challenge with
Haar. Lemma~\ref{lem:finite-query-bound} bounds the change to an
independent row, and Theorem~\ref{thm:independent-phase} adds
\(\Delta_t\) for the comparison with Haar.
Use the Haar version of the two-run argument in
Remark~\ref{rem:haar-second-moment}: average over a private Haar matrix and
apply this combined bound with the enlarged resources of
Lemma~\ref{lem:exact-second-moment}.
Together with \eqref{eq:coarse-rates}, these bounds give
\begin{align}
 |\E_{R,Q}[d_A^{\mathrm{Haar}}(R,Q)]|&\le\epsilon_A^{\rm round}+\Delta_t,
       \label{eq:qma2-finite-mean}\\
 |\E_{R,Q}[d_A^{\mathrm{Haar}}(R,Q)]|,\quad\E_{R,Q}[(d_A^{\mathrm{Haar}}(R,Q))^2]
 &\le\poly(\kappa)2^{-\kappa/32},\label{eq:qma2-rates}\\
 \Pr_{R,Q}[|d_A^{\mathrm{Haar}}(R,Q)|>2^{-\kappa/128}]
 &\le\poly(\kappa)2^{-\kappa/64}.\label{eq:qma2-tail}
\end{align}
The fixed-oracle argument in the proof of Lemma~\ref{lem:row-decoupling}
in Section~\ref{sec:fixed-oracle-conclusion} applies because the
verifier descriptions, finite families, uniform adversaries, and resource
bounds are countable. Equation~\eqref{eq:conditional-fixing} gives the
conditional form.
Remark~\ref{rem:sharper-fixed-rates} gives the sharper eventual rates
with these same tagged resource bounds.
\end{proof}

Only Boolean XOR access is supplied, including superpositions and
off-promise inputs. Score verifiers receive classical inputs and their
prescribed witnesses, but no sampled tables, external challenge, or
adversary-supplied quantum input. No oracle-dependent advice is supplied.
The completion need not be a total \(\mathsf{QMA}(2)\) language. Applying target \(X\)
after a query complements that same table, without asserting complement
closure for \(\mathsf{QMA}(2)\). Appendix~\ref{sec:oracle-search}'s consequences concern
verifiers over \(O_R\); they imply no complexity collapse or absence of
OWFs relative to the full mixed oracle.

\section{Controlled path recording and error propagation}
\label{app:controlled-recording}

We extend the four-interface recording arguments of~\cite[Appendix~B]{SMLBH}
to controlled oracles and additional record restrictions. We combine these
arguments with the restricted-recording approach of
\cite[Section~4.2, Lemma~4.3]{ABGL} to analyze the five-phase
construction in Section~\ref{sec:pub-norm-loss-details}.

The notation is that of~\eqref{eq:controlled-recorders}, with boundary
factors \(G_a,D_a\) at the input and output, respectively, from
Table~\ref{tab:controlled-boundaries} and private
rotation \(Q_b=Q[G,D]\) from~\eqref{eq:common-boundary-rotation}.

Following the register notation of~\cite[Appendix~B]{SMLBH},
\(\mathsf A\) is the query register and \(\mathsf B\) contains the remaining
accessible registers: the control register, workspace, and any
reference. The projectors \(P_a\) act on the control register inside \(\mathsf B\).
The private registers \(\mathsf L,\mathsf R\) hold the normalized relation
states. In the CPFC construction, \(\mathsf P,\mathsf F\) hold the
permutation and phase-function samples, and \(\mathsf G,\mathsf D\) hold
the input and output boundary samples. Register names use sans-serif letters, distinct
from the operators used in the proofs below. For a vector,
\(\|\ket{\psi}\|\) is its Hilbert-space norm; for an operator, \(\|M\|\) is
the operator norm. The subscript in \(\|M\|_1\) denotes the trace norm.

\subsection{Restriction identities and coherent controls}
\label{app:recording-restrictions-proof}

Lemma~\ref{pubcore:recording-restrictions} gives, in the branch notation
of~\eqref{eq:controlled-recorders},
\begin{equation}
 W_a^\dagger V_a=\Pi_{\mathcal D(W_a)}
       \le I\otimes\Pi_{\mathrm{bij}}\quad(1\le a\le4).
 \label{eq:controlled-restrictions}
\end{equation}
Each creation or deletion changes size by one, and each projector
preserves size. One oracle query increases size by at most one.

\subsection{Error propagation through contractions}

The next lemma bounds the full-state error by the compression error and
total norm loss. We use it for the five-phase construction and the CPFC comparison.
\begin{lemma}[Error propagation through contractions]
\label{lem:approximate-compression-transfer}
\label{lem:pub-norm-loss}
Let \(T:\mathcal K\to\mathcal H\) and \(M_1,\ldots,M_t\) be contractions,
let \(U_1,\ldots,U_t\) be unitaries, and suppose
\(\|T^\dagger U_iT-M_i\|\le\eta\). From a unit vector \(\ket{K_0}\) and \(\ket{\Phi_0}\), define
\(\ket{K_i}=M_i\ket{K_{i-1}}\) and \(\ket{\Phi_i}=U_i\ket{\Phi_{i-1}}\). If
\(\|\ket{\Phi_0}-T\ket{K_0}\|\le d_0\), then
\begin{equation}
 \|\ket{\Phi_t}-T\ket{K_t}\|
       \le d_0+\sqrt{t(1-\|\ket{K_t}\|^2)+2t^2\eta}.
 \label{eq:approximate-compression-transfer}
\end{equation}
The spaces may include arbitrary Hilbert-space references, with all
maps extended by identity on untouched registers. 
\end{lemma}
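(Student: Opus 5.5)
The plan is to compare the two trajectories step by step through an error vector, and not to bound $\|\ket{\Phi_t}-T\ket{K_t}\|$ directly by a triangle inequality. Naive telescoping $\ket{\Phi_i}-T\ket{K_i}=U_i(\ket{\Phi_{i-1}}-T\ket{K_{i-1}})+(U_iT-TM_i)\ket{K_{i-1}}$ fails. The operator $U_iT-TM_i$ need not be small: $T^\dagger U_iT\approx M_i$ only controls the part of $U_iT\ket{K}$ lying in the range of $T$. So I will split $U_iT\ket{K_{i-1}}$ into $TT^\dagger U_iT\ket{K_{i-1}}$ and $(I-TT^\dagger)U_iT\ket{K_{i-1}}$, and treat the second piece as "leakage" to be charged to norm loss. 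First reduce to $d_0=0$: since $\ket{\Phi_t}-\widetilde\Phi_t=U_t\cdots U_1(\ket{\Phi_0}-T\ket{K_0})$, where $\widetilde\Phi_t$ is the trajectory started at $T\ket{K_0}$, the initial error propagates isometrically and contributes exactly $d_0$ by the triangle inequality. Assume henceforth $\ket{\Phi_0}=T\ket{K_0}$.

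With $d_0=0$, write $\ket{\Phi_t}=U_t\cdots U_1T\ket{K_0}$ and expand $\|\ket{\Phi_t}-T\ket{K_t}\|^2=\|\ket{\Phi_t}\|^2+\|T\ket{K_t}\|^2-2\operatorname{Re}\langle\Phi_t|T|K_t\rangle$. Since $T$ is a contraction, $\|T\ket{K_t}\|^2\le\|\ket{K_t}\|^2$, and $\|\ket{\Phi_t}\|^2=\|T\ket{K_0}\|^2\le1$. The key quantity is then the overlap $\langle\Phi_t|T|K_t\rangle$. I will show $\operatorname{Re}\langle\Phi_t|T|K_t\rangle\gtrsim\|\ket{K_t}\|^2-t\eta$, giving $\|\cdot\|^2\le1-\|\ket{K_t}\|^2+O(t\eta)$. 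The stated form $t(1-\|\ket{K_t}\|^2)+2t^2\eta$ is weaker, so there is room. In particular, I only need to show that each step loses overlap at most $\eta$ plus a term controlled by that step's norm drop.

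For this I use the recursion $a_i:=\langle\Phi_i|T|K_i\rangle=\langle\Phi_{i-1}|U_i^\dagger TM_i|K_{i-1}\rangle$. Replace $TM_i$ by $TT^\dagger U_iT$ at cost $\|\ket{\Phi_{i-1}}\|\eta\le\eta$. That leaves $\langle\Phi_{i-1}|U_i^\dagger TT^\dagger U_iT|K_{i-1}\rangle$, which differs from $a_{i-1}=\langle\Phi_{i-1}|T|K_{i-1}\rangle$ by $-\langle\Phi_{i-1}|U_i^\dagger(I-TT^\dagger)U_iT|K_{i-1}\rangle$. The right factor $(I-TT^\dagger)U_iT\ket{K_{i-1}}$ has squared norm at most
\[
\|T\ket{K_{i-1}}\|^2-\|T^\dagger U_iT\ket{K_{i-1}}\|^2\le\|\ket{K_{i-1}}\|^2-\|\ket{K_i}\|^2+2\eta,
\]
using $\|T^\dagger U_iT\ket{K}\|\ge\|M_i\ket{K}\|-\eta$. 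By Cauchy--Schwarz the per-step overlap loss is then at most $\eta+\sqrt{\ell_i+2\eta}$, where $\ell_i$ is the norm drop at step $i$.

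Summing over the $t$ steps and applying Cauchy--Schwarz to $\sum_i\sqrt{\ell_i+2\eta}\le\sqrt{t(\sum_i\ell_i+2t\eta)}$ gives an error of the form $\sqrt{t(1-\|\ket{K_t}\|^2)+2t^2\eta}$, up to a rearrangement. I expect the main obstacle to be this bookkeeping. A crude summed overlap bound feeds a square root of the loss back into the squared distance, which would yield a fourth-root dependence instead of the square root in \eqref{eq:approximate-compression-transfer}.

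To avoid that, I would instead bound the distance directly rather than through the overlap. Define the error vector $\ket{\epsilon_i}=\ket{\Phi_i}-T\ket{K_i}$ and decompose $\ket{\epsilon_i}=U_i\ket{\epsilon_{i-1}}+(I-TT^\dagger)U_iT\ket{K_{i-1}}+T(T^\dagger U_iT-M_i)\ket{K_{i-1}}$. The last term has norm at most $\eta$. The middle terms have squared norms summing to at most $\sum_i\ell_i+2t\eta$, so by Cauchy--Schwarz their sum has norm at most $\sqrt{t(\sum_i\ell_i+2t\eta)}$. The $\eta$ terms I absorb into the $2t^2\eta$ inside the root, using $t\eta\le\sqrt{2t^2\eta}$ when $\eta\le 2$; otherwise the bound is trivial since all vectors have norm at most $1$. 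This direct route gives exactly \eqref{eq:approximate-compression-transfer}.

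Identities on untouched reference registers commute with everything, so the argument holds verbatim on the extended spaces.
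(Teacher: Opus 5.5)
Your final ``direct route'' does not prove the inequality as stated. You split the per-step error into \((I-TT^\dagger)U_iT\ket{K_{i-1}}\) and \(T(T^\dagger U_iT-M_i)\ket{K_{i-1}}\) and apply the triangle inequality to the two pieces separately. That yields
\[
 \|\ket{\Phi_t}-T\ket{K_t}\|\le d_0+\sqrt{t(1-\|\ket{K_t}\|^2)+2t^2\eta}+t\eta .
\]
The extra \(t\eta\) cannot be ``absorbed'', because the \(2t^2\eta\) under the root was already used up by your leakage bounds \(\ell_i+2\eta\). The inequality \(t\eta\le\sqrt{2t^2\eta}\) only shows that the extra term is no larger than a comparable root. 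The best you get this way is something like \(d_0+\sqrt{2t(1-\|\ket{K_t}\|^2)+8t^2\eta}\). That constant-factor loss would be harmless for the \(O(t^2N^{-1/8})\) applications, but it is not the lemma.

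The missing step is the one you dismissed at the start. The naive telescoping
\[
 \ket{\Phi_t}-T\ket{K_t}=U_t\cdots U_1(\ket{\Phi_0}-T\ket{K_0})+\sum_{i=1}^tU_t\cdots U_{i+1}(U_iT-TM_i)\ket{K_{i-1}}
\]
is exactly the paper's proof. It never needs \(U_iT-TM_i\) to be small in operator norm, only small on the vectors \(\ket{K_{i-1}}\). Keep the per-step vector whole and expand its square. With \(B_i=T^\dagger U_iT\),
\[
 \|(U_iT-TM_i)\ket{\psi}\|^2=\|T\ket{\psi}\|^2+\|TM_i\ket{\psi}\|^2-2\operatorname{Re}\bra{\psi}M_i^\dagger B_i\ket{\psi}\le\|\ket{\psi}\|^2-\|M_i\ket{\psi}\|^2+2\eta\|\ket{\psi}\|^2,
\]
using \(\|T\|,\|M_i\|\le1\) and \(\|B_i-M_i\|\le\eta\). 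Each step then costs \(\sqrt{d_i+2\eta}\), with all \(\eta\)-dependence inside the root. Cauchy--Schwarz then gives precisely \eqref{eq:approximate-compression-transfer}.

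Your two pieces are just this same vector written in two parts. Bounding them separately, rather than computing the norm of their sum, is what produces the extra \(t\eta\). Your reduction to \(d_0=0\) is fine. So is the leakage estimate itself, which uses \((I-TT^\dagger)^2\preceq I-TT^\dagger\) for a contraction \(T\).
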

\begin{proof}
Set \(B_i=T^\dagger U_iT\). For
\(\ket{\psi}\in\mathcal K\), contractivity gives
\begin{align*}
 \|U_iT\ket{\psi}-TM_i\ket{\psi}\|^2
 &=\|T\ket{\psi}\|^2+\|TM_i\ket{\psi}\|^2
       -2\operatorname{Re}\bra{\psi}M_i^\dagger B_i\ket{\psi}\\
 &\le\|\ket{\psi}\|^2-\|M_i\ket{\psi}\|^2
       -2\operatorname{Re}\bra{\psi}M_i^\dagger(B_i-M_i)\ket{\psi}\\
 &\le\|\ket{\psi}\|^2-\|M_i\ket{\psi}\|^2+2\eta\|\ket{\psi}\|^2.
\end{align*}
Write \(d_i=\|\ket{K_{i-1}}\|^2-\|\ket{K_i}\|^2\ge0\). Notice that
\begin{align*}
 \ket{\Phi_t}-T\ket{K_t}
 ={}&U_t\cdots U_1(\ket{\Phi_0}-T\ket{K_0})\\
 &+\sum_{i=1}^tU_t\cdots U_{i+1}(U_iT-TM_i)\ket{K_{i-1}}.
\end{align*}
Every \(\ket{K_i}\) has norm at most one. The triangle inequality
and Cauchy--Schwarz therefore bound its norm by
\[
 d_0+\sum_{i=1}^t\sqrt{d_i+2\eta}
 \le d_0+\sqrt{t\sum_{i=1}^td_i+2t^2\eta}.
\]
Since \(\sum_i d_i=1-\|\ket{K_t}\|^2\), this proves the bound.
\end{proof}

For subnormalized vectors \(\ket{x},\ket{y}\), we will also use
\begin{equation}
 \frac12\|\ket{x}\bra{x}-\ket{y}\bra{y}\|_1
       \le\frac{\|\ket{x}\|+\|\ket{y}\|}{2}\|\ket{x}-\ket{y}\|.
 \label{eq:controlled-subnormalized-distance}
\end{equation}
This is because we can write the difference as
\((\ket{x}-\ket{y})\bra{x}+\ket{y}(\bra{x}-\bra{y})\) and use the trace norm of a
rank-one operator.

\subsection{An exact purification of restricted recording}

We prove the following comparison with one common recorder environment
for all five branches.

\begin{theorem}[Direct controlled Haar comparison]
\label{thm:controlled-haar}
Let \(0\le t<N\) and consider an algorithm making \(t\) unitary-oracle queries.
Let \(\ket{\mathcal A_t^V}_{\mathsf{ABLR}}:=\ket{v_t}\) be its global
state after \(t\) queries to the controlled path-recording oracle
\(\widehat V\), starting with empty relations as
in~\eqref{eq:selector-recorder-trajectories}. 
Then
\begin{equation}
 \frac12\left\|
   \E_{U\leftarrow\Haar}\!\Bigl[\rho_t^U\Bigr]
       -\Tr_{LR}\ket{v_t}\bra{v_t}\right\|_1
       \le9t(t+1)N^{-1/8}.
 \label{eq:controlled-haar-bound}
\end{equation}

\end{theorem}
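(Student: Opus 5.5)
Our plan is to reduce Theorem~\ref{thm:controlled-haar} to the strong-PRU analysis of Schuster et al.~\cite[Appendix~B]{SMLBH}, which builds a Haar unitary from a random construction whose queries are tracked by the recorder. Concretely, we realize Haar as a right-invariant twirl of a concrete random unitary. We take the CPFC-type construction \(U=D\,P\,F\,C\,G\), with random Clifford-type boundaries \(G,D\), a random permutation \(P\) and a random phase \(F\), registered in \(\mathsf P,\mathsf F,\mathsf G,\mathsf D\). We left- and right-multiply by independent Haar unitaries; by invariance the averaged state then equals \(\E_{\Haar}[\rho_t^U]\) exactly. We then compare the purified construction with the recorder through one common encoding, chosen so that all five branches \(I,U,U^\dagger,U^T,\overline U\) share a single environment and a single global phase.

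\textbf{Step 1: exact restricted purification.} Following \cite{MH} and \cite[Lemma~4.3]{ABGL}, we define an isometric encoding \(T\) from bijective records \((L,R)\) into the permutation/phase registers. Its content is: the permutation agrees with the recorded pairs on their domains, and the phase registers carry the matching characters with the correct sign for each bank. The key check is \(T^\dagger\mathsf{Select}(U)T=\widehat W\) exactly on the bijective sector, for every branch. The transpose and conjugate branches, which create barred edges, should follow from the barred identities \eqref{pubcore:record-bar-restriction-identities}.

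\textbf{Step 2: norm loss.} We apply Lemma~\ref{lem:approximate-compression-transfer} with \(M_i=\widehat W\) (so \(\eta=0\)) and \(d_0=0\). The distance is then controlled by \(\sqrt{t(1-\|K_t\|^2)}\), where \(K_t\) is the \(\widehat W\)-trajectory. We bound the loss \(1-\|K_t\|^2\) by telescoping.
\begin{itemize}
\item At each step, the loss is the mass outside \(\Pi_{\mathcal D(W_a)}\).
\item We rotate by the boundary twirls, using Lemma~\ref{lem:pub-boundary-twirls}-type estimates for the random boundaries.
\item We bound the defects by \(O(i\sqrt{i/N})\), as in Lemma~\ref{lem:pub-boundary-defects}.
\end{itemize}
The twirl is applied to the \(\widehat V\)-trajectory, which is boundary independent, rather than to the \(\widehat W\)-trajectory. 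To make this legitimate we use \eqref{eq:controlled-restrictions}, \(W_a^\dagger V_a=\Pi_{\mathcal D(W_a)}\), together with an overlap lemma comparing \(\widehat V\) and \(\widehat W\) trajectories; this introduces a covariance term \(c_i\).

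\textbf{Step 3: conclusion.} Summing the per-step losses gives a total of order \(t^2N^{-1/4}\), so the vector error is \(O(tN^{-1/8})\)-ish times \(\sqrt t\). The bound \eqref{eq:controlled-subnormalized-distance} then converts this into the trace distance of the marginals. Tracking constants carefully should yield \(9t(t+1)N^{-1/8}\). The vector \(\ket{v_t}\) may be subnormalized, and the inequality handles this.

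\textbf{Main obstacle.} We expect the hardest part to be Step~2: transferring the defect bound from the boundary-independent \(\widehat V\)-trajectory to the \(\widehat W\)-trajectory, coherently across control branches. Our first attempt is an inductive overlap estimate \(\|\ket{v_i}-\ket{w_i}\|^2\le\sum_{j\le i}(\gamma_j+c_j)\), obtained by expanding \(\widehat V=\widehat W+(\widehat V-\widehat W)\) and applying the restriction identity to kill the cross terms. Its square root produces the \(N^{-1/8}\) rate.
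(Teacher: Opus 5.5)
Your outline is essentially the paper's proof. You purify the permutation--phase core with Haar (hence exact-design) boundaries through $J_{PF}$, telescope the one-step norm loss with $d_0=0$ to get $\sqrt{t(1-\|\ket{K_t}\|^2)}$, control both $1-\|\ket{K_t}\|^2$ and $\|\ket{K_t}-\mathsf{cQ}(\ket{v_t}\otimes\ket{\Omega})\|$ by the controlled overlap lemma (covariance $c_j$, design defect $\gamma_j=6j\sqrt{j/N}$, $\mathcal B_t\le35t^2N^{-1/4}$), and combine via \eqref{eq:controlled-subnormalized-distance} into $\sqrt{2t\mathcal B_t}+\sqrt{2\mathcal B_t}\le9t(t+1)N^{-1/8}$.

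There are two corrections of detail.

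\textbf{Step 1.} For the ternary phases of CPFC, the two-sided identity $J_{PF}^\dagger O_aJ_{PF}=W_a$ is false. For $(x,y)\in L$, $J_{PF}^\dagger O_1J_{PF}$ sends $\ket{x}\ket{L,R}$ to $\ket{y}\ket{L\setminus\{(x,y)\},R\cup\{(x,y)\}}$, because $\omega^{2f(x)}=\omega^{-f(x)}$, whereas $W$ annihilates this state. What does hold is the one-sided identity $O_aJ_{PF}\Pi_{\mathcal D(W_a)}=J_{PF}W_a$ (from the fresh identities of \cite{SMLBH}, not the barred restriction identities). It is all the $\eta=0$ telescoping needs. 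Alternatively, a phase alphabet of order at least $4$ would make your two-sided identity true.

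\textbf{The obstacle.} The paper recurses on the real overlap $s_j=\operatorname{Re}\bra{K_j}\mathsf{cQ}(\ket{v_j}\otimes\ket{\Omega})$. The bijective support of $\ket{K_j}$, together with Cauchy--Schwarz, leaves the twirl only on the boundary-independent side, at cost $\sqrt{\gamma_j}$ per step. It is this square root that produces $N^{-1/8}$. A recursion that adds $\gamma_j$ to the squared distance, as in your first attempt, would give a different rate than the one you attribute to it.
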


We use the four-interface recording lemmas of
\cite[Appendix~B]{SMLBH}. 

Let \(\pi\) be a uniform permutation, \(f:[N]\to\mathbb Z_3\) an independent
uniform function, and \(\omega=e^{2\pi i/3}\). Set
\(O_{\pi,f}=P_\pi F_f\), where \(F_f\ket{x}=\omega^{f(x)}\ket{x}\).
The purified permutation--function oracle acts on \(\mathsf{APF}\):
\[
 \mathsf{pfO}
 =\sum_{\pi,f}(P_\pi F_f)_{\mathsf A}
                  \otimes\ket{\pi,f}\bra{\pi,f}_{\mathsf{PF}},
 \qquad
 \mathsf{pfO}\ket{x,\pi,f}
       =\omega^{f(x)}\ket{\pi(x),\pi,f}.
\]
The sample registers are unchanged by every query. In the remaining
identities, \(O_a\) denotes the corresponding purified operation:
\[
 (O_0,O_1,O_2,O_3,O_4)
 =(I,\mathsf{pfO},\mathsf{pfO}^\dagger,\mathsf{pfO}^T,
       \overline{\mathsf{pfO}}).
\]
The transpose and conjugate are taken in the computational bases. Thus each sample block implements
\(I,O_{\pi,f},O_{\pi,f}^\dagger,O_{\pi,f}^T,
\overline{O_{\pi,f}}\), respectively, with one consistent phase.
For a bijective pair \((L,R)\) of total size \(s\), use the purification
of~\cite[Definition~20]{SMLBH}:
\begin{equation}
 \ket{\mathrm{pf}_{L,R}}=
 \frac1{\sqrt{3^N(N-s)!}}
 \sum_{\substack{\pi\supseteq L\cup R\\f:[N]\to\mathbb Z_3}}
 \omega^{\sum_{(x,y)\in L}f(x)-\sum_{(x,y)\in R}f(x)}\ket{\pi,f}.
 \label{eq:controlled-pf-relations}
\end{equation}
Lemma~11 of that paper gives the common isometry
\begin{equation}
 J_{PF}\ket{L,R}=\ket{\mathrm{pf}_{L,R}},\qquad
 J_{PF}^\dagger J_{PF}=I_{\mathrm{bij}}.
 \label{eq:controlled-common-compressor}
\end{equation}
Thus \(J_{PF}=(\mathsf{Compress}^{\mathsf{PF}})^\dagger\) in its notation:
it maps the bijective subspace of \(\mathsf{LR}\) into the sample
registers \(\mathsf{PF}\). Moreover,
\(J_{PF}\ket{\varnothing,\varnothing}
=\ket{\Omega_{PF}}=(N!3^N)^{-1/2}\sum_{\pi,f}\ket{\pi,f}\).
This one sample is reused throughout. The CPFC permutation--phase core
uses ternary phases; the five-phase circuit keeps \(q=N\).

The fresh-input identities and domain decompositions in
\cite[Lemmas~12 and~19--20]{SMLBH} give
\begin{equation}
 O_aJ_{PF}\Pi_{\mathcal D(W_a)}=J_{PF}W_a\quad(1\le a\le4),\qquad
 \Pi_{\mathcal D(W_a)}=W_a^\dagger W_a.
 \label{eq:controlled-pf-restrictions}
\end{equation}
The idle identity is \(O_0J_{PF}=J_{PF}W_0=J_{PF}\) on the bijective
record space.
Put \((C,D)=(W^L,W^R)\) or
\((\overline W^L,\overline W^R)\), and \(O=O_1\) or \(O_4\), respectively.
The cited fresh identities give
\[
 \begin{aligned}
  OJ_{PF}C^\dagger C&=J_{PF}C,&
  O^\dagger J_{PF}D^\dagger D&=J_{PF}D,\\
  OJ_{PF}(C^\dagger C+DD^\dagger)&=J_{PF}(C+D^\dagger).
 \end{aligned}
\]
Thus, for every bijective input,
\begin{equation}
 \|O_aJ_{PF}\ket{\psi}-J_{PF}W_a\ket{\psi}\|^2
 =\|(I-\Pi_{\mathcal D(W_a)})\ket{\psi}\|^2
 =\|\ket{\psi}\|^2-\|W_a\ket{\psi}\|^2.
 \label{eq:controlled-pf-one-step}
\end{equation}

\subsection{Effect of Boundary Scramblers}

We extend the four-branch estimates of
\cite[Lemmas~16 and~24]{SMLBH} for coherent access. Since these are operator estimates on
the same record space, their controlled versions follow by
orthogonal direct sums.

\begin{lemma}[Common controlled covariance]
\label{lem:controlled-covariance}
For arbitrary unitaries \(G,D\) and \(0\le j<N\), set
\(\widehat V^{G,D}=\sum_aP_a\otimes D_aV_aG_a\). Then
\begin{equation}
 \bigl\|[\widehat V^{G,D}Q[G,D]-Q[G,D]\widehat V]
                   \Pi_{\le j}\bigr\|
 \le c_j,\qquad c_j=16\sqrt{2j(j+1)/N}.
 \label{eq:controlled-covariance}
\end{equation}
\end{lemma}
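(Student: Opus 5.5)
The plan is to reduce the controlled statement to the four uncontrolled branch estimates of \cite[Lemmas~16 and~24]{SMLBH}, and to re-derive those through an exactly covariant comparison recorder. Three structural facts do the reduction.
\begin{enumerate}
\item Both \(\widehat V^{G,D}Q[G,D]\) and \(Q[G,D]\widehat V\) are block diagonal in the selector decomposition \(\sum_aP_a\). The operator \(Q[G,D]\) acts only on the relation registers, and every branch preserves the selector. Hence the controlled commutator is an orthogonal direct sum, and its norm is the maximum of the five branch norms \(\|(D_aV_aG_a\,Q-QV_a)\Pi_{\le j}\|\).
\item The idle branch vanishes exactly, since \(V_0=G_0=D_0=I\) and \(Q\) commutes with the identity on the query register.
\item \(\Pi_{\le j}\) commutes with \(Q\), and every \(V_a\) changes total size by at most one. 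So all estimates may be carried out on a fixed size sector \(\ell+r\le j\). The branch factors on the range side are then taken on sizes up to \(j+1\).
\end{enumerate}

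For the forward branch, I would introduce unconstrained creation operators
\[
 \widetilde V^L\ket{x}\ket{L,R}=N^{-1/2}\sum_y\ket{y}\ket{L\cup\{(x,y)\},R},
\]
and similarly \(\widetilde V^R\), summing over all outputs with amplitude \(1/\sqrt N\). These are exactly covariant. Because \(\sum_y\ket{y}\ket{y}\) is invariant under \(D\otimes D^T\), and the inserted input slot carries \(G\) as in \(Q\), we get \(D\widetilde V^LG\,Q=Q\widetilde V^L\). The same argument with conjugates handles \(\widetilde V^R\) on the negative bank, since \(Q\) uses \(\overline G\otimes D^\dagger\) there. The adjoints are covariant as well. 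Substituting tildes for \(V^L,V^R\) in \(V=V^L(I-V^RV^{R\dagger})+(I-V^LV^{L\dagger})V^{R\dagger}\) gives an exactly covariant \(\widetilde V\). The branch commutator is then bounded by \(2\|(V-\widetilde V)\Pi_{\le j}\|\), using unitarity of \(D,G,Q\).

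The difference \(V-\widetilde V\) is expanded by the triangle inequality into terms of the form \((V^L-\widetilde V^L)X\), \(Y(V^R-\widetilde V^R)^\dagger Z\), and so on. Each such term contains one difference factor and at most two other recording factors.
\begin{enumerate}
\item \textbf{Difference factors.} \(V^L-\widetilde V^L\) consists of the at most \(j\) terms with \(y\in\operatorname{Im}(L\cup R)\), together with the normalization change \(N^{-1/2}\) versus \((N-|\operatorname{Im}|)^{-1/2}\). Its norm is \(O(\sqrt{j/N})\) on sizes \(\le j\). The bound must be computed in the normalized multiset basis: colliding outputs produce the multiplicity factors of \eqref{pubcore:record-relation-state}.
\item \textbf{Other factors.} The untruncated \(\widetilde V^L\) has norm at most \(\sqrt{j+1}\), which is exactly the bosonic insertion bound. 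The true \(V^L,V^R\) are partial isometries by Lemma~\ref{pubcore:recording-restrictions}.
\end{enumerate}
Products thus cost \(\sqrt{j+1}\cdot\sqrt{j/N}\) with a constant number of terms. Summing them, with the factor \(2\) from the commutator, should reproduce \(16\sqrt{2j(j+1)/N}\).

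The remaining branches follow by symmetry. The inverse branch is the adjoint of the forward estimate, with \(\Pi_{\le j}\) moved across using size shifts by one. The conjugate branch uses the barred recorders, which are the forward recorders with the \(L\) and \(R\) banks exchanged and the boundary tables conjugated; this matches Table~\ref{tab:controlled-boundaries}. The transpose branch is the adjoint of the conjugate branch.

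I expect the main obstacle to be the careful norm accounting in the symmetric sectors. Three points need care:
\begin{itemize}
\item The multiplicity normalization makes \(\widetilde V^L\) not an isometry.
\item \(\widetilde V^L\) can create records with repeated edges, which fall outside the bijective sector where \(V^L\) is defined.
\item The domain and range sizes must be tracked so that the \(\sqrt{j+1}\) and \(\sqrt{j/N}\) factors attach to the correct sectors.
\end{itemize}
For these points I would follow the ordered-register computations of \cite[Claims~20--21]{MH}.
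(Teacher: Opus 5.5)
Your reduction (block-diagonality in the selector, the exactly vanishing idle block, and the maximum over the four nonidle branch norms) is exactly the paper's proof, which then cites \cite[Lemma~16]{SMLBH} for each nonidle branch at cutoff \(\Pi_{\le j}\). Your comparison with an exactly covariant, unconstrained EPR-insertion recorder is the mechanism the paper attributes to \cite[Section~10]{MH}, so you are re-deriving the cited estimate rather than taking a different route. If you carry that out, get the inverse and transpose branches from the fact that \(V^\dagger\) and \(\overline V^{\dagger}\) are \(V\) and \(\overline V\) with the creation operators interchanged (\(V^L\leftrightarrow V^R\), resp.\ \(\overline V^L\leftrightarrow\overline V^R\)); taking adjoints of the forward bound instead moves the cutoff to \(\Pi_{\le j+1}\). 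Also use the sharp bound \(\|\widetilde V^L\Pi_{\le j}\|^2\le 1+j/N\) rather than the crude \(\sqrt{j+1}\), so that terms containing two tilde factors stay within \(c_j\).
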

\begin{proof}
Apply~\cite[Lemma~16, equations~(B.43)--(B.46)]{SMLBH} to the four
nonidle blocks, with input cutoff \(\Pi_{\le j}\). The norm of their
orthogonal sum is their maximum.
\end{proof}

Fix an
\((\ell,r)\) sector with \(s=\ell+r<N\), let
\(E^{=}=\sum_x\ket{x,x}\bra{x,x}\),
\(E^\Phi=\ket{\Phi}\bra{\Phi}\), and
\(\ket{\Phi}=N^{-1/2}\sum_x\ket{x,x}\).
In the separately symmetric tensor embedding, \(X_{L_j},Y_{L_j}\)
are the input and output registers of the \(j\)-th left edge;
\(X_{R_k},Y_{R_k}\) denote those of the \(k\)-th right edge.
Writing \(A\) for the query register,
\cite[Corollary~11.1 and Claim~26, equations~(11.32)--(11.35)]{MH} gives
\begin{align}
 \Delta^{\mathcal D}_{\ell,r}
 &\preceq\frac{N}{N-s+1}\left[
  \sum_{j=1}^{\ell}E^{=}_{A,X_{L_j}}
  +\sum_{k=1}^{r}(E^{=}_{A,X_{R_k}}-E^\Phi_{A,X_{R_k}})
  +2r\sqrt{s/N}\,I\right],\label{eq:pub-domain-input}\\
 \Delta^{\mathcal I}_{\ell,r}
 &\preceq\frac{N}{N-s+1}\left[
  \sum_{k=1}^{r}E^{=}_{A,Y_{R_k}}
  +\sum_{j=1}^{\ell}(E^{=}_{A,Y_{L_j}}-E^\Phi_{A,Y_{L_j}})
  +2\ell\sqrt{s/N}\,I\right].\label{eq:pub-image-input}
\end{align}
Here \(\Delta^{\mathcal D}_{\ell,r}
=\Pi_{\mathrm{bij},\ell,r}-\Pi_{\mathcal D(W)}\Pi_{\ell,r}\),
and we define the image defect analogously for \(\mathcal I\). We interpret
the right sides after compression to the separately symmetric tensor
embedding. The subtraction of the EPR projector accounts for coherent
deletion.

We keep these inequalities because the
phase--Hadamard boundaries need them without an exact-design assumption.
For exact designs, the corresponding twirls and forward sector bound
are imported from~\cite[Claims~1--2 and Section~11.4]{MH}:
\begin{align}
 \E_U\!\Bigl[(U\otimes U)^\dagger E^{=}(U\otimes U)\Bigr]
   &=\frac{I+\mathsf{Swap}}{N+1},\label{eq:auxiliary-same-twirl}\\
 \E_U\!\Bigl[(U\otimes\overline U)^\dagger(E^{=}-E^\Phi)(U\otimes\overline U)\Bigr]
   &=\frac{I-E^\Phi}{N+1}.\label{eq:auxiliary-mixed-twirl}
\end{align}
Here \(\mathsf{Swap}\) exchanges the two registers. The upper
bound for the averaged forward defect in sector \((\ell,r)\) is
\begin{equation}
 \frac{N}{N-s+1}
       \left[\frac{2\ell+r}{N+1}+2r\sqrt{s/N}\right].
 \label{eq:auxiliary-sector-defect}
\end{equation}
We use the resulting four-branch estimate directly as follows.

\begin{lemma}[Controlled design defect for CPFC]
\label{lem:auxiliary-design-defect}
Let \(G,D\) be independent samples from an exact unitary \(2\)-design. Set
\begin{equation}
 \Delta_{\mathrm{ctl},j}
 =\sum_{a=1}^4P_a\otimes
       (\Pi_{\mathrm{bij},\le j}-\Pi_{\mathcal D(W_a)}\Pi_{\le j}),\qquad
 G_{\mathrm{ctl}}=\sum_{a=0}^4P_a\otimes G_a,
 \label{eq:controlled-ordinary-defect}
\end{equation}
where \(\Pi_{\mathcal D(W_a)}\Pi_{\le j}\) restricts the domain
projector to input length at most \(j\). For \(0\le j<N\),
\begin{equation}
 \left\|\E_{G,D}\!\Bigl[(G_{\mathrm{ctl}}Q[G,D])^\dagger
       \Delta_{\mathrm{ctl},j}(G_{\mathrm{ctl}}Q[G,D])\Bigr]\right\|
       \le6j\sqrt{j/N}.
 \label{eq:auxiliary-design-defect}
\end{equation}
\end{lemma}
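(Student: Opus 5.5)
The approach parallels the proof of Lemma~\ref{lem:pub-boundary-defects}, replacing the phase--Hadamard twirls by the exact design twirls \eqref{eq:auxiliary-same-twirl}--\eqref{eq:auxiliary-mixed-twirl}. First, I would reduce to a single branch and a single sector. The operator \(\Delta_{\mathrm{ctl},j}\) is block diagonal in the control projectors \(P_a\). Both \(G_{\mathrm{ctl}}\) and \(Q[G,D]\) commute with these projectors and with every sector projector \(\Pi_{\ell,r}\). Hence the averaged conjugated operator is an orthogonal direct sum over \(a\in\{1,\dots,4\}\) and over \((\ell,r)\) with \(\ell+r\le j\). Its norm is the maximum of the block norms, and each block is positive semidefinite. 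It therefore suffices to show, for each nonidle branch \(a\) and each sector with \(s=\ell+r\le j\), that \(\E_{G,D}[(G_a\otimes Q)^\dagger\Delta^{\mathcal J}_{\ell,r}(W_a)(G_a\otimes Q)]\preceq 6s\sqrt{s/N}\,I\). The zero-length sector contributes nothing.

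Second, I would treat the forward branch using the domain inequality \eqref{eq:pub-domain-input}. Conjugate both sides by \(G_A\otimes Q[G,D]\); the image rotations act only on the \(Y\) registers, which the right side does not touch, so they cancel. A positive slot pairs \(A\) (rotated by \(G\)) with \(X_{L_j}\) (rotated by \(G\)), and \eqref{eq:auxiliary-same-twirl} gives \((I+\mathsf{Swap})/(N+1)\preceq 2I/(N+1)\). A negative slot pairs \(A\) with \(X_{R_k}\) rotated by \(\overline G\), and \eqref{eq:auxiliary-mixed-twirl} gives \((I-E^\Phi)/(N+1)\preceq I/(N+1)\). The remainder \(2r\sqrt{s/N}\,I\) is unchanged by conjugation. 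Together these give the sector bound \eqref{eq:auxiliary-sector-defect}. The compression to the separately symmetric embedding is harmless, because the averaged right side is a sum over slots that is invariant under same-sign permutations.

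Third, I would extend to the other branches by symmetry. The inverse branch uses the image inequality \eqref{eq:pub-image-input} with \(D^\dagger\) on the query, and the same twirls apply with \(\ell,r\) interchanged. The transpose and conjugate branches are obtained by relabelling \(L\leftrightarrow R\) and conjugating the boundary samples, as in the proof of Lemma~\ref{lem:pub-boundary-defects}. For an exact \(2\)-design the distribution of \(\overline G\) is again an exact \(2\)-design, and likewise for transposes, so the bounds are unchanged. Finally, using \(\ell,r\le s\), \(N/(N-s+1)\) and \(1/(N+1)\le 1/N\le\sqrt{s/N}\), I would bound \(\frac{N}{N-s+1}\bigl[\frac{2s}{N+1}+2s\sqrt{s/N}\bigr]\) by \(6s\sqrt{s/N}\le 6j\sqrt{j/N}\).

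I expect this last constant to be the main obstacle. Using only \(N/(N-s+1)\le 2\) gives \(8s\sqrt{s/N}\), not \(6\). To reach \(6\), I would split into two regimes. For \(s\le N/2\), I would use the sharper estimate \(N/(N-s+1)\le 1+2s/N\) and exploit that \(2s/(N+1)\) is much smaller than \(2s\sqrt{s/N}\) unless \(s\) is tiny; in that case the prefactor is close to \(1\). For \(s>N/2\), I would note that \(6s\sqrt{s/N}>1\ge\|\Delta\|\), so the bound is trivial there. Checking the intermediate range carefully is the step I expect to need the most care.
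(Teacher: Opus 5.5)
Your proposal is correct and takes the paper's route. The paper invokes the four sector estimates of \cite[Lemma~24]{SMLBH} with zero twirling error: the domain and image inequalities twirled by \eqref{eq:auxiliary-same-twirl}--\eqref{eq:auxiliary-mixed-twirl}, giving \eqref{eq:auxiliary-sector-defect}. It then takes the maximum over the four orthogonal control blocks, exactly as you do.

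The constant is less delicate than you expect. Since \(2\sqrt N\le N+1\) and \(s\ge1\), you have \(\frac{2s}{N+1}\le s\sqrt{s/N}\), so the bracket is at most \(3s\sqrt{s/N}\). For \(s\le N/2+1\), multiplying by \(N/(N-s+1)\le2\) gives \(6s\sqrt{s/N}\) directly. For larger \(s\), the claim is trivial because the averaged operator lies between \(0\) and \(I\).
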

\begin{proof}
Use the four estimates~\cite[Lemma~24, equations~(B.70)--(B.73)]{SMLBH}
with zero twirling error. Ordinary exact \(2\)-designs suffice here:
the proof of that lemma uses only the two displayed twirls. Their
matrix entries have balanced degree \((2,2)\) in \(U,\overline U\), so
their averages equal the Haar averages.
Taking the maximum of the four block norms gives the stated bound.
\end{proof}

\subsection{One-query loss from boundary operators}
\label{app:controlled-stability}

Several PRU constructions have the form \(U=DCG\), where
\(G,D\) spread the query amplitudes and the core \(C\) admits a
recording description on a restricted subspace.
Table~\ref{tab:recording-constructions} lists examples. We apply the lemma
below to CPFC and the five-phase construction; the CLRFC and Kac rows
describe the restrictions in their cited proofs.
We analyze the
boundary unitaries of such constructions by bounding three errors: covariance error, loss
from the recorder domain, and loss from the additional good-subspace
projection. The lemma below combines these bounds into a one-query
overlap estimate and a bound on the norm retained after \(t\) queries.

\begin{table}[htbp]
\centering
\small
\renewcommand{\arraystretch}{1.15}
\begin{tabular}{@{}>{\raggedright\arraybackslash}p{.14\linewidth}
 >{\raggedright\arraybackslash}p{.33\linewidth}
 >{\raggedright\arraybackslash}p{.47\linewidth}@{}}
\toprule
Construction & Boundaries and core & Recording restriction \\
\midrule
CPFC & \(G=C_1,\ D=C_2\), Clifford; \(C=P_\pi F_f\)
 & Bijective pairs \((L,R)\); no additional record projection
   \cite[Appendix~B]{SMLBH}. \\
CLRFC & \(G=C_1,\ D=C_2\), exact \(2\)-designs;
   \(C=R_{h_2}L_{h_1}F_f\)
 & Distinct input right halves and output left halves; also restrictions
   on the live query \cite[Definition~41, Lemmas~27--28]{SMLBH}. \\
Kac walk & \(G=K_1,\ D=K_2\), independent walk segments;
   \(C=H_fP_\sigma\)
 & Distinct two-element blocks in each endpoint list; a modified
   recorder for forward and inverse queries \cite[Section~5]{LQSYZ26}. \\
Five phases & \(G=HF_1,\ D=F_5H\); \(C=F_4HF_3HF_2\)
 & Bijectivity and the XOR restrictions in \(P_{\mathrm{good}}\), as defined in
   Section~\ref{pubcore:central}. \\
\bottomrule
\end{tabular}
\caption{Boundary factors, recording cores, and restrictions. The rows
retain the query models and recording operators of the cited proofs.}
\label{tab:recording-constructions}
\end{table}

For CLRFC, write \(x=(x_L,x_R)\) with two equal halves. The maps
\(L_h,R_h\) update \(x_L\mapsto x_L\oplus h(x_R)\) and
\(x_R\mapsto x_R\oplus h(x_L)\), respectively, leaving the other half
unchanged; \(h_1,h_2\) are independent uniform functions on a half-register
\cite[Definition~38]{SMLBH}. In the Kac row, \(\sigma\) is a uniform
permutation and \(H_f\) applies sampled two-dimensional rotations on
the blocks \(\{0z,1z\}\). The boundaries \(K_1,K_2\) are independent
sequences of such steps \(H_fP_\sigma\), with the discretization and
walk lengths specified in~\cite[Section~4.1, Definition~13, Theorem~33]{LQSYZ26}.

\paragraph{The boundary sample.}
Fix \(t<N\) and a probability distribution \(\mu\) on boundary pairs \(b=(G,D)\).
Sample the pair once and retain its purification in the private
registers \(\mathsf{GD}\). For finite samples this is the usual state
\(\sum_b\sqrt{\mu(b)}\ket{b}\). To include Haar samples without formal
continuous basis vectors, use
\(\mathcal H_{\mathsf{GD}}=L^2(\mu)\) and represent the same purification
by the constant unit vector
\(\ket{\mathrm{init}(\mu)}_{\mathsf{GD}}:=\ket{\Omega}\), where
\(\Omega(b)=1\). The full state space and its norm are
\[
 \mathcal H_{\mathsf A}\otimes\mathcal H_{\mathsf B}
       \otimes\mathcal H_{\mathsf{LR}}\otimes\mathcal H_{\mathsf{GD}},
 \qquad
 \|\ket{\zeta}\|^2=\E_b\!\Bigl[\|\ket{\zeta(b)}\|^2\Bigr].
\]
Here \(\ket{\zeta(b)}\) is the vector associated with a fixed sampled
pair; its norm need not equal one. Define the sample-controlled
operators by
\[
 \begin{aligned}
  (\mathsf{cG}\ket{\zeta})(G,D)&=G_{\mathsf A}\ket{\zeta(G,D)},\\
  (\mathsf{cD}\ket{\zeta})(G,D)&=D_{\mathsf A}\ket{\zeta(G,D)},\\
  (\mathsf{cQ}\ket{\zeta})(G,D)&=Q[G,D]_{\mathsf{LR}}\ket{\zeta(G,D)}.
 \end{aligned}
\]
They apply the indicated unitary for each fixed sample and leave the
sample labels unchanged. These private sample registers are distinct
from the adversary's control register. The notation
\(\widehat W^b\) below likewise denotes the operator applying the
corresponding twirled oracle for each sample. For example, its forward
block is \(\mathsf{cD}\,W\,\mathsf{cG}\) and its inverse block is
\(\mathsf{cG}^\dagger W^\dagger\mathsf{cD}^\dagger\).
In particular,
\(\mathsf{cQ}(\ket{v}\otimes\ket{\Omega})\) assigns the vector
\(Q[G,D]\ket{v}\) to sample \((G,D)\). This is the action denoted by
\(Q_b\) on the purified sample space in the main body.

\paragraph{The restricted recording computation.}
Let \(P_{\mathrm{add}}\) be a boundary-independent, length-preserving
orthogonal projector on the private record space such that
\[
 [P_{\mathrm{add}},\Pi_{\mathrm{bij}}]=0,\qquad
 P_{\mathrm{add}}\ket{\varnothing,\varnothing}
       =\ket{\varnothing,\varnothing}.
\]
It imposes any additional record restrictions needed by the core.
Together with bijectivity, these define
\[
 P_{\mathrm{good}}=\Pi_{\mathrm{bij}}P_{\mathrm{add}}.
\]
The choice \(P_{\mathrm{add}}=I\) imposes only bijectivity. Let
\(A_1,\ldots,A_t\) be some isometries, extended
by identity on the private records and sample. Choose any unit initial
vector \(\ket{v_0}\) with empty relations, and define
\begin{align}
 \ket{K_0}&=\ket{v_0}\otimes\ket{\Omega},&
 \ket{K_{j+1}}&=P_{\mathrm{good}}\widehat W^bP_{\mathrm{good}} A_{j+1}\ket{K_j},&
 \ket{v_{j+1}}&=\widehat V A_{j+1}\ket{v_j},
 \label{eq:controlled-overlap-prefixes}
\end{align}
where \(\widehat W^b=\sum_aP_a\otimes D_aW_aG_a\).
In the notation for global adversary states, these are
\[
 \ket{\mathcal A_j^V}_{\mathsf{ABLR}}:=\ket{v_j},\qquad
 \ket{\mathcal A_j^{W,\mu;\mathrm{good}}}_{\mathsf{ABLRGD}}:=\ket{K_j}.
\]
The superscript \(\mathrm{good}\) records the additional projection \(P_{\mathrm{good}}\). The record length for both states is at most \(j\) after \(j\) queries. Thus each next-query
estimate uses the input cutoff \(\Pi_{\le j}\).

\paragraph{The three error bounds.}
For \(0\le j<t\), choose nonnegative bounds \(c_j,\gamma_j,b_{j+1}\)
satisfying the following inequalities.
\begin{itemize}[leftmargin=*]
\item \(c_j\), the \emph{covariance error}, bounds the difference between
applying the boundary unitaries to the query and rotating the records:
\[
 \|[\widehat V^bQ_b-Q_b\widehat V]\Pi_{\le j}\|\le c_j,
 \qquad \widehat V^b=\sum_aP_a\otimes D_aV_aG_a.
\]
This bound holds uniformly over the sampled boundary pairs.
Unrestricted EPR insertion has exact covariance, since
\((U\otimes\overline U)\ket{\Phi}=\ket{\Phi}\); the fresh-label
restrictions cause an approximation error \cite[Section~10]{MH}.
Lemma~\ref{lem:controlled-covariance} gives
\(c_j=16\sqrt{2j(j+1)/N}\) for every pair.

\item \(\gamma_j\), the \emph{boundary-domain loss}, bounds the averaged
squared norm outside \(\mathcal D(W_a)\), within the bijective sector.
For each nonidle branch \(a=1,\ldots,4\), put
\[
 \Delta_{a,j}=\Pi_{\mathrm{bij},\le j}
                  -\Pi_{\mathcal D(W_a)}\Pi_{\le j}.
\]
The bound is
\begin{equation}
 \E_b\!\Bigl[(G_aQ_b)^\dagger
       \Delta_{a,j}(G_aQ_b)\Bigr]
       \preceq\gamma_j I.
 \label{eq:general-controlled-defect}
\end{equation}
Since \(\Delta_{a,j}\) is a projector, this is equivalent to
\[
 \E_b\!\Bigl[\|\Delta_{a,j}G_aQ_b\ket{\psi}\|^2\Bigr]
       \le\gamma_j\|\ket{\psi}\|^2
\]
for every boundary-independent vector \(\ket{\psi}\).
The EPR subtraction in~\eqref{eq:pub-domain-input} retains coherent deletion.

\item \(b_{j+1}\), the \emph{additional-record loss}, bounds the squared
norm discarded by \(P_{\mathrm{add}}\), the projection onto the subspace
satisfying the additional conditions required by the core:
\begin{equation}
 \|(I-P_{\mathrm{add}})\mathsf{cQ}
       (\ket{v_{j+1}}\otimes\ket{\Omega})\|^2
       \le b_{j+1}.
 \label{eq:general-additional-loss}
\end{equation}
The projection acts on the rotated reference after query \(j+1\).
Its choice depends on the recording core; for the five-phase construction,
it imposes the XOR restrictions. This estimate is required only on the
rotated reference, not on the boundary-dependent \(\ket{K_j}\).
Here \(b_{j+1}\) is a scalar bound, distinct from the boundary pair
\(b=(G,D)\).
\end{itemize}
Thus the overlap loss per query is
\(c_j+\sqrt{\gamma_j}+\sqrt{b_{j+1}}\). For the two constructions proved here,
the inputs are
\begin{equation}
\begin{array}{c|c|c|c}
 & c_j & \gamma_j & b_j\\ \hline
 \text{CPFC (exact designs)}
 & 16\sqrt{2j(j+1)/N} & 6j\sqrt{j/N} & 0\\[2pt]
 \text{Five phases}
 & 16\sqrt{2j(j+1)/N} & 8j\sqrt{j/N} & 9j^4/N
\end{array}
\label{eq:recording-error-budgets}
\end{equation}
The first row uses Lemmas~\ref{lem:controlled-covariance}
and~\ref{lem:auxiliary-design-defect}, with \(P_{\mathrm{add}}=I\).
The second uses Lemmas~\ref{lem:pub-boundary-defects}
and~\ref{lem:pub-boundary-good-mass}.

The next lemma converts these three estimates into overlap and norm
bounds. A comparison with the physical core still requires its
encoding identity or compression estimate, followed by an
error-propagation argument as in
Lemma~\ref{lem:approximate-compression-transfer}.

Set
\[
 \mathcal B_t=\sum_{j=0}^{t-1}
          (c_j+\sqrt{\gamma_j}+\sqrt{b_{j+1}}),\qquad
 s_j=\operatorname{Re}\bra{K_j}\mathsf{cQ}
       (\ket{v_j}\otimes\ket{\Omega}).
\]
\begin{lemma}[Stability of controlled restricted recording]
\label{lem:controlled-overlap}
Under the preceding setup and three error bounds, each query
\(0\le j<t\) satisfies
\begin{equation}
 s_{j+1}\ge s_j-c_j-\sqrt{\gamma_j}-\sqrt{b_{j+1}}.
 \label{eq:general-one-query-loss}
\end{equation}
Consequently, the overlap, vector distance, and discarded squared norm obey
\begin{align}
 \operatorname{Re}\bra{K_t}\mathsf{cQ}(\ket{v_t}\otimes\ket{\Omega})
      &\ge1-\mathcal B_t,\label{eq:general-controlled-overlap}\\
 \|\ket{K_t}-\mathsf{cQ}(\ket{v_t}\otimes\ket{\Omega})\|&\le\sqrt{2\mathcal B_t},&
 1-\|\ket{K_t}\|^2&\le2\mathcal B_t.
 \label{eq:general-controlled-overlap-consequences}
\end{align}
\end{lemma}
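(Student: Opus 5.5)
The plan is to prove the one-query inequality \eqref{eq:general-one-query-loss} and then telescope it. Fix \(0\le j<t\), and write \(\ket{r_j}=\mathsf{cQ}(\ket{v_j}\otimes\ket{\Omega})\) for the rotated reference. Since \(A_{j+1}\) acts only on accessible registers, it commutes with \(\mathsf{cQ}\), so \(\mathsf{cQ}A_{j+1}=A_{j+1}\mathsf{cQ}\). The rotated next reference is therefore \(\ket{r_{j+1}}=\mathsf{cQ}\widehat V A_{j+1}(\ket{v_j}\otimes\ket{\Omega})\). Put \(\ket{a}=A_{j+1}\ket{K_j}\) and \(\ket{c}=A_{j+1}\ket{r_j}\). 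Isometries preserve inner products, so \(\operatorname{Re}\langle a|c\rangle=s_j\). Both vectors have record length at most \(j\), so the cutoff \(\Pi_{\le j}\) acts as the identity on them.

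The quantity to bound is \(s_{j+1}=\operatorname{Re}\bra{a}P_{\mathrm{good}}\widehat W^{b\dagger}P_{\mathrm{good}}\ket{r_{j+1}}\). I would replace \(\ket{r_{j+1}}\) in three steps, each costing one error term.

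\emph{Remove the outer \(P_{\mathrm{good}}\).} Since \(\ket{r_{j+1}}=\mathsf{cQ}(\ket{v_{j+1}}\otimes\ket{\Omega})\), the additional-record bound \eqref{eq:general-additional-loss} gives \(\|(I-P_{\mathrm{add}})\ket{r_{j+1}}\|\le\sqrt{b_{j+1}}\). Forest support gives \(\Pi_{\mathrm{bij}}\ket{r_{j+1}}=\ket{r_{j+1}}\): path-recorded states lie in the bijective sector, and \(Q_b\) preserves it. Hence \(P_{\mathrm{good}}\ket{r_{j+1}}\) differs from \(\ket{r_{j+1}}\) by at most \(\sqrt{b_{j+1}}\). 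The remaining operators are contractions and \(\|\ket{a}\|\le1\), so this step loses at most \(\sqrt{b_{j+1}}\).

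\emph{Commute the rotation.} Replace \(\mathsf{cQ}\widehat V\ket{\cdot}\) by \(\widehat V^b\mathsf{cQ}\ket{\cdot}\) using the covariance bound \(\|[\widehat V^bQ_b-Q_b\widehat V]\Pi_{\le j}\|\le c_j\), applied samplewise and then averaged in \(L^2(\mu)\). This turns the target vector into \(\widehat V^b\ket{c}\) at a loss of at most \(c_j\).

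\emph{Apply the restriction identity.} What remains is \(\operatorname{Re}\bra{a}P_{\mathrm{good}}\widehat W^{b\dagger}\widehat V^b\ket{c}\). Blockwise, \(\widehat W^{b\dagger}\widehat V^b=\sum_a P_a\otimes G_a^\dagger W_a^\dagger V_aG_a\), and \eqref{eq:controlled-restrictions} gives \(W_a^\dagger V_a=\Pi_{\mathcal D(W_a)}\); the idle block gives the identity. So the expression equals \(\operatorname{Re}\bra{a}P_{\mathrm{good}}\sum_aP_a\otimes G_a^\dagger\Pi_{\mathcal D(W_a)}G_a\ket{c}\).

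Now compare this with \(\operatorname{Re}\langle a|c\rangle\), using \(P_{\mathrm{good}}\ket{a}=\ket{a}\). The vector \(\ket{c}\) is bijective, so the difference is controlled by \(\sum_aP_a\otimes G_a^\dagger\Delta_{a,j}G_a\ket{c}\). By Cauchy--Schwarz, orthogonality of the control blocks, and the averaged defect bound \eqref{eq:general-controlled-defect}, this contributes at most \(\sqrt{\gamma_j}\). The defect bound applies to \(\ket{c}\) because \(\ket{c}=\mathsf{cQ}\) applied to a boundary-independent vector.

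I expect the main obstacle to be making this last step rigorous. The difficulty is making sure \(G_a\) acts on the query register consistently with Table~\ref{tab:controlled-boundaries} under the control projectors, and tracking where \(P_{\mathrm{good}}\) sits relative to \(G_a\). I would handle both by working branch by branch.

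Telescoping from \(s_0=\operatorname{Re}\langle K_0|r_0\rangle=1\) (because \(Q_b\) fixes the empty record) yields \eqref{eq:general-controlled-overlap}. The consequences \eqref{eq:general-controlled-overlap-consequences} then follow from \(\|x-y\|^2\le\|x\|^2+\|y\|^2-2\operatorname{Re}\langle x|y\rangle\le2-2(1-\mathcal B_t)\), and from \(1-\mathcal B_t\le\|K_t\|\,\|r_t\|\le\|K_t\|\le\tfrac{1}{2}(1+\|K_t\|^2)\).
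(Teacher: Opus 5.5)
Your overall structure (three replacements charged \(c_j\), \(\sqrt{\gamma_j}\), \(\sqrt{b_{j+1}}\), then telescoping from \(s_0=1\)) matches the paper. The gap is in the step that removes the outer \(P_{\mathrm{good}}\), which relies on a false claim: the rotated reference \(\ket{r_{j+1}}=\mathsf{cQ}(\ket{v_{j+1}}\otimes\ket{\Omega})\) is in general \emph{not} bijective.

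Forest support does not imply bijectivity, because \(V^L\) only requires a fresh image. After the first answer is swapped out, a second forward query on the same \(x\) produces \(L=\{(x,y_1),(x,y_2)\}\). This is a tree with a repeated domain label, precisely the situation the domain projector \(\Pi_{\mathcal D(W)}\) excludes. Nor does \(Q_b\) preserve the bijective sector: it replaces each label by a superposition, so two distinct labels acquire amplitude on coinciding ones. The paper's proof explicitly remarks that the rotated reference may be nonbijective.

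Consequently \(\|(I-P_{\mathrm{good}})\ket{r_{j+1}}\|^2=\|(I-P_{\mathrm{add}})\ket{r_{j+1}}\|^2+\|P_{\mathrm{add}}(I-\Pi_{\mathrm{bij}})\ket{r_{j+1}}\|^2\), and the second term is controlled by none of \(c_j,\gamma_j,b_{j+1}\). In the CPFC instance \(P_{\mathrm{add}}=I\), \(b_{j+1}=0\), your step would assert \(\Pi_{\mathrm{bij}}\ket{r_{j+1}}=\ket{r_{j+1}}\) outright.

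The missing idea is to let the projection act on the bra, whose support you do control. Put \(\ket{z}=\widehat W^b\ket{a}\). Then \(\Pi_{\mathrm{bij}}\ket{z}=\ket{z}\), for four reasons: \(\ket{a}\) is bijective, each \(W_a\) maps into the bijective sector, the idle block is the identity, and \(G_a,D_a\) leave the records untouched. With \([P_{\mathrm{add}},\Pi_{\mathrm{bij}}]=0\) this gives \(\bra{z}P_{\mathrm{good}}=\bra{z}P_{\mathrm{add}}\). Hence \(|\bra{z}P_{\mathrm{good}}\ket{r_{j+1}}-\langle z|r_{j+1}\rangle|=|\bra{z}(P_{\mathrm{add}}-I)\ket{r_{j+1}}|\le\sqrt{b_{j+1}}\), which is exactly how the paper charges this term.

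Your restriction step repeats the claim that \(\ket{c}\) is bijective. That is equally false, but there the conclusion survives by the same device. Because \(\bra{a}G_a^\dagger\) lies in \(\Pi_{\mathrm{bij},\le j}\), you may write \(\Pi_{\mathcal D(W_a)}\Pi_{\le j}=\Pi_{\mathrm{bij},\le j}-\Delta_{a,j}\) inside the matrix element. The first term then returns \(\langle a|c\rangle\) with no assumption on the ket.

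Once both justifications are moved to the bra side, the rest of your argument goes through as in the paper. That includes the covariance replacement, and the defect bound via boundary independence of \(A_{j+1}\ket{v_j}\) with Cauchy--Schwarz over branches. It also includes the telescoping and the two consequences.
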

\begin{proof}
Both global states have norm at most one and record length at most \(j\),
and \(P_{\mathrm{good}}\ket{K_j}=\ket{K_j}\).
Since the private rotation fixes the empty record, \(s_0=1\).
Before the next query, write
\(\ket{k}=A_{j+1}\ket{K_j}\), \(\ket{\psi}=A_{j+1}\ket{v_j}\), \(\ket{k_a}=P_a\ket{k}\), \(\ket{\psi_a}=P_a\ket{\psi}\), and
\[
 \ket{z}=\sum_aD_aW_aG_a\ket{k_a},\qquad
 \ket{w}=\mathsf{cQ}(\ket{v_{j+1}}\otimes\ket{\Omega}).
\]
The vectors \(\ket{\psi_a}\) are independent of \(b\). We estimate the overlap
with \(\ket{z}\) by using covariance to replace \(\ket{w}\) with
\(\sum_aD_aV_aG_a\mathsf{cQ}(\ket{\psi_a}\otimes\ket{\Omega})\) with error at most
\(c_j\|\ket{k}\|\|\ket{\psi}\|\le c_j\). Cancelling the output factors and
using~\eqref{eq:controlled-restrictions} replaces each product
\(W_a^\dagger V_a\) with \(a\ne0\) by
\(\Pi_{\mathcal D(W_a)}\Pi_{\le j}\) on these inputs.

Since \(\bra{k_a}G_a^\dagger\) is
supported on \(\Pi_{\mathrm{bij},\le j}\) for every branch, including
\(a=0\), we can write the five products inside this matrix element as
\(I\otimes\Pi_{\mathrm{bij},\le j}-\Delta_{\mathrm{ctl},j}\).
The first term gives \(s_j\): the bra is bijective, and \(A_{j+1}\) is an
isometry commuting with \(\mathsf{cQ}\). This substitution uses the bra's support;
the idle operation remains identity and the rotated reference may be
nonbijective.

Write \(\Delta_{a,j}=\Pi_{\mathrm{bij},\le j}-\Pi_{\mathcal D(W_a)}\Pi_{\le j}\).
Since these are projectors, \eqref{eq:general-controlled-defect} gives
\[
 \|\Delta_{a,j}G_a\mathsf{cQ}(\ket{\psi_a}\otimes\ket{\Omega})\|^2
       \le\gamma_j\|\ket{\psi_a}\|^2.
\]
We sum the defect terms using Cauchy--Schwarz over the branches:
\begin{align*}
 \sum_{a=1}^4\|\ket{k_a}\|\,
       \|\Delta_{a,j}G_a\mathsf{cQ}(\ket{\psi_a}\otimes\ket{\Omega})\|
 &\le\left(\sum_a\|\ket{k_a}\|^2\right)^{1/2}
       \left(\gamma_j\sum_a\|\ket{\psi_a}\|^2\right)^{1/2}
 \le\sqrt{\gamma_j}.
\end{align*}
Only the \(\ket{\psi_a}\) need to be independent of the boundaries in this
calculation; the \(\ket{k_a}\) may depend on them. We have shown
\(\operatorname{Re}\bra{z}\ket{w}\ge s_j-c_j-\sqrt{\gamma_j}\).
Each \(\ket{k_a}\) is bijective, each \(W_a\) preserves that sector, and
the boundary factors leave the relations unchanged. Thus
\(P_{\mathrm{good}}\ket{z}=P_{\mathrm{add}}\ket{z}\) and
\[
 |\bra{z}P_{\mathrm{good}}\ket{w}-\bra{z}\ket{w}|
   =|\bra{z}(P_{\mathrm{add}}-I)\ket{w}|
   \le\sqrt{b_{j+1}}.
\]
Hence
\[
 s_{j+1}\ge s_j-c_j-\sqrt{\gamma_j}-\sqrt{b_{j+1}},\qquad
 s_t\ge1-\mathcal B_t.
\]
Since both states have norm at most one and \(s_t\le\|\ket{K_t}\|\),
\[
 \begin{aligned}
  \|\ket{K_t}-\mathsf{cQ}(\ket{v_t}\otimes\ket{\Omega})\|^2
  &\le2(1-s_t),\\
  1-\|\ket{K_t}\|^2&\le2(1-\|\ket{K_t}\|)\le2(1-s_t).
 \end{aligned}
\]
\end{proof}
We apply the lemma to the CPFC construction with
\(P_{\mathrm{add}}=I\), \(b_j=0\), and
\(\gamma_j=6j\sqrt{j/N}\) from Lemma~\ref{lem:auxiliary-design-defect}.
For \(j\ge1\), \(c_j\le32j/\sqrt N\) and
\(\sqrt{\gamma_j}\le3jN^{-1/4}\); both vanish at \(j=0\). Therefore
\begin{equation}
 \mathcal B_t\le35t^2N^{-1/4},\qquad
 \|\ket{K_t}-\mathsf{cQ}(\ket{v_t}\otimes\ket{\Omega})\|\le9tN^{-1/8},\qquad
 1-\|\ket{K_t}\|^2\le70t^2N^{-1/4}.
 \label{eq:auxiliary-overlap-bounds}
\end{equation}
Since \(P_{\mathrm{good}}=\Pi_{\mathrm{bij}}\), the projected recurrence agrees with
the twirled \(W\) recurrence without an additional projection. For the five-phase construction,
we instead use its phase--Hadamard defect bound and the additional XOR
projection. 

\subsection{From the twirled recording state to the controlled Haar comparison}

We compare the purified permutation--function oracle with twirled \(W\),
then apply Section~\ref{app:controlled-stability} to compare twirled \(W\)
with \(\widehat V\). Both comparisons retain the private environments.

Take \(P_{\mathrm{add}}=I\), and purify independent exact-design boundaries
in \(\ket{\mathrm{init}(\mu)}=\ket{\Omega}\). The core sample \((\pi,f)\)
is independent of these boundaries and is purified in
\(\ket{\Omega_{PF}}\). Define
\[
 \widehat O^b=\sum_{a=0}^4P_a\otimes D_aO_aG_a,
 \qquad
 \widehat W^b=\sum_{a=0}^4P_a\otimes D_aW_aG_a.
\]
The first purifies the physical oracle for the single sampled matrix
\(DP_\pi F_fG\). It includes idle, forward, inverse, transpose and
conjugate, using the same samples and global phase in every branch.
The second preserves bijectivity, so the projections in
\eqref{eq:controlled-overlap-prefixes} have no further effect here.
Consequently, the global adversary states satisfy
\[
 \begin{aligned}
 \ket{\Psi_0}&=J_{PF}\ket{K_0},&
 \ket{\Psi_i}&=\widehat O^bA_i\ket{\Psi_{i-1}},\\
 &&\ket{K_i}&=\widehat W^bA_i\ket{K_{i-1}}.
 \end{aligned}
\]
Here
\(\ket{\mathcal A_i^{\mathsf{pfO},\mu}}_{\mathsf{ABPFGD}}:=\ket{\Psi_i}\)
and \(\ket{\mathcal A_i^{W,\mu}}_{\mathsf{ABLRGD}}:=\ket{K_i}\);
the superscript \(\mathrm{good}\) can be omitted because there is no additional
restriction in this construction. The first equality follows from
\(J_{PF}\ket{\varnothing,\varnothing}=\ket{\Omega_{PF}}\).
We extend \(J_{PF}\) by identity on the boundaries and accessible
registers. It commutes with \(A_i,G_a,D_a\).

Apply the exact domain identity~\eqref{eq:controlled-pf-one-step}
separately to the orthogonal control blocks. Since \(G_a,D_a\) are
unitary, for every bijective input \(\ket{\psi}\) we have
\[
 \|\widehat O^bJ_{PF}\ket{\psi}-J_{PF}\widehat W^b\ket{\psi}\|^2
    =\|\ket{\psi}\|^2-\|\widehat W^b\ket{\psi}\|^2.
\]
This identity holds for each fixed sampled pair \(b\), so it remains
valid after averaging, even when the input vector depends on \(b\).
At recording call \(i\), the input length is at most \(i-1\) and the
output length is at most \(i\).
Thus the same encoding restricted to records of size at most \(t\)
contains every state of the recording computation to which \(J_{PF}\) is applied.

In particular, define the one-step discrepancy and loss of squared norm by
\[
 \ket{\varepsilon_i}
    =(\widehat O^bJ_{PF}-J_{PF}\widehat W^b)A_i\ket{K_{i-1}},
 \qquad
 d_i=\|\ket{K_{i-1}}\|^2-\|\ket{K_i}\|^2.
\]
Then \(\|\ket{\varepsilon_i}\|^2=d_i\ge0\). The telescoping argument
in Lemma~\ref{lem:approximate-compression-transfer}, with
\(U_i=\widehat O^bA_i\), now gives
\[
 \ket{\Psi_t}-J_{PF}\ket{K_t}
   =\sum_{i=1}^tU_t\cdots U_{i+1}\ket{\varepsilon_i},
 \qquad
 \sum_{i=1}^td_i=1-\|\ket{K_t}\|^2.
\]
We may take unitary extensions of the accessible isometries. The
triangle inequality and Cauchy--Schwarz therefore yield
\begin{equation}
 \|\ket{\Psi_t}-J_{PF}\ket{K_t}\|
 \le\sum_{i=1}^t\sqrt{d_i}
 \le\sqrt{t(1-\|\ket{K_t}\|^2)}
 \le\sqrt{2t\mathcal B_t}.
 \label{eq:controlled-pf-gentle}
\end{equation}

To compare accessible marginals, follow
\(\ket{\Psi_t}\leftrightarrow J_{PF}\ket{K_t}\leftrightarrow \ket{K_t}
\leftrightarrow \mathsf{cQ}(\ket{v_t}\otimes\ket{\Omega})\).
The private isometry \(J_{PF}\) preserves the marginal of \(\ket{K_t}\), and
\(\mathsf{cQ}\) preserves that of \(\ket{v_t}\otimes\ket{\Omega}\). We apply \(J_{PF}\) only to
bijective \(\ket{K_t}\), not to the possibly nonbijective rotated reference.
Equations~\eqref{eq:controlled-subnormalized-distance} and
\eqref{eq:general-controlled-overlap-consequences} give total error
at most
\[
 \sqrt{2t\mathcal B_t}+\sqrt{2\mathcal B_t}
 \le\sqrt{70}(t^{3/2}+t)N^{-1/8}
 \le9t(t+1)N^{-1/8}
 \qquad(t\ge1).
\]
Consequently,
\begin{equation}
 \frac12\left\|
   \E_{\pi,f,G,D}\!\Bigl[\rho_t^{DP_\pi F_fG}\Bigr]
       -\Tr_{LR}\ket{v_t}\bra{v_t}\right\|_1
       \le9t(t+1)N^{-1/8}.
 \label{eq:controlled-auxiliary-haar-bound}
\end{equation}
The average is obtained by tracing the one common sample purification,
so it retains every branch cross term.

To prove Theorem~\ref{thm:controlled-haar}, take independent Haar
\(G,D\), which are exact unitary \(2\)-designs. Conditional on
\(\pi,f,G\), Haar invariance gives \(DP_\pi F_fG\sim\Haar\).
This identifies the joint distribution of all five branches of the
same sampled matrix, so~\eqref{eq:controlled-auxiliary-haar-bound}
is the claimed Haar comparison. 

\subsection{Global and reduced states for the five-phase construction}
\label{app:five-phase-transfer}

We prove Proposition~\ref{prop:controlled-recorder-transfer} using
\(\ket{K_m}\) from~\eqref{eq:controlled-overlap-prefixes}. Write
\(\ket{u_m}=\mathsf{cQ}(\ket{v_m}\otimes\ket{\Omega})\).
Lemmas~\ref{lem:controlled-covariance}, \ref{lem:pub-boundary-defects},
and~\ref{lem:pub-boundary-good-mass} give
\[
 c_j=O(jN^{-1/2}),\qquad
 \gamma_j=O(j^{3/2}N^{-1/2}),\qquad b_j=O(j^4/N).
\]
The last estimate applies because Lemma~\ref{lem:reference-forest}
makes \(\ket{v_j}\) forest-supported and independent of the boundary tables.
For \(m\le t\), their sum in Lemma~\ref{lem:controlled-overlap} satisfies
\[
 \mathcal B_m=\sum_{j=0}^{m-1}(c_j+\sqrt{\gamma_j}+\sqrt{b_{j+1}})
       =O(t^2N^{-1/4}+t^3N^{-1/2}).
\]
That lemma gives
\[
 \|\ket{K_m}-\ket{u_m}\|=O(\sqrt{\mathcal B_m}),\qquad
 1-\|\ket{K_m}\|^2=O(\mathcal B_m).
\]

By Proposition~\ref{pubcore:compression}, the restricted steps have
compression error at most \(\eta_t\): the encoding acts only on the
private records and commutes with accessible operations and live
boundary factors. Equation~\eqref{pubcore:contractive-encoding} encodes
the empty record exactly, so
\begin{equation}
 \|\ket{\Phi_0}-T_{\mathrm{good}}\ket{K_0}\|=d_0=0.
 \label{eq:controlled-seed-correction}
\end{equation}
Using unitary extensions for accessible isometries,
Lemma~\ref{lem:approximate-compression-transfer} and the preceding
norm-loss bound give
\[
 \|\ket{\Phi_m}-T_{\mathrm{good}}\ket{K_m}\|=O(t^2N^{-1/8}).
\]
Since \(P_{\mathrm{good}}\ket{K_m}=\ket{K_m}\) and
\(E_t\ket{v_m}=T_{\mathrm{good}}P_{\mathrm{good}}\ket{u_m}\),
contractivity gives
\(\|T_{\mathrm{good}}\ket{K_m}-E_t\ket{v_m}\|=O(\sqrt{\mathcal B_m})\).
Thus
\begin{equation}
 \|\ket{\Phi_m}-E_t\ket{v_m}\|=O(t^2N^{-1/8}).
 \label{eq:controlled-full-environment-transfer}
\end{equation}
For the sample values \(\ket{q_{v_m}(S)}=(E_t\ket{v_m})(S)\),
the full-environment norm is exactly
\[
 \|\ket{\Phi_m}-E_t\ket{v_m}\|^2
 =\E_S\Bigl[\|\ket{\Phi_m(S)}-\ket{q_{v_m}(S)}\|^2\Bigr].
\]
The sample-conditioned vectors retain their original norms; the bound
averages over the same row \(S\) on both sides.

For the marginal bound, let \(\rho_K\) and
\(\rho_{T_{\mathrm{good}}K}\) denote accessible states after tracing all
private systems. Their difference is the positive semidefinite marginal
of \((I\otimes\sqrt{I-T_{\mathrm{good}}^\dagger T_{\mathrm{good}}})\ket{K}\), so
\begin{equation}
 \frac12\|\rho_K-\rho_{T_{\mathrm{good}}K}\|_1
 =\frac12\bra{K}(I-T_{\mathrm{good}}^\dagger T_{\mathrm{good}})\ket{K}
 \le\frac{\varepsilon_{\mathrm{cf},t}}2\|\ket{K}\|^2.
 \label{eq:controlled-marginal-correction}
\end{equation}
The private rotation \(\mathsf{cQ}\) preserves the accessible marginal of
\(\ket{v_m}\). Apply~\eqref{eq:controlled-subnormalized-distance} and
the triangle inequality along
\(\rho_{\Phi_m},\rho_{T_{\mathrm{good}}K_m},\rho_{K_m},\rho_{u_m}\).
The preceding bounds give
\[
 \frac12\|\rho_{\Phi_m}-\rho_{u_m}\|_1=O(t^2N^{-1/8}).
\]
All estimates retain arbitrary independent references. No intermediate
state or sample-conditioned vector is renormalized.

Choose a common bound for the vector and marginal errors and denote it by
\begin{equation}
 e_t=O(t^2N^{-1/8}).
 \label{eq:pub-prefix-error}
\end{equation}
Theorem~\ref{thm:controlled-haar} adds a Haar-comparison error of the
same order, giving
\begin{equation}
 \Delta_t=O(t^2N^{-1/8}).
 \label{eq:construction-rates}
\end{equation}
The implied constants are universal.

\section{Classical search and the black-box consequence}
\label{sec:oracle-search}

A completion of the universal \(\mathsf{QMA}\) verifier can find a classical
solution even when some of its answers lie outside the promise.
The search below allows a small loss in verification probability at
each selected bit. This proves a search lemma first. Function inversion,
puzzle solving, and the fully black-box separation then follow as
separate consequences.

Throughout this appendix, fix \(R\) and put \(O_R=(R,B_R)\).
Let \(Q\) be any Boolean completion of the universal verifier over
\(O_R\). Thus \(Q\) answers one on every YES input and zero on
every NO input, and it has an arbitrary fixed answer on every other
input. The search and inversion claims hold for every such completion;
only the PRU security theorem uses the distribution
\eqref{eq:boolean-completion-law}.

\subsection{Approximate search for a classical solution}

\begin{lemma}[Approximate classical search]
\label{lem:completed-oracle-search}
Let \(\mathcal V_{\mathrm{search}}^{O_R}(y,w)\) be a uniform
quantum verifier with a classical solution \(w\in\{0,1\}^m\),
where \(m\ge1\). Define
\[
 \operatorname{acc}_y(w)
 :=\Pr[\mathcal V_{\mathrm{search}}^{O_R}(y,w)\text{ accepts}].
\]
For rational \(0<\epsilon<c\le1\), a deterministic algorithm
with \(m\) queries to \(Q\) returns a string \(w\in\{0,1\}^m\).
If \(\max_w\operatorname{acc}_y(w)\ge c\), its output satisfies
\[
 \operatorname{acc}_y(w)\ge c-\epsilon.
\]
The algorithm terminates on every input, including inputs that fail
this hypothesis. Its running time and query lengths are polynomial in
the verifier's running time, \(|y|\), \(m\), \(1/\epsilon\), and
the bit lengths of \(c,\epsilon\).
The verifier queries only \(O_R\); the search algorithm queries
only \(Q\).
\end{lemma}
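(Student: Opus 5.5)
We will find \(w\) one bit at a time, using the standard \(\mathsf{QCMA}\) search-to-decision reduction with a slowly decreasing sequence of thresholds. The thresholds absorb both the promise gap and the arbitrary answers of \(Q\) off the promise.

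\textbf{Thresholds.} Set \(\delta=\epsilon/m\) and \(c_i=c-i\delta\) for \(0\le i\le m\). The invariant after fixing a prefix \(w_1\cdots w_i\) is
\[
 \max_{u\in\{0,1\}^{m-i}}\operatorname{acc}_y(w_1\cdots w_iu)\ge c_i.
\]
It holds at \(i=0\) by hypothesis.

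\textbf{Prefix verifier.} For a step \(i\), a prefix \(p\) of length \(i-1\) and a bit \(b\), we build an explicit circuit description \(z=z(y,p,b,i)\) for the universal verifier \(\mathcal V\). It does the following:
\begin{enumerate}
\item measure the \(m-i\) witness qubits in the computational basis, obtaining \(u\);
\item run \(\mathcal V_{\mathrm{search}}^{O_R}(y,pbu)\) \(k\) times independently, with fresh ancillas;
\item accept with a probability that is an affine function of the empirical acceptance frequency.
\end{enumerate}
The affine map is chosen so that optimal acceptance \(\ge c_{i-1}-\delta/3\) gives score \(\ge 2/3\), and optimal acceptance \(\le c_{i-1}-2\delta/3\) gives score \(\le 1/3\).

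Measuring the witness first makes the optimum over quantum witnesses equal to the maximum over classical \(u\). A convex combination never beats its best classical component, so \(g_z(O_R)=\max_u h(\operatorname{acc}_y(pbu))\) for a monotone affine \(h\).

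The amplification step is optional. A direct affine rescaling \(x\mapsto \tfrac13+(x-(c_{i-1}-2\delta/3))/\delta\) already works, provided it stays in \([0,1]\). This can be implemented by a biased coin mixture with gap \(\delta/3\). I would use that rather than repetition. The description length is polynomial in the verifier size, \(m\), and the bit length of \(\delta\). Rational biases need exact or sufficiently accurate synthesis; any error below \(\delta/100\) is harmless.

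\textbf{Step \(i\).} Query \(Q(z(y,p,0,i))\). If it answers \(1\), set \(w_i=0\); otherwise set \(w_i=1\).

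\emph{Case \(w_i=0\).} Since \(Q\) answered \(1\), input \(z\) was not a NO instance. So \(\max_u\operatorname{acc}(p0u)> c_{i-1}-2\delta/3\ge c_i\).

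\emph{Case \(w_i=1\).} Input \(z\) was not a YES instance, so \(\max_u\operatorname{acc}(p0u)<c_{i-1}-\delta/3\le c_{i-1}\). The invariant gives \(c_{i-1}\le\max_{b,u}\operatorname{acc}(pbu)\). The maximum must therefore be attained with \(b=1\), so \(\max_u\operatorname{acc}(p1u)\ge c_{i-1}\ge c_i\).

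In both cases the invariant propagates. At \(i=m\) it reads \(\operatorname{acc}_y(w)\ge c-\epsilon\).

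The algorithm makes exactly \(m\) deterministic queries and always halts, whether or not the hypothesis holds. Query lengths are polynomial as claimed. The verifier inside \(z\) uses only \(O_R\)-gates, and the search uses only \(Q\).

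The main obstacle is the boundary bookkeeping. We must make sure that an off-promise answer in either direction still preserves the invariant at the next threshold, since \(Q\) is arbitrary there. We must also make sure that the exact rational affine rescaling is realizable within the fixed gate set. If exactness is problematic, I would widen the margins to \(\delta/4\) and absorb the synthesis error there.
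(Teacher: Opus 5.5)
\textbf{The gap is in how each query is made to straddle \(1/3\) and \(2/3\).} Your skeleton matches the paper's: thresholds \(c-i\delta\) with \(\delta=\epsilon/m\), the prefix invariant, one query about the \(0\)-branch per bit, and the two-case argument that either answer of \(Q\), including an arbitrary off-promise answer, preserves the invariant. The problem is the step that makes each query straddle the universal verifier's fixed thresholds.

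You declare amplification optional and propose accepting with probability \(h(x)=\tfrac13+(x-c_{i-1}+2\delta/3)/\delta\), where \(x\) is the acceptance probability of \(\mathcal V_{\mathrm{search}}\). This map has slope \(1/\delta>1\), leaves \([0,1]\), and cannot be realized.
\begin{itemize}
\item Running \(\mathcal V_{\mathrm{search}}\) once and accepting with probability \(\alpha\) or \(\beta\) according to its outcome (any biased-coin mixture) gives acceptance probability \(\beta+(\alpha-\beta)x\), with \(|\alpha-\beta|\le1\).
\item Running \(k\) copies and accepting with an affine function \(a+b\hat x\) of the empirical frequency \(\hat x\) is no better. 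The value \(a+b\hat x\) must lie in \([0,1]\) at \(\hat x=0\) and \(\hat x=1\), so \(|b|\le1\). By linearity its expectation is again \(a+bx\).
\end{itemize}
So your two critical levels \(c_{i-1}-\delta/3\) and \(c_{i-1}-2\delta/3\) receive acceptance probabilities at most \(\delta/3<1/3\) apart. Your two requirements on \(h\) (score at least \(2/3\) above the first level, at most \(1/3\) below the second) cannot both hold. Queries can then fall off the promise, where \(Q\) answers arbitrarily, and one of the implications your case analysis uses is lost:
\begin{itemize}
\item \(Q=1\Rightarrow\) some \(0\)-extension exceeds \(c_i\);
\item \(Q=0\Rightarrow\) no \(0\)-extension reaches \(c_{i-1}\).
\end{itemize}

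\textbf{The missing ingredient is a nonlinear threshold post-processing, which is what the paper uses.} After measuring the witness, run \(\mathcal V_{\mathrm{search}}\) \(k=\lceil4\delta^{-2}\rceil\) times and accept iff the empirical fraction exceeds the midpoint \(c_{i-1}-\delta/2\). Hoeffding bounds each tail across the gap \(\delta/2\) by \(e^{-2}<1/6\). Hence the query is a YES instance whenever some \(0\)-extension has value at least \(c_{i-1}\), and a NO instance whenever all have value at most \(c_i\).

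Your measured-witness argument still applies: the optimal score is the maximum over classical completions \(u\) of this (now non-affine) acceptance function. Your case analysis then goes through with \(c_i\) and \(c_{i-1}\) as the critical levels. The threshold test is a classical rational comparison, so the biased-coin synthesis issue disappears. The \(O(m^2/\epsilon^2)\) repetitions are why the running time is allowed to depend polynomially on \(1/\epsilon\).
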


\begin{proof}
Put \(\delta=\epsilon/m\). We construct prefixes
\(p_0,p_1,\ldots,p_m\), with \(|p_j|=j\) and
\(p_0\) empty. When the hypothesis holds, we maintain
\begin{equation}
 \max_{w\text{ extending }p_j}\operatorname{acc}_y(w)
 \ge c-j\delta.
 \label{eq:classical-search-invariant}
\end{equation}
The final prefix is a full solution, so this invariant proves the
claimed guarantee. We establish it by specifying one query at each
step and checking both possible answers.

\paragraph{The query at step \(j\).}
For \(1\le j\le m\), construct a \(\mathsf{QMA}\) verifier that first measures
its \(m\)-qubit witness in the computational basis. Let \(w\) be
the resulting string. Reject unless \(w\) extends \(p_{j-1}0\).
Otherwise run \(\mathcal V_{\mathrm{search}}^{O_R}(y,w)\) on
fresh work registers
\(h=\lceil4\delta^{-2}\rceil\) times. Accept if the
fraction of accepting runs exceeds
\(c-(j-\tfrac12)\delta\).
All computations after the witness measurement use this same
classical string, and the repetitions are independent conditional
on it.

Hoeffding's inequality bounds either tail across a gap of
\(\delta/2\) by
\(\exp(-h\delta^2/2)\le e^{-2}<1/6\). Hence the query is a YES instance
if some extension of \(p_{j-1}0\) has value at least
\(c-(j-1)\delta\). It is a NO instance if every such extension
has value at most \(c-j\delta\). Indeed, in the first case the
corresponding classical witness is accepted with probability at least
\(5/6\). In the second case every classical witness is accepted with
probability at most \(1/6\). Measuring an arbitrary quantum witness
makes its acceptance probability a convex combination of these
classical acceptance probabilities, so it cannot violate soundness.
These bounds imply the universal verifier's \(2/3\) and \(1/3\)
promise thresholds.

\paragraph{Either answer preserves the invariant.}
If \(Q\) answers one, set \(p_j=p_{j-1}0\). This query cannot
be a NO instance. The preceding soundness implication therefore shows
that some extension has value greater than \(c-j\delta\).
If \(Q\) answers zero, set \(p_j=p_{j-1}1\). The query cannot
be a YES instance, so no zero-branch extension has value at least
\(c-(j-1)\delta\). The preceding invariant supplies an extension
at that value or above; it must lie in the one branch.
Thus \eqref{eq:classical-search-invariant} holds in both cases.
The argument also covers either answer on a query outside the promise.

There are exactly \(m\) steps. Their cumulative loss is at most
\(m\delta=\epsilon\). Each query describes a circuit of polynomial
size because \(h=O(m^2/\epsilon^2)\), and rational threshold
comparison is a classical polynomial-time computation. The search
algorithm constructs these descriptions; it never executes the
verification circuits. It always appends one bit and halts after
\(m\) queries, even when the initial invariant is false.
\end{proof}

\subsection{Function inversion and puzzle solving}

A quantum-computable classical-output function has a fixed classical
value at each input. A uniform quantum algorithm outputs that value
with probability at least \(2/3\) on every input, as in
\cite[Definition~9]{KQT25}. We use this convention relative to
\(O_R\), without giving the evaluator access to \(Q\).

\begin{corollary}[Inversion with a total deterministic algorithm]
\label{cor:base-function-inversion}
Every quantum-computable classical-output function over \(O_R\)
has a deterministic polynomial-time inverter with access to \(Q\).
Given \(1^m\) and an output in the range on \(m\)-bit inputs, it
returns an \(m\)-bit preimage. On all other inputs it still returns
an \(m\)-bit string in polynomial time.
Its coherent XOR oracle can be implemented exactly by a uniform
polynomial-size quantum circuit using the same Boolean oracle \(Q\).
\end{corollary}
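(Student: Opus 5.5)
The plan is to reduce inversion to Lemma~\ref{lem:completed-oracle-search} with a classical-witness verifier built from the evaluator. Let \(f\) be the quantum-computable classical-output function over \(O_R\), and let \(\mathcal E^{O_R}\) be its uniform evaluator, which outputs \(f(w)\) with probability at least \(2/3\) on every input \(w\). Given \(1^m\) and a target \(y\), I would first amplify \(\mathcal E\) by majority vote over \(h=O(m)\) independent runs on fresh registers. The amplified evaluator then outputs \(f(w)\) with probability at least \(1-2^{-m}\ge 5/6\) for every \(w\), uniformly in \(m\), by a Chernoff bound. Next I define \(\mathcal V_{\mathrm{search}}^{O_R}(y,w)\) to run this amplified evaluator on \(w\in\{0,1\}^m\) and accept exactly when its output equals \(y\). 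This verifier is uniform, queries only \(O_R\), and runs in time polynomial in \(m\), \(|y|\) and the evaluator's running time.

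The key dichotomy is the following. If \(f(w)=y\), then \(\operatorname{acc}_y(w)\ge 5/6\). If \(f(w)\ne y\), then accepting requires a wrong majority output, so \(\operatorname{acc}_y(w)\le 1/6\). I would apply Lemma~\ref{lem:completed-oracle-search} with \(c=5/6\) and \(\epsilon=1/3\), both rational constants, which makes the search polynomial time with \(m\) queries to \(Q\). When \(y\) lies in the range on \(m\)-bit inputs, some preimage gives \(\max_w\operatorname{acc}_y(w)\ge c\). The returned \(w\) then satisfies \(\operatorname{acc}_y(w)\ge 1/2>1/6\), which forces \(f(w)=y\), so \(w\) is an exact preimage. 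On every other input the lemma still halts and returns some \(m\)-bit string in polynomial time. This gives the totality and the determinism, since the search algorithm makes only classical adaptive queries and never executes the verification circuits.

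For the coherent XOR oracle, the map \((1^m,y)\mapsto w\) is computed by a deterministic classical algorithm with adaptive Boolean queries to \(Q\). I would make it reversible in the standard way, answering each query with the exact \(\ket{z,b}\mapsto\ket{z,b\oplus Q(z)}\) gate. I would XOR the final output into the target register and then uncompute the scratch by running the reversible circuit backwards, reusing the same fixed table \(Q\).

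The main obstacle I expect is exactness in the superposition over inputs \(y\). Exactness holds because \(Q\) is a fixed Boolean function, so the garbage is a deterministic function of the input and uncomputation is exact, with no approximation even on off-promise queries. The remaining point to check is that the amplified evaluator's gap holds uniformly for all \(w\), which is what places every query on the correct side of Lemma~\ref{lem:completed-oracle-search}'s thresholds.
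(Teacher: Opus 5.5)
Your proposal is correct and follows the paper's proof: build the verifier that accepts \(w\) exactly when a fresh evaluation on \(w\) outputs \(y\), apply Lemma~\ref{lem:completed-oracle-search} to get a preimage, and make the deterministic search coherent by reversible simulation, XOR copy-out, and uncomputation against the same fixed table \(Q\). The differences are minor. The paper skips your majority-vote amplification and uses the raw \(2/3\)-versus-\(1/3\) gap with \(c=2/3\), \(\epsilon=1/6\); your inequality \(1-2^{-m}\ge 5/6\) also fails for \(m\le 2\) unless \(h\) is at least a suitable constant. The paper also returns the empty string directly when \(m=0\), since the lemma assumes \(m\ge 1\).
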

\begin{proof}
Let \(f\) be the function and let its evaluator be fixed. On input
\((1^m,y)\), define a verifier that accepts a classical string \(w\)
exactly when a fresh evaluation on \(w\) outputs \(y\).
If \(f(w)=y\), its acceptance probability is at least \(2/3\).
If \(f(w)\ne y\), that probability is at most \(1/3\), since
the evaluator outputs \(f(w)\) with probability at least \(2/3\).
Apply Lemma~\ref{lem:completed-oracle-search} with \(c=2/3\)
and \(\epsilon=1/6\). On an output in the range, the returned
string has acceptance probability at least \(1/2\), so it must be
a preimage. For \(m=0\), return the empty string directly.

For a fixed table \(Q\), the entire search is deterministic and
has a polynomial running-time bound on every input. Denote its total
output function by \(\mathcal I^Q(1^m,y)\). Reversibly simulate
this computation, copy its output into the target register by XOR,
and reverse the computation to erase the work registers. Every
Boolean query gate is its own inverse and uses the same fixed \(Q\)
during the forward and reverse computations. This implements
\[
 |1^m,y,z\rangle
 \longmapsto
 |1^m,y,z\oplus\mathcal I^Q(1^m,y)\rangle
\]
with clean work registers and polynomially many \(Q\)-queries.
The circuit identity holds on every basis input, including inputs
outside the range, and therefore on their superpositions.
\end{proof}

\begin{corollary}[Solving puzzles with classical challenges and solutions]
\label{cor:base-puzzle-search}
Let \(\mathcal V_{\mathrm{search}}^{O_R}\) be a polynomial-time
quantum verifier for classical puzzles and classical solutions of a
known polynomially bounded length. Suppose a sampler produces pairs
\((y,w)\) with
\(\mathbb E_{y,w}\operatorname{acc}_y(w)\ge1-\eta\).
A polynomial-time algorithm with access to \(Q\), given the sampled
puzzle \(y\), returns a solution whose average verification probability
is at least \((1-3\eta)/2\).
\end{corollary}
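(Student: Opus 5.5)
The plan is to reduce puzzle solving to Lemma~\ref{lem:completed-oracle-search}, applied with a threshold, and then average over the sampler using Markov's inequality. On the sampled puzzle $y$, run the search of Lemma~\ref{lem:completed-oracle-search} with $c=2/3$ and $\epsilon=1/6$, and output the returned string $w'$. The verifier is $\mathcal V_{\mathrm{search}}^{O_R}$ itself, since it already takes a classical candidate solution of known length $m$. The lemma then gives a deterministic polynomial-time procedure making $m$ queries to $Q$. It always halts with some $m$-bit string, and whenever $\max_w\operatorname{acc}_y(w)\ge 2/3$ the output satisfies $\operatorname{acc}_y(w')\ge 1/2$.

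Next I bound the probability of puzzles where this guarantee applies. Let $G=\{y:\max_w\operatorname{acc}_y(w)\ge 2/3\}$. For a sampled pair $(y,w)$, the quantity $1-\operatorname{acc}_y(w)$ is nonnegative with mean at most $\eta$. By Markov's inequality,
\[
\Pr_{y,w}\bigl[\operatorname{acc}_y(w)<2/3\bigr]
=\Pr_{y,w}\bigl[1-\operatorname{acc}_y(w)>1/3\bigr]\le 3\eta .
\]
If $y\notin G$, then every $w$ has $\operatorname{acc}_y(w)<2/3$, including the sampled one. Hence $\Pr[y\notin G]\le 3\eta$ and $\Pr[y\in G]\ge 1-3\eta$.

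For the average verification probability of the output, take the expectation over the sampled $y$, noting that $w'$ is a deterministic function of $y$ once $Q$ is fixed. On $G$ the verification probability is at least $1/2$, and off $G$ it is at least $0$. So
\[
\mathbb E_y\bigl[\operatorname{acc}_y(w')\bigr]\ge \tfrac12\Pr[y\in G]\ge \tfrac{1-3\eta}{2},
\]
as claimed. The running time is polynomial because $c$ and $\epsilon$ are constants and $m$ and the verifier's running time are polynomially bounded.

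The only delicate point is that some search queries may lie off the promise. This is already handled by Lemma~\ref{lem:completed-oracle-search}: its invariant survives either answer, so the bound holds for every Boolean completion $Q$. Nothing beyond this averaging step is needed, and I expect no real obstacle. A trivial technicality is $m=0$, where the empty string is returned directly.
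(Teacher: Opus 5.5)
Your proposal is correct and follows the paper's proof essentially step for step: apply Lemma~\ref{lem:completed-oracle-search} with $c=2/3$ and $\epsilon=1/6$, use Markov to bound by $3\eta$ the probability of puzzles with $\max_w\operatorname{acc}_y(w)<2/3$, and use nonnegative acceptance on those puzzles when averaging. The only cosmetic difference is the order of two steps. You apply Markov to $1-\operatorname{acc}_y(w)$ for the sampled pair and then pass to the maximum, whereas the paper first bounds $\mathbb E_y[1-a_*(y)]\le\eta$ with $a_*(y)=\max_w\operatorname{acc}_y(w)$ and applies Markov to that quantity.
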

\begin{proof}
For each puzzle set
\(a_*(y)=\max_w\operatorname{acc}_y(w)\).
The sampler's guarantee implies
\(\mathbb E_y[1-a_*(y)]\le\eta\).
Markov's inequality gives
\(\Pr_y[a_*(y)<2/3]\le3\eta\).
On every remaining puzzle, Lemma~\ref{lem:completed-oracle-search}
with \(c=2/3\) and \(\epsilon=1/6\) returns a solution accepted
with probability at least \(1/2\). Acceptance probabilities are
nonnegative on the other puzzles, which proves the stated average.
\end{proof}

\subsection{The fully black-box consequence}

We specify the oracle access in the reduction. A keyed unitary oracle
\(\mathcal P\) supports coherent keys and all five control branches
from Section~\ref{sec:construction}. In its PRU experiment, the
adversary also receives either a uniformly keyed unitary or an
independent Haar unitary as the challenge. The public oracle
\(\mathcal P\) stays the same in the two experiments.

A \emph{uniform fully black-box construction} in this model consists
of uniform polynomial-time oracle algorithms
\((\mathcal C,\mathcal R)\). For every \(\mathcal P\),
\(\mathcal C^{\mathcal P}\) evaluates a classical-output function
with error at most \(1/3\) on every input. Given any total inverter
oracle \(\mathcal I\) that inverts this function with non-negligible
probability on a uniform input, the reduction
\(\mathcal R^{\mathcal P,\mathcal I}\) has non-negligible advantage
in the PRU experiment. The inverter can be inefficient; the reduction
accesses it only as an oracle, possibly through its coherent XOR gate.
The construction has no access to \(Q\). Its input and output lengths,
and the resources of both algorithms, are polynomial in the security
parameter.

\begin{corollary}[Fully black-box separation]
\label{cor:fully-black-box-separation}
There is no uniform fully black-box construction in the preceding
model of quantum-computable classical-output one-way functions from
strong PRUs.
\end{corollary}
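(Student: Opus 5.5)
The argument is by contradiction. Suppose \((\mathcal C,\mathcal R)\) is a uniform fully black-box construction in the model above. We instantiate the keyed unitary oracle \(\mathcal P\) with the keyed five-phase family of Construction~\ref{constr:keyed-five-phase} over the random oracle \(R\). We fix a pair \((R,Q)\) in the probability-one set of Theorem~\ref{thm:main}, or more precisely of Proposition~\ref{prop:fixed-oracle-comparison}, applied to a suitable countable family of adversaries. Coherent keyed access to \(\mathcal P\), with all five control branches, is implementable exactly by a uniform circuit with polynomially many \(R\)-queries, by the coherent phase implementation of Appendix~\ref{sec:phase-implementation}. We allow synthesis error \(2^{-\kappa^2}\) per query, which is absorbed into the final bound. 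Consequently \(\mathcal C^{\mathcal P}\) becomes a uniform quantum algorithm relative to \(O_R\) alone. It computes a classical-output function \(f\) with error at most \(1/3\) on every input, so \(f\) is quantum-computable over \(O_R\) in the sense of Corollary~\ref{cor:base-function-inversion}. The evaluator never touches \(Q\), as the model requires.

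Next we apply Corollary~\ref{cor:base-function-inversion} to \(f\). It gives a total deterministic inverter \(\mathcal I^Q\) that returns a preimage of every output in the range, so it inverts \(f\) with probability one on a uniform input. Its coherent XOR gate is a uniform polynomial-size circuit querying only \(Q\). By the hypothesis on \(\mathcal R\), the algorithm \(\mathcal R^{\mathcal P,\mathcal I^Q}\) then has non-negligible advantage in the PRU experiment: it distinguishes a uniformly keyed \(U_k\) from a Haar unitary, with the public \(\mathcal P\) unchanged between the two experiments. We substitute the circuits for \(\mathcal P\) (via \(R\), hence via \(Q\) by Lemma~\ref{lem:public-query-normal-form}) and for \(\mathcal I^Q\). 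The result is a single uniform \(\mathsf{QPT}\) adversary \(A\) with oracle access to \(Q\) and the challenge. This \(A\) has non-negligible advantage \(\operatorname{Adv}_A^{R,Q}(\kappa)\) for infinitely many \(\kappa\).

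That contradicts Theorem~\ref{thm:main}, which gives \(\operatorname{Adv}_A^{R,Q}(\kappa)=O(2^{-\kappa/128})\) eventually for every uniform \(\mathsf{QPT}\) adversary on the fixed pair. This step relies on the quantifier order of Proposition~\ref{prop:fixed-oracle-comparison}: a single probability-one event works simultaneously for all countably many uniform adversaries. Because \(A\) is built uniformly from \((\mathcal C,\mathcal R)\) and fixed encodings, it belongs to this countable family, so no oracle-dependent advice is introduced.

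The main obstacle is this bookkeeping step, which must confirm that the adversary obtained from \(\mathcal R\) really lies within the theorem's scope. Three points need checking. First, the inverter's \(Q\)-queries are encoded universal-verifier inputs of polynomially bounded length. Their search verifiers run \(\mathcal C^{\mathcal P}\) relative to \(O_R\) only, so they belong to a fixed verifier family whose resources are polynomial in \(\kappa\). Second, the security parameter of the inverted function and that of the PRU are polynomially related, so "non-negligible infinitely often" transfers. Third, a reduction that gains advantage only on some input lengths still contradicts the eventual negligible bound. We will verify each explicitly, using the resource conventions of Section~\ref{sec:weighted-query}.
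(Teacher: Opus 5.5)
Your argument is the paper's proof: instantiate $\mathcal P$ with the keyed five-phase family over a fixed pair $(R,Q)$ from Theorem~\ref{thm:main}, and turn $\mathcal C^{\mathcal P}$ into an $R$-only evaluator via Appendix~\ref{sec:phase-implementation}. You then invert it with Corollary~\ref{cor:base-function-inversion} and compose $\mathcal R$ with $\mathcal I^Q$ to contradict the theorem.

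One step does not hold as written. You claim the simulated evaluator still has error at most $1/3$ on every input. But phase synthesis is not exact, and your ``implementable exactly'' conflicts with the synthesis error you then introduce. So you only get error $1/3$ plus a negligible term, and the success probability can fall below the $2/3$ threshold that the definition of quantum-computable and Corollary~\ref{cor:base-function-inversion} require.

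The paper repairs this in two steps. It first amplifies $\mathcal C^{\mathcal P}$ above $5/6$ by independent repetition and majority. Only then does it choose the synthesis accuracy so that the output distribution moves by less than $1/12$. Alternatively, you could rerun the corollary's search argument with slightly smaller constants $c,\epsilon$, which tolerates this slack.

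Your remaining bookkeeping is fine. The first of your three points is simpler than you suggest: the inverter queries the universal completion $Q$ directly, and its address lengths already bound the verifier's resources, so no separate verifier family is involved.
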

\begin{proof}
Suppose \((\mathcal C,\mathcal R)\) is such a construction.
Choose a fixed pair \((R,Q)\) for which Theorem~\ref{thm:main}
holds, and instantiate \(\mathcal P\) by the keyed five-phase
family. We obtain a contradiction in two steps: build a total inverter
using \(Q\), and implement the resulting reduction with the public
oracle access allowed by that theorem.

\paragraph{The evaluator and inverter.}
Appendix~\ref{sec:phase-implementation} implements coherent keyed
queries to \(\mathcal P\) through \(R\), including all five
branches. First amplify the success probability of
\(\mathcal C^{\mathcal P}\) above \(5/6\) by independent
repetition and majority of its classical outputs. Then choose the
phase-synthesis accuracy so that replacing its polynomially many
unitary queries changes the output distribution by less than \(1/12\).
This gives a uniform \(R\)-oracle evaluator for the same classical
function with success probability above \(2/3\) on every input.
The evaluator queries no completion bits.

Apply Corollary~\ref{cor:base-function-inversion} with the security
parameter included in the public input and the construction's known
input-length rule. It gives a total deterministic inverter
\(\mathcal I^Q\) with access only to \(Q\). It inverts every output in the range. In particular, its
success on a uniformly chosen input is one, so the assumed reduction
applies to this inverter. If the reduction uses coherent inversion
queries, the same corollary implements their XOR gates exactly by
polynomially many queries to the fixed table \(Q\).

\paragraph{Compose the reduction with the inverter.}
The composed adversary runs
\(\mathcal R^{\mathcal P,\mathcal I^Q}\). It implements public
\(\mathcal P\)-queries through \(R\), implements
\(\mathcal I^Q\)-queries through \(Q\), and forwards challenge
queries to its given challenge oracle. It therefore has precisely
public \((R,Q)\)-access and the challenge access allowed by
Theorem~\ref{thm:main}. In particular, the extra oracle used by the
inverter is present in the composed adversary.
Choose negligible synthesis error for the polynomially many public
unitary queries. The resulting uniform adversary retains the
reduction's non-negligible advantage. This contradicts
Theorem~\ref{thm:main}.
\end{proof}

The inversion and puzzle claims concern evaluation and verification
over \(O_R\), with classical challenges and solutions. They make
no claim about functions evaluated with \(Q\), puzzles verified with
\(Q\), or primitives whose challenges are quantum states. The
separation concerns the specified fully black-box constructions and
uniform reductions.

\section{A sensitivity bound from polynomial certificates}
\label{app:scalar-stability-tools}

We prove Lemma~\ref{lem:polynomial-max-sensitivity}. First, we state
the certificate bound for bounded polynomials. Next, we prove a
sensitivity bound for real-valued functions with one-sided
certificates. We then apply this bound to a maximum of polynomials.

\subsection{Polynomial certificates}

\begin{lemma}[Bounded polynomial certificates, {\cite[Theorem~1.7]{AB}}]
\label{lem:oracle-polynomial-certificate}
Let \(m\ge0\) and \(d\ge1\) be integers. Let
\(p:\{0,1\}^m\to[0,1]\) be a polynomial of degree at most
\(d\). At every input \(x\), there are nonnegative weights
\(c_i(x)\), \(i\in[m]\), such that
\begin{equation}
 \begin{aligned}
 |p(x)-p(x')|&\le\sum_{i:x_i\ne x'_i}c_i(x)
                   \quad\text{for every }x'\in\{0,1\}^m,\\
 \sum_{i=1}^m c_i(x)&\le\frac{\pi^2d^2}{4}.
 \end{aligned}
 \label{eq:oracle-polynomial-certificate}
\end{equation}
These weights form an \emph{absolute-difference certificate} at
\(x\). Their sum is its \emph{mass}.
\end{lemma}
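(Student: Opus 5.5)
The plan is to prove the bound by LP duality, reducing it to a statement about a single signed combination of coordinate flips, and then to control that combination with a classical extremal inequality for bounded univariate polynomials. The statement is exactly \cite[Theorem~1.7]{AB}, so in the paper it can simply be imported. The sketch below describes how I would reprove it, and one step is not worked out.

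\textbf{Reduction.} First flip coordinates so that \(x=0\); this preserves the degree and the range \([0,1]\). For fixed \(x\), the existence of nonnegative weights \(c_i\) of mass at most \(M\) that dominate every \(|p(x)-p(x')|\) is a finite linear feasibility problem. By LP duality it is equivalent to the following claim. For every nonnegative measure \(\lambda\) on inputs \(x'\) whose coordinate loads satisfy
\[
\sum_{x':\,x'_i\ne x_i}\lambda_{x'}\le1\quad\text{for every } i,
\]
we have
\[
\sum_{x'}\lambda_{x'}\,|p(x)-p(x')|\le M .
\]
This is a fractional block-sensitivity bound, and I would prove it with \(M=\pi^2d^2/4\).

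\textbf{Single block.} Fix signs \(\sigma_{x'}=\operatorname{sign}(p(x')-p(x))\), so the quantity to bound is \(\sum\lambda_{x'}\sigma_{x'}(p(x')-p(x))\). For one block \(S\) (the disagreement set of \(x'\)), let \(\tilde p\) be the multilinear extension of \(p\). It maps \([0,1]^m\) into \([0,1]\), because at each point it is an average of cube values under a product measure. Substituting \(y_i=\sin^2\theta\) for \(i\in S\) and \(y_i=0\) otherwise gives a trigonometric polynomial \(g_S(\theta)\) of degree at most \(2d\) with values in \([0,1]\). It satisfies
\[
g_S(\pi/2)-g_S(0)=p(x')-p(x).
\]
Bernstein's inequality gives \(|g_S'|\le d\), and integrating over \([0,\pi/2]\) bounds the single-block change. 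The constant \(\pi^2/4\), with \(d^2\) rather than \(d\), should come from handling all blocks at once rather than one at a time.

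\textbf{Combining blocks (main obstacle).} The blocks overlap, so the single-block bound does not simply sum. My first attempt would replace the deterministic blocks by independent Poisson clocks: block \(x'\) fires with rate \(\lambda_{x'}\) over a time parameter. Coordinate \(i\) then ends up at the parity of the firings that hit it, which keeps its flip probability at most \(\tfrac12(1-e^{-2\tau})\) because the coordinate loads are at most one. I would then study
\[
F(\tau)=\mathbb E\Bigl[\textstyle\sum\sigma\,\cdots\Bigr],
\]
the signed expected change along these clocks. Its derivative at \(\tau=0\) is exactly \(\sum\lambda_{x'}\sigma_{x'}(p(x')-p(x))\). After conditioning on which block fires first, \(F\) is, up to the substitution \(u=\sin^2\theta\), a bounded degree-\(d\) polynomial in the per-coordinate flip probabilities. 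A Markov-type endpoint bound, applied to the univariate polynomial \(u\mapsto\tilde p\bigl(x+u\,v\bigr)\) with \(v\in[0,1]^m\) the load vector, combined with the sine substitution, should then give the constant \(\pi^2d^2/4\).

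The step I expect to fail or to need repair is this reduction of the overlapping-block, sign-weighted sum to a single univariate bounded polynomial. If it cannot be made to work directly, I would fall back on the argument of \cite{AB}.
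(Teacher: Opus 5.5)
Your fallback is exactly the paper's proof. The paper multilinearizes \(p\) via \(x_i^2=x_i\), which does not raise the degree and so meets the cited hypothesis. It then invokes \cite[Theorem~1.7]{AB} for \(m\ge1\) and notes that for \(m=0\) the empty family of weights satisfies both inequalities. If you rely on the citation, add those two remarks and you are done. Your LP-duality reformulation is also correct: all \(c_i=1\) is primal feasible, so strong duality applies, and the statement is a fractional block-sensitivity bound weighted by \(|p(x)-p(x')|\).

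The self-contained reproof, however, has a genuine gap exactly at the step you flagged.

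\emph{Correlated flips.} A firing of block \(B\) flips all of its coordinates at once. So the flips are correlated, and \(F(\tau)=\mathbb E[p(z(\tau))]\) is not \(\tilde p\) evaluated at the marginal flip probabilities; conditioning on the first firing does not change this. Take \(p(y)=y_1y_2\), \(x=0\), and the single block \(\{1,2\}\) of weight one. Each coordinate flips with probability \(\rho=\tfrac12(1-e^{-2\tau})\), but \(F=\rho\ne\rho^2=\tilde p(\rho,\rho)\).

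\emph{The load-vector path.} In the same example, the load-vector polynomial \(u\mapsto\tilde p(uv)=u^2\) has derivative \(0\) at \(u=0\), although the block sum is \(p(1,1)-p(0,0)=1\). Along \(x+uv\) the derivative at \(u=0\) is only the first-order quantity \(\sum_B\lambda_B\sum_{i\in B}\partial_i\tilde p(x)\), never the full block changes \(p(x^B)-p(x)\).

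\emph{The exponential-sum structure.} In general, index blocks by their disagreement sets \(B\), put \(x=0\), and let \(\hat p(A)\) be the Fourier coefficients. Then
\[
 F(\tau)=\sum_{|A|\le d}\hat p(A)\,e^{-2\tau w_A},\qquad
 w_A=\sum_{B:\,|B\cap A|\text{ odd}}\lambda_B\in[0,d].
\]
So \(F'(0)\) is indeed the block sum. But \(F\) is an exponential sum with many non-integer exponents, not a degree-\(d\) polynomial in any single variable. Boundedness alone gives no Markov bound in terms of \(d\) for such sums. With \(T_n\) the Chebyshev polynomial, \(\tfrac12\bigl(1+T_n(2e^{-2d\tau/n}-1)\bigr)\) has exponents in \([0,2d]\), values in \([0,1]\), and derivative \(-2dn\) at \(\tau=0\).

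The argument must therefore use that \(p\) is bounded on the whole cube, not just along one random trajectory. That is precisely the content of the cited theorem, and nothing in your sketch replaces it. For disjoint blocks with \(0/1\) weights, Nisan--Szegedy symmetrization and Markov's inequality do give \(d^2\) per sign class, and \(2d^2\le\pi^2d^2/4\). The real difficulty is overlapping fractional packings.

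Two smaller points. First, the signs \(\sigma_{x'}\) cannot sit inside one expectation of \(p\). You would have to split the packing into its increasing and decreasing blocks (each still with loads at most one), at a factor-two cost. Second, the single-block Bernstein estimate gives only \(\pi d/2\), which is weaker than the trivial bound \(1\).
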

\begin{proof}
Use \(x_i^2=x_i\) on Boolean inputs to make \(p\) multilinear
without increasing its degree. For \(m\ge1\), apply
\cite[Theorem~1.7]{AB}. For \(m=0\), the cube has one input and
the empty family of weights satisfies both bounds.
\end{proof}

\subsection{Sensitivity from one-sided certificates}

For \(f:\{0,1\}^m\to[0,1]\), a \emph{one-sided certificate}
at \(y\) consists of nonnegative weights \(w_y(i)\), \(i\in[m]\),
such that
\[
 f(y)-f(x)\le\sum_{i:y_i\ne x_i}w_y(i)
 \quad\text{for every }x\in\{0,1\}^m.
\]
Its mass is \(\sum_iw_y(i)\). Recall that
\[
 S_\delta(f,x)
 :=\#\{i\in[m]:|f(x)-f(x^{\oplus i})|\ge\delta\}.
\]

\begin{lemma}[A sensitivity bound from one-sided certificates]
\label{lem:one-sided-certificate-sensitivity}
Let \(m\ge0\) and \(C\ge1\). Suppose every input of
\(f:\{0,1\}^m\to[0,1]\) has a one-sided certificate of mass
at most \(C\). For uniform \(X\in\{0,1\}^m\), every
\(0<\delta\le1\), and every \(t\ge0\),
\begin{equation}
 \Pr[S_\delta(f,X)\ge t]
 \le\frac2\delta\,2^{-t\delta/(16C)}.
 \label{eq:one-sided-sensitivity-tail}
\end{equation}
\end{lemma}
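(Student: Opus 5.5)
The plan is to reduce the real-valued statement to a family of Boolean statements by thresholding $f$, and then to prove an exponential tail bound for each threshold.

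\emph{Step 1: certificates bound downward sensitivity.} At any input $y$, call coordinate $i$ \emph{downward $\delta$-sensitive} if $f(y)-f(y^{\oplus i})\ge\delta$. The one-sided certificate at $y$, applied to $x=y^{\oplus i}$, gives $w_y(i)\ge\delta$ for every such $i$. Since the mass is at most $C$, at most $C/\delta$ coordinates are downward $\delta$-sensitive at $y$. Every $\delta$-sensitive edge is downward at its higher endpoint. So it remains to control the \emph{upward} $\delta$-sensitivity at $X$, meaning the number of $i$ with $f(X^{\oplus i})\ge f(X)+\delta$. We write $S_\delta\le S_\delta^{\downarrow}+S_\delta^{\uparrow}$. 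It suffices to bound $\Pr[S^\uparrow_\delta(f,X)\ge t/2]$, absorbing the deterministic $C/\delta$ term into the constant; if $t\le 2C/\delta$ the claimed bound exceeds one anyway.

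\emph{Step 2: thresholds.} Put $\theta_k=k\delta/2$ for $k=1,\ldots,\lceil 2/\delta\rceil$. Let $A_k=\{x:f(x)\ge\theta_k\}$ and $B_k=\{x:f(x)<\theta_k-\delta/2\}$; these sets are disjoint. Fix $x$ and let $k$ be the least index with $\theta_k>f(x)+\delta/2$. Then $x\in B_k$, and every upward $\delta$-sensitive neighbor $x^{\oplus i}$ lies in $A_k$. Conversely, any edge from $a\in A_k$ to $B_k$ has a drop of at least $\delta/2$ at $a$. By the certificate argument of Step 1 with $\delta/2$ in place of $\delta$, each $a\in A_k$ therefore has at most $s:=2C/\delta$ neighbors in $B_k$. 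A union bound over the at most $2/\delta$ thresholds produces the prefactor $2/\delta$. We are left with a purely combinatorial claim.

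\emph{Claim:} if $A,B\subseteq\{0,1\}^m$ are disjoint and every point of $A$ has at most $s$ neighbors in $B$, then $\Pr_X[X\in B,\ \deg_A(X)\ge u]\le 2^{-u/(4s)}$ or similar. With $u=t/2$ and $s=2C/\delta$ this yields the exponent $t\delta/(16C)$; the constants are to be tuned.

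\emph{Step 3: the main obstacle is proving the Claim.} Markov's inequality on the edge count $e(A,B)\le s|A|$ gives only a polynomial tail, so a genuinely exponential argument is needed. I would first try induction on $m$ with the potential $\Phi(A,B)=\mathbb E_X[\lambda^{\deg_A(X)}\mathbf 1_B(X)]$, where $\lambda=2^{1/(4s)}$. The target inequality is $\Phi\le c$ for an absolute constant $c$, after which Markov's inequality gives the tail. Split on the last coordinate into two subcubes $(A_0,B_0)$ and $(A_1,B_1)$. Each point then has at most one cross-neighbor, so $\deg_A$ increases by at most one and $\Phi\le\tfrac12(\Phi_0+\Phi_1)+\tfrac{\lambda-1}{2}\Pr[\text{cross edge from }B\text{ into }A]$. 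The cross-edge term has to be paid for using the boundary budget $s$ of $A$. To make the induction close, I expect to strengthen the hypothesis to a weighted form. One candidate is $\Phi\le 1+(\lambda-1)\,\mathbb E_X[\mathbf 1_A(X)\,(\text{number of }B\text{-neighbors of }X)]/s'$. Another is to use hypercontractivity (Bonami) for the indicator of $A$, in the style of the isoperimetric arguments behind \cite[Lemma~43]{AIK}. Alternatively, a random-restriction argument in the spirit of Lemma~\ref{lem:sensitivity} may work: keep each coordinate free with probability $\approx 1/(2s)$; then sensitive points of $B$ retain many $A$-edges while points of $A$ retain $O(1)$ boundary edges, and a direct counting bound finishes the argument. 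I would pursue the potential-function induction first and fall back on the restriction argument if the constants do not close.
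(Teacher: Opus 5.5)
Your Step~1 is exactly the paper's first reduction. The combinatorial Claim in Step~2, to which you reduce everything else, is false, so no potential-function, hypercontractive, or restriction argument can prove it.

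Here is a counterexample. Take $m=2^r-1$ and let $B\subseteq\{0,1\}^m$ be the Hamming code: a perfect code of minimum distance $3$ and density $1/(m+1)$. Let $A=\{0,1\}^m\setminus B$. Each point of $A$ is at distance one from exactly one codeword, so it has exactly one neighbor in $B$, giving $s=1$. Every codeword has all $m$ of its neighbors in $A$. Thus $\Pr[X\in B,\ \deg_A(X)\ge m]=1/(m+1)$, which is not $2^{-\Omega(m)}$. In fact the double-counting bound $s/u$ that you dismissed is essentially tight under your hypothesis.

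The information is lost in your reduction. Recording only that each point of $A_k$ has at most $2C/\delta$ neighbors in $B_k$ throws away what the certificate actually gives: $f(y)-f(z)\le\sum_{i:y_i\ne z_i}w_y(i)$ for \emph{every} $z$, including $z$ far from $y$. The example shows why this matters. There $f=1-\mathbf 1_B$ has only one single-coordinate drop at each non-codeword. Yet any one-sided certificate at a non-codeword has mass at least $(m-1)/2$, because the codewords at distance two from it have pairwise disjoint difference sets.

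The missing ingredient is an argument that uses the certificate against whole subcubes. The paper does this with the satisfiability-coding-lemma encoding:
\begin{enumerate}
\item Fix a level $a$ and encode $E_a=\{x:f(x)\le a\}$ along a uniformly random coordinate order, writing a bit only when both values remain consistent with $E_a$.
\item Suppose $f(x)\le a\le f(x)+\delta/2$ and $f(x^{\oplus i})\ge f(x)+\delta$, and put $y=x^{\oplus i}$. With probability at least $\delta/(8C)$, coordinate $i$ falls in the last $\delta/(4C)$ fraction of the order and (by Markov) the mass of $w_y$ on later coordinates is below $\delta/2$.
\item On that event, every $z$ agreeing with $y$ on the coordinates up to and including $i$ has $f(z)>f(y)-\delta/2\ge a$. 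So the bit $x_i$ is inferred rather than written.
\item Kraft's inequality and Jensen then bound the number of such $x$ with at least $u$ upward $\delta$-sensitive coordinates by $2^{m-u\delta/(8C)}$. Integrating over $a$ gives the prefactor $2/\delta$.
\end{enumerate}

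Your threshold scheme can be repaired the same way. Carry into the Claim the separation property that each $a\in A_k$ has weights of mass at most $C$ with $\sum_{i:a_i\ne b_i}w_a(i)>\delta/2$ for all $b\in B_k$, and prove the Claim by running this encoding on $B_k$. With the degree hypothesis alone, no proof is possible.
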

\begin{proof}
Fix one certificate \(w_y\) at each input \(y\). Let
\[
 \begin{aligned}
 U_\delta(x)&:=\#\{i:f(x^{\oplus i})-f(x)\ge\delta\},\\
 D_\delta(x)&:=\#\{i:f(x)-f(x^{\oplus i})\ge\delta\}.
 \end{aligned}
\]
Then \(S_\delta(f,x)=U_\delta(x)+D_\delta(x)\). Each coordinate
counted by \(D_\delta(x)\) requires \(w_x(i)\ge\delta\), so
\(D_\delta(x)\le C/\delta\). We will prove that, for every
\(u>0\),
\begin{equation}
 \Pr[U_\delta(X)\ge u]
 \le\frac2\delta\,2^{-u\delta/(8C)}.
 \label{eq:upward-sensitivity-tail}
\end{equation}
This suffices: if \(t>2C/\delta\), then
\(S_\delta(f,x)\ge t\) implies \(U_\delta(x)\ge t/2\).
If \(t\le2C/\delta\), the right side
of~\eqref{eq:one-sided-sensitivity-tail} is at least one.

We prove~\eqref{eq:upward-sensitivity-tail} by the encoding method
of the satisfiability coding lemma~\cite{PPZ97}. For each level
\(a\in[0,1]\), we encode the sublevel set
\(E_a:=\{x:f(x)\le a\}\). Upward jumps permit shorter codes.
We then average the resulting count over \(a\).

\emph{Step 1: Encode each sublevel set.}
Fix \(a\) and a coordinate order \(\pi\). Inspect the coordinates
in order \(\pi\). At each step, consider the inputs in \(E_a\)
consistent with the preceding bits. If they all have the same next
bit, infer it. Otherwise, write that bit. The decoder maintains the
same set and reads a code bit exactly when both values remain possible.

The inputs in \(E_a\) are the leaves of a binary tree in this
coordinate order. Contract each node with one child. The remaining
paths contain exactly the written bits and are prefix-free. Thus
their lengths \(\ell_{\pi,a}(x)\) satisfy Kraft's inequality:
\[
 \sum_{x\in E_a}2^{-\ell_{\pi,a}(x)}\le1.
\]

\emph{Step 2: Bound the expected code length.}
Fix \(x\) and \(a\) with
\(f(x)\le a\le f(x)+\delta/2\). We claim that a uniform
coordinate order satisfies
\[
 \mathbb E_\pi\ell_{\pi,a}(x)
 \le m-\frac{\delta U_\delta(x)}{8C}.
\]
Fix a coordinate \(i\) counted by \(U_\delta(x)\), and let
\(y=x^{\oplus i}\). Generate the order by independent uniform
ranks \(V_j\in[0,1]\). Conditional on
\(V_i\ge1-\delta/(4C)\), the expected certificate mass at
coordinates later than \(i\) satisfies
\[
 \mathbb E\!\left[\sum_{j:V_j>V_i}w_y(j)\,\middle|\,V_i\right]
 =(1-V_i)\sum_{j\ne i}w_y(j)\le\frac\delta4.
\]
Markov's inequality shows that this later mass is less than
\(\delta/2\) with conditional probability at least \(1/2\).
The rank condition and the mass condition therefore both hold with
probability at least \(\delta/(8C)\).

On this event, the encoder infers bit \(i\). Indeed, an input
\(z\) consistent with the preceding bits and with \(z_i=y_i\)
can differ from \(y\) only at later coordinates. The certificate
at \(y\) gives
\[
 f(z)\ge f(y)-\sum_{j:y_j\ne z_j}w_y(j)
 >f(x)+\frac\delta2\ge a.
\]
Thus no such \(z\) belongs to \(E_a\), and every consistent
input in \(E_a\) has bit \(x_i\). Each coordinate counted by
\(U_\delta(x)\) is omitted with probability at least
\(\delta/(8C)\). Sum these probabilities to obtain the claimed
code-length bound.

\emph{Step 3: Count inputs and average over the level.}
For \(u>0\) and \(a\in[0,1]\), let
\[
 F_a:=\{x:U_\delta(x)\ge u,\ f(x)\le a\le f(x)+\delta/2\}.
\]
The code-length bound, Jensen's inequality, and Kraft's inequality
give
\[
 \begin{aligned}
 |F_a|\,2^{-m+u\delta/(8C)}
 &\le\sum_{x\in F_a}2^{-\mathbb E_\pi\ell_{\pi,a}(x)}\\
 &\le\mathbb E_\pi\sum_{x\in E_a}2^{-\ell_{\pi,a}(x)}\le1.
 \end{aligned}
\]
Hence \(\Pr[X\in F_a]\le2^{-u\delta/(8C)}\) for every
\(a\). Each input with \(U_\delta(x)\ge u>0\) has an upward
jump, so \(f(x)\le1-\delta\). It therefore belongs to \(F_a\)
for an interval of levels of length exactly \(\delta/2\).
Integrate over \(a\in[0,1]\) to obtain
\[
 \frac\delta2\Pr[U_\delta(X)\ge u]
 =\int_0^1\Pr[X\in F_a]\,da
 \le2^{-u\delta/(8C)}.
\]
This proves~\eqref{eq:upward-sensitivity-tail} and the lemma.
\end{proof}

\subsection{Application to a maximum of polynomials}

\begin{proof}[Proof of Lemma~\ref{lem:polynomial-max-sensitivity}]
At each input \(y\), choose an index \(a_y\) such that
\(p_{a_y}(y)=f(y)\). Apply
Lemma~\ref{lem:oracle-polynomial-certificate} to \(p_{a_y}\)
at \(y\), and write its weights as \(w_y(i)\). For every
input \(x\),
\[
 f(y)-f(x)
 \le p_{a_y}(y)-p_{a_y}(x)
 \le\sum_{i:y_i\ne x_i}w_y(i).
\]
These weights form a one-sided certificate for \(f\) at \(y\),
with mass at most \(C:=\pi^2D^2/4\ge1\).
Lemma~\ref{lem:one-sided-certificate-sensitivity} gives
\[
 \Pr[S_\delta(f,X)\ge t]
 \le\frac2\delta\,2^{-t\delta/(16C)}
 =\frac2\delta\,2^{-t\delta/(4\pi^2D^2)}.
\]
\end{proof}

\end{document}